\tikzset{transaction state/.style={draw=black!0}}
\tikzset{
	arrow/.pic={\path[tips,every arrow/.try,->,>=#1] (0,0) -- +(.1pt,0);},
	pics/arrow/.default={triangle 90},
	lab dis/.store in=\LabDis,
	lab dis=0.3,
	->-/.style args={at #1 with label #2}{decoration={
			markings,
			mark=at position #1 with {\arrow{>}; \node at (0,\LabDis) {#2};}},postaction={decorate}},
	-<-/.style args={at #1 with label #2}{decoration={
			markings,
			mark=at position #1 with {\arrow{<}; \node at (0,\LabDis)
				{#2};}},postaction={decorate}},
	-*-/.style={decoration={
			markings,
			mark=at position #1 with {\fill (0,0) circle (1.5pt);}},postaction={decorate}}
}
\definecolor{pinegreen}{rgb}{0.0, 0.47, 0.44}
\definecolor{newVersionColor}{HTML}{000000}
\definecolor{appendixVersionColor}{HTML}{000000}
\definecolor{noAppendixVersionColor}{HTML}{000000}
\newcommand*\nver{\textcolor{newVersionColor}}
\newcommand\noAppendixver[1]{\iftrue\textcolor{noAppendixVersionColor}{#1}\fi}
\newcommand\oldver[1]{}
\newcommand\appendixver[1]{\iffalse{\leavevmode\textcolor{appendixVersionColor}{#1}}\fi}
\newcommand\red[1]{{\color{red}#1}}
\newcommand{\callout}[1]{\emph{#1}}
\newcommand{\key}{x}
\newcommand{\Events}{\mathcal{E}}
\definecolor{coColor}{HTML}{b37200}
\definecolor{soColor}{HTML}{800080}
\definecolor{poColor}{HTML}{C41E3A}
\definecolor{wrColor}{HTML}{008000}
\definecolor{orColor}{HTML}{0026ff}
\newcommand{\tup}[1]{{\left\langle{#1}\right\rangle}}
\renewcommand{\implies}{\Rightarrow}
\newcommand{\Vars}{\mathsf{Vars}}
\newcommand{\LVars}{\mathsf{LVars}}
\newcommand{\Vals}{\mathsf{Vals}}
\newcommand{\xvar}{{a}}
\newcommand{\prog}{{\mathsf{P}}}
\newcommand{\histOf}[2][]{{\mathsf{hist}_{#1}({#2})}}
\newcommand{\ebegin}{\mathsf{begin}}
\newcommand{\eabort}{\mathsf{abort}}
\newcommand{\ecommit}{\mathsf{commit}}
\newcommand{\erd}[1]{\mathsf{read}({#1})}
\newcommand{\ewrt}[1]{\mathsf{write}({#1})}
\newcommand{\yvar}{{y}}
\newcommand{\val}{{v}}
\newcommand{\rd}[2][]{\textsf{read}_{#1}({#2})}
\newcommand{\wrt}[3][]{\textsf{write}_{#1}({#2},{#3})}
\newcommand{\tr}{t}
\newcommand{\hist}{{h}}
\newcommand{\po}{\textcolor{poColor}{\mathsf{po}}}
\newcommand{\so}{\textcolor{soColor}{\mathsf{so}}}
\newcommand{\co}{\textcolor{coColor}{\mathsf{co}}}
\newcommand{\ora}{\textcolor{orColor}{\mathsf{or}}}
\newcommand{\wro}{\textcolor{wrColor}{\mathsf{wr}}}
\newcommand{\readOp}[1]{\mathsf{reads}({#1})}
\newcommand{\tlogs}[1]{\mathsf{tr}({#1})}
\newcommand{\transC}[1]{\mathsf{commTrans}({#1})}
\newcommand{\writeOp}[1]{\mathsf{writes}({#1})}
\newcommand{\writeVar}[2]{{#1}\ \mathsf{writes}\ {#2}}
\renewcommand{\hist}{{h}}
\newcommand{\trans}[2]{\mathsf{tr}({#1},{#2})}
\newcommand{\CC}{\textup{\texttt{CC}}}
\newcommand{\SER}{\textup{\texttt{SER}}}
\newcommand{\SI}{\textup{\texttt{SI}}}
\newcommand{\RA}{\textup{\texttt{RA}}}
\newcommand{\RC}{\textup{\texttt{RC}}}
\newcommand{\SC}{\textup{\texttt{SC}}}
\newcommand{\init}{\textup{\textbf{\texttt{init}}}}
\newcommand{\locals}{\mathsf{locals}}
\newcommand{\genericNext}{\textsc{Next}}
\newcommand{\genericEvaluate}{\textsc{Valid}}
\newcommand{\genericCompute}{\textsc{ComputeReorderings}}
\newcommand{\genericProtocol}{\textsc{Optimality}}
\newcommand{\genericSwap}{\textsc{Swap}}
\newcommand{\genericValidWrites}{\textsc{ValidWrites}}
\newcommand{\nextEvent}{\textup{\textsc{next}}}
\newcommand{\protocol}[1][\mathcal{I}]{\textup{\textsc{isSwappable}}_{#1}}
\newcommand{\swap}{\textup{\textsc{swap}}}
\newcommand{\oraRespectful}[2]{\texttt{R}^{\ora}(#1, #2)}
\newcommand{\oraRespectfulCanon}[1]{\texttt{R}^{\ora}(#1)}
\newcommand{\last}[1]{\textup{\texttt{last}}(#1)}
\newcommand{\swapped}[2]{\textup{\textsc{swapped}}(#1, #2)}
\newcommand{\maxCompletion}{\textup{\textsc{maxCompletion}}}
\newcommand{\prev}{\textup{\textsc{prev}}}
\newcommand{\isMaximallyAdded}[4][I]{\mathsf{readLatest}_{#1}(#2, #3, #4)}
\newcommand{\readsCasuallyLatest}[3][I]{\mathsf{readsCLatest}_{#1}(#2, #3)}
\newcommand{\dep}{\textup{\textsc{dep}}}
\newcommand{\minimalDependency}{\textup{\textsc{minimalDependency}}}
\newcommand{\events}[1]{\mathsf{events}({#1})}
\newcommand{\variable}[1]{\mathtt{var}({#1})}
\newcommand{\eqdef}{::=}
\newcommand{\ibegin}{\mathtt{begin}}
\newcommand{\icommit}{\mathtt{commit}}
\newcommand{\iabort}{\mathtt{abort}}
\newcommand{\iif}[2]{\mathtt{if}({#1}) \{ {#2} \}}
\newcommand{\iread}{\mathtt{read}}
\newcommand{\iwrite}{\mathtt{write}}
\algrenewcommand\algorithmicindent{1.0em}%
\algnewcommand\algorithmicswitch{\textbf{switch}}
\algnewcommand\algorithmiccase{\textbf{case}}
\algnewcommand\algorithmicassert{\texttt{assert}}
\algnewcommand\Assert[1]{\State \algorithmicassert(#1)}%
\algnewcommand\Let{\State\textbf{let} }
\algnewcommand\Break{\State\textbf{break} }
\algnewcommand\InputAlgorithmic{\Statex\textbf{Procedure }} 
\algnewcommand\OutputAlgorithmic{\Statex\textbf{Output: }}
\lstdefinelanguage{MyLang}{%
	keywords = { delete, do, each, else, export, finally, for, foreach, function,
		if, in, let, of, return, void, while, with, yield, elements, read, write,
		insert, remove, add, AddItem, DeleteItem, Push, Pop, Enroll, Tweet, Timeline,
		NewsFeed, begin, end, break, throw},
	morecomment = [l]{//},
	morecomment = [s]{/*}{*/},
	morestring  = [b]',
	morestring  = [b]",
	sensitive   = true,
}
\lstdefinelanguage{Java10}{
	language      = Java,
	morekeywords  ={ var },
}
\begin{document}
	
	%% Title information
	\title[Dynamic Partial Order Reduction for Checking Correctness against Transaction Isolation Levels]{Dynamic Partial Order Reduction for Checking Correctness against Transaction Isolation Levels}       %% [Short Title] is optional;
	%% when present, will be used in
	%% header instead of Full Title.
	%\titlenote{with title note}             %% \titlenote is optional;
	%% can be repeated if necessary;
	%% contents suppressed with 'anonymous'
	%\subtitle{Subtitle}                     %% \subtitle is optional
	%\subtitlenote{with subtitle note}       %% \subtitlenote is optional;
	%% can be repeated if necessary;

%\author{}	

%IRIF, Université Paris Cité
%LIX, CNRS, Ecole Polytechnique
	
	\author{Ahmed Bouajjani}
	\affiliation{
		\institution{Université Paris Cité, CNRS, IRIF}
		\country{France}                  %% \country is recommended
	}
	\email{abou@irif.fr}          %% \email is recommended

	\author{Constantin Enea}
	\affiliation{
		\institution{LIX, Ecole Polytechnique, CNRS and Institut Polytechnique de Paris}
		\country{France}                    %% \country is recommended
	}
	\email{cenea@lix.polytechnique.fr}          %% 

	\author{Enrique Rom\'an-Calvo}
	\affiliation{
		\institution{Université Paris Cité, CNRS, IRIF} 
		\country{France}                    %% \country is recommended
	}
	\email{calvo@irif.fr}          %% \email is recommended
	
	%% Abstract
	\begin{abstract}
		%Parallel programming represents a challenge to computer scientists as the number of possible behaviors is quite large. This number grows exponentially when the program employs a distributed database with a weak level of isolation as Causal Consistency (CC). Therefore, efficient techniques for verifying and testing are required. In this paper we present a complete, sound and optimal algorithm which employs dynamic partial order reduction (DPOR) techniques for exploring all possible histories a transactional concurrent program may have; the only one up to now designed for this purpose. Our algorithm can be used under several isolation levels such as Causal Consistency, Read Atomic or Read Committed with linear memory consumption in all cases. 
%Dynamic Partial Order Reduction for Checking Correctness Against Weak Isolation Levels

Modern applications, such as social networking systems and e-commerce platforms are centered around using large-scale databases for storing and retrieving data. Accesses to the database are typically enclosed in transactions that allow computations on shared data to be isolated from other concurrent computations and resilient to failures. Modern databases trade isolation for performance. The weaker the isolation level is, the more behaviors a database is allowed to exhibit and it is up to the developer to ensure that their application can tolerate those behaviors. 

In this work, we propose stateless model checking algorithms for studying correctness of such applications that rely on dynamic partial order reduction. These algorithms work for a number of widely-used weak isolation levels, including Read Committed, Causal Consistency, Snapshot Isolation and Serializability. We show that they are complete, sound and optimal, and run with polynomial memory consumption in all cases. We report on an implementation of these algorithms in the context of Java Pathfinder applied to a number of challenging applications drawn from the literature of distributed systems and databases.  
	\end{abstract}
	
	\begin{CCSXML}
		<ccs2012>
		   <concept>
			   <concept_id>10003752.10003790.10011192</concept_id>
			   <concept_desc>Theory of computation~Verification by model checking</concept_desc>
			   <concept_significance>500</concept_significance>
			   </concept>
		   <concept>
			   <concept_id>10003752.10003753.10003761.10003763</concept_id>
			   <concept_desc>Theory of computation~Distributed computing models</concept_desc>
			   <concept_significance>500</concept_significance>
			   </concept>

		   <concept>
       <concept_id>10011007.10011074.10011099.10011692</concept_id>
       <concept_desc>Software and its engineering~Formal software verification</concept_desc>
       <concept_significance>500</concept_significance>
       </concept>
		 </ccs2012>
	\end{CCSXML}
	
	\begin{CCSXML}

\end{CCSXML}

	\ccsdesc[500]{Theory of computation~Verification by model checking}
	\ccsdesc[500]{Theory of computation~Distributed computing models}
%	\ccsdesc[300]{Information systems~Relational database model}
	\ccsdesc[500]{Software and its engineering~Formal software verification}

	\keywords{Applications of Storage Systems, Transactional Databases, Weak Isolation Levels, Dynamic Partial-Order Reduction}
	%% Keywords
	%\keywords{Applications of Storage Systems, Transactional Databases, Weak Isolation Levels, Testing}  %% \keywords are mandatory in final camera-ready submission
	\maketitle
	
%	\sloppy
	%!TEX root = main.tex
\section{Introduction}
\label{sec:intro}

Data storage is no longer about writing data to a single
disk with a single point of access. Modern applications require not just data
reliability, but also high-throughput concurrent accesses. 
Applications concerning supply chains, banking, etc. use traditional relational databases
for storing and processing data, whereas applications such as social networking
software and e-commerce platforms 
use cloud-based storage systems (such as Azure Cosmos DB \cite{cosmosdb}, Amazon DynamoDB
\cite{decandia2007dynamo}, Facebook TAO \cite{facebook-tao}, etc.). 
%We use the term \textit{storage system} in this paper to refer to any such database system or service.

%As applications have moved from a single-box environment to the cloud, the notion of
%data persistence has also changed. It is no longer about storing data on a
%single disk with a single point of access. Rather, modern applications such as
%social networking software, e-commerce platforms, cloud micro-services, etc. are built using 
%high-scale storage systems, such as Azure CosmosDb \cite{cosmosdb}, Amazon DynamoDb \cite{amazon-dynamodb}, 
%Facebook TAO \cite{facebook-tao}. Applications such as 
 
%These storage systems, commonly offered by most major cloud providers (such as
%Azure CosmosDb \cite{cosmosdb}, Amazon DynamoDb \cite{amazon-dynamodb}, 
%Facebook TAO \cite{facebook-tao}, etc.)
%create and manage multiple replicas of data. Having multiple replicas offers reliability and prevents
%data loss, but it also offers availability and low-latency accesses by allowing
%different clients to connect with different replicas. 

Providing high-throughput processing, unfortunately, comes at an unavoidable cost of weakening 
the consistency guarantees offered to users:
Concurrently-connected clients may end up observing different versions of the same data. 
These ``anomalies'' can be prevented by using a strong \textit{isolation level} 
such as \textit{Serializability} \cite{DBLP:journals/jacm/Papadimitriou79b}, which essentially offers a single version of the data to all clients at any point in time. 
However, serializability requires expensive synchronization and incurs a high performance cost. As a
consequence, most storage systems use weaker isolation levels, such as 
{\it Causal Consistency}~\cite{DBLP:journals/cacm/Lamport78,DBLP:conf/sosp/LloydFKA11,antidote},
{\it Snapshot Isolation}~\cite{DBLP:conf/sigmod/BerensonBGMOO95}, {\it Read
Committed}~\cite{DBLP:conf/sigmod/BerensonBGMOO95}, etc. for better performance.
In a recent survey of
database administrators \cite{survey}, 86\% of the participants responded that
most or all of the transactions in their databases execute at Read Committed level.

A weaker isolation level allows for more possible behaviors than stronger
isolation levels. It is up to the developers then to ensure that their
application can tolerate this larger set of behaviors. Unfortunately, weak
isolation levels are hard to understand or reason about
\cite{DBLP:conf/popl/BrutschyD0V17,adya-thesis} and resulting application bugs
can cause loss of business \cite{acidrain}.

\vspace{.5mm}
\noindent
\textbf{Model Checking Database-Backed Applications.}
This paper addresses the problem of \textit{model checking} code for correctness against a given isolation level. %against weak behaviors: a developer should be able to write a test that runs
%their application and then asserts for correct behavior. 
%The main difficulty with testing today is getting coverage of weak behaviors.
\emph{Model checking}~\cite{DBLP:conf/popl/ClarkeES83,DBLP:conf/programm/QueilleS82} explores the state space of a given program in a systematic manner and it provides high coverage of program behavior. However, it faces the infamous state explosion problem, i.e., the number of executions grows exponentially in the number of concurrent clients. 
%In this paper, we consider shared-memory programs running on a sequentially consistent memory model, for which interleavings of atomic steps in different threads are a precise model of concrete executions.

%(called a Mazurkiewicz trace~\cite{DBLP:conf/ac/Mazurkiewicz86})

\emph{Partial order reduction} (POR)~\cite{DBLP:journals/sttt/ClarkeGMP99,DBLP:books/sp/Godefroid96,DBLP:conf/cav/Peled93,DBLP:conf/apn/Valmari89} is an approach that limits the number of explored executions without sacrificing coverage. POR relies on an equivalence relation between executions where e.g., two executions are equivalent if one can be obtained from the other by swapping consecutive independent (non-conflicting) execution steps. It guarantees that at least one execution from each equivalence class is explored. \emph{Optimal} POR techniques explore exactly one execution from each equivalence class. Beyond this classic notion of optimality, POR techniques may aim for optimality by avoiding visiting states from which \nver{the exploration is blocked}. 
{\em Dynamic} partial order reduction (DPOR)~\cite{DBLP:conf/popl/FlanaganG05} has been introduced to explore the execution space (and tracking the equivalence relation between executions) on-the-fly without relying on a-priori static analyses. This is typically coupled with \emph{stateless} model checking (SMC)~\cite{DBLP:conf/popl/Godefroid97} which explores executions of a program without storing visited states, thereby, avoiding excessive memory consumption. 

There is a large body of work on (D)POR techniques that address their soundness when checking a certain class of specifications for a certain class of programs, as well as their completeness and their theoretical optimality (see Section~\ref{sec:related}). Most often these works consider shared memory concurrent programs executing under a strongly consistent memory model. 
%The set of interleavings explored by some POR technique is defined by restricting the set of threads that are explored from each state (scheduling point). Depending on the class specifications, assumptions about programs, or optimality targets, there are various definitions for this set of processes, including stubborn sets~\cite{DBLP:conf/apn/Valmari89}, persistent sets~\cite{DBLP:books/sp/Godefroid96}, ample sets~\cite{DBLP:journals/sttt/ClarkeGMP99}, and source sets~\cite{DBLP:journals/jacm/AbdullaAJS17}.

In the last few years, some works have studied DPOR in the case of shared memory programs running under weak memory models such as TSO or Release-Acquire, e.g.~\cite{DBLP:conf/cav/AbdullaAJL16,DBLP:journals/acta/AbdullaAAJLS17,DBLP:journals/pacmpl/AbdullaAJN18,DBLP:conf/pldi/Kokologiannakis19}. While these algorithms are sound and complete, they have exponential space complexity when they are optimal. 
% In \cite{DBLP:journals/acta/AbdullaAAJLS17}, the authors define a DPOR algorithm for programs running under the TSO weak memory model. While this algorithm is sound, complete, and optimal, it has an exponential memory complexity. 
More recently, \citet{DBLP:journals/pacmpl/Kokologiannakis22} defined a DPOR algorithm that has a polynomial space complexity, in addition of being sound, complete and optimal. This algorithm can be applied for a range of shared memory models. % assuming that checking conformance for finite computations.   

While the works mentioned above concern shared memory programs, we are not aware of any published work addressing the case of database transactional programs running under weak isolation levels. In this paper, we address this case and propose new stateless model checking algorithms relying on DPOR techniques for database-backed applications.  \nver{We assume that all the transactions in an application execute under the \emph{same} isolation level, which happens quite frequently in practice (as mentioned above, most database applications are run on the default isolation level of the database).}
Our work generalizes the approach introduced by \cite{DBLP:journals/pacmpl/Kokologiannakis22}. However, this generalization to the transactional case, covering the most relevant isolation levels, is not a straightforward adaptation of \cite{DBLP:journals/pacmpl/Kokologiannakis22}. Ensuring optimality while preserving the other properties, e.g., completeness and polynomial memory complexity, is very challenging.
 %\nver{even under the assumption of a unique isolation level during the whole execution}. 
Next, we explain the main steps and features of our work.  

\vspace{.5mm}
\noindent
\textbf{Formalizing Isolation Levels.}
Our algorithms rely on the axiomatic definitions of isolation
levels introduced by \citet{DBLP:journals/pacmpl/BiswasE19}. These
definitions use logical constraints called \emph{axioms} to characterize the
set of executions of a database (e.g., key-value store) that conform to a particular isolation
level (extensible to SQL queries~\citep{DBLP:journals/pacmpl/BiswasKVEL21}). These constraints refer to a specific set of
relations between events/transactions in an execution that describe control-flow
or data-flow dependencies: a program order $\po$ between events in the same
transaction, a session order $\so$ between transactions in the same session\footnote{A
session is a sequential interface to the storage system. It corresponds to what
is also called a \textit{connection}.}, and a write-read $\wro$ (read-from) relation that
associates each read event with a transaction that writes the value returned by
the read. These relations along with the events in an
execution are called a \emph{history}. 
A history describes only the
interaction with the database, omitting application-side events (e.g., computing
values written to the database). 

\vspace{.5mm}
\noindent
\textbf{Execution Equivalence.} DPOR algorithms are parametrized by an equivalence relation on executions, most often, Mazurkiewicz equivalence~\cite{DBLP:conf/ac/Mazurkiewicz86}. In this work, we consider a weaker equivalence relation, also known as \emph{read-from equivalence}~\cite{DBLP:journals/pacmpl/ChalupaCPSV18,DBLP:journals/pacmpl/AbdullaAJN18,DBLP:journals/pacmpl/AbdullaAJLNS19,DBLP:conf/pldi/Kokologiannakis19,DBLP:conf/asplos/Kokologiannakis20,DBLP:journals/pacmpl/Kokologiannakis22}, which considers that two executions are equivalent when their histories are precisely the same (they contain the same set of events, and the relations $\po$, $\so$, and $\wro$ are the same). 
In general, reads-from equivalence is coarser than Mazurkiewicz equivalence, and its equivalence classes can be exponentially-smaller than Mazurkiewicz traces in certain cases~\cite{DBLP:journals/pacmpl/ChalupaCPSV18}.

\vspace{.5mm}
\noindent
\textbf{SMC Algorithms.}
Our SMC algorithms enumerate executions of a given program under a given isolation level $I$. They are \emph{sound}, i.e., enumerate only \emph{feasible} executions (admitted by the program under $I$), \emph{complete}, i.e., they output a representative of each read-from equivalence class, and \emph{optimal}, i.e., they output \emph{exactly one} complete execution from each read-from equivalence class.
%producing \emph{exactly one} representative from each read-from equivalence class. 
For isolation levels weaker than and including Causal Consistency, they satisfy a notion of \emph{strong optimality} which says that additionally, the enumeration avoids states from which the execution is ``blocked'', i.e., it cannot be extended to a complete execution of the program. 
%This goes beyond the more standard notion of optimality which requires that the algorithm produces exactly one \emph{complete} execution from each equivalence class.
For Snapshot Isolation and Serializability, we show that \emph{there exists} no algorithm in the same class (to be discussed below) that can ensure such a strong notion of optimality. All the algorithms that we propose are polynomial space, as opposed to many DPOR algorithms introduced in the literature.

As a starting point, we define a generic class of SMC algorithms, called \emph{swapping based}, generalizing the approach adopted by \citep{DBLP:conf/pldi/Kokologiannakis19,DBLP:journals/pacmpl/Kokologiannakis22}, which enumerate histories of program executions. These algorithms focus on the interaction with the database assuming that 
%track the history of an execution, i.e., the interaction with the database. We assume that 
the other steps in a transaction concern local variables visible only within the scope of the enclosing session.  Executions are extended according to a generic scheduler function $\genericNext$ and every read event produces several exploration branches, one for every write executed in the past that it can read from. Events in an execution can be swapped to produce new exploration ``roots'' that lead to different histories. Swapping events is required for completeness, to enumerate histories where a read $r$ reads from a write $w$ that is scheduled by $\genericNext$ after $r$. To ensure soundness, we restrict the definition of swapping so that it produces a history that is feasible by construction (extending an execution which is possibly infeasible may violate soundness). Such an algorithm is optimal w.r.t. the read-from equivalence when it enumerates each history exactly once. 

We define a concrete algorithm in this class that in particular, satisfies the stronger notion of optimality mentioned above for every isolation level $I$ which is \emph{prefix-closed} and \emph{causally-extensible}, e.g.,  {\it Read Committed} and \textit{Causal Consistency}. Prefix-closure means that every prefix of a history that satisfies $I$, i.e., a subset of transactions and all their predecessors in the causal relation, i.e., $(\so\cup\wro)^+$, is also consistent with $I$, and causal extensibility means that any pending transaction in a history that satisfies $I$ can be extended with one more event to still satisfy $I$, and if this is a read event, then, it can read-from a transaction that precedes it in the causal relation. To ensure strong optimality, this algorithm uses a carefully chosen condition for restricting the application of event swaps, which makes the proof of completeness in particular, quite non-trivial.

We show that isolation levels such as Snapshot Isolation and Serializability are \emph{not} causally-extensible and that there exists no swapping based SMC algorithm which is sound, complete, and strongly optimal at the same time (independent of memory consumption bounds). This impossibility proof uses a program to show that any $\genericNext$ scheduler and any restriction on swaps would violate either completeness or strong optimality. However, we define an extension of the previous algorithm which satisfies the weaker notion of optimality, while preserving soundness, completeness, and polynomial space complexity. This algorithm will simply enumerate executions according to a weaker prefix-closed and causally-extensible isolation level, and filter executions according to the stronger isolation levels Snapshot Isolation and Serializability at the end, before outputting.

We implemented these algorithms in the Java Pathfinder (JPF) model checker~\cite{DBLP:conf/issta/VisserPK04}, and evaluated them on a number of challenging database-backed applications drawn from the literature of distributed systems and databases.  

Our contributions and outline are summarized as follows:
\vspace{-1mm}
\begin{itemize}
	\item[\S~\ref{sec:props}] identifies a class of isolation levels called prefix-closed and causally-extensible that admit efficient SMC.
	\item[\S~\ref{sec:algs}] defines a generic class of swapping based SMC algorithms based on DPOR which are parametrized by a given isolation level.
	\item[\S~\ref{sec:CC-algorithm}] defines a swapping based SMC algorithm which is sound, complete, strongly-optimal, and polynomial space, for any isolation level that is prefix-closed and causally-extensible.
	\item[\S~\ref{sec:ser}] shows that there exists no swapping based algorithm for Snapshot Isolation and Serializability, which is sound, complete, and strongly-optimal at the same time, and proposes a swapping based algorithm which satisfies ``plain'' optimality. % (along with soundness, completeness, and polynomial space complexity bounds).
	\item[\S~\ref{sec:exp}] reports on an implementation and evaluation of these algorithms. % in the context of a number of interesting case studies.
\vspace{-1mm}
\end{itemize}

Section~\ref{sec:prelims} recalls the formalization of isolation levels of Biswas and Enea~\cite{DBLP:journals/pacmpl/BiswasE19,DBLP:journals/pacmpl/BiswasKVEL21}, while Sections~\ref{sec:related} and~\ref{sec:conc} conclude with a discussion of related work and concluding remarks. \nver{Additional formalization, proofs, and experimental data can be found in the technical report~\cite{bouajjani2023dynamic}.}

\section{Transactional Programs}\label{sec:prelims}

\subsection{Program Syntax}
\begin{figure}
\vspace{-4mm}
\small
\begin{align*}
\key\in \Vars\quad \xvar\in\LVars
\end{align*}
\begin{minipage}[t]{.4\textwidth}
\vspace{-4mm}
\begin{align*}
\mathsf{Prog} &  \eqdef  \mathsf{Sess} \ \mid\  \mathsf{Sess}\,||\,\mathsf{Prog} \\
\mathsf{Sess} & \eqdef  \mathsf{Trans} \ \mid\  \mathsf{Trans}; \mathsf{Sess} \\
\mathsf{Trans} & \eqdef  \ibegin; \mathsf{Body}; \icommit\\
\end{align*}
\end{minipage}
\begin{minipage}[t]{.4\textwidth}
\vspace{-4mm}
\begin{align*}
\mathsf{Body} & \eqdef  \mathsf{Instr} \ \mid\  \mathsf{Instr}; \mathsf{Body} \\
\mathsf{Instr} & \eqdef  \mathsf{InstrDB} \ \mid\  a := e \mid\ \iif{\phi(\vec{a})}{\mathsf{Instr}} \\
\mathsf{InstrDB} & \eqdef \xvar := \iread(\key)  \ \mid\  \iwrite(\key,\xvar) \ \mid \ \iabort
%\mathsf{Local} & \eqdef  x := e
\end{align*}
\end{minipage}
\vspace{-7mm}
\caption{Program syntax. The set of global variables is denoted by $\Vars$ while $\LVars$ denotes the set of local variables.
We use $\phi$ to denote Boolean expressions over local variables, and $e$ to denote expressions over local variables interpreted as values. We use $\vec{\cdot}$ to denote vectors of elements.}
\label{fig:syntax}
\vspace{-8mm}
\end{figure}

Figure~\ref{fig:syntax} lists the definition of a simple programming language
that we use to represent applications running on top of a database. A program is a set of \emph{sessions} running in parallel, each
session being composed of a sequence of \emph{transactions}. Each transaction is
delimited by $\ibegin$ and either $\icommit$ or $\iabort$ instructions, 
%\footnote{We do not consider involuntary aborts, i.e. aborts caused by the particular implementation of the database.},
 and its body contains instructions that access the
database and manipulate a set $\LVars$ of local variables. 
We use symbols $a$, $b$, etc. to denote elements of $\LVars$.

For simplicity, we abstract the database state as a valuation to a set $\Vars$ of \emph{global} variables\footnote{In the context of a relational database, global variables correspond to fields/rows of a table while in the context of a key-value store, they correspond to keys.}, ranged over using $x$, $y$, etc. The instructions accessing the database correspond to reading the value of a global variable and storing it into a local variable $a$ ($a := \iread(x)$) , writing the value of a local variable $a$ to a global variable $x$ ($\iwrite(x,a)$), or an assignment to a local variable $a$ ($a := e$). The set of values of global or local variables is denoted by $\Vals$. Assignments to local variables use expressions $e$ over local variables, which are interpreted as values and whose syntax is left unspecified. Each of these instructions can be guarded by a Boolean condition $\phi(\vec{a})$ over a set of local variables $\vec{a}$ (their syntax is not important). Our results assume bounded programs, as usual in SMC algorithms, and therefore, we omit other constructs like $\mathtt{while}$ loops. 
SQL statements (SELECT, JOIN, UPDATE) manipulating relational tables can be compiled to reads or writes of variables representing rows in a table (see for instance,~\cite{DBLP:journals/pacmpl/RahmaniNDJ19,DBLP:journals/pacmpl/BiswasKVEL21}). 
%\nver{Our work focuses on the problem of enumerating interleavings of reads and write "effects" from different transactions, modeled as reads and write of a set of global variables, and the issue of translating SQL statements to such reads and writes (which is prevalent in concurrency control mechanisms as well) is beyond the scope of our paper.} 
%\nver{SQL tables are modelled by a ``set'' variable whose content is the set of ids (primary keys) of the rows present in the table. SQL statements such as INSERT and DELETE statements are modelled as writes on that set variable while SQL statements with a WHERE clause (SELECT, JOIN, UPDATE) can be compiled to reads or writes of variables that represent fields or rows in a table following a starting read of the table's set variable.} (see~\cite{DBLP:journals/pacmpl/RahmaniNDJ19,DBLP:journals/pacmpl/BiswasKVEL21}).
% \nver{All SQL instructions are translated to key-value paradigm by adding a ``set'' variable per table whose content is the set of  ids (primary keys) of the rows present in the table. The translation of every SQL instruction with a WHERE clause starts reading this variable while INSERT and DELETE are modelled as writes on it.}

%Other constructs like $\mathtt{while}$ loops can be defined in a similar way. 
%Let $\KVProgs$ denote the set of programs where a transaction body can contain only such instructions.

\subsection{Isolation Levels}

We present the axiomatic framework introduced
by \citet{DBLP:journals/pacmpl/BiswasE19} for defining isolation levels. %\footnote{Isolation levels are called consistency models by \citet{DBLP:journals/pacmpl/BiswasE19}.}.
Isolation levels are defined as logical constraints, called \emph{axioms}, over \emph{histories}, which are an abstract representation of the interaction between a program and the database in an execution. 
% that record the sequence of reads and writes executed in each transaction, the order between transactions in each session, and a write-read relation (also called read-from) that ``justifies'' read values by associating each read to a write that wrote the value returned by the read.

%TODO SOME INTRO EXPLAINING THAT ISOLATION LEVELS ARE DEFINED AS AXIOMS OVER HISTORIES, WHICH REPRESENT THE INTERACTION BETWEEN A PROGRAM AND A STORE. 
\vspace{-2mm}
\subsubsection{Histories}

Programs interact with a database by issuing transactions formed of $\ibegin$, $\icommit$, $\iabort$, $\textsf{read}$ and $\textsf{write}$ instructions. The effect of executing one such instruction is represented using an \emph{event} $\langle e, \mathit{type} \rangle$ where $e$ is an \textit{identifier} and $\mathit{type} $ is a \textit{type}. There are five types of events: $\ebegin$, $\ecommit$, $\eabort$, $\erd{x}$ for reading the global variable $x$, and $\ewrt{x,v}$ for writing value $v$ to $x$. $\Events$ denotes the set of events.
For a read/write event $e$, we use $\mathit{var}(e)$ to denote the variable $x$. % and $\mathit{val}(e)$ the value $v$.

A \emph{transaction log} $\tup{t,E, \po_t}$ is an identifier $t$ and a finite set of events $E$ along with a strict total order $\po_t$ on $E$, called \emph{program order} (representing the order between instructions in the body of a transaction).
%$T$  is a finite sequence of events ordered by a strict total order $\po_T$ called \emph{program order}. 
The minimal element of $\po_t$ is a $\ebegin$ event. A transaction log without neither a $\ecommit$ nor an $\eabort$ event is called \emph{pending}. Otherwise, it is called \emph{complete}. A complete transaction log with a $\ecommit$ event is called \textit{committed} and \textit{aborted} otherwise. If a $\ecommit$ or an $\eabort$ event occurs, then it is maximal in $\po_t$; $\ecommit$ and $\eabort$ cannot occur in the same log. The set $E$ of events in a transaction log $t$ is denoted by $\events{t}$. Note that a transaction is aborted because it executed an $\iabort$ instruction. Histories do not include transactions aborted by the database because their effect should not be visible to other transactions and the abort is not under the control of the program. For simplicity, we may use the term \emph{transaction} instead of transaction log.

Isolation levels differ in the values returned by read events which are not preceded by a write on the same variable in the same transaction. \nver{We assume in the following that every transaction in a program is executed under the same isolation level.} For every isolation level that we are aware of, 
%
%The program order $\po_t$ represents the order between instructions in the body of a transaction. 
%We assume that %each transaction log is well-formed in the sense that 
if a read of a global variable $x$ is preceded by a write to $x$ in $\po_t$, then it should return the value written by the last write to $x$ before the read (w.r.t. $\po_t$). 
%This property is implicit in the definition of every isolation level that we are aware of.  Therefore, %The  set of all transaction logs is denoted by $\mathsf{Tlogs}$.

%For every transaction $T$, we denote $\po_T$ the total order existing between the events in $T$.
%During the whole paper we will assume that every pair of transactions are disjoint and that both $\mathcal{E}$ and $\mathcal{T}$ are finite. Therefore, we denote the function $tr: \mathcal{E} \to \mathcal{T}$ that associates every event to the unique transaction it belongs to. %Moreover, we will assume there is a special transaction called \callout{initial} that contains exactly one write event for every possible variable.

The set of $\erd{x}$ events in a transaction log $t$ that are \textit{not} preceded by a write to $x$ in $\po_t$, for some $x$, is denoted by $\readOp{t}$. 
%As mentioned above, the other read events take their values from writes in the same transaction and their behavior is independent of other transactions. 
Also, if $t$ does \textit{not} contain an $\eabort$ event, the set of $\ewrt{x,\_}$ events in $t$ that are \textit{not} followed by other writes to $x$ in $\po_t$, for some $x$, is denoted by $\writeOp{t}$. If a transaction contains multiple writes to the same variable, then only the last one (w.r.t. $\po_t$) can be visible to other transactions (w.r.t. any isolation level that we are aware of). If $t$ contains an abort event, then we define $\writeOp{t}$ to be the empty set. This is because the effect of aborted transactions (its set of writes) should not be visible to other transactions.
The extension to sets of transaction logs is defined as usual. 
Also, we say that a transaction log $t$ \emph{writes} $x$, denoted by $\writeVar{t}{x}$, when $\writeOp{t}$ contains some $\ewrt{x,\_}$ event. 

A \emph{history} contains a set of transaction logs (with distinct identifiers)
ordered by a (partial) \emph{session order} $\so$ that represents the order
between transactions in the same session. %\footnote{For our programming language, $\so$ would be a union of total orders. This constraint is not important for defining isolation levels.}. 
It also includes a
%ordering constraints imposed by the applications using the database. Most often, $\so$ is a union of sequences, each sequence being called a \emph{session}. 
\emph{write-read} relation (also called read-from) that defines read values by associating each read to a transaction that wrote that value. Read events do \emph{not} contain a value, and their return value is defined as the value written by the transaction associated by the write-read relation.
\nver{Let $T$ be a set of transaction logs.}
For a write-read relation \oldver{$\wro\subseteq T\times \readOp{T}$}\nver{$\wro\subseteq \writeOp{T}\times \readOp{T}$} and variable $\key$, $\wro_\key$ is the restriction of $\wro$ to reads of $\key$, \oldver{$\wro_\key=\wro\cap (T\times \{ e\ |\ e \mbox{ is a }\erd{\key} \mbox{ event}\})$}\nver{$\wro_\key=\wro\cap (\writeOp{T}\times \{ e\ |\ e \mbox{ is a }\erd{\key} \mbox{ event}\})$}. We extend the relations $\wro$ and $\wro_\key$ to pairs of transactions by $\tup{t_1,t_2}\in \wro$, resp., $\tup{t_1,t_2}\in \wro_\key$, iff there exists \nver{a $\wrt{x}{\_}$ event $w$ in $t_1$ and} a $\rd{\key}$ event $r$ in $t_2$  s.t. \oldver{$\tup{t_1,e}\in \wro$, resp., $\tup{t_1,e}\in \wro_\key$}\nver{$\tup{w,r}\in \wro$, resp., $\tup{w, r}\in \wro_\key$.} \nver{Analogously, $\wro$ and $\wro_\key$ can be extended to tuples formed of a transaction (containing a write) and a read event.} We say that the transaction log $t_1$ is \emph{read} by the transaction log $t_2$ when $\tup{t_1,t_2}\in \wro$. 

\vspace{-1mm}
\begin{definition}
A \emph{history} $\tup{T, \so, \wro}$ is a set of transaction logs $T$ along with a strict partial \emph{session order} $\so$, and a 
\emph{write-read} relation \oldver{$\wro\subseteq T\times \readOp{T}$}\nver{$\wro\subseteq \writeOp{T}\times \readOp{T}$} such that
\vspace{-1mm}
\begin{itemize}
	\item the inverse of $\wro$ is a total function, 
	\item \oldver{if $(t,e)\in \wro$, where $e$ is a $\erd{x}$ event, then $\writeOp{t}$ contains a $\ewrt{x,\_}$ event,}\nver{if $(w,r)\in \wro$, then $w$ and $r$ are a write and respectively, a read, of the same variable,} and
	\item $\so\cup\wro$ is acyclic (here we use the extension of $\wro$ to pairs of transactions).
\end{itemize}
\vspace{-2mm}
\end{definition}

Every history includes a distinguished transaction writing the initial values of all global variables. This transaction precedes all the other transactions in $\so$. We use $\hist$, $\hist_1$, $\hist_2$, $\ldots$ to range over histories. 

The set of transaction logs $T$ in a history $\hist=\tup{T, \so, \wro}$ is denoted by $\tlogs{\hist}$, and $\events{\hist}$ is the union of $\events{t}$ for $t\in T$. For a history $\hist$ and an event $e$ in $\hist$, $\trans{h}{e}$ is the transaction $t$ in $\hist$ that contains $e$. Also, $\writeOp{\hist}=\bigcup_{t\in \tlogs{h}}\writeOp{t}$ and $\readOp{\hist}=\bigcup_{t\in \tlogs{h}}\readOp{t}$.

We extend $\so$\oldver{ and $\wro$} to pairs of events by $(e_1,e_2)\in \so$ if $(\trans{h}{e_1},\trans{h}{e_2})\in\so$. Also, $\po=\bigcup_{t\in T} \po_t$.

%TODO NOT SURE THAT THE FOLLOWING DEFINITION IS NEEDED
%
%\begin{definition}
%Let $h$ be a history:
%\begin{itemize}
%\item $h$ is called \callout{complete} if every transaction is non-pending and \callout{incomplete} otherwise;
%\item $h$ is \callout{executed in isolation} if it contains at most one pending transaction;
%\item $h$ is called \callout{total} if it is complete and contains every transaction $T \in \mathcal{T}$.
%\end{itemize}
%\end{definition}
%
%\textcolor{red}{\sout{If the event $e =\nextEvent(h)$ is $\ibegin, \iwrite$ or $\iend$, we will denote by $h \bullet e$ the history $h' = \langle E', \so', \wro\rangle$ where $E' = \events{h} \cup \{e\}$ and $\so' = \so \cup \{\langle e', e \rangle \ | \ e' \in h \land \thread{e} = \thread{e'}\}$. On the other hand, if $e$ is a $\iread$ event, we will define the history $h_w' = h \bullet_w e$ for some $\iwrite$ event $w \in h$ as $\langle E', \so', \wro \cup \{\langle w, r \rangle\} \rangle$; where $E'$ and $\so'$ defined as before.}: Not well defined, just cut-pasted from below}. 
%
%\textcolor{red}{EXTENSIONS, $\mathcal{H}^<$ AND $\bullet$ OPERATOR NOT (yet) DEFINED IN THIS SECTION!!!!!!!}

\vspace{-1mm}
\subsubsection{Axiomatic Framework}

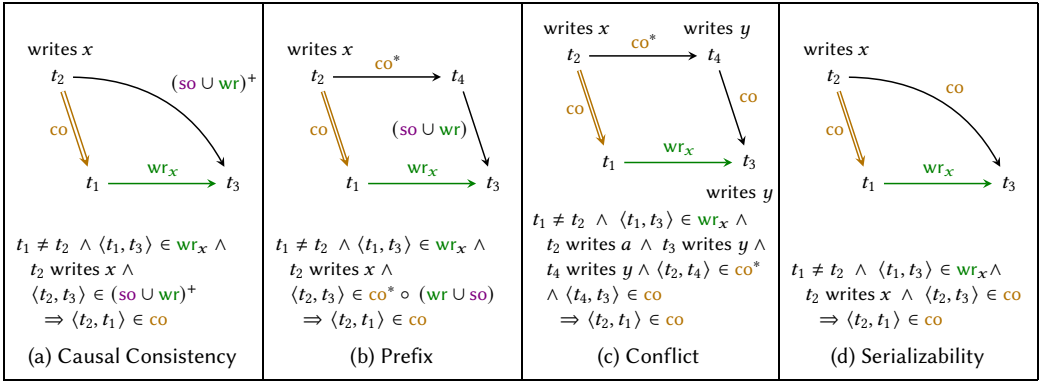
\begin{figure}[t]
	\resizebox{\textwidth}{!}{
		\footnotesize
		\begin{tabular}{|c|c|c|c|}
			\hline & & & \\ [-2mm]
			\begin{subfigure}[b]{.24\textwidth}
				\centering
				\begin{tikzpicture}[->,>=stealth,shorten >=1pt,auto,node distance=4cm,
					semithick, transform shape]
					\node[transaction state, text=black] at (0,0)       (t_1)           {$t_1$};
					\node[transaction state] at (2,0)       (t_3)           {$t_3$};
					\node[transaction state, text=black,label={above:\textcolor{black}{$\writeVar{ }{\key}$}}] at (-.5,1.5) (t_2) {$t_2$};
					\path (t_1) edge[wrColor] node {$\wro_x$} (t_3);
					% \path (t_2) edge[blue] node {$\CO$} (t_1);
					\path (t_2) edge[bend left] node {$(\so \cup \wro)^+$} (t_3);
					%   \path [->, decoration={snake}] (t_2) edge[decorate] node[auto] {F} (t_3);
					\path (t_2) edge[left,double equal sign distance,coColor] node {$\co$} (t_1);
				\end{tikzpicture}
				\parbox{\textwidth}{
					$ $ \\

					$t_1\neq t_2\ \land \tup{t_1,t_3}\in \wro_x \ \land $
					
					\hspace{2mm}$\writeVar{t_2}{x}\ \land$ 

					\hspace{2mm}$\tup{t_2,t_3}\in(\so \cup \wro)^+$
					
					\hspace{4mm}$\implies \tup{t_2,t_1}\in\co$
				}
				
				\caption{$\mathsf{Causal\ Consistency}$}
				\label{cc_def}
			\end{subfigure}
			
			&     
			\begin{subfigure}[b]{.24\textwidth}
				\centering
				\begin{tikzpicture}[->,>=stealth,shorten >=1pt,auto,node distance=4cm,
				 semithick, transform shape]
				\node[transaction state, text=black] at (0,0)       (t_1)           {$t_1$};
				\node[transaction state] at (2,0)       (t_3)           {$t_3$};
				\node[transaction state, text=black,label={above:\textcolor{black}{$\writeVar{ }{x}$}}] at (-0.5,1.5) (t_2) {$t_2$};
				\node[transaction state] at (1.5,1.5) (t_4) {$t_4$};
				\path (t_1) edge[wrColor] node {$\wro_x$} (t_3);
				% \path (t_2) edge[blue] node {$\CO$} (t_1);
				\path (t_2) edge[] node {$\co^*$} (t_4);
				\path (t_4) edge[left] node {$(\so \cup \wro)$} (t_3);
				\path (t_2) edge[left,double, coColor] node {$\co$} (t_1);
				\end{tikzpicture}
				\parbox{\textwidth}{
				
				$ $ \\

				$t_1\neq t_2\ \land \tup{t_1,t_3}\in \wro_x \ \land $
				
				\hspace{2mm}$\writeVar{t_2}{x}\ \land$ 
				
				\hspace{2mm}$\tup{t_2,t_3}\in\co^*\circ\,(\wro\cup\so)$
				
				\hspace{4mm}$\implies \tup{t_2,t_1}\in\co$
				}
				
				\caption{$\mathsf{Prefix}$}
				\label{pre_def}
				\end{subfigure}
			&
			\hspace{-1mm}

			\begin{subfigure}[b]{.24\textwidth}
			    \centering
			    \begin{tikzpicture}[->,>=stealth,shorten >=1pt,auto,node distance=4cm,
			      semithick, transform shape]
			     \node[transaction state, text=black] at (0,0)       (t_1)           {$t_1$};
			     \node[transaction state, label={[xshift=-5]below:$\writeVar{ }{\yvar}$}] at (2,0)       (t_3)           {$t_3$};
			     \node[transaction state, text=black,label={above:$\writeVar{}{x}$}] at (-.5,1.5) (t_2) {$t_2$};
			     \node[transaction state, label={[yshift=-2]above:{$\writeVar{}{\yvar}$}}] at (1.5,1.5) (t_4) {$t_4$};
			     \path (t_1) edge[wrColor] node {$\wro_x$} (t_3);
			     % \path (t_2) edge[blue] node {$\CO$} (t_1);
			     \path (t_2) edge node {$\co^*$} (t_4);
			     \path (t_4) edge node {$\co$} (t_3);
			     \path (t_2) edge[left,double, coColor] node {$\co$} (t_1);
			    \end{tikzpicture}
			    \parbox{\textwidth}{

			     $t_1\neq t_2\ \land \ \tup{t_1,t_3}\in \wro_x \ \land$
			     
			     \hspace{2mm}$ \writeVar{t_2}{\xvar}\ \land \ \writeVar{t_3}{\yvar}\ \land$ 
			     
			     \hspace{2mm}$ \writeVar{t_4}{\yvar} \land  \tup{t_2,t_4}\in\co^*$ 
			     
			     \hspace{2mm}$\land \ \tup{t_4,t_3}\in\co$
			     
			     \hspace{4mm}$\implies \tup{t_2,t_1}\in\co$
			    }
			    
			    \caption{$\mathsf{Conflict}$}
			    \label{confl_def}
			   \end{subfigure}
			
			   & 
    
			\begin{subfigure}[b]{.24\textwidth}
				\centering
				\begin{tikzpicture}[->,>=stealth,shorten >=1pt,auto,node distance=4cm,
					semithick, transform shape]
					\node[transaction state, text=black] at (0,0)       (t_1)           {$t_1$};
					\node[transaction state] at (2,0)       (t_3)           {$t_3$};
					\node[transaction state, text=black, label={above:\textcolor{black}{$\writeVar{ }{x}$}}] at (-.5,1.5) (t_2) {$t_2$};
					\path (t_1) edge[wrColor] node {$\wro_x$} (t_3);
					% \path (t_2) edge[blue] node {$\CO$} (t_1);
					\path (t_2) edge[bend left] node {$\co$} (t_3);
					\path (t_2) edge[left,double, coColor] node {$\co$} (t_1);
				\end{tikzpicture}
				\parbox{\textwidth}{
					$ $ \\
					$ $ \\

					$t_1\neq t_2\ \land \ \tup{t_1,t_3}\in \wro_x \land$
					
					\hspace{2mm}$ \writeVar{t_2}{x}\ \land \ \tup{t_2,t_3}\in\co$ 
										
					\hspace{4mm}$\implies \tup{t_2,t_1}\in\co$
				}
				
				\caption{$\mathsf{Serializability}$}
				\label{ser_def}
			\end{subfigure}
			
			\\ \hline			  
		\end{tabular}
	}
	\vspace{-2mm}
	\caption{Axioms defining isolations levels (all logical variables representing transactions, e.g., $t_1$, are universally quantified). The reflexive and transitive, resp., transitive, closure of a relation $rel$ is denoted by $rel^*$, resp., $rel^+$. Also, $\circ$ denotes the composition of two relations, i.e., $rel_1 \circ rel_2 = \{\tup{a, b} | \exists c. \tup{a, c} \in rel_1 \land \tup{c, b} \in rel_2\}$.}
	\label{fig:consistency_defs}
	\vspace{-4mm}
\end{figure}

A history satisfies a certain isolation level if there is a strict total order $\co$ on its transactions, called \emph{commit order}, which extends the write-read relation and the session order, and which satisfies certain properties. These properties, called \emph{axioms}, relate the commit order with the $\so$ and $\wro$ relations in a history and are
%the session-order and the write-read relation in the history. 
%The axioms define mandatory $\co$ predecessors $\tr_2$ of a transaction $\tr_1$ that is read in the history. 
%They are 
defined as 
first-order formulas of the form:
\label{eq:axioms}
\vspace{-1mm}
\begin{align}
  & \forall \key,\ \forall \tr_1\neq \tr_2,\ \forall \tr_3.\ \nonumber\\
  & \hspace{3mm}  \tup{\tr_1,t_3}\in \wro_\key \land \writeVar{\tr_2}{\key} \land \phi(\tr_2,\tr_3) \implies \tup{\tr_2,\tr_1}\in\co \label{eq:axiom}
%  & \hspace{5.4cm} \implies \tup{\tr_2,\tr_1}\in\co \nonumber
\end{align}
\vspace{-5mm}

where $\phi$ is a property relating $\tr_2$ and $\tau$ (i.e., the read or the
transaction reading from $\tr_1$) that varies from one axiom to another.\footnote{These formulas are interpreted on tuples $\tup{\hist,\co}$ of a history $\hist$ and a commit order $\co$ on the transactions in $\hist$ as usual.} Note that an aborted transaction $t$ cannot take the role of $\tr_1$ nor $\tr_2$ in equation \ref{eq:axiom} as the set $\writeOp{t}$ is empty. Intuitively, this axiom schema states the following: in order for $\tau$ to read specifically $t_1$'s write on $k$, it must be the case that every $t_2$ that also writes $k$ and satisfies $\phi(t_2,\tau)$ was committed before $t_1$. 
%Note that in all cases we consider, $\phi(t_2,\tau)$ already ensures that $t_2$ is committed before the read $\tau$, so this axiom schema ensures that $t_2$ is furthermore committed before $t_1$'s write.
The property $\phi$ relates $\tr_2$ and $\tau$ using the relations in a history and the commit order. 
Figure~\ref{fig:consistency_defs} shows two axioms which correspond to their homonymous isolation levels: \textit{Causal Consistency} ($\CC$) and \textit{Serializability} ($\SER$). The conjunction of the other two axioms Conflict and Prefix defines \textit{Snapshot Isolation} ($\SI$). \textit{Read Atomic} ($\RA$) is a weakening of $\CC$ where $(\so \cup \wro)^+$ is replaced with $\so \cup \wro$. \textit{Read Committed} ($\RC$) is defined similarly\appendixver{(see Appendix~\ref{sec:def:RA-RC})}.
%Analogously, some isolation levels such as \textit{Read Committed} ($\RC$), \textit{Read Atomic} ($\RA$) can be defined in the same way (Appendix~\ref{sec:def:RA-RC}).
Note that $\SER$ is stronger than $\SI$ (i.e., every history satisfying $\SER$ satisfies $\SI$ as well), $\SI$ is stronger than $\CC$, $\CC$ is stronger than $\RA$, and $\RA$ is stronger than $\RC$.

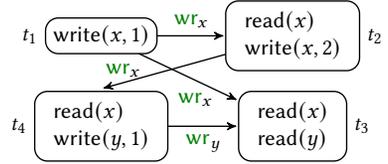
\begin{wrapfigure}{r}{0.4\textwidth}
\vspace{-5mm}
%\vspace{0.5mm}
   \centering
%   \begin{subfigure}{.3\textwidth}
%  \resizebox{\textwidth}{!}{
%\begin{tikzpicture}[->,>=stealth',shorten >=1pt,auto,node distance=3cm,
%    semithick, transform shape]
%    \node[draw, rounded corners=2mm] (t1) at (0, 0) {\begin{tabular}{l} $\wrt{\key_1}{1}$ \end{tabular}};
%   \node[draw, rounded corners=2mm,outer sep=0] (t2) at (0, -1.5) {\begin{tabular}{l} $\wrt{\key_1}{2}$ \\ $\wrt{\key_2}{2}$\end{tabular}};
%   \node[draw, rounded corners=2mm, minimum width=1.8cm, minimum height=2.5cm] (t3) at (3, -0.75) {};
%   \node[style={inner sep=0,outer sep=0}] (t3_1) at (3, 0) {\begin{tabular}{l} $\rd{\key_2}{2}$ \end{tabular}};
%   \node[style={inner sep=0,outer sep=0}] (t3_2) at (3, -1.5) {\begin{tabular}{l} $\rd{\key_1}{1}$ \end{tabular}};
%   % \path (t1) edge node {} (t3_2);
%   % \path (t2) edge node {} (t3_1);
%   \path (t1) edge node {$\co$} (t2);
%   \path (t3_1) edge node {$\po$} (t3_2);
%   \path (t1) edge[below] node[yshift=-4,xshift=4] {$\wro$} (t3_2);
%   \path (t2) edge node[yshift=-2,xshift=7] {$\wro$} (t3_1);
%  \end{tikzpicture}  
%    }
%    \caption{$\mathsf{Read\ Committed}$ violation.}
%    \label{rc_example:1}
%\end{subfigure}
%\hspace{1cm}
%\begin{subfigure}{.4\textwidth}
\resizebox{0.38\textwidth}{!}{
\begin{tikzpicture}[->,>=stealth',shorten >=1pt,auto,node distance=3cm,
 semithick, transform shape]
 \node[draw, rounded corners=2mm,outer sep=0, label={[font=\small]180:$t_1$}] (t1) at (0, 1.4) {$\ewrt{x,1}$};
\node[draw, rounded corners=2mm,outer sep=0,label={[font=\small]0:$t_2$}] (t2) at (3, 1.4) {\begin{tabular}{l} $\erd{x}$ \\ $\ewrt{x,2}$ \end{tabular}};
\node[draw, rounded corners=2mm,outer sep=0,label={[font=\small]0:$t_3$}] (t3) at (3, 0) {\begin{tabular}{l} $\erd{x}$ \\ $\erd{y}$ \end{tabular}};
\node[draw, rounded corners=2mm,outer sep=0,label={[font=\small]180:$t_4$}] (t4) at (0, 0) {\begin{tabular}{l} $\erd{x}$ \\ $\ewrt{y,1}$\end{tabular}};

\path (t1) edge[above] node[yshift=0,xshift=0] {$\wro_x$} (t2);

\path (t1) edge[below] node[yshift=-3,xshift=3] {$\wro_x$} (t3);

\path (t2) edge[above] node[yshift=-6,xshift=-18] {$\wro_x$} (0,0.58);

\path (t4) edge[below] node[yshift=0,xshift=0] {$\wro_y$} (t3);
\end{tikzpicture}  
}
% \caption{Causal C violation.}
% \label{cc_example:1}
%\end{subfigure}
\vspace{-2mm}
  \caption{Causal Consistency violation. Boxes group events from the same transaction.}
  \label{counter_example:1}
\vspace{-4mm}
\end{wrapfigure}
For instance, the axiom defining Causal Consistency~\cite{DBLP:journals/cacm/Lamport78} states that for any transaction $\tr_1$ writing a variable $x$ that is read in a transaction $\tr_3$, the set of $(\wro\cup \so)^+$ predecessors of $\tr_3$ writing $x$ must precede $\tr_1$ in commit order ($(\wro\cup \so)^+$ is usually called the \emph{causal} order). A violation of this axiom can be found in Figure~\ref{counter_example:1}: the transaction $\tr_2$ writing 2 to $x$ is a $(\wro\cup \so)^+$ predecessor of the transaction $\tr_3$ reading 1 from $x$ because the transaction $\tr_4$, writing 1 to $y$, reads $x$ from $\tr_2$ and $\tr_3$ reads $y$ from $\tr_4$. This implies that $\tr_2$ should precede in commit order the transaction $\tr_1$ writing 1 to $x$, which is inconsistent with the write-read relation ($\tr_2$ reads from $\tr_1$).

The $\mathsf{Serializability}$ axiom requires that for any transaction $\tr_1$ writing to a variable $x$ that is read in a transaction $\tr_3$, the set of $\co$ predecessors of $\tr_3$ writing $x$ must precede $\tr_1$ in commit order. This ensures that each transaction observes the effects of all the $\co$ predecessors. 

%It was shown in~\cite{DBLP:journals/pacmpl/BiswasE19} that as expected, $\mathsf{Serializability}$ implies $\mathsf{Causal}$, which implies $\mathsf{Read\ Committed}$ (when interpreted as first-order formulas).

\begin{definition}
For an isolation level $I$ defined by a set of axioms $X$, a history
$\hist=\tup{T, \so, \wro}$ \emph{satisfies} $I$ iff there is a strict total
order $\co$ s.t. $\wro\cup\so\subseteq \co$ and $\tup{h,\co}$ satisfies $X$. %\footnote{Isolation levels like Snapshot Isolation require more than one axiom.}
 % Given a $\CO$(\textit{commit order}), a total order on $T$ which extends $\wro \cup \so$, we can define consistency axioms from table \ref{consistency_defs}. For each axiom, the situation in the table implies, $\Path{\tr_2}{\CO}{\tr_1}$.
 \label{axiom-criterion}
\end{definition}

A history that satisfies an isolation level $I$ is called $I$-consistent. For two isolation levels $I_1$ and $I_2$, $I_1$ is \emph{weaker than} $I_2$ when every $I_1$-consistent history is also $I_2$-consistent.

\subsection{Program Semantics}\label{ssec:semantics}

We define a small-step operational semantics for transactional programs, which is parametrized by an isolation level $I$. The semantics keeps a history of previously executed database accesses in order to maintain consistency with $I$. 

For readability, we define a program as a partial function $\prog:\mathsf{SessId}\rightharpoonup \mathsf{Sess}$ that associates session identifiers in $\mathsf{SessId}$ with concrete code as defined in Figure~\ref{fig:syntax} (i.e., sequences of transactions). Similarly, the session order $\so$ in a history is defined as a partial function $\so:\mathsf{SessId}\rightharpoonup \mathsf{Tlogs}^*$ that associates session identifiers with sequences of transaction logs. Two transaction logs are ordered by $\so$ if one occurs before the other in some sequence $\so(j)$ with 
$j\in \mathsf{SessId}$.

The operational semantics is defined as a transition relation $\Rightarrow_I$ between \emph{configurations}, which are defined as tuples containing the following:
\begin{itemize}
	\item history $\hist$ storing the events generated by database accesses executed in the past, 
	\item a valuation map $\vec{\gamma}$ that records local variable values in the current transaction of each session ($\vec{\gamma}$ associates identifiers of sessions with valuations of local variables),
	\item a map $\vec{B}$ that stores the code of each live transaction (mapping session identifiers to code), 
	\item sessions/transactions $\prog$ that remain to be executed from the original program.
\end{itemize}

The relation $\Rightarrow_I$ is defined using a set of rules as expected\appendixver{(their precise definition can be found in Appendix~\ref{app:rules})}. %(the others can be defined in a similar manner). 
Starting a new transaction in a session $j$ is enabled as long as this session has no live transactions ($\vec{\mathsf{B}}(j) = \epsilon$) and results in adding a transaction log with a single $\ebegin$ event to the history and scheduling the body of the transaction (adding it to $\vec{\mathsf{B}}(j)$). Local steps, i.e., checking a Boolean condition or computation with local variables, use the local variable valuations and advance the code as expected. Read instructions of some global variable $x$ can have two possible behaviors: (1) if the read follows a write on $x$ in the same transaction, then it returns the value written by the last write on $x$ in that transaction, and (2) otherwise, the read reads from another transaction $\tr'$ which is chosen non-deterministically as long as extending the current history with the write-read dependency associated to this choice leads to a history that still satisfies $I$. \nver{Depending on the isolation level, there may not exist a transaction $\tr'$ the read can read from\appendixver{(see Appendix~\ref{app:impossibility} for a concrete example)}.} For other instructions, e.g., $\icommit$ and $\iabort$, the history is simply extended with the corresponding events while ending the transaction execution in the case of $\iabort$.

%\textsc{spawn} starts a new transaction in a session $j$ provided that this session has no live transaction ($\vec{\mathsf{B}}(j) = \epsilon$). It adds a transaction log with a single $\ebegin$ event to the history and schedules the body of the transaction. \textsc{if-true} and \textsc{if-false} check the truth value of a Boolean condition of an $\mathtt{if}$ conditional. \textsc{local} models the execution of an assignment to a local variable which does not impact the stored history. \textsc{read-local} and \textsc{read-extern} concern read instructions. \textsc{read-local} handles the case where the read follows a write on the variable $x$ in the same transaction: the read returns the value written by the last write on $x$ in that transaction. Otherwise, \textsc{read-extern} corresponds to reading a value written in another transaction $\tr'$. The transaction $\tr'$ is chosen non-deterministically as long as extending the current history with the write-read dependency associated to this choice leads to a history that still satisfies $I$. \textsc{read-extern} applies only when the executing transaction contains no write on the same variable. \textsc{commit} confirms the end of a transaction making its writes visible while \textsc{abort} ends the transaction's execution immediately.

An \emph{initial} configuration for program $\prog$ contains the program $\prog$, a history $\hist=\tup{\{\tr_0\},\emptyset,\emptyset}$ where $\tr_0$ is a transaction log containing writes that write the initial value for all variables, and empty current transaction code ($\mathsf{B}=\epsilon$). 
An execution of a program $\prog$ under an isolation level $I$ is a sequence of configurations $c_0 c_1\ldots c_n$ where $c_0$ is an initial configuration for $\prog$, and $c_m\Rightarrow_I c_{m+1}$, for every $0\leq m < n$. We say that $c_n$ is \emph{$I$-reachable} from $c_0$.
The history of such an execution is the history $\hist$ in the last configuration $c_n$. 
A configuration is called \emph{final} if it contains the empty program ($\prog=\emptyset$).
Let $\histOf[I]{\prog}$ denote the set of all histories of an execution of $\prog$ under $I$ that ends in a final configuration.

%	\input{definitions}
	%!TEX root = main.tex
\section{Prefix-Closed and Causally-Extensible Isolation Levels}\label{sec:props}

We define two properties of isolation levels, prefix-closure and causal extensibility, which enable efficient DPOR algorithms (as shown in Section~\ref{sec:CC-algorithm}).

%Besides models presented in figure \ref{fig:consistency_defs}, others isolation levels exists in literature and real life applications \textcolor{red}{cite Constantin's papers + Twitter, shoppingcart...}. However, our algorithm can not be analyzed under an arbitrary model. We characterize in this section the ones that can be employed by our algorithm.
%not all of them can verified with the algorithm

\subsection{Prefix Closure}

For a relation $R\subseteq A\times A$, the restriction of $R$ to $A'\times A'$, denoted by $R\downarrow A'\times A'$, is defined by $\{(a,b): (a,b)\in R, a,b\in A'\}$. Also, a set $A'$ is called $R$-downward closed when it contains $a\in A$ every time it contains some $b\in A$ with $(a,b)\in R$.

\begin{figure}[t]
%		\centering
%	\begin{subfigure}[b]{.25\textwidth}
%		\begin{adjustbox}{max width=\textwidth}
%			\begin{tabular}{c||c||c}
%				\begin{lstlisting}[xleftmargin=5mm,basicstyle=\ttfamily\scriptsize,escapeinside={(*}{*)}, tabsize=1]
%begin;
%a = read((*x*));
%b = read((*y*));
%commit
%				\end{lstlisting} &
%				\begin{lstlisting}[xleftmargin=5mm,basicstyle=\ttfamily\scriptsize,escapeinside={(*}{*)}, tabsize=1]
%begin;
%write((*$x$*),2);
%commit
%				\end{lstlisting} &
%			\begin{lstlisting}[xleftmargin=5mm,basicstyle=\ttfamily\scriptsize,escapeinside={(*}{*)}, tabsize=1]
%begin;
%read((*$x$*));
%commit
%			\end{lstlisting} 
%			\end{tabular} 
%		\end{adjustbox}
%		
%		\caption{Program.}
%		\label{fig:prefix:prog}
%	\end{subfigure}
%	\hspace{.15cm}
	\centering
	\begin{subfigure}[b]{.30\textwidth}
		\resizebox{\textwidth}{!}{
			\begin{tikzpicture}[->,>=stealth',shorten >=1pt,auto,node distance=3cm,
				semithick, transform shape]
				\node[draw, rounded corners=2mm,outer sep=0] (t1) at (-3.25, 0) {\begin{tabular}{l} $\init$ \end{tabular}};
				\node[draw, rounded corners=2mm,outer sep=0] (t2) at (-3.25, -1.6) {\begin{tabular}{l} 
						$\rd{x}$ \\ $\rd{y}$
				\end{tabular}};
				\node[draw, rounded corners=2mm,outer sep=0] (t3) at (0, 0) {\begin{tabular}{l} 
						$\wrt{x}{2}$
				\end{tabular}};	
				\node[draw, rounded corners=2mm,outer sep=0] (t4) at (0, -1.6)
				{\begin{tabular}{l} 
						$\rd{x}$
				\end{tabular}};			
				
				\path (t1.south west) -- (t1.south) coordinate[pos=0.67] (t1x);
				\path (t2.north west) -- (t2.north) coordinate[pos=0.67] (t2x);
				\path (t3.north west) -- (t3.west) coordinate[pos=0.67] (t3x);
				
				\path (t3.west) edge [below] node[right] {$\wro_x$} (t2.north east);
				\path (t1.south) edge [above] node[left] {$\so \cap \wro_y$} (t2.north);
				\path (t1.east) edge [above] node[above] {$\so$} (t3.west);
				\path (t3.south) edge node[right] {$\wro_x$} (t4.north);
				\path (t1.south east) edge node[below] {$\so$} (t4.north west);
			\end{tikzpicture}
			
		}
		\caption{A history.}
		\label{fig:prefix:a}
	\end{subfigure}
	\hspace{.15cm}
	\centering
	\begin{subfigure}[b]{.30\textwidth}
		\resizebox{\textwidth}{!}{
			\begin{tikzpicture}[->,>=stealth',shorten >=1pt,auto,node distance=3cm,
				semithick, transform shape]
				\node[draw, rounded corners=2mm,outer sep=0] (t1) at (-3.25, 0) {\begin{tabular}{l} $\init$ \end{tabular}};
				\node[draw, rounded corners=2mm,outer sep=0] (t2) at (-3.25, -1.6) {\begin{tabular}{l} 
						$\rd{x}$ \\ $\rd{y}$
				\end{tabular}};
				\node[draw, rounded corners=2mm,outer sep=0] (t3) at (0, 0) {\begin{tabular}{l} 
						$\wrt{x}{2}$
				\end{tabular}};			
				
				\path (t1.south west) -- (t1.south) coordinate[pos=0.67] (t1x);
				\path (t2.north west) -- (t2.north) coordinate[pos=0.67] (t2x);
				\path (t3.north west) -- (t3.west) coordinate[pos=0.67] (t3x);
				
				\path (t3.west) edge [below] node[right] {$\wro_x$} (t2.north east);
				\path (t1.south) edge [above] node[left] {$\so \cap \wro_y$} (t2.north);
				\path (t1.east) edge [above] node[above] {$\so$} (t3.west);

			\end{tikzpicture}
		}
		\caption{A prefix.}
		\label{fig:prefix:b}
	\end{subfigure}
	\hspace{.15cm}
	\centering
	\begin{subfigure}[b]{.30\textwidth}
		\resizebox{\textwidth}{!}{
			\begin{tikzpicture}[->,>=stealth',shorten >=1pt,auto,node distance=3cm,
				semithick, transform shape]
				\node[draw, rounded corners=2mm,outer sep=0] (t1) at (-3.25, 0) {\begin{tabular}{l} $\init$ \end{tabular}};
				\node[draw, rounded corners=2mm,outer sep=0, ] (t2) at (-3.25, -1.4) {\begin{tabular}{l} 
						$\rd{x}$ \\ $\rd{y}$
				\end{tabular}};
%				\node[draw, rounded corners=2mm,outer sep=0, white] (t3) at (0, 0) {\begin{tabular}{l} 
%					$\wrt{x}{2}$
%				\end{tabular}};	
				\node[draw, rounded corners=2mm,outer sep=0] (t4) at (0, -1.4)
				{\begin{tabular}{l} 
						$\rd{x}$
				\end{tabular}};			
				
				\path (t1.south west) -- (t1.south) coordinate[pos=0.67] (t1x);
				\path (t2.north west) -- (t2.north) coordinate[pos=0.67] (t2x);
				\path (t3.north west) -- (t3.west) coordinate[pos=0.67] (t3x);
				
				\path (t1.south east) edge node[below] {$\so$} (t4.north west);
%				\path (t3.west) edge [below] node[right] {$\wro_x$} (t2.north east);
				\path (t1.south) edge [above] node[left] {$\so \cap \wro_y$} (t2.north);
%				\path (t3.south) edge node[right] {$\wro_x$} (t4.north);
			\end{tikzpicture}
			
		}
		\caption{Not a prefix.}
		\label{fig:prefix:c}
	\end{subfigure}
	\vspace{-2mm}
	\caption{Explaining the notion of prefix of a history. $\init$ denotes the transaction log writing initial values. Boxes group events from the same transaction.}
	\vspace{-4mm}
\end{figure}
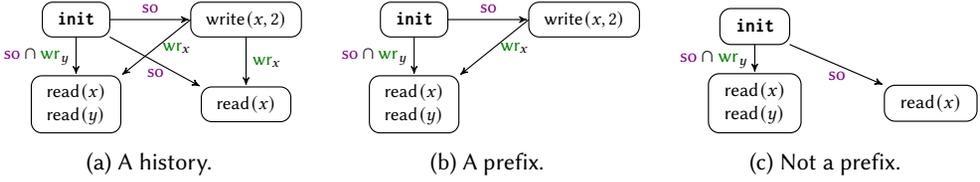
%where $T'$ contains prefixes of a subset of transaction logs $T''\subseteq T$ such that $T''$ is $(\so \cup \wro)^*$-downward closed.  

A \emph{prefix} of a transaction log $\tup{t,E, \po_t}$ is a transaction log $\tup{t,E', \po_t \downarrow E'\times E'}$ such that $E'$ is $\po_t$-downward closed. 
A \emph{prefix} of a history $\hist=\tup{T, \so, \wro}$ is a history $\hist'=\tup{T',\so\downarrow T'\times T',\wro\downarrow T'\times T'}$ such that every transaction log in $T'$ is a prefix of a different transaction log in $T$ but carrying the same id, $\events{\hist'}\subseteq\events{\hist}$, and $\events{\hist'}$ is $(\po\cup \so \cup \wro)^*$-downward closed.
For example, the history pictured in Fig.~\ref{fig:prefix:b} is a prefix of the one in Fig.~\ref{fig:prefix:a} while the history in Fig.~\ref{fig:prefix:c} is not. The transactions on the bottom of Fig.~\ref{fig:prefix:c} have a $\wro$ predecessor in Fig.~\ref{fig:prefix:a} which is not included.

\vspace{-1.5mm}
\begin{definition}
An isolation level $I$ is called \emph{prefix-closed} when every prefix of an $I$-consistent history is also $I$-consistent.
\vspace{-1.5mm}
\end{definition}

%For be denominated as \callout{STMC-model}, $\mathcal{M}$ has to satisfies the following three axioms for every program $\mathcal{P}$:

%Comparing to the model requirements described in \textcolor{red}{cite viktor's algorithm}, it is causal-extensibility property, a slightly stricter version of \textcolor{red}{Viktor}'s maximal-extensibility, the biggest difference. However, this weak formulation still forbids some axiomatic models such as Serializability \textcolor{red}{cite constantin's paper (SER)} \textcolor{red}{Appendix cite}.
%Indeed, some isolation levels cannot guarantee that reading from the last added $\iwrite$ event will maintain its consistency status.
Every isolation level $I$ discussed above is prefix-closed because if a history $\hist$ is $I$-consistent with a commit order $\co$, then the restriction of $\co$ to the transactions that occur in a prefix $\hist'$ of $\hist$ satisfies the corresponding axiom(s) when interpreted over $\hist'$. %Therefore, all these levels are prefix-clos
%When the isolation level is defined with an equation such as \ref{eq:axioms}, the restriction of $\co$ to the transactions that occur in a prefix $\hist'$ of a history $\hist$ satisfies the corresponding axiom(s) when interpreted over $\hist'$.

\vspace{-2mm}
\begin{theorem}
\textit{Read Committed}, \textit{Read Atomic}, \textit{Causal Consistency}, \textit{Snapshot Isolation}, and \textit{Serializability} are prefix closed.
\vspace{-1.5mm}
\end{theorem}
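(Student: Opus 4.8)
The plan is to exhibit, for each listed isolation level, a commit order witnessing $I$-consistency of an arbitrary prefix, obtained by simply restricting a witnessing commit order of the full history. Fix one isolation level $I$ from the list together with its defining axiom set $X$ (recall $\SI = \prefix \land \axconf$, and $\RA$, $\RC$ are the weakenings of $\CC$ obtained by replacing $(\so\cup\wro)^+$ with $\so\cup\wro$). Let $\hist=\tup{T,\so,\wro}$ be $I$-consistent, so there is a strict total order $\co$ with $\so\cup\wro\subseteq\co$ such that $\tup{\hist,\co}$ satisfies $X$. Let $\hist'=\tup{T',\so\downarrow T'\times T',\wro\downarrow T'\times T'}$ be a prefix; by the definition of prefix, $\hist'$ is already a well-formed history, where the $(\po\cup\so\cup\wro)^*$-downward closure of $\events{\hist'}$ is exactly what guarantees that the inverse of $\wro\downarrow T'\times T'$ is still a total function. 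I would set $\co'=\co\downarrow T'\times T'$ and claim $\tup{\hist',\co'}$ satisfies $X$.

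First I would dispatch the structural requirements of Definition~\ref{axiom-criterion}. The restriction of a strict total order to a subset is again a strict total order, so $\co'$ totally orders $T'$. Moreover $\so\downarrow T'\times T'$ relates only elements of $T'$ and is contained in $\so\subseteq\co$, hence it is contained in $\co\downarrow T'\times T'=\co'$; the same holds for $\wro$, giving $\so'\cup\wro'\subseteq\co'$.

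The heart of the argument is a monotonicity observation: every relation appearing in the premises $\phi$ of the axioms of Figure~\ref{fig:consistency_defs}, when computed inside $\hist'$, is contained in the corresponding relation computed inside $\hist$. This is because the base relations $\so$, $\wro$, $\wro_\key$, and $\co$ all restrict to subsets, and the only operators building $\phi$ — union, composition $\circ$, and the closures $(\cdot)^+$ and $(\cdot)^*$ — are monotone with respect to $\subseteq$ (a path in a restricted relation is a path in the larger one). Concretely: $(\so'\cup\wro')^+\subseteq(\so\cup\wro)^+$ for $\CC$ (and its weaker variants for $\RA$, $\RC$); $(\co')^*\circ(\wro'\cup\so')\subseteq\co^*\circ(\wro\cup\so)$ for $\prefix$; $\co'\subseteq\co$ for $\SER$; and for $\axconf$ the existentially quantified witness $t_4$ already lies in $T'\subseteq T$, with each of its conjuncts preserved.

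Putting it together, I would take any instantiation of a universally quantified axiom in $X$ whose premise holds in $\tup{\hist',\co'}$. By the monotonicity claim, together with the fact that ``$t$ writes $\key$'' transfers from $\hist'$ to $\hist$ (a write present in a prefix log is present in the full log), the same premise holds in $\tup{\hist,\co}$; since $X$ holds there, the conclusion $\tup{t_2,t_1}\in\co$ follows, and as $t_1,t_2\in T'$ this yields $\tup{t_2,t_1}\in\co'$. Hence $\tup{\hist',\co'}$ satisfies $X$ and $\hist'$ is $I$-consistent. The only place demanding care — and the step I expect to be the main obstacle — is verifying the monotonicity claim uniformly across all four axioms, in particular checking that the existential witness in $\axconf$ and the derived write-sets behave correctly under restriction; everything else is routine bookkeeping about restrictions of total orders.
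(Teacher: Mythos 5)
Your proposal is correct and follows exactly the paper's argument: the paper's entire justification is that the restriction of the witnessing commit order $\co$ to the transactions of the prefix satisfies the corresponding axiom(s) when interpreted over the prefix, which is precisely your $\co'=\co\downarrow T'\times T'$ construction. Your elaboration of the monotonicity of the axioms' premises under restriction (and the transfer of $\writeVar{t}{\key}$ from prefix to full history) just fills in the routine details the paper leaves implicit.
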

\begin{comment}
\begin{proof}(Sketch)
Let $\hist$ be a history that satisfies one of these isolation levels, and let $\co$ be a commit order of $\hist$ that satisfies the corresponding axiom(s). The restriction of $\co$ to the transactions that occur in a prefix $\hist'$ of $\hist$ satisfies the corresponding axiom(s) when interpreted over $\hist'$.
%As any $\so \cup \wro$-prefix-closed sub-history $h'$ of $h$ is a sub-graph of it, and there is a commit order $\co$ for $h$, it suffices to restrict $\co$ to $h'$ for obtaining a commit order for $h'$.
\end{proof}
\end{comment}

\subsection{Causal Extensibility}\label{ssec:causal_ext}

\begin{figure}[t]
	\begin{minipage}[b]{0.66\textwidth}
	
		\centering
		\begin{subfigure}[b]{.46\textwidth}
			\resizebox{\textwidth}{!}{
				\begin{tikzpicture}[->,>=stealth',shorten >=1pt,auto,node distance=3cm,
					semithick, transform shape]
					\node[draw, rounded corners=2mm,outer sep=0] (t1) at (-3.25, 0) {\begin{tabular}{l} $\init$ \end{tabular}};
					\node[draw, rounded corners=2mm,outer sep=0] (t2) at (-3.25, -1.6) {\begin{tabular}{l} 
							$\rd{x}$ \\ \textbf{\textcolor{blue}{$\rd{y}$}} %\pgfsetfillopacity{0.3}
					\end{tabular}};
					\node[draw, rounded corners=2mm,outer sep=0] (t3) at (0, 0) {\begin{tabular}{l} 
							$\wrt{x}{2}$%\\
							%\pgfsetfillopacity{0.3}$\wrt{y}{2}$
					\end{tabular}};			
					
					\path (t1.south west) -- (t1.south) coordinate[pos=0.67] (t1x);
					\path (t2.north west) -- (t2.north) coordinate[pos=0.67] (t2x);
					\path (t3.north west) -- (t3.west) coordinate[pos=0.67] (t3x);
					
					\path (t3.west) edge [below] node[right] {$\wro_x$} (t2.north east);
					\path (t1.south) edge [above] node[left] {$\so$} (t2.north);
					\path (t1.east) edge [above] node[above] {$\so$} (t3.west);
					%\path (t1x) edge [right] node {$\wro_y$} (t2x);
				\end{tikzpicture}  
				
			}
			\caption{Extensible history.}
			\label{fig:maxclosed:a}
		\end{subfigure}
		\hspace{.1cm}
		\centering
		\begin{subfigure}[b]{.5\textwidth}
			\resizebox{\textwidth}{!}{
				\begin{tikzpicture}[->,>=stealth',shorten >=1pt,auto,node distance=3cm,
					semithick, transform shape]
					\node[draw, rounded corners=2mm,outer sep=0] (t1) at (-3.25, 0) {\begin{tabular}{l} $\init$ \end{tabular}};
					\node[draw, rounded corners=2mm,outer sep=0] (t2) at (-3.25, -1.6) {\begin{tabular}{l} 
							$\rd{x}$ \\ $\rd{y}$
					\end{tabular}};
					\node[draw, rounded corners=2mm,outer sep=0] (t3) at (0, 0) {\begin{tabular}{l} 
							$\wrt{x}{2}$\\
							\textbf{\textcolor{blue}{$\wrt{y}{2}$}}
					\end{tabular}};			
					
					\path (t1.south west) -- (t1.south) coordinate[pos=0.67] (t1x);
					\path (t2.north west) -- (t2.north) coordinate[pos=0.67] (t2x);
					\path (t3.north west) -- (t3.west) coordinate[pos=0.67] (t3x);
					
					\path (t3.west) edge [below] node[right] {$\wro_x$} (t2.north east);
					\path (t1.south) edge [above] node[left] {$\so \cap \wro_y$} (t2.north);
					\path (t1.east) edge [above] node[above] {$\so$} (t3.west);
				\end{tikzpicture}
				
			}
			\caption{Non-extensible history.}
			\label{fig:maxclosed:b}
		\end{subfigure}
		\centering
		\vspace{-2mm}
		\caption{Explaining causal extensibility. $\init$ denotes the transaction log writing initial values. Boxes group events from the same transaction.
		%We write $\so \cap \wro_x$ when there is a $\so$ and a $\wro_x$ edge sharing both source and target.
		}
		\vspace{-2mm}

		\label{fig:maxclosed}
		%\vspace{-3mm}

	\end{minipage}
	\hfill
	\begin{minipage}[b]{0.32\textwidth}
		\resizebox{\textwidth}{!}{
			\begin{tikzpicture}[->,>=stealth',shorten >=1pt,auto,node distance=3cm,
				semithick, transform shape]
				\node[draw, rounded corners=2mm,outer sep=0] (t1) at (-1.5, -0.25) {\begin{tabular}{l} $\init$ \end{tabular}};
				\node[draw, rounded corners=2mm,outer sep=0] (t2) at (-3, -2) {\begin{tabular}{l} 
						$\wrt{z}{1}$ \\ $\rd{x}$ \\ $\wrt{y}{1}$ 
				\end{tabular}};
				\node[draw, rounded corners=2mm,outer sep=0] (t3) at (0, -2) {\begin{tabular}{l} 
						$\wrt{z}{2}$ \\ $\rd{y}$ \\ 	\textbf{\textcolor{blue}{$\wrt{x}{2}$}}
				\end{tabular}};		
				
				\path (t1.south west) -- (t1.south) coordinate[pos=0.67] (t1sw);
				\path (t1.south east) -- (t1.south) coordinate[pos=0.67] (t1se);
				\path (t2.north east) -- (t2.north) coordinate[pos=0.67] (t2x);
				\path (t3.north west) -- (t3.north) coordinate[pos=0.67] (t3x);
				
				%\path (t3x) edge [above] node[yshift=8,xshift=0] {$\wro_x$} (t2.north east);
				\path (t1sw) edge [left] node {$\wro_x$} (t2x);
				\path (t1se) edge [right] node {$\wro_y$} (t3x);
			\end{tikzpicture}  
		}

	\vspace{-2mm}

	\caption{A counter-example to causal extensibility for $\SI$ and $\SER$. 
	%A bi-threaded program and a $\SER$/ $\SI$-consistent history. Events in gray are not yet added to the history. 
	The $\so$-edges from $\init$ to the other transactions are omitted for legibility.}
	\vspace{-2mm}

	\label{fig:non-causally-extensible}

	\end{minipage}
\vspace{-2mm}
\end{figure}

We start with an example to explain causal extensibility. Let us consider the histories $h_1$ and $h_2$ in Figures~\ref{fig:maxclosed:a} and \ref{fig:maxclosed:b}, respectively, \emph{without} the events $\erd{y}$ and $\ewrt{y,2}$ written in blue bold font. These histories satisfy Read Atomic. 
%respectively under RA; isolation level under which both are consistent. 
The history $h_1$ can be extended by adding the event $\rd{y}$ and the $\wro$ dependency $\wro(\init,\rd{y})$ while still satisfying Read Atomic.
%$w_1 \ [\wro] \ r_1$, where $w_1 = \wrt{x}{0}$. 
On the other hand, the history $h_2$ \emph{can not} be extended with the event $\wrt{y}{2}$ while still satisfying Read Atomic. Intuitively, if the reading transaction on the bottom reads $x$ from the transaction on the right, then it should read $y$ from the same transaction because this is more ``recent'' than $\init$ w.r.t. session order. The essential difference between these two extensions is that the first concerns a transaction which is maximal in $(\so\cup\wro)^+$ while the second no. The extension of $\hist_2$ concerns the transaction on the right in Figure~\ref{fig:maxclosed:b} which is a $\wro$ predecessor of the reading transaction. Causal extensibility will require that at least the $(\so\cup\wro)^+$ maximal (pending) transactions can always be extended with any event while still preserving consistency. The restriction to $(\so\cup\wro)^+$ maximal transactions is intuitively related to the fact that transactions should not read from non-committed (pending) transactions, e.g., the reading transaction in $\hist_2$ should not read from the still pending transaction that writes $x$ and later $y$.

%this is not the case of $h_2$: the only event that could be added in it is $w_2 = \wrt{y}{2}$. If $w_2$ would be added in $h_2$, any relation extending $\so \cup \wro$ and satisfying RA would be cyclic, so it wouldn't be a commit order. The essential difference between these two histories is the following: in $h_1$, $\tr(r)$ is $\so \cup \wro$-maximal while in $h_2$ $\tr(w)$ is not. As real database executions forbid transactions reading from non-committed ones, it is reasonable to allow those transactions $\so \cup \wro$-maximal to be executed completed without hindering the previous committed transactions. 
Formally, let  $\hist=\tup{T, \so, \wro}$ be a history. A transaction $t$ is called $(\so \cup \wro)^+$-maximal in $h$ if $h$ does not contain any transaction $t'$ such that $(t,t')\in (\so \cup \wro)^+$. We define a \emph{causal extension} of a pending transaction $t$ in $h$ with an event $e$ as \nver{a history $h'$} such that:
\vspace{-1mm}
\begin{itemize}
\item $e$ is added to $t$ as a maximal element of $\po_t$,
\item if $e$ is a read event and $t$ \emph{does not} contain a write to $\mathit{var}(e)$, then $\wro$ is extended with some tuple $(t',e)$ such that $(t', t) \in  (\so \cup \wro)^+$ \nver{in $h$} \nver{(if $e$ is a read event and $t$ \emph{does} contain a write to $\mathit{var}(e)$, then the value returned by $e$ is the value written by the latest write on $\mathit{var}(e)$ before $e$ in $t$; the definition of the return value in this case is unique and does not involve $\wro$ dependencies),}
\item the other elements of $\hist$ remain unchanged \nver{in $h'$}.
\vspace{-1mm}
\end{itemize}

\begin{figure}[t]
%	\centering
%	\begin{subfigure}[b]{.25\textwidth}
%		\begin{adjustbox}{max width=\textwidth}
%			\begin{tabular}{c||c}
%				\begin{lstlisting}[xleftmargin=5mm,basicstyle=\ttfamily\scriptsize,escapeinside={(*}{*)}, tabsize=1]
%begin;
%write((*$y$*),1);
%write((*$x$*),1);
%commit
%begin;
%write((*$x$*),2);
%commit
%				\end{lstlisting} &
%				\begin{lstlisting}[xleftmargin=5mm,basicstyle=\ttfamily\scriptsize,escapeinside={(*}{*)}, tabsize=1]
%begin;
%write((*$x$*),3);
%commit
%begin;
%a = read((*$y$*));
%b = read((*$x$*));
%commit
%				\end{lstlisting}
%			\end{tabular} 
%		\end{adjustbox}
%		
%		\caption{Program.\\$ $}
%		\label{fig:extension:prog}
%	\end{subfigure}
%	\hspace{.15cm}
%	\centering
	\begin{subfigure}[t]{.28\textwidth}
		\resizebox{\textwidth}{!}{
			\begin{tikzpicture}[->,>=stealth',shorten >=1pt,auto,node distance=3cm,
				semithick, transform shape]
				\node[draw, rounded corners=2mm,outer sep=0] (t0) at (-1.5, 0) {\begin{tabular}{l} $\init$ \end{tabular}};
				\node[draw, rounded corners=2mm,outer sep=0, label={[font=\small]100:$t_1$}] (t1) at (-3, -1.6) {\begin{tabular}{l} 
					$\wrt{x}{1}$ \\ $\wrt{y}{1}$
				\end{tabular}};
				\node[draw, rounded corners=2mm,outer sep=0, label={[font=\small]50:$t_2$}] (t2) at (-3, -3.2) {\begin{tabular}{l} 
					$\wrt{x}{2}$
				\end{tabular}};
				\node[draw, rounded corners=2mm,outer sep=0, label={[font=\small]50:$t_3$}] (t3) at (0, -1.6) {\begin{tabular}{l} 
						$\wrt{x}{3}$
				\end{tabular}};	
				\node[draw, rounded corners=2mm,outer sep=0, label={[font=\small]50:$t_4$}] (t4) at (0, -3.2)
				{\begin{tabular}{l} 
						$\rd{y}$ \\ $\cdots$%$\rd{x}$}
				\end{tabular}};			
				
				\path (t1.south west) -- (t1.south) coordinate[pos=0.67] (t1x);
				\path (t2.north west) -- (t2.north) coordinate[pos=0.67] (t2x);
				\path (t3.north west) -- (t3.west) coordinate[pos=0.67] (t3x);
				
				%\path (t3.west) edge [below] node[right] {$\wro_x$} (t2.north east);
				%\path (t1.south) edge [above] node[left] {$\so \cap \wro_y$} (t2.north);
				\path (t3.south) edge node[right] {$\so$} (t4.north);
				\path (t1.south) edge node[left] {$\so$} (t2.north);
				\path (t1.south east) edge node[above] {$\wro_y$} (t4.north west);
				\path (t0.south) edge node[above] {$\so$} (t1.north);
				\path (t0.south) edge node[above] {$\so$} (t3.north);
			\end{tikzpicture}
			
		}
		\vspace{-5.5mm}
		\caption{History $\hist$.}
		\label{fig:extension:a}
	\end{subfigure}
	\hspace{.15cm}
	\centering
	\begin{subfigure}[t]{.28\textwidth}
		\resizebox{\textwidth}{!}{
			\begin{tikzpicture}[->,>=stealth',shorten >=1pt,auto,node distance=3cm,
				semithick, transform shape]
				\node[draw, rounded corners=2mm,outer sep=0] (t0) at (-1.5, 0) {\begin{tabular}{l} $\init$ \end{tabular}};
				\node[draw, rounded corners=2mm,outer sep=0, label={[font=\small]100:$t_1$}] (t1) at (-3, -1.6) {\begin{tabular}{l} 
						$\wrt{x}{1}$ \\ $\wrt{y}{1}$
				\end{tabular}};
				\node[draw, rounded corners=2mm,outer sep=0, label={[font=\small]50:$t_2$}] (t2) at (-3, -3.2) {\begin{tabular}{l} 
						$\wrt{x}{2}$
				\end{tabular}};
				\node[draw, rounded corners=2mm,outer sep=0, label={[font=\small]50:$t_3$}] (t3) at (0, -1.6) {\begin{tabular}{l} 
						$\wrt{x}{3}$
				\end{tabular}};	
				\node[draw, rounded corners=2mm,outer sep=0, label={[font=\small]50:$t_4$}] (t4) at (0, -3.2)
				{\begin{tabular}{l} 
						$\rd{y}$ \\ \textbf{\textcolor{blue}{$\rd{x}$}}
				\end{tabular}};			
				
				\path (t1.south west) -- (t1.south) coordinate[pos=0.67] (t1x);
				\path (t2.north west) -- (t2.north) coordinate[pos=0.67] (t2x);
				\path (t3.north west) -- (t3.west) coordinate[pos=0.67] (t3x);
				
				%\path (t3.west) edge [below] node[right] {$\wro_x$} (t2.north east);
				%\path (t1.south) edge [above] node[left] {$\so \cap \wro_y$} (t2.north);
				\path (t3.south) edge node[right] {$\so$} (t4.north);
				\path (t1.south) edge node[left] {$\so$} (t2.north);
				\path (t1.south east) edge node[above, xshift = 4mm,yshift=-1mm] {$\wro_y$, $\wro_x$} (t4.north west);
				\path (t0.south) edge node[above] {$\so$} (t1.north);
				\path (t0.south) edge node[above] {$\so$} (t3.north);
			\end{tikzpicture}
		}
		\vspace{-5.5mm}
		\caption{$t_4$ reads $x$ and $y$ from $t_1$.}
		\label{fig:extension:b}
	\end{subfigure}
	\hspace{.15cm}
	\centering
	\begin{subfigure}[t]{.28\textwidth}
		\resizebox{1.1\textwidth}{!}{
			\begin{tikzpicture}[->,>=stealth',shorten >=1pt,auto,node distance=3cm,
				semithick, transform shape]
				\node[draw, rounded corners=2mm,outer sep=0] (t0) at (-1.5, 0) {\begin{tabular}{l} $\init$ \end{tabular}};
				\node[draw, rounded corners=2mm,outer sep=0, label={[font=\small]100:$t_1$}] (t1) at (-3, -1.6) {\begin{tabular}{l} 
						$\wrt{x}{1}$ \\ $\wrt{y}{1}$
				\end{tabular}};
				\node[draw, rounded corners=2mm,outer sep=0, label={[font=\small]50:$t_2$}] (t2) at (-3, -3.2) {\begin{tabular}{l} 
						$\wrt{x}{2}$
				\end{tabular}};
				\node[draw, rounded corners=2mm,outer sep=0, label={[font=\small]50:$t_3$}] (t3) at (0, -1.6) {\begin{tabular}{l} 
						$\wrt{x}{3}$
				\end{tabular}};	
				\node[draw, rounded corners=2mm,outer sep=0, label={[font=\small]10:$t_4$}] (t4) at (0, -3.2)
				{\begin{tabular}{l} 
						$\rd{y}$ \\ \textbf{\textcolor{blue}{$\rd{x}$}}
				\end{tabular}};			
				
				\path (t1.south west) -- (t1.south) coordinate[pos=0.67] (t1x);
				\path (t2.north west) -- (t2.north) coordinate[pos=0.67] (t2x);
				\path (t3.north west) -- (t3.west) coordinate[pos=0.67] (t3x);
				
				%\path (t3.west) edge [below] node[right] {$\wro_x$} (t2.north east);
				%\path (t1.south) edge [above] node[left] {$\so \cap \wro_y$} (t2.north);
				\path (t1.south east) edge node[above] {$\wro_y$} (t4.north west);
				\path (t3.south) edge node[right] {$\so \cap \wro_x$} (t4.north);
				\path (t1.south) edge node[left] {$\so$} (t2.north);
				\path (t0.south) edge node[above] {$\so$} (t1.north);
				\path (t0.south) edge node[above] {$\so$} (t3.north);
			\end{tikzpicture}
			
		}
		\vspace{-5.5mm}
		\caption{$t_4$ reads $x$ from $t_3$, $y$ from $t_1$.}
		\label{fig:extension:c}
	\end{subfigure}
	\vspace{-4mm}
	\caption{Two causal extensions of the history $h$ on the left with the $\erd{x}$ event written in blue.}
	\vspace{-3.5mm}
\end{figure}
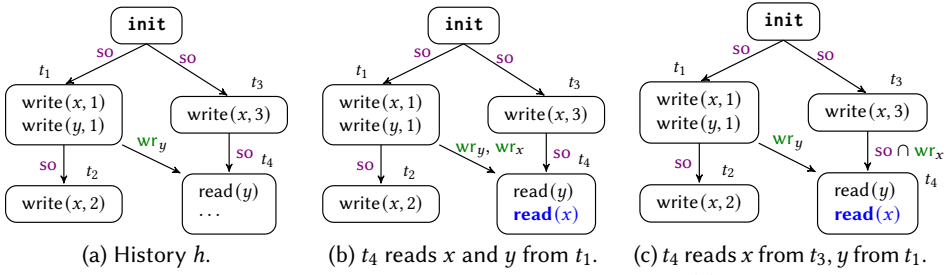

For example, Figure~\ref{fig:extension:b} and~\ref{fig:extension:c} present two causal extensions with a $\erd{x}$ event of the transaction $t_4$ in the history $\hist$ in Figure~\ref{fig:extension:a}. The new read event reads from transaction $t_1$ or $t_3$ which were already related by $(\so \cup \wro)^+$ to $t_4$.
An extension of $h$ where the new read event reads from $t_2$ is \emph{not} a causal extension because $(t_2, t_4) \not\in (\so \cup \wro)^+$.
%and $y$ from $t_1$ is not included as it is not a causal extension ($(t_2, t_4) \not\in \so \cup \wro$)

\vspace{-1.5mm}
\begin{definition}
\label{def:causally-extensible}
An isolation level $I$ is called \emph{causally-extensible} if for every $I$-consistent history $h$, every $(\so \cup \wro)^+$-maximal pending transaction $t$ in $h$, and every event $e$, there exists a causal extension $\hist'$ of $t$ with $e$ that is $I$-consistent.
%	can be extended with an event from a $\so \cup \wro$-maximal pending transaction $T$ in $h$. Moreover, if that event $r$ is a $\iread$, it can always read from a $\iwrite$ event $w$ s.t. in $h$ $\tr(w) \ (\so \cup \wro)^* \ \tr(r)$.
\vspace{-1.5mm}
\end{definition}

\appendixver{The proof of the following theorem can be found in Appendix~\ref{app:causalExtensible}.}

\vspace{-1.5mm}
\begin{restatable}{theorem}{causalExtensibleModels}
\label{theorem:causalExtensibleModels-CC-RA-RC}
Causal Consistency, Read Atomic, and Read Committed are causally-extensible.
\vspace{-1.5mm}
\end{restatable}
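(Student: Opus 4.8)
The plan is to show that the \emph{same} commit order continues to witness consistency after the extension. Fix an $I$-consistent history $h=\tup{T,\so,\wro}$ together with a commit order $\co$ satisfying the axiom of $I$, a $(\so\cup\wro)^+$-maximal pending transaction $t$, and an event $e$; let $h'$ be the causal extension we are building. I would argue that $\tup{h',\co}$ again satisfies the axiom of $I$ (with the single new $\wro$-edge placed consistently in $\co$), which by the definition of $I$-consistency (Definition~\ref{axiom-criterion}) yields that $h'$ is $I$-consistent. Everything is driven by two consequences of maximality, valid because each of $\CC$, $\RA$, $\RC$ has a defining condition $\phi(t_2,t_3)$ contained in $(\so\cup\wro)^+$: (i) no transaction reads from $t$ in $h$, and (ii) $t$ can never play the role of ``$t_2$'' in an axiom instance, since that would require $(t,t_3)\in(\so\cup\wro)^+$, contradicting maximality.

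First I would dispatch the cases in which $e$ adds no read-from edge: $e$ is a $\ebegin$/$\ecommit$/$\eabort$ or a write event, or a read of a variable that $t$ already writes (whose value is then fixed internally and lies outside $\readOp{t}$). Here $\so$, $\wro$, and the $\phi$-relation are all unchanged; the only possible change is that $\writeOp{t}$ grows (if $e$ is a write) or becomes empty (if $e=\eabort$). By fact (ii) a larger write set of $t$ triggers no new axiom instance, and by fact (i) emptying $\writeOp{t}$ destroys no existing $\wro$-edge and only deletes instances. Hence $\tup{h',\co}$ still satisfies $I$.

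The substantive case is $e$ a read of a variable $x$ with $t$ not writing $x$, where a $\wro$-edge must be chosen. Let $C=\{\,w : \writeVar{w}{x}\ \land\ \phi(w,t)\ \text{in }h\,\}$, where $\phi$ is $I$'s condition: $(\so\cup\wro)^+$ for $\CC$, $\so\cup\wro$ for $\RA$, and $\so$ (the weakening of $\RA$ obtained by dropping $\wro$) for $\RC$. The set $C$ is nonempty because the initial transaction $t_0$ writes every variable and satisfies $(t_0,t)\in\so$, hence $\phi(t_0,t)$. I take $t'$ to be the $\co$-maximal element of $C$ and set $\wro(t',e)$. Then $(t',t)\in(\so\cup\wro)^+$, so $h'$ is a legitimate causal extension, and $(t',t)\in\co$ since $\co\supseteq(\so\cup\wro)^+$, so $\co$ remains a valid commit order of $h'$. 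For the axiom, the unique new instance is the one generated by $(t',t)\in\wro_x$, and it holds precisely because any other writer $t_2\neq t'$ of $x$ with $\phi(t_2,t)$ lies in $C$ and is therefore $\co$-below $t'$.

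The crux, and the step I expect to be most delicate, is verifying that the new $\wro$-edge does \emph{not} retroactively falsify the axiom for reads already present in $t$ (or elsewhere). This is exactly why $t'$ must be drawn from $C$ rather than from an arbitrary $(\so\cup\wro)^+$-predecessor: membership in $C$ guarantees that $(t',t)$ already belongs to the relation defining $\phi$ — the transitive closure for $\CC$, the one-step relation $\so\cup\wro$ for $\RA$, and $\so$ (which is independent of $\wro$) for $\RC$ — so adding $\wro(t',t)$ leaves the $\phi$-relation, and hence every pre-existing axiom instance, completely unchanged. With $\co$, $\so$, the $\phi$-relation, and all write sets held fixed, no previously satisfied instance acquires a new triggering $t_2$, and no new write-read edge other than $(t',t)$ is introduced; thus all old instances remain satisfied, completing the argument uniformly for all three isolation levels.
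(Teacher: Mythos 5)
Your overall strategy is exactly the paper's: keep the witnessing commit order $\co$ fixed and, for a read of $x$, direct the new $\wro$-edge to the $\co$-maximal eligible writer of $x$; the non-read cases and the $\CC$ and $\RA$ cases of the read case are handled correctly. The one genuine gap is in the $\RC$ case, and it stems from misstating the $\RC$ axiom. You take $\phi_{\RC}$ to be $\so$ (``$\RA$ with $\wro$ dropped''), but the paper's definition (Fig.~A.1) uses $\phi_{\RC}(t_2,\alpha) \equiv (t_2,\alpha)\in\wro\circ\po$: the competing writers $t_2$ for a read $\alpha$ are the transactions read by some \emph{earlier event of the same transaction}, not the $\so$-predecessors. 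Consequently your candidate set $C$ for $\RC$ is too small, and the invariant you rely on --- that every $t_2$ with $\phi(t_2,e)$ lies in $C$ and is hence $\co$-below the chosen $t'$ --- fails. Concretely: if the pending transaction $t$ already contains a read (of some variable $z$) from a transaction $u$ that also writes $x$ but is not an $\so$-predecessor of $t$, and $u$ is $\co$-after every $x$-writer that is an $\so$-predecessor of $t$, then your $t'$ satisfies $(t',u)\in\co$ while the new axiom instance with $t_1=t'$, $t_2=u$, $\alpha=e$ demands $(u,t')\in\co$ --- a violation.

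The repair is what the paper does uniformly for all three levels: let $t'$ be the $\co$-maximal transaction with $\writeVar{t'}{x}$ and $(t',t)\in(\so\cup\wro)^+$. This set contains every $t_2$ with $\phi_I(t_2,e)$ for each $I\in\{\CC,\RA,\RC\}$ (for $\RC$ because a transaction read by an earlier event of $t$ is in particular a $\wro$-predecessor of $t$), so the new instance is satisfied, and $(t',t)\in(\so\cup\wro)^+$ still guarantees both that the extension is causal and that no pre-existing instance is disturbed. Your ``crux'' paragraph survives unchanged under this choice; for $\RC$ one additionally notes that the new edge $(t',e)$ cannot create a pair $(t',\alpha)\in\wro\circ\po$ for any old read $\alpha$, since $e$ is $\po$-maximal in $t$.
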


%\vspace{-3mm}

Snapshot Isolation and Serializability are \emph{not} causally extensible. 
Figure~\ref{fig:non-causally-extensible} presents a counter-example to causal extensibility: the causal extension of the history $\hist$ that does \emph{not} contain the $\ewrt{x,2}$ written in blue bold font with this event does not satisfy neither Snapshot Isolation nor Serializability although $\hist$ does. Note that the causal extension with a write event is unique. \nver{(Note that both $h$ and this causal extension satisfy Causal Consistency and therefore, as expected, this counter-example does not apply to isolation levels weaker than Causal Consistency.)}
%we exhibit why $\SI$ and $\SER$ are not causally extensible. For the program in Figure~\ref{fig:non-causally-extensible:prog-ser}, the history presented in Figure~\ref{fig:non-causally-extensible:ser} is $\SER$-consistent (and therefore $\SI$-consistent) but whose only extension (Figure~\ref{fig:non-causally-extensible:ser-cont}) is $\SI$-inconsistent (and therefore $\SER$-inconsistent).

%\input{proofs:causal-extensibility}

	%!TEX root = main.tex
\section{Swapping-Based Model Checking Algorithms}\label{sec:algs}

%\subsection{Definition}

%The above-mentioned classes are, notwithstanding, quite dense and contain lots of algorithms we are not interested in (as non-DPOR algorithms for example). In particular, our work focuses on the collection of algorithms defined by the schema \ref{algorithm:algo-class} for some functions $\genericNext, \genericEvaluate, \genericValidWrites, \genericCompute, \genericProtocol, \genericSwap$ to be defined. %Thanks to this approach we will show the existence of some sound and complete algorithms and easily prove some of their properties.

We define a class of stateless model checking algorithms for enumerating executions of a given transactional program, that we call \emph{swapping-based algorithms}. Section~\ref{sec:CC-algorithm} will describe a concrete instance that applies to isolation levels that are prefix-closed and causally extensible.

%{\parfillskip0pt\par}
%\noindent

These algorithms are defined by the recursive function \textsc{explore} listed in Algorithm~\ref{algorithm:algo-class}. The function \textsc{explore} receives as input a program $\prog$, an \emph{ordered history} $\hist_<$, which is a pair $(\hist,<)$ {\parfillskip0pt\par}

\begin{wrapfigure}{l}{0.45\textwidth}
\vspace{-6mm}
\begin{minipage}{0.45\textwidth}
\begin{algorithm}[H]
\caption{\textsc{explore} algorithm}
\begin{algorithmic}[1]
\Function{\textsc{explore}}{$\prog$, $\hist_<$, $\locals$}
\State $j, e, \gamma \gets  \genericNext(\prog, \hist_<, \locals)$

\State $\locals' \gets \locals[e\mapsto \gamma]$
\If{$e = \bot$ and $\genericEvaluate(h)$}
%			\If{$\genericEvaluate(\prog, h, \locals)$}
\State \textbf{output} $h$, $\locals'$
%			\EndIf
%\State $\textbf{\genericEvaluate}(h)$
%		\State \Return
\ElsIf{$\mathit{type}(e) = \iread$}

\ForAll{$t \in \genericValidWrites(h,e)$} 
\State $h'_< \gets h_< \oplus_j e \oplus \wro(t,e)$

\State $\textsc{explore}(\prog, h'_< , \locals')$
%\label{algorithm:algo-class:exploreRead}
\State $\textsc{exploreSwaps}(\prog, h'_< , \locals')$
\EndFor
\Else %If{$\genericEvaluate(h\bullet e)$}
\State $h'_< \gets h_< \oplus_j e$
\State $\textsc{explore}(\prog, h'_<,\locals')$
\State $\textsc{exploreSwaps}(\prog, h'_< , \locals')$
\label{algorithm:algo-class:exploreStd}
%		\Else
%			\State \Return
\EndIf
\EndFunction

\end{algorithmic}
\label{algorithm:algo-class}
%\caption{Generic method for exploring every possible history $h$ of a program $\mathcal{P}$ running under a database with $\mathcal{M}$ as isolation level.}
\end{algorithm}
\end{minipage}
%\vspace{-2mm}
\end{wrapfigure}

\noindent of a history and a total order $<$ on all the events in $\hist$, and a mapping $\locals$ that associates each event $e$ in $\hist$ with the valuation of local variables in the transaction of $e$ ($\trans{h}{e}$) just before executing $e$. 
For an ordered history $(\hist,<)$ with $\hist=\tup{T, \so, \wro}$, we assume that $<$ is consistent with $\po$, $\so$, and $\wro$, i.e., $e_1 < e_2$ if $(\trans{h}{e_1},\trans{h}{e_2})\in (\so\cup\wro)^+$ or $(e_1,e_2)\in\po$. Initially, the ordered history and the mapping $\locals$ are empty. 

%{\parfillskip0pt\par}
%\noindent

The function \textsc{explore} starts by calling $\genericNext$ to obtain an event representing the next database access in some pending transaction of $\prog$, or a  $\ebegin$/$\ecommit$/$\eabort$ event for starting or ending a transaction. This event is associated to some session $j$. For example, a typical implementation of $\genericNext$ would choose one of the pending transactions (in some session $j$), execute all local instructions until the next database instruction in that transaction (applying the transition rules \textsc{if-true}, \textsc{if-false}, and \textsc{local}\appendixver{from Figure~\ref{fig:op:sem:baseline} in Appendix~\ref{app:rules}}) and return the event $e$ corresponding to that database instruction and the current local state $\gamma$. $\genericNext$ may also return $\bot$ if the program finished.
%or if the history $\hist$ is not consistent with the intended isolation level. 
If $\genericNext$ returns $\bot$, then the function $\genericEvaluate$ can be used to filter executions that satisfy the intended isolation level before outputting the current history and local states \nver{(the use of $\genericEvaluate$ will become relevant in Section~\ref{sec:ser})}.

%the history satisfies the consistency requirements
%The function $\genericEvaluate$ is used to select the cases in which the program finished and the history satisfies all the consistency requirements before outputting the current history (and local states).
%Algorithm \ref{algorithm:algo-class} explores systematically the space of histories, selecting a new event $e$ to be added to some history $h$ if possible, thanks to the function $\genericNext$ called \textit{next}. If it wasn't possible, it is due to $h$ being either a total execution or an undesirable one; both behaviors discriminated via \textit{evaluating} $h$ with the $\genericEvaluate$ function. 

Otherwise, the event $e$ is added to the ordered history $h_<$. If $e$ is a read event, then $\genericValidWrites$ computes a set of write events $w$ in the current history that are valid for $e$, i.e., adding the event $e$ along with the $\wro$ dependency $(w,e)$ leads to a history that still satisfies the intended isolation level.
% along with a $\wro$ dependency selected via $\genericValidWrites$ if it is a read. linking a \textit{valid} $\iwrite$ event with the new $\iread$ event (valid means satisfying consistency w.r.t. the intended isolation level). 
%
Concerning notations, let $\hist$ be a history where $\so$ is represented as a function $\so:\mathsf{SessId}\rightharpoonup \mathsf{Tlogs}^*$ (as in \S~\ref{ssec:semantics}). 
%that contains a representation of $\so$ as above. 
%We use $\hist\oplus_j \tup{\tr,E,\po_t}$ to denote a history where $\tup{\tr,E,\po_t}$ is appended to $\so(j)$. 
For event $e$, $\hist\oplus_j e$ is the history obtained from $\hist$ by adding $e$ to the last transaction in $\so(j)$ as the last event in $\po$ (i.e.,  if $\so(j)=\sigma; \tup{t,E,\po_t}$, then the session order $\so'$ of $\hist\oplus_j e$ is defined by $\so'(k)=\so(k)$ for all $k\neq j$ and $\so(j) =\sigma; \tup{t,E\cup\{e\},\po_t\cup \{(e',e): e'\in E\}}$). This is extended to ordered histories: $(\hist,<) \oplus_j e$ is defined as $(\hist \oplus_j e, < \cdot\ e)$ where $<\cdot\ e$ means that $e$ is added as the last element of $<$. Also, $\hist \oplus_j (e,\ebegin)$ is a history where $\tup{\tr,\{\tup{e,\ebegin}\},\emptyset}$ with $\tr$ a fresh id is appended to $\so(j)$, and 
$\hist\oplus\wro(t,e)$ is defined by adding $(t,e)$ to the write-read of $\hist$.
%Moreover, we simplify the notation and write $h\oplus\wro(w,e)$ for $w$ a write event and $e$ a read event instead of $h\oplus\wro(\trans{h}{w},e)$.

\begin{wrapfigure}{lt}{0.47\textwidth}
\vspace{-8mm}
\begin{minipage}{0.47\textwidth}
\begin{algorithm}[H]
\caption{\textsc{exploreSwaps} }
\begin{algorithmic}[1]
\Function{\textsc{exploreSwaps}}{$\prog$, $\hist_<$, $\locals$}
%		\State $e \gets \laste(\hist_<)$
\State $l \gets \genericCompute(h_<)$		
\ForAll{$(\alpha, \beta) \in l$}
\If{$\genericProtocol(h_<, \alpha, \beta,\locals)$}
\State $\textsc{explore}(\prog,\genericSwap(h_<, \alpha, \beta, \locals))$	

\EndIf
\EndFor
\EndFunction

\end{algorithmic}
\label{algorithm:explore-swaps}
%\caption{Generic method for exploring every possible history $h$ of a program $\mathcal{P}$ running under a database with $\mathcal{M}$ as isolation level.}
\end{algorithm}
\end{minipage}
\vspace{-4mm}
\end{wrapfigure}

%TODO GIVE AN EXAMPLE OF SUCCESSIVE EXTENSIONS OF A HISTORY USING NEXT TO GET NEXT EVENTS. ADD SOME LOCAL INSTRUCTIONS AS WELL TO EXPLAIN LOCAL STATES.
Once an event is added to the current history, the algorithm may explore other histories obtained by re-ordering events in the current one. Such re-orderings are required for completeness. New read events can only read from writes executed in the past which limits the set of explored histories to the scheduling imposed by $\genericNext$. Without re-orderings, writes scheduled later by $\genericNext$ cannot be read by read events executed in the past, although this may be permitted by the isolation level. 
%TODO GIVE AN EXAMPLE (PROGRAM, CURRENT HISTORY, NEXT RETURNING A WRITE, THERE EXISTS A HISTORY WHERE THE WRITE CAN BE READ BY A READ IN THE CURRENT HISTORY).

The function $\textsc{exploreSwaps}$ calls $\genericCompute$ to compute pairs of sequences of events $\alpha, \beta$ that should be re-ordered; $\alpha$ and $\beta$ are \emph{contiguous and disjoint} subsequences of the total order $<$, and $\alpha$ should end before $\beta$ (since $\beta$ will be re-ordered before $\alpha$). Typically, $\alpha$ would contain a read event $r$ and $\beta$ a write event $w$ such that re-ordering the two enables $r$ to read from $w$. Ensuring soundness and avoiding redundancy, i.e., exploring the same history multiple times, may require restricting the application of such re-orderings. This is modeled by the Boolean condition called $\genericProtocol$. If this condition holds, the new explored histories are computed by the function $\genericSwap$. This function returns local states as well, which are necessary for continuing the exploration. 
%A typical implementation of this function would return ordered histories where the events in $\beta$ and their $\so$ or $\wro$ dependencies 
%%($(\so\cup\wro)^*$ predecessors) 
%are ordered before the events in $\alpha$. 
We assume that $\genericSwap(\hist_<, \alpha, \beta, \locals)$ returns pairs $(\hist'_{<'},\locals')$ such that 
\begin{enumerate}
	\item $\hist'$ contains \nver{at least} the events in $\alpha$ and $\beta$,
	\item $\hist'$ without the events in $\alpha$ is a prefix of $\hist$, and
	\item if a read $r$ in $\alpha$ reads from different writes in $\hist$ and $\hist'$ (the $\wro$ relations of $\hist$ and $\hist'$ associate different transactions to $r$),
	%transactions associated with $r$ by the   are different) 
	 then $r$ is the last event in its transaction (w.r.t. $\po$).
\end{enumerate}

%"the change of a read-from dependency is restricted to the last read in a transaction (third condition) because changing the value returned by a read may disable later events in the same transaction." Aren't the possibilities for the previous reads missed? Why is explore swaps complete?
%No, other possibilities for previous reads will be considered by other swaps. The supplemental text includes a proof of completeness, but we will try to include more intuition in the main text. 
The first condition makes the re-ordering ``meaningful'' while the last two conditions ensure that the history $\hist'$ is feasible by construction, i.e., it can be obtained using the operational semantics defined in Section~\ref{ssec:semantics}. Feasibility of $\hist'$ is ensured by keeping prefixes of transaction logs from $\hist$ and all their $\wro$ dependencies except possibly for read events in $\alpha$ (second condition). In particular, for events in $\beta$, it implies that $\hist'$ contains all their $(\po\cup\so\cup\wro)^*$ predecessors. Also, the change of a read-from dependency is restricted to the last read in a transaction (third condition) because changing the value returned by a read may disable later events in the same transaction\footnote{\nver{Different $\wro$ dependencies for previous reads can be explored  in other steps of the algorithm.}}.

A concrete implementation of \textsc{explore} is called:
\vspace{-1mm}
\begin{itemize}
	\item \emph{$I$-sound} if it outputs only histories in $\histOf[I]{\prog}$ for every program $\prog$,
	\item \emph{$I$-complete} if it outputs every history in $\histOf[I]{\prog}$ for every program $\prog$,
	\item \emph{optimal} if it does not output the same history twice,
	\item \emph{strongly optimal} if it is optimal and never engages in fruitless explorations, i.e., \textsc{explore} is never called (recursively) on a history $\hist$ that does not satisfy $I$, and every call to \textsc{explore} results in an output or another recursive call to \textsc{explore}.
\vspace{-1mm}
\end{itemize}

\section{Swapping-based model checking for Prefix-Closed and Causally-Extensible Isolation Levels}
\label{sec:CC-algorithm}

%The main goal in this section is describing a deterministic algorithm for transactional model-checking under a causally-closed model $\mathcal{M}$ that obtains all possible behaviors a program may have. We present during this section a swapping-based sound, complete and optimal algorithm employing polynomial memory. In particular, we will show that our algorithm has at most one pending transaction and why this is key to guarantee the rest of the properties. For the legibility of this document, we will postpone their proofs to section \ref{sec:proofs-algorithm}.

We define a concrete implementation of $\textsc{explore}$, denoted as $\textsc{explore-ce}$, that is $I$-sound, $I$-complete, and strongly optimal for any isolation level $I$ that is prefix-closed and causally-extensible. The isolation level $I$ is a parameter of $\textsc{explore-ce}$. The space complexity of $\textsc{explore-ce}$ is polynomial in the size of the program. An important invariant of this implementation is that it explores histories with \emph{at most one} pending transaction and this transaction is maximal in session order. This invariant is used to avoid fruitless explorations: since $I$ is assumed to be causally-extensible, there always exists an extension of the current history with one more event that continues to satisfy $I$. \nver{Moreover, this invariant is sufficient to guarantee completeness in the sense defined above of exploring all histories of ``full'' program executions (that end in a final configuration).}

Section~\ref{ssec:extensions} describes the implementations of $\genericNext$ and $\genericValidWrites$ used to extend a given execution, Section~\ref{subsection:SwappingHistories} describes the functions $\genericCompute$ and $\genericSwap$ used to compute re-ordered executions, and Section~\ref{ssec:optimality} describes the $\genericProtocol$ restriction on re-ordering. We assume that the function $\genericEvaluate$ is defined as simply $\genericEvaluate(\hist) ::= true$ (no filter before outputting). Section~\ref{ssec:corr} discusses correctness arguments\appendixver{ (see App.~\ref{sec:proofs-algorithm} for proofs)}.

%!TEX root = main.tex
\subsection{Extending Histories According to An Oracle Order}\label{ssec:extensions}

%This order is encoded in the implementation of the NEXT function. For example, an oracle order could be defined by an order on session ids, so that whenever steps from multiple sessions are enabled, it prioritizes the session with the smaller id. The reorder steps in the algorithm will make it possible to explore interleavings that don't satisfy this order and ensure completeness. 

The function $\genericNext$ generates events representing database accesses to extend an execution, according to an \emph{arbitrary but fixed} order between the transactions in the program called \callout{oracle order}. 
%The version here presented employs as parameter the program to analyze along with a total order called \callout{oracle order} between its transactions. 
We assume that the oracle order, denoted by $<_{\ora}$, is consistent with the order between transactions in the same session of the program. The extension of $<_{\ora}$ to events is defined as expected. \nver{For example, assuming that each session has an id, an oracle order can be defined by an order on session ids along with the session order $\so$: transactions from sessions with smaller ids are considered first and the order between transactions in the same session follows $\so$.}

% and trivially extensible to events in a history, has to respect the order between transactions in the same session of the program. 
% order of the program (i.e. if $T \ [\so] \ T'$ then $T \leq_{\ora} T'$); forbidding executions any real processor would produce. This order will be constant during the whole algorithm's execution. 

$\genericNext$ returns a new event of the transaction that is not already completed and that is \emph{minimal} according to $<_{\ora}$. In more detail, if $j,e,\gamma$ is the output of $\genericNext(\prog, \hist_<, \locals)$, then either:
\vspace{-1mm}
\begin{itemize}
	\item the last transaction log $t$ of session $j$ (w.r.t. $\so$) in $\hist$ is pending, and $t$ is the smallest among pending transaction logs in $\hist$ w.r.t. $<_{\ora}$ 
	%(we assume a straightforward extension of the oracle order between transactions in the program to transaction logs in a history of the program),
	\item $\hist$ contains no pending transaction logs and the next transaction of sessions $j$ is the smallest among not yet started transactions in the program w.r.t. $<_{\ora}$.
	\vspace{-1mm}
\end{itemize}

This implementation of $\genericNext$ is deterministic and it prioritizes the completion of pending transactions. The latter is useful to maintain the invariant that any history explored by the algorithm has at most one pending transaction. Preserving this invariant requires that the histories given as input to $\genericNext$ also have at most one pending transaction. This is discussed further when explaining the process of re-ordering events in Section~\ref{subsection:SwappingHistories}.
%Thanks to this function, we will be able to extend any history $h = \langle E, \so, \wro \rangle$ in a deterministic way. Moreover, by its definition, we observe that $\nextEvent$ always propose to complete pending transactions before starting new ones. Therefore, it is a reasonable candidate as $\genericNext$ function in a algorithm \ref{algorithm:algo-class} instance.

\begin{figure}[t]
	\centering
	\begin{subfigure}[b]{.32\textwidth}
\begin{adjustbox}{max width=\textwidth}
{
\begin{tabular}{c||c}
	\begin{lstlisting}[xleftmargin=5mm,basicstyle=\ttfamily\scriptsize,escapeinside={(*}{*)}, tabsize=1]
begin;
a = read((*x*));
if(a == 3)
	write((*$y$*),1);
commit
begin;
b = read((*x*));
c = read((*y*));
commit
\end{lstlisting} &
\begin{lstlisting}[xleftmargin=5mm,basicstyle=\ttfamily\scriptsize,escapeinside={(*}{*)}, tabsize=1]
begin;
d = read((*x*));
write((*$x$*),3);
commit
\end{lstlisting} 
\end{tabular}}
\end{adjustbox}

		\caption{Program \nver{(2 sessions)}.}
		\label{fig:oracle-order:prog}
	\end{subfigure}
\centering
\begin{subfigure}[b]{.31\textwidth}
	\resizebox{\textwidth}{!}{
		\begin{tikzpicture}[->,>=stealth',shorten >=1pt,auto,node distance=3cm,
			semithick, transform shape]
			\node[draw, rounded corners=2mm,outer sep=0] (t1) at (-3.5, 0) {\begin{tabular}{l} 
					\init
			\end{tabular}};
			
			\node[draw, rounded corners=2mm,outer sep=0, label={[font=\small]50:$t_1$}] (t2) at (-3.5, -2) {
				\begin{tabular}{l}
					$\rd{x}$\\
					{\pgfsetfillopacity{0.3}$\wrt{y}{1}$}						
			\end{tabular}};
			
			\node[draw, rounded corners=2mm,outer sep=0, label={[font=\small]50:$t_3$}] (t3) at (0, 0) {
				\begin{tabular}{l}
					$\rd{x}$ \\
					$\wrt{x}{3}$
			\end{tabular}};
			
			\node[draw, rounded corners=2mm,outer sep=0, opacity=0.3, label={[font=\small]50:$t_2$}] (t4) at (0, -2) {
				\begin{tabular}{l} 
					 $\rd{x}$ \\
					 $\rd{y}$
			\end{tabular}};
			
			\path (t1) edge[left] node[yshift=0,xshift=0] {$\ora$} (t2);
			\path (t4) edge[left] node[yshift=0,xshift=0] {$\ora$} (t3);
			\path (t1) edge[above] node[yshift=0,xshift=0] {$\wro$} (t3);
						
			\path (t2.north east) -- (t2.east) coordinate[pos=0.67] (t2or);
			\path (t2.south east) -- (t2.east) coordinate[pos=0.67] (t2so);
			\path (t4.south west) -- (t4.west) coordinate[pos=0.67] (t4so);
			\path (t4.north west) -- (t4.west) coordinate[pos=0.67] (t4or);

			\path (t2or) edge[above] node[yshift=0,xshift=0] {$\ora$} (t4or);
			\path [opacity=0.3](t2so) edge[below] node[yshift=0,xshift=0] {$\so$} (t4so);

			\path (t3.south west) edge[left] node[yshift=0,xshift=0] {$\wro$} (t2.north east);
			
			%\path (t3or) edge[left] node[yshift=0,xshift=0] {$\ora$} (t4or);
			%\path (t3so) edge[right] node[yshift=0,xshift=0] {$\wro$} (t4so);
		\end{tikzpicture}  
		
	}
	\caption{An incomplete history.}
	\label{fig:oracle_order:a}
\end{subfigure}
	\hspace{.5cm}
	\centering
	\begin{subfigure}[b]{.31\textwidth}
		\resizebox{\textwidth}{!}{
			\begin{tikzpicture}[->,>=stealth',shorten >=1pt,auto,node distance=3cm,
				semithick, transform shape]
				\node[draw, rounded corners=2mm,outer sep=0] (t1) at (-3.5, 0) {\begin{tabular}{l} 
						\init
				\end{tabular}};
				
				\node[draw, rounded corners=2mm,outer sep=0, label={[font=\small]50:$t_1$}] (t2) at (-3.5, -2) {
					\begin{tabular}{l}
						$\rd{x}$\\
						$\wrt{y}{1}$					
				\end{tabular}};
				
				\node[draw, rounded corners=2mm,outer sep=0, label={[font=\small]50:$t_3$}] (t3) at (0, 0) {
					\begin{tabular}{l}
						$\rd{x}$ \\
						$\wrt{x}{3}$
				\end{tabular}};
				
				\node[draw, rounded corners=2mm,outer sep=0, opacity=0.3, label={[font=\small]50:$t_2$}] (t4) at (0, -2) {
					\begin{tabular}{l} 
						$\rd{x}$ \\
						$\rd{y}$
				\end{tabular}};
				
				\path (t1) edge[left] node[yshift=0,xshift=0] {$\ora$} (t2);
				\path (t4) edge[left] node[yshift=0,xshift=0] {$\ora$} (t3);
				\path (t1) edge[above] node[yshift=0,xshift=0] {$\wro$} (t3);
				
				\path (t2.north east) -- (t2.east) coordinate[pos=0.67] (t2or);
				\path (t2.south east) -- (t2.east) coordinate[pos=0.67] (t2so);
				\path (t4.south west) -- (t4.west) coordinate[pos=0.67] (t4so);
				\path (t4.north west) -- (t4.west) coordinate[pos=0.67] (t4or);

				\path (t2or) edge[above] node[yshift=0,xshift=0] {$\ora$} (t4or);
				\path [opacity=0.3](t2so) edge[below] node[yshift=0,xshift=0] {$\so$} (t4so);

				\path (t3.south west) edge[left] node[yshift=0,xshift=0] {$\wro$} (t2.north east);
				
				%\path (t3or) edge[left] node[yshift=0,xshift=0] {$\ora$} (t4or);
				%\path (t3so) edge[right] node[yshift=0,xshift=0] {$\wro$} (t4so);
			\end{tikzpicture}  
			
		}
		\caption{An extension.} % of a history using $\nextEvent$ function.}
		\label{fig:oracle_order:b}
	\end{subfigure}
	\vspace{-2mm}
	\caption{A program with two sessions (a), a history $h$ (b), and an extension of $h$ with an event returned by $\genericNext$ (c). 
%	and two histories, where $h$ is the former and the latter is $h \oplus \nextEvent(h)$. 
	The $\so$-edges from $\init$ to the other transactions are omitted for legibility. 
	We use edges labeled by $\ora$ to represent the oracle order $<_{\ora}$.
	Events in gray are not yet added to the history.}
	\label{fig:oracle_order}
	\vspace{-4mm}	
\end{figure}
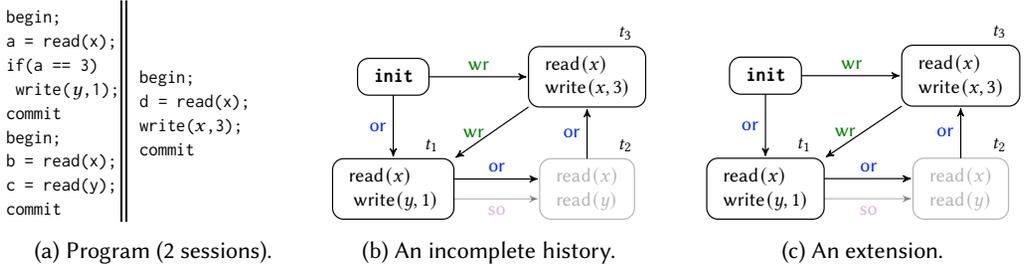

For example, consider the program in Figure~\ref{fig:oracle-order:prog}, an oracle order which orders \oldver{all}\nver{the two} transactions \oldver{on}\nver{in} the left \nver{session} before the \nver{transaction in the} right \oldver{one}\nver{session}, and the history $h$ in Figure~\ref{fig:oracle_order:a}. Since the local state of the pending transaction on the left stores $3$ to the local variable $a$ (as a result of the previous $\erd{x}$ event) and the Boolean condition in \texttt{if} holds, $\genericNext$ will return the event $\wrt{y}{1}$ when called with $\hist$.
% history $h$ in Figure~\ref{fig:oracle_order:a}, as in the local state of the first thread, $d = 2$, the boolean condition in transaction $t_1$ is satisfied and the second database call ($\wrt{y}{1}$) can be executed. Thus, $\nextEvent(h)$ would return the event $\wrt{y}{1}$ as its transaction is pending.

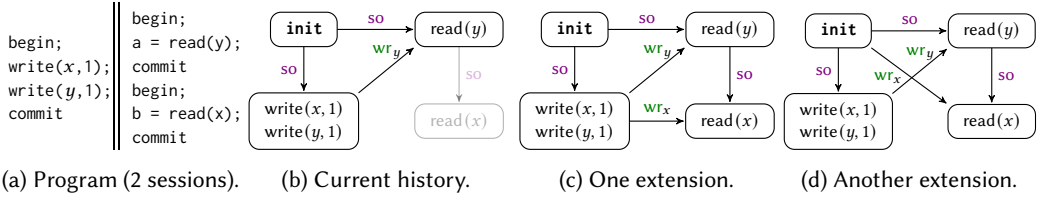
\begin{figure}[t]

\begin{subfigure}[b]{.24\textwidth}
%	\resizebox{\textwidth}{!}{
\begin{adjustbox}{max width=\textwidth}
\begin{tabular}{c||c}
\begin{lstlisting}[xleftmargin=5mm,basicstyle=\ttfamily\scriptsize,escapeinside={(*}{*)}, tabsize=1]
begin;
write((*$x$*),1);
write((*$y$*),1);
commit
\end{lstlisting} & %a = read((*x*));
\begin{lstlisting}[xleftmargin=5mm,basicstyle=\ttfamily\scriptsize,escapeinside={(*}{*)}, tabsize=1]
begin;
a = read((*y*));
commit
begin;
b = read((*x*));
commit
\end{lstlisting} 
\end{tabular} 
\end{adjustbox}
%		}
\caption{Program \nver{(2 sessions)}.}
\label{fig:add_read:prog}
\end{subfigure}
\hspace{-2mm}
	\begin{subfigure}[b]{.253\textwidth}
		\resizebox{\textwidth}{!}{
			\begin{tikzpicture}[->,>=stealth',shorten >=1pt,auto,node distance=3cm,
				semithick, transform shape]
				\node[draw, rounded corners=2mm,outer sep=0] (t1) at (-3, 0) {\begin{tabular}{l} $\init$\end{tabular}};
				\node[draw, rounded corners=2mm,outer sep=0] (t2) at (-3, -1.75) {\begin{tabular}{l} $\wrt{x}{1}$ \\ $\wrt{y}{1}$ \end{tabular}};
				\node[draw, rounded corners=2mm,outer sep=0] (t3) at (0, 0) {\begin{tabular}{l} $\rd{y}$\end{tabular}};
				\node[draw, rounded corners=2mm, outer sep=0, opacity=0.3] (t4) at (0, -1.75) {
					\begin{tabular}{l} 
					$\rd{x}$
				\end{tabular}};
				
				\path (t1.south east) -- (t1.east) coordinate[pos=0.67] (t1y);
				\path (t1.north east) -- (t1.east) coordinate[pos=0.67] (t1x);
				\path (t3.south west) -- (t3.west) coordinate[pos=0.67] (t3y);
				\path (t3.north west) -- (t3.west) coordinate[pos=0.67] (t3x);
				%\path (t1x) edge[above] node[yshift=0,xshift=0] {$\wro_x$} (t3x);
				\path (t1.east) edge[above] node[yshift=0,xshift=0] {$\so$} (t3.west);
				\path (t1.south) edge[above] node[left] {$\so$} (t2.north);
				\path (t2.north east) edge[above] node [yshift=2,xshift=0] {$\wro_y$} (t3.south west);
				\path[opacity=0.3] (t3.south) edge[below] node [right]{$\so$} (t4.north); %[yshift=0,xshift=0]
			\end{tikzpicture}  
			
		}
		\caption{Current history.}
		\label{fig:add_read:a}
	\end{subfigure}
	%\hspace{.5cm}
	\centering
	\begin{subfigure}[b]{.253\textwidth}
		\resizebox{\textwidth}{!}{
			\begin{tikzpicture}[->,>=stealth',shorten >=1pt,auto,node distance=3cm,
				semithick, transform shape]
				\node[draw, rounded corners=2mm,outer sep=0] (t1) at (-3, 0) {\begin{tabular}{l} $\init$\end{tabular}};
				\node[draw, rounded corners=2mm,outer sep=0] (t2) at (-3, -1.75) {\begin{tabular}{l} $\wrt{x}{1}$ \\ $\wrt{y}{1}$ \end{tabular}};
				\node[draw, rounded corners=2mm,outer sep=0] (t3) at (0, 0) {\begin{tabular}{l} $\rd{y}$\end{tabular}};
				\node[draw, rounded corners=2mm, outer sep=0] (t4) at (0, -1.75) {
					\begin{tabular}{l} 
						$\rd{x}$
				\end{tabular}};
				
				\path (t1.south east) -- (t1.east) coordinate[pos=0.67] (t1y);
				\path (t1.north east) -- (t1.east) coordinate[pos=0.67] (t1x);
				\path (t3.south west) -- (t3.west) coordinate[pos=0.67] (t3y);
				\path (t3.north west) -- (t3.west) coordinate[pos=0.67] (t3x);
				%\path (t1x) edge[above] node[yshift=0,xshift=0] {$\wro_x$} (t3x);
				\path (t1.east) edge[above] node[yshift=0,xshift=0] {$\so$} (t3.west);
				\path (t1.south) edge[above] node[left] {$\so$} (t2.north);
				\path (t2.north east) edge[above] node [yshift=2,xshift=0] {$\wro_y$} (t3.south west);
				\path (t3.south) edge[below] node [right]{$\so$} (t4.north);
				\path (t2.east) edge[below] node [above]{$\wro_x$} (t4.west); %[yshift=0,xshift=0] %[yshift=0,xshift=0]
			\end{tikzpicture}  
			
		}
		\caption{One extension.}
		\label{fig:add_read:b}
	\end{subfigure}%\hspace{.5cm}
	\centering
	\begin{subfigure}[b]{.25\textwidth}
		\resizebox{\textwidth}{!}{
			\begin{tikzpicture}[->,>=stealth',shorten >=1pt,auto,node distance=3cm,
				semithick, transform shape]
				\node[draw, rounded corners=2mm,outer sep=0] (t1) at (-3, 0) {\begin{tabular}{l} $\init$\end{tabular}};
				\node[draw, rounded corners=2mm,outer sep=0] (t2) at (-3, -1.75) {\begin{tabular}{l} $\wrt{x}{1}$ \\ $\wrt{y}{1}$ \end{tabular}};
				\node[draw, rounded corners=2mm,outer sep=0] (t3) at (0, 0) {\begin{tabular}{l} $\rd{y}$\end{tabular}};
				\node[draw, rounded corners=2mm, outer sep=0] (t4) at (0, -1.75) {
					\begin{tabular}{l} 
						$\rd{x}$
				\end{tabular}};
				
				\path (t1.south east) -- (t1.east) coordinate[pos=0.67] (t1y);
				\path (t1.north east) -- (t1.east) coordinate[pos=0.67] (t1x);
				\path (t3.south west) -- (t3.west) coordinate[pos=0.67] (t3y);
				\path (t3.north west) -- (t3.west) coordinate[pos=0.67] (t3x);
				%\path (t1x) edge[above] node[yshift=0,xshift=0] {$\wro_x$} (t3x);
				\path (t1.east) edge[above] node[yshift=0,xshift=0] {$\so$} (t3.west);
				\path (t1.south) edge[above] node[left] {$\so$} (t2.north);
				\path (t2.north east) edge[above] node [yshift=2,xshift=0] {$\wro_y$} (t3.south west);
				\path (t3.south) edge[below] node [right]{$\so$} (t4.north);
				\path (t1.south east) edge[below] node [yshift=6,xshift=-12]{$\wro_x$} (t4.north west); %[yshift=0,xshift=0] %[yshift=0,xshift=0]
			\end{tikzpicture}  
			
		}
		\caption{Another extension.}
		\label{fig:add_read:c}
	\end{subfigure}\hspace{.5cm}
	\vspace{-6mm}
	\caption{Extensions of a history by adding a $\iread$ event. Events in gray are not yet added to the history.}
	\label{fig:add_read}
	\vspace{-3mm}
	%\vspace{-3mm}
\end{figure}

%TODO REDO THE ABOVE FIGURE IN THE SPIRIT OF FIGURE \ref{fig:dead_branch}. MODIFY THE EXAMPLE SO ONE OF THE TWO CHOICES DOES NOT SATISFY SOME ISOLATION LEVEL, TO EXEMPLIFY THE USE OF $\genericValidWrites$.

According to Algorithm~\ref{algorithm:algo-class}, if the event returned by $\genericNext$ is not a read event, then it is simply added to the current history as the maximal element of the order $<$ (cf. the definition of $\oplus_j$ on ordered histories). If it is a read event, then adding this event may result in multiple histories depending on the chosen $\wro$ dependency. 
%
%The incremental process of obtaining histories with more information it is called \textit{extension}. In essence, given a history $h$ and an event $e \not\in h$, all possible graphs using these two elements must be constructed. 
%If $e$'s type is $\ibegin$ or $\iend$, there is only one possible way to extend it by the operator $\bullet$'s definition. When $e$ is a $\iread$, however, we have to explore multiple histories, one per $\wro$-dependency that can be generated with a $\iwrite$ event $w\in h$ and $e$. That is, all histories $h \bullet_w e$. 
For example, in Figure \ref{fig:add_read}, extending the history in Figure \ref{fig:add_read:a} with the $\erd{x}$ event could result in two different histories, pictured in Figure \ref{fig:add_read:b} and \ref{fig:add_read:c}, depending on the write with whom this read event is associated by $\wro$. However, under $\CC$, the latter history is inconsistent.
%we can obtain two different histories depending on the $\wro$ dependency created (figures \ref{fig:add_read:b} and \ref{fig:add_read:c}). In both three cases, we declare the algorithm order for this new history $h'$ by simply juxtaposing $e$ to all the previous ones: $\leq_{h'} = \leq_h \cup \{\langle e', e \rangle \ | \  e' \in h\}$.
The function $\genericValidWrites$ limits the choices to those that preserve consistency with the intended isolation level $I$, i.e.,
%However, we will only select those $\iwrite$ event such that the extended history $h \bullet_w e$ is consistent, as described by function $\validWrites$; which will eventually play the $\genericValidWrites$'s role in our algorithm.  
\begin{align*}
\genericValidWrites(h, e) \coloneqq  \{t \ \in \transC{h} \ | \ h \oplus_j e \oplus \wro(t,e) \mbox{ satisfies }I\}
\end{align*}
where $\transC{h}$ is the set of committed transactions in $h$.

\subsection{Re-Ordering Events in Histories}
\label{subsection:SwappingHistories} %Provisional

After extending the current history with one more event, $\textsc{explore}$ may be called recursively on other histories obtained by re-ordering events in the current one (and dropping some other events).
%Besides sound and complete, we would also seek for an optimal algorithm, i.e. that avoids computing a history $h$ whose extensions are all inconsistent; also called a \textit{blocking} execution. If this is not achieved, our search would employ more resources such as time or memory than it actually needs for doing its purpose.

\begin{figure}[t]
\vspace{-1mm}

	\centering
	\begin{subfigure}[b]{.25\textwidth}
\begin{adjustbox}{max width=\textwidth}
\begin{tabular}{c||c}
\begin{lstlisting}[xleftmargin=5mm,basicstyle=\ttfamily\scriptsize,escapeinside={(*}{*)}, tabsize=1]
begin;
a = read((*x*));
b = read((*y*));
commit
\end{lstlisting} &
\begin{lstlisting}[xleftmargin=5mm,basicstyle=\ttfamily\scriptsize,escapeinside={(*}{*)}, tabsize=1]
begin;
write((*$x$*),2);
write((*$y$*),2);
commit
\end{lstlisting} 
\end{tabular} 
\end{adjustbox}
\vspace{1.2cm}
%		}
		\caption{Program \nver{(2 sessions)}.}
		\label{fig:dead_branch:prog}
	\end{subfigure}
\centering
	\begin{subfigure}[b]{.125\textwidth}
		\resizebox{\textwidth}{!}{
			\begin{tikzpicture}[->,>=stealth',shorten >=1pt,auto,node distance=3cm,
				semithick, transform shape]
				
				%\\ \multicolumn{1}{c}{ \ldots}
				\node[draw, rounded corners=2mm,outer sep=0] (t1) at (-3.25, 0) {\begin{tabular}{l} $\init$ \end{tabular}};
				\node[draw, rounded corners=2mm,outer sep=0, inner sep=0.75mm, label={[font=\small]50:$t_1$}] (t2) at (-3.25, -1.6) {\begin{tabular}{l} 
						$\rd{x}$ \\ $\rd{y}$
				\end{tabular}};
				\node[draw, rounded corners=2mm,outer sep=0, inner sep=0.75mm, label={[font=\small]50:$t_2$}] (t3) at (-3.25, -3.2) {\begin{tabular}{l} 
						$\wrt{x}{2}$\pgfsetfillopacity{0.3}\\
						$\wrt{y}{2}$
				\end{tabular}};			
				
				\path (t1.south west) -- (t1.south) coordinate[pos=0.67] (t1x);
				\path (t1.south east) -- (t1.south) coordinate[pos=0.67] (t1y);
				\path (t2.north west) -- (t2.north) coordinate[pos=0.67] (t2x);
				\path (t2.north east) -- (t2.north) coordinate[pos=0.67] (t2y);
				
				\path (t1.south) edge [left] node {$\wro$} (t2.north);

				%\path (t2) edge [right] node {$\wro_x$} (t3);
			\end{tikzpicture}  
			
		}
		\caption{Current.}
		\label{fig:dead_branch:a}
	\end{subfigure}
	\hspace{.175cm}
	\centering
	\begin{subfigure}[b]{.125\textwidth}
		\resizebox{\textwidth}{!}{
			\begin{tikzpicture}[->,>=stealth',shorten >=1pt,auto,node distance=3cm,
				semithick, transform shape]
				\node[draw, rounded corners=2mm,outer sep=0] (t1) at (-3.25, 0) {\begin{tabular}{l} $\init$ \end{tabular}};
				\node[draw, rounded corners=2mm,outer sep=0, inner sep=0.75mm, label={[font=\small]50:$t_2$}] (t2) at (-3.25, -3.2) {\begin{tabular}{l} 
						$\rd{x}$ \pgfsetfillopacity{0.3}\\ $\rd{y}$
				\end{tabular}};
				\node[draw, rounded corners=2mm,outer sep=0, inner sep=0.75mm, label={[font=\small]50:$t_1$}] (t3) at (-3.25, -1.6) {\begin{tabular}{l} 
						$\wrt{x}{2}$\pgfsetfillopacity{0.3}\\
						$\wrt{y}{2}$
				\end{tabular}};			
				
				\path (t1.south west) -- (t1.south) coordinate[pos=0.67] (t1x);
				\path (t2.north west) -- (t2.north) coordinate[pos=0.67] (t2x);
				\path (t3.north west) -- (t3.west) coordinate[pos=0.67] (t3x);
				
				\path (t3.south) edge [left] node {$\wro_x$} (t2.north); %[yshift=8,xshift=0]
				%\path (t1x) edge [right] node {$\wro_y$} (t2x);
			\end{tikzpicture}  
			
		}
		\caption{Reorder.}
		\label{fig:dead_branch:b}
	\end{subfigure}
	\hspace{.175cm}
	\centering
		\begin{subfigure}[b]{.177\textwidth}
		\resizebox{\textwidth}{!}{
			\begin{tikzpicture}[->,>=stealth',shorten >=1pt,auto,node distance=3cm,
				semithick, transform shape]
				\node[draw, rounded corners=2mm,outer sep=0] (t1) at (-3.25, 0) {\begin{tabular}{l} $\init$ \end{tabular}};
				\node[draw, rounded corners=2mm,outer sep=0, inner sep=0.75mm, label={[font=\small]50:$t_2$}] (t2) at (-3.25, -3.2) {\begin{tabular}{l} 
						$\rd{x}$ \\ $\rd{y}$
				\end{tabular}};
				\node[draw, rounded corners=2mm,outer sep=0, inner sep=0.75mm, label={[font=\small]50:$t_1$}] (t3) at (-3.25, -1.6) {\begin{tabular}{l} 
						$\wrt{x}{2}$\pgfsetfillopacity{0.3}\\
						$\wrt{y}{2}$
				\end{tabular}};			
				
				\path (t1.south west) -- (t1.south) coordinate[pos=0.67] (t1x);
				\path (t2.north west) -- (t2.north) coordinate[pos=0.67] (t2x);
				\path (t3.north west) -- (t3.west) coordinate[pos=0.67] (t3x);
				
				%\path (t3x) edge [above] node[yshift=8,xshift=0] {$\wro_x$} (t2.north east);
				\path (t3.south) edge [left] node {$\wro_x$} (t2.north);
				\path (t1.south west) edge [bend right=40] node[left] {$\wro_y$} (t2.north west);
			\end{tikzpicture}
		}
		\caption{Extended.}
		\label{fig:dead_branch:c}
	\end{subfigure}
	\hspace{.175cm}
	\centering
	\begin{subfigure}[b]{.177\textwidth}
		\resizebox{\textwidth}{!}{
			\begin{tikzpicture}[->,>=stealth',shorten >=1pt,auto,node distance=3cm,
				semithick, transform shape]
				\node[draw, rounded corners=2mm,outer sep=0] (t1) at (-3.25, 0) {\begin{tabular}{l} $\init$ \end{tabular}};
				\node[draw, rounded corners=2mm,outer sep=0, inner sep=0.75mm, label={[font=\small]50:$t_2$}] (t2) at (-3.25, -3.2) {\begin{tabular}{l} 
						$\rd{x}$ \\ $\rd{y}$
				\end{tabular}};
				\node[draw, rounded corners=2mm,outer sep=0, inner sep=0.75mm, label={[font=\small]50:$t_1$}] (t3) at (-3.25, -1.6) {\begin{tabular}{l} 
						$\wrt{x}{2}$\\
						$\wrt{y}{2}$
				\end{tabular}};			
				
				\path (t1.south west) -- (t1.south) coordinate[pos=0.67] (t1x);
				\path (t2.north west) -- (t2.north) coordinate[pos=0.67] (t2x);
				\path (t3.north west) -- (t3.west) coordinate[pos=0.67] (t3x);
				
				%\path (t3x) edge [above] node[yshift=8,xshift=0] {$\wro_x$} (t2.north east);
				\path (t3.south) edge [left] node {$\wro_x$} (t2.north);
				\path (t1.south west) edge [bend right=40] node[left] {$\wro_y$} (t2.north west);
			\end{tikzpicture}
			
		}
		\caption{Inconsistent.}
		\label{fig:dead_branch:d}
	\end{subfigure}
	\vspace{-2.5mm}
	\caption{Example of inconsistency after swapping two events. All $\so$-edges from $\init$ to the other transactions are omitted for legibility. The history order $<$ is represented by the top to bottom order in each figure.
	%All transaction logs are history-ordered top to bottom according to their position in the figure. 
	Events in gray are not yet added to the history. }
	\label{fig:dead_branch}
	\vspace{-4mm}
\end{figure}
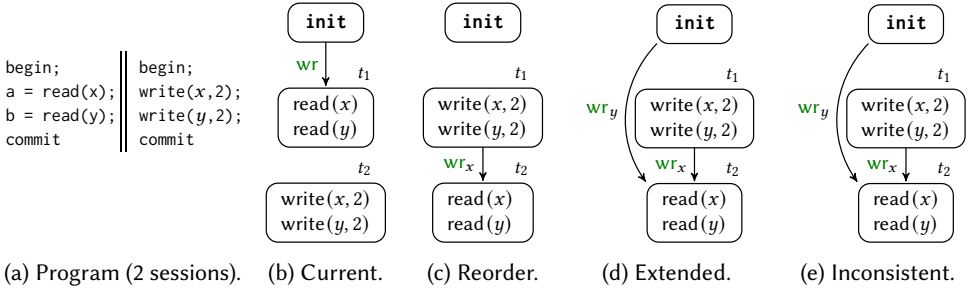
%\vspace{-2mm}

%Proceeding this way we can ensure that every complete history is not inconsistent, but we cannot yet ensure every incomplete consistent history lead to a complete and consistent one. Guaranteeing it would mean that every step done in the algorithm is meaningful; that any explored branch would lead to a dead end. 
%The process of re-ordering events must maintain the invariant that the algorithm explores histories with at most one pending transaction, 
Re-ordering events must preserve the invariant of producing histories with at most one pending transaction.
To explain the use of this invariant in avoiding fruitless explorations, let us consider the program in Figure~\ref{fig:dead_branch:prog} assuming an exploration under Read Committed. The oracle order gives priority to the transaction on the left.
Assume that the current history reached by the exploration is the one pictured in Figure~\ref{fig:dead_branch:a} (the last added event is $\wrt{x}{2}$). Swapping $\wrt{x}{2}$ with $\rd{x}$ would result in the history pictured in Figure~\ref{fig:dead_branch:b}. To ensure that this swap produces a new history which was not explored in the past, the $\wro_x$ dependency of $\rd{x}$ is changed towards the $\wrt{x}{2}$ transaction (we detail this later).
By the definition of $\nextEvent$ (and the oracle order), this history shall be extended with $\rd{y}$, and this read event will be associated by $\wro_y$ to the only available $\iwrite(y,\_)$ event from $\init$. This is pictured in 
Figure~\ref{fig:dead_branch:c}. The next exploration step will extend the history with $\ewrt{y,2}$ (the only extension possible) which however, results in a history that does \emph{not} satisfy Read Committed, thereby, the recursive exploration branch being blocked.
%as there is only one $\iwrite(y,\_)$ event, the $\wro$ dependncy for this
%that writes $y$, $r_b$ must read from the very first transaction; as seen in Figure~\ref{fig:dead_branch:c}. However, when completing the third transaction we must inexorably admit that our history in Figure~\ref{fig:dead_branch:d} is inconsistent. Therefore, all the computation required to complete the unfinished transactions after the swap was in vain; we couldn't detect after computing $h_2$ the dead end.
% As $\nextEvent$ function always prioritize pending transactions
The core issue is related to the history in Figure \ref{fig:dead_branch:c} which has a pending transaction that is \emph{not} $(\so \cup \wro)^+$-maximal. Being able to extend such a transaction while maintaining consistency is not guaranteed by Read Committed (and any other isolation level we consider). Nevertheless, causal extensibility guarantees the existence of an extension for pending transactions that are $(\so \cup \wro)^+$-maximal. We enforce this requirement by restricting the explored histories to have at most one pending transaction. This pending transaction will necessarily be $(\so \cup \wro)^+$-maximal.

%
%Hence, for being always able to extend a history by invoking the model's causally-extensibility, we have to never produce pending transactions that are non $\so \cup \wro$-maximal. A simple solution is always executing histories in isolation, i.e. having exactly one pending transaction; thus, $\so \cup \wro$-maximal,  otherwise there would be a previous point where two pending transactions coexisted. To sum up, we are not going to swap just after executing a $\iwrite$ event but when its transaction is completed.

%TODO EXPLAIN THE DEFINITIONS BELOW

To enforce histories with at most one pending transaction, the function $\genericCompute$, which identifies events to reorder, has a non-empty return value only when the last added event is $\ecommit$ (the end of a transaction)\footnote{\nver{Aborted transactions have no visible effect on the state of the database so swapping an aborted transaction cannot produce a new meaningful history.}}. Therefore, in such a case, it returns pairs \oldver{ that contain some read $r$ and the last completed transaction log $t$}\nver{of some transaction log prefix ending in a read $r$ and the last completed transaction log $t$}, such that the transaction log containing $r$ and $t$ are \emph{not} causally dependent (i.e., related by $(\so \cup \wro)^*$) \nver{(the transaction log prefix ending in $r$ and $t$ play the role of the subsequences $\alpha$ and respectively, $\beta$ in the description of $\genericCompute$ from Section~\ref{sec:algs}). To simplify the notation, we will assume that $\genericCompute$ returns pairs $(r,t)$.}
%In particular, given $r$ and $t$, we define the subsequences $\alpha_{r,t} = \{r\}$ and $\beta_{r,t} = \{t' \ | \ (t', t) \in (\so \cup \wro)^* \land r < t' \}$ that represent the minimal information required for $\alpha, \beta$ comply with $\genericSwap$'s specifications. To simplify notation, we will assume $\genericCompute$ simply returns the pairs $(r,t)$.
%\begin{equation*}
%	\begin{array}{lcl}
\vspace{-0.5mm}
\begin{align*}
& \genericCompute(h_<)  \coloneqq  \{(r,t) \in \Events \times T \ | \ r \in \readOp{T} \land \writeVar{t}{\variable{r}}\land \trans{h}{r} < t\\ %[.5mm]
		%&& \hspace{1.9cm} \land\  \\ [1mm]
& \hspace{2.7cm} \land\ (\trans{h}{r},t)\not\in (\so \cup \wro)^*\land \mbox{$t$ is complete and it includes the last event in $<$}\} \\
%& \hspace{1.7cm} \land\ \mbox{$t$ is complete}   \} 
%			&& \mbox{where $t$ is the transaction log of the last event in $<$}
\end{align*}
\vspace{-9mm}
%	\end{array}
%\end{equation*}
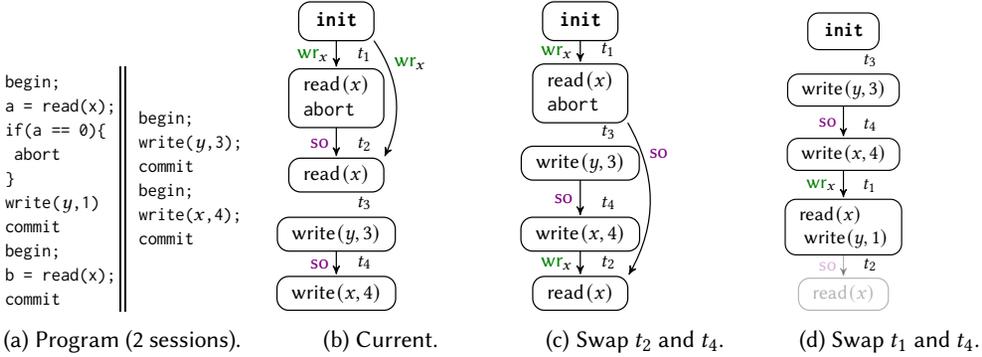
\begin{figure}[t]
\centering
\begin{subfigure}[b]{.25\textwidth}
\begin{adjustbox}{max width=\textwidth}
\begin{tabular}{c||c}
\begin{lstlisting}[xleftmargin=5mm,basicstyle=\ttfamily\scriptsize,escapeinside={(*}{*)}, tabsize=1]
begin;
a = read((*x*));
if(a == 0){
	abort
}
write((*$y$*),1)
commit
begin;
b = read((*x*));
commit		
\end{lstlisting} &
\begin{lstlisting}[xleftmargin=5mm,basicstyle=\ttfamily\scriptsize,escapeinside={(*}{*)}, tabsize=1]
begin;
write((*$y$*),3);
commit
begin;
write((*$x$*),4);
commit
	\end{lstlisting} 
\end{tabular} 
\end{adjustbox}
	
	\caption{Program \nver{(2 sessions)}.}
	\label{fig:compute-reordering:prog}
\end{subfigure} 
\hspace{.15cm}%\pgfsetfillopacity{0.3}\\
	\centering
	\begin{subfigure}[b]{.21\textwidth}
		\resizebox{.76\textwidth}{!}{
			\begin{tikzpicture}[->,>=stealth',shorten >=1pt,auto,node distance=3cm,
				semithick, transform shape]
				
				%\\ \multicolumn{1}{c}{ \ldots}
				\node[draw, rounded corners=2mm,outer sep=0] (t0) at (0, 0) {\begin{tabular}{l} $\init$ \end{tabular}};
				\node[draw, rounded corners=2mm,outer sep=0, inner sep=0.75mm, label={[font=\small]65:$t_1$}] (t1) at (0, -1.3) {\begin{tabular}{l} 
					$\rd{x}$ \\ $\iabort$
				\end{tabular}};
				\node[draw, rounded corners=2mm,outer sep=0, inner sep=0.75mm, label={[font=\small]50:$t_2$}] (t2) at (0, -2.6) {\begin{tabular}{l} 
					$\rd{x}$
				\end{tabular}};
				\node[draw, rounded corners=2mm,outer sep=0, inner sep=0.75mm, label={[font=\small]50:$t_3$}] (t3) at (0, -3.6) {\begin{tabular}{l} 
					$\wrt{y}{3}$
				\end{tabular}};
				\node[draw, rounded corners=2mm,outer sep=0, inner sep=0.75mm, label={[font=\small]50:$t_4$}] (t4) at (0, -4.6) {\begin{tabular}{l} 
					$\wrt{x}{4}$
				\end{tabular}};			
				
				\path (t1.south west) -- (t1.south) coordinate[pos=0.67] (t1x);
				\path (t1.south east) -- (t1.south) coordinate[pos=0.67] (t1y);
				\path (t2.north west) -- (t2.north) coordinate[pos=0.67] (t2x);
				\path (t2.north east) -- (t2.north) coordinate[pos=0.67] (t2y);
				
				\path (t1.south) edge [left] node {$\so$} (t2.north);
				\path (t3.south) edge [left] node {$\so$} (t4.north);
				\path (t0.south) edge [left] node {$\wro_x$} (t1.north);\\
				\path (t0.south east) edge [bend left] node[right, pos=0.2] {$\wro_x$} (t2.north east);
				
				%\path (t2) edge [right] node {$\wro_x$} (t3);
			\end{tikzpicture}  
			
		}
		\caption{Current.}
		\label{fig:compute-reordering:a}
	\end{subfigure}
	\hspace{.15cm}
	\centering
	\begin{subfigure}[b]{.23\textwidth}

		\resizebox{.685\textwidth}{!}{
			\begin{tikzpicture}[->,>=stealth',shorten >=1pt,auto,node distance=3cm,
				semithick, transform shape]
				\node[draw, rounded corners=2mm,outer sep=0] (t0) at (0, 0) {\begin{tabular}{l} $\init$ \end{tabular}};
				\node[draw, rounded corners=2mm,outer sep=0, inner sep=0.75mm, label={[font=\small]65:$t_1$}] (t1) at (0, -1.22) {\begin{tabular}{l} 
						$\rd{x}$ \\ $\iabort$
				\end{tabular}};
				\node[draw, rounded corners=2mm,outer sep=0, inner sep=0.75mm, label={[font=\small]50:$t_2$}] (t2) at (0, -4.6) {\begin{tabular}{l} 
						$\rd{x}$
				\end{tabular}};
				\node[draw, rounded corners=2mm,outer sep=0, inner sep=0.75mm, label={[font=\small]50:$t_3$}] (t3) at (0, -2.4) {\begin{tabular}{l} 
						$\wrt{y}{3}$
				\end{tabular}};
				\node[draw, rounded corners=2mm,outer sep=0, inner sep=0.75mm, label={[font=\small]50:$t_4$}] (t4) at (0, -3.6) {\begin{tabular}{l} 
						$\wrt{x}{4}$
				\end{tabular}};			
				
				\path (t1.south west) -- (t1.south) coordinate[pos=0.67] (t1x);
				\path (t1.south east) -- (t1.south) coordinate[pos=0.67] (t1y);
				\path (t2.north west) -- (t2.north) coordinate[pos=0.67] (t2x);
				\path (t2.north east) -- (t2.north) coordinate[pos=0.67] (t2y);
				
				\path (t1.south east) edge [bend left] node [right, pos=0.2] {$\so$} (t2.north east);
				\path (t3.south) edge [left] node {$\so$} (t4.north);
				\path (t0.south) edge [left] node {$\wro_x$} (t1.north);\\
				\path (t4.south) edge [left] node{$\wro_x$} (t2.north);
				 %[yshift=8,xshift=0]
				%\path (t1x) edge [right] node {$\wro_y$} (t2x);
			\end{tikzpicture}  

		}
		
				\caption{Swap $t_2$ and $t_4$.}
	
		\label{fig:compute-reordering:b}
	\end{subfigure}
	\hspace{.15cm}
	\centering
	\begin{subfigure}[b]{.21\textwidth}
		\resizebox{.58\textwidth}{!}{
				\begin{tikzpicture}[->,>=stealth',shorten >=1pt,auto,node distance=3cm,
				semithick, transform shape]
				\node[draw, rounded corners=2mm,outer sep=0] (t0) at (0, 0) {\begin{tabular}{l} $\init$ \end{tabular}};
				\node[draw, rounded corners=2mm,outer sep=0, inner sep=0.75mm, label={[font=\small]65:$t_1$}] (t1) at (0, -3.5) {\begin{tabular}{l} 
						$\rd{x}$ \\ \pgfsetfillopacity{0.3}$\wrt{y}{1}$ 
				\end{tabular}};
				\node[draw, rounded corners=2mm,outer sep=0, inner sep=0.75mm, opacity=0.3, label={[font=\small]50:$t_2$}] (t2) at (0, -4.7) {\begin{tabular}{l} 
						$\rd{x}$
				\end{tabular}};
				\node[draw, rounded corners=2mm,outer sep=0, inner sep=0.75mm, label={[font=\small]50:$t_3$}] (t3) at (0, -1.05) {\begin{tabular}{l} 
						$\wrt{y}{3}$
				\end{tabular}};
				\node[draw, rounded corners=2mm,outer sep=0, inner sep=0.75mm, label={[font=\small]50:$t_4$}] (t4) at (0, -2.2) {\begin{tabular}{l} 
						$\wrt{x}{4}$
				\end{tabular}};			
				
				\path (t1.south west) -- (t1.south) coordinate[pos=0.67] (t1x);
				\path (t1.south east) -- (t1.south) coordinate[pos=0.67] (t1y);
				\path (t2.north west) -- (t2.north) coordinate[pos=0.67] (t2x);
				\path (t2.north east) -- (t2.north) coordinate[pos=0.67] (t2y);
				
				\path[opacity=0.3] (t1.south) edge [left] node [left] {$\so$} (t2.north);
				\path (t3.south) edge [left] node {$\so$} (t4.north);
				\path (t4.south) edge [left] node {$\wro_x$} (t1.north);\\
				%\path (t4.south) edge [left] node{$\wro_x$} (t2.north);
				%[yshift=8,xshift=0]
				%\path (t1x) edge [right] node {$\wro_y$} (t2x);
			\end{tikzpicture}  
			
		}
		\caption{Swap $t_1$ and $t_4$.}
		\label{fig:compute-reordering:c}
	\end{subfigure}
	\vspace{-2mm}
	\caption{Re-ordering events. All $\so$-edges from $\init$ to other transactions are omitted for legibility. The history order $<$ is represented by the top to bottom order in each figure. Events in gray are deleted from the history.}
	\label{fig:compute-reordering}
	\vspace{-2mm}
\end{figure}

%\vspace{-6mm}
For example, for the program in Figure~\ref{fig:compute-reordering:prog} and history $h$ in Figure~\ref{fig:compute-reordering:a}, $\genericCompute(h)$ would return $(r_1, t_4)$ and $(r_2, t_4)$ where $r_1$ and $r_2$ are the $\rd{x}$ events in $t_1$ and $t_2$ respectively. 

For a pair $(r,t)$, the function $\genericSwap$ produces a new history $\hist'$ which contains all the events ordered before $r$ (w.r.t. $<$), the transaction $t$ and all its $(\so \cup \wro)^*$ predecessors, and the event $r$ reading from $t$. All the other events are removed. Note that the $\po$ predecessors of $r$ from the same transaction are ordered before $r$ by $<$ and they will be also included in $\hist'$. The history $\hist'$ without $r$ is a prefix of the input history $\hist$. By definition, the only pending transaction in $\hist'$ is the one containing the read $r$. The order relation is updated by moving the transaction containing the read $r$ to be the last; it remains unchanged for the rest of the events.
\begin{align*}
\genericSwap(h_<, r, t, \locals) \coloneqq \big( (h' = (h \setminus D) \oplus \wro(t,r), <'), \locals' \big), \text{where $\locals' = \locals \downarrow \events{h'}$} \\
 D = \{e |  r < e \land (\trans{h}{e},t)\not\in (\so \cup \wro)^*\} \mbox{ and }
<' = \big( < \downarrow (\events{h'}\setminus \events{\trans{h'}{r}}) \big) \cdot \trans{h'}{r} %\\[1mm]
%& \hspace{1cm}\mbox{.}
\end{align*}

Above, $h \setminus D$ is the prefix of $\hist$ obtained by deleting all the events in $D$ from its transaction logs; a transaction log is removed altogether if it becomes empty. Also, $\hist'' \oplus \wro(t,r)$ denotes an \emph{update} of the $\wro$ relation of $\hist''$ where any pair $(\_,r)$ is replaced by $(t,r)$. Finally, $<'' \cdot\ \trans{h'}{r}$ is an extension of the total order $<''$ obtained by appending the events in $\trans{h'}{r}$ according to program order.

%and $\wrt{y}{1}$ in $t_1$ 
Continuing with the example of Figure~\ref{fig:compute-reordering}, when swapping $r_1$ and $t_4$, all the events in transaction $t_2$ belong to $D$ and they will be removed. This is shown in Figure~\ref{fig:compute-reordering:c}. Note that transaction $t_1$ aborted in Figure~\ref{fig:compute-reordering:a} while it will commit in Figure~\ref{fig:compute-reordering:c} (because the value read from $x$ changed). When swapping $r_2$ and $t_4$, no event but the commit in $t_2$ will be deleted (Figure~\ref{fig:compute-reordering:b}).

%Following algorithm \ref{algorithm:algo-class}'s schema, we define two functions $\compute, \swap$ that plays the role of $\genericCompute, \genericSwap$ respectively. In addition, for the history $h' = \swap(h, r, w)$ we declare the algorithm order of $h'$, $\leq_{h'}$ as $\leq_{h'} = \leq_{(h \setminus \tr(r) ) \restriction_{h'}} \cup \{\langle e, e' \rangle \ | \ e \in h' \setminus \tr(r), e' \in \tr(r)\} \cup \po_{\tr(r) \restriction_{h'}}$.
%
%\begin{cframed}[pinegreen]
%	\begin{equation*}
%		\begin{array}{ccc}
%			\compute(h) & \coloneqq & \{(r,w) \in h^2 \ | \ r <_h w \land \variable{r} = \variable{w} \land w \in \tr(\last{h}) \} \\
%			\swap(h, r, w) & \coloneqq & (h \setminus D) \bullet_w r \\
%			& \text{where} & D = \{e \ | \ r \leq_h e \land \lnot (\tr(e) \ (\so \cup \wro)^* \ \tr(w))\}
%		\end{array}
%	\end{equation*}
%\end{cframed}

\subsection{Ensuring Optimality}\label{ssec:optimality}

Simply extending histories according to $\genericNext$ and making recursive calls on re-ordered histories whenever they are $I$-consistent guarantees soundness and completeness, but it does not guarantee optimality. Intuitively, the source of redundancy is related to the fact that applying $\genericSwap$ on different histories may give the same result. 

\begin{figure}[t]
	\begin{subfigure}[b]{.25\textwidth}
\begin{adjustbox}{max width=.9\textwidth}
\begin{tabular}{c||c}
\begin{lstlisting}[xleftmargin=5mm,basicstyle=\ttfamily\footnotesize,escapeinside={(*}{*)}, tabsize=1]
begin;
write((*$x$*),2)
commit
\end{lstlisting} &
\begin{lstlisting}[xleftmargin=5mm,basicstyle=\ttfamily\footnotesize,escapeinside={(*}{*)}, tabsize=1]
begin;
a=read((*$x$*));
commit
\end{lstlisting} 
				
\\
\multicolumn{1}{c}{} & \multicolumn{1}{c}{}
\\
\begin{lstlisting}[xleftmargin=5mm,basicstyle=\ttfamily\footnotesize,escapeinside={(*}{*)}, tabsize=1]
begin;
b=read((*$x$*));
commit
\end{lstlisting} &
\begin{lstlisting}[xleftmargin=5mm,basicstyle=\ttfamily\footnotesize,escapeinside={(*}{*)}, tabsize=1]
begin;
write((*$x$*),4);
commit
\end{lstlisting}
\end{tabular} 
\end{adjustbox}
		
		\caption{Program \nver{(4 sessions)}}
		\label{fig:non-optimality:prog}
	\end{subfigure}
	\begin{subfigure}[b]{.17\textwidth}
		\resizebox{0.785\textwidth}{!}{
			\begin{tikzpicture}[->,>=stealth',shorten >=1pt,auto,node distance=3cm,
				semithick, transform shape]
				\node[draw, rounded corners=2mm,outer sep=0] (t1) at (-3, 0) {\begin{tabular}{l} $\init$ \end{tabular}};
				
				\node[draw, rounded corners=2mm,outer sep=0, inner sep=0.75mm, label={[font=\small]30:$t_1$}] (t2) at (-3, -1.0) {\begin{tabular}{l} $\wrt{x}{2}$ \end{tabular}};
				\node[draw, rounded corners=2mm,outer sep=0, inner sep=0.75mm, label={[font=\small]30:$t_2$}] (t3) at (-3, -2.0) {\begin{tabular}{l} $\rd{x}$ \end{tabular}};
				
				\node[draw, rounded corners=2mm,outer sep=0, inner sep=0.75mm, opacity=0.3, label={[font=\small]30:$t_3$}] (t4) at (-3, -3.0) {\begin{tabular}{l} $\rd{x}$ \end{tabular}};
				\node[draw, rounded corners=2mm,outer sep=0, inner sep=0.75mm, opacity=0.3, label={[font=\small]30:$t_4$}] (t5) at (-3, -4.0) {\begin{tabular}{l} $\wrt{x}{2}$ \end{tabular}};
				
				\path (t1.south west) edge[above, bend right=50] node[left, pos=0.1] {$\wro$} (t3.north west);
				%\path (-4.5,-3.75) edge[-,dashed] (-1,-3.75);
			\end{tikzpicture}  
			
		}
		\caption{Before reading}
		\label{fig:non_optimality:a}
	\end{subfigure}
	\centering	
	\begin{subfigure}[b]{.17\textwidth}
		\resizebox{0.925\textwidth}{!}{
			\begin{tikzpicture}[->,>=stealth',shorten >=1pt,auto,node distance=3cm,
				semithick, transform shape]
				\node[draw, rounded corners=2mm,outer sep=0] (t1) at (-3, 0) {\begin{tabular}{l} $\init$ \end{tabular}};
				
				\node[draw, rounded corners=2mm,outer sep=0, inner sep=0.75mm, label={[font=\small]40:$t_1$}] (t2) at (-3, -1.0) {\begin{tabular}{l} $\wrt{x}{2}$ \end{tabular}};
				\node[draw, rounded corners=2mm,outer sep=0, inner sep=0.75mm, label={[font=\small]30:$t_2$}] (t3) at (-3, -2.0) {\begin{tabular}{l} $\rd{x}$ \end{tabular}};
				
				\node[draw, rounded corners=2mm,outer sep=0, inner sep=0.75mm, label={[font=\small]35:$t_3$}] (t4) at (-3, -3.0) {\begin{tabular}{l} $\rd{x}$ \end{tabular}};
				
				\node[draw, rounded corners=2mm, label={[font=\small]30:$t_4$}] (t5) at (-3, -4.0) {\begin{tabular}{l} \textbf{\textcolor{blue}{$\wrt{x}{4}$}} \end{tabular}};
				
				%\path (t3) edge node {$\so$} (t5);
				%\path (t3) edge node {$\po$} (t3);
				%\path (t1) edge[below] node[yshift=-4,xshift=-4] {$\wro$} (t3_2);
				%\path (t2) edge[below] node[yshift=-4,xshift=4] {$\wro$} (t4_2);
%				\path (t1) edge[above] node[yshift=0,xshift=0] {$\wro$} (t3);
%				\path (t1) edge[below] node[yshift=-2,xshift=-2] {$\wro$} (t4);
				\path (t1.south west) edge[above, bend right=50] node[left, pos=0.1] {$\wro$} (t3.north west);
				\path (t1.south east) edge[below, bend left=50] node[right, pos=0.1] {$\wro$} (t4.north east);
				%\path (-4.5,-5.25) edge[-,dashed] (-1,-5.25);
			\end{tikzpicture}  
			
		}
		\caption{$t_3$ reads $\init$}
		\label{fig:non_optimality:b}
	\end{subfigure}
	\centering
	\begin{subfigure}[b]{.17\textwidth}
		\resizebox{1.\textwidth}{!}{
			\begin{tikzpicture}[->,>=stealth',shorten >=1pt,auto,node distance=3cm,
				semithick, transform shape]
				\node[draw, rounded corners=2mm,outer sep=0] (t1) at (-3, 0) {\begin{tabular}{l} $\init$ \end{tabular}};
				
				\node[draw, rounded corners=2mm,outer sep=0, inner sep=0.75mm, label={[font=\small]30:$t_1$}] (t2) at (-3, -1.0) {\begin{tabular}{l} $\wrt{x}{2}$ \end{tabular}};
				\node[draw, rounded corners=2mm,outer sep=0, inner sep=0.75mm, label={[font=\small]30:$t_2$}] (t3) at (-3, -2.0) {\begin{tabular}{l} $\rd{x}$ \end{tabular}};
				
				\node[draw, rounded corners=2mm,outer sep=0, inner sep=0.75mm, label={[font=\small]35:$t_3$}] (t4) at (-3, -3.0) {\begin{tabular}{l} $\rd{x}$ \end{tabular}};
				
				\node[draw, rounded corners=2mm, label={[font=\small]30:$t_4$}] (t5) at (-3, -4.0) {\begin{tabular}{l} \textbf{\textcolor{blue}{$\wrt{x}{4}$}} \end{tabular}};

				\path (t1.south west) edge[above, bend right=50] node[left, pos=0.1] {$\wro$} (t3.north west);
				\path (t2.south east) edge[below, bend left=50] node[right, pos=0.1] {$\wro$} (t4.north east);
				%yshift=16, xshift=-2,
				%\path (-4.5,-5.25) edge[-,dashed] (-1,-5.25);
			\end{tikzpicture}  
			
		}
		\caption{$t_3$ reads $t_1$}
		\label{fig:non_optimality:c}
	\end{subfigure}
	\centering
	\begin{subfigure}[b]{.18\textwidth}
		\resizebox{.615\textwidth}{!}{
			\begin{tikzpicture}[->,>=stealth',shorten >=1pt,auto,node distance=3cm,
				semithick, transform shape]
				\node[draw, rounded corners=2mm,outer sep=0] (t1) at (-3, 0) {\begin{tabular}{l} $\init$ \end{tabular}};
				\node[draw, rounded corners=2mm,outer sep=0, inner sep=0.75mm, label={[font=\small]30:$t_1$}] (t2) at (-3, -1.1) {\begin{tabular}{l} $\wrt{x}{2}$ \end{tabular}};
				\node[draw, rounded corners=2mm,outer sep=0, inner sep=0.75mm, label={[font=\small]30:$t_2$}] (t3) at (-3, -3.3) {\begin{tabular}{l} $\rd{x}$ \end{tabular}};
				
				\node[draw, rounded corners=2mm,outer sep=0, inner sep=0.75mm, opacity=0.3, label={[font=\small]30:$t_3$}] (t4) at (-3, -4.4) {\begin{tabular}{l} $\rd{x}$ \end{tabular}};
				
				\node[draw, rounded corners=2mm, label={[font=\small]30:$t_4$}] (t5) at (-3, -2.2) {\begin{tabular}{l} $\wrt{x}{4}$ \end{tabular}};				
				
				\path (t5) edge[left] node {$\wro$} (t3);
				
			\end{tikzpicture}  
			
		}
		\caption{After swap.}
		\label{fig:non_optimality:d}
	\end{subfigure}
	\vspace{-2mm}	
	\caption{Re-ordering events versus optimality. We assume an oracle order orders transaction from left to right, top to bottom in the program. All transaction logs are history-ordered top to bottom according to their position in the figure. Events in gray are not yet added to the history.}
	\label{fig:non_optimality}	
	\vspace{-5mm}
\end{figure}
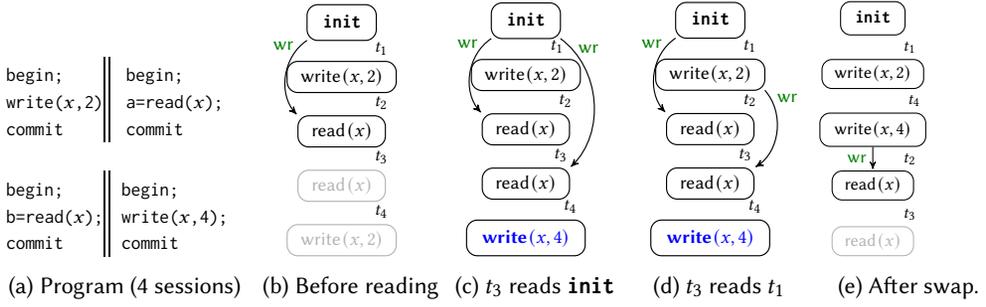

As a first example, consider the program in Figure~\ref{fig:non-optimality:prog} with 2 transactions that only read some variable $x$ and 2 transactions that only write to $x$, each transaction in a different session. Assume that $\textsc{explore}$ reaches the ordered history in Figure~\ref{fig:non_optimality:a} and $\genericNext$ is about to return the second reading transaction. $\textsc{explore}$ will be called recursively on the two histories in Figure~\ref{fig:non_optimality:b} and Figure~\ref{fig:non_optimality:c} that differ in the write that this last read is reading from (the initial write or the first write transaction). On both branches of the recursion, $\genericNext$ will extend the history with the last write transaction written in blue bold font. For both histories, swapping this last write with the first read on $x$ will result in the history in Figure~\ref{fig:non_optimality:d} (cf. the definition of $\genericCompute$ and $\genericSwap$). Thus, both branches of the recursion will continue extending the same history and optimality is violated. The source of non-optimality is related to $\wro$ dependencies that are \emph{removed} during the $\genericSwap$ computation. The histories in Figure~\ref{fig:non_optimality:b} and Figure~\ref{fig:non_optimality:c} differ in the $\wro$ dependency involving the last read, but this difference was discarded during the $\genericSwap$ computation. To avoid this behavior, $\genericSwap$ is enabled only on histories where the discarded $\wro$ dependencies relate to some ``fixed'' set of writes, i.e., latest\footnote{\nver{We use latest writes because they are uniquely defined. In principle, other ways of identifying some unique set of writes could be used.}} writes w.r.t. $<$ that \oldver{preserve consistency}\nver{guarantee consistency by causal extensibility} (see the definition of $\isMaximallyAdded{\_}{\_}$ below). 
%\nver{from their ``causal past'' (i.e., predecessors in $(\so \cup \wro)^+$)}, i.e., latest\footnote{\nver{We use latest writes because they are uniquely defined. In principle, other ways of identifying some unique set of writes could be used.}} writes w.r.t. $<$ that preserve consistency (see the definition of $\isMaximallyAdded{\_}{\_}$ below). 
\nver{By causal extensibility, a read $r$ can always read from a write which already belongs to its ``causal past'', i.e., predecessors in $(\so \cup \wro)^*$ excluding the $\wro$ dependency for $r$. For every discarded $\wro$ dependency, it is required that the read reads from the latest such write w.r.t. $<$.} In this example, re-ordering is enabled only when the second $\rd{x}$ reads from \oldver{$\ewrt{x,2}$}\nver{the initial write; $\ewrt{x,2}$ does not belong to its ``causal past'' (when the $\wro$ dependency of the read itself is excluded)}.

%Different branches of the recursion may lead to the same history also because $\genericSwap$ may give the same result on 
%thereby, violating optimality. 
\begin{figure}[t]
	
\begin{subfigure}[b]{.23\textwidth}
\begin{adjustbox}{max width=\textwidth}
\begin{tabular}{c||c}
\begin{lstlisting}[xleftmargin=5mm,basicstyle=\ttfamily\scriptsize,escapeinside={(*}{*)}, tabsize=1]
begin;
a=read((*$x$*));
commit
\end{lstlisting} &
\begin{lstlisting}[xleftmargin=5mm,basicstyle=\ttfamily\scriptsize,escapeinside={(*}{*)}, tabsize=1]
begin;
b=read((*$y$*));
commit
\end{lstlisting} 
\\
\multicolumn{1}{c}{} & \multicolumn{1}{c}{}
\\
\begin{lstlisting}[xleftmargin=5mm,basicstyle=\ttfamily\scriptsize,escapeinside={(*}{*)}, tabsize=1]
begin;
write((*$y$*),3);
commit
\end{lstlisting} &
\begin{lstlisting}[xleftmargin=5mm,basicstyle=\ttfamily\scriptsize,escapeinside={(*}{*)}, tabsize=1]
begin;
write((*$x$*),4);
commit
\end{lstlisting}
\end{tabular} 
\end{adjustbox}

\caption{Program \nver{(4 sessions)}.}
\label{fig:redundant_Swap:prog}
\end{subfigure}
	\centering
	\begin{subfigure}[b]{.225\textwidth}
		\resizebox{.65\textwidth}{!}{
			\begin{tikzpicture}[->,>=stealth',shorten >=1pt,auto,node distance=3cm,
				semithick, transform shape]
				\node[draw, rounded corners=2mm,outer sep=0] (t0) at (-3, 0) {\begin{tabular}{l} $\init$ \end{tabular}};
				
				\node[draw, rounded corners=2mm,outer sep=0, inner sep=0.75mm, label={[font=\small]50:$t_1$}] (t1) at (-3, -1.075) {\begin{tabular}{l} $\rd{x}$\end{tabular}};
				\node[draw, rounded corners=2mm,outer sep=0, inner sep=0.75mm, label={[font=\small]50:$t_2$}] (t2) at (-3, -2.15) {\begin{tabular}{l} $\rd{y}$  \end{tabular}};
				
				\node[draw, rounded corners=2mm,outer sep=0, inner sep=0.75mm, label={[font=\small]50:$t_3$}] (t3) at (-3, -3.225) {\begin{tabular}{l} $\wrt{y}{3}$ \end{tabular}};
				\node[draw, rounded corners=2mm,outer sep=0, inner sep=0.75mm, label={[font=\small]50:$t_4$}] (t4) at (-3, -4.3) {\begin{tabular}{l} $\wrt{x}{4}$ \end{tabular}};
				
				%\path (t0.south west) edge[above, bend right] node[left] {$\wro$} (t3.north west);
				\path (t0.south west) edge[above, bend right] node[left] {$\wro$} (t2.north west);
				\path (t0.south) edge node[left] {$\wro$} (t1.north);
				%\path (t0.south east) edge[above, bend left] node[left] {$\wro$} (t4.north east);
			\end{tikzpicture}  
		}
		\caption{Current history.}
		\label{fig:redundant_swap:a}
	\end{subfigure}
	\hspace{.25cm}
	\centering
	\begin{subfigure}[b]{.17\textwidth}
		\resizebox{.68\textwidth}{!}{
			\begin{tikzpicture}[->,>=stealth',shorten >=1pt,auto,node distance=3cm,
				semithick, transform shape]
				\node[draw, rounded corners=2mm,outer sep=0] (t0) at (-3, 0) {\begin{tabular}{l} $\init$ \end{tabular}};
				
				\node[draw, rounded corners=2mm,outer sep=0, inner sep=0.75mm, label={[font=\small]50:$t_1$}] (t1) at (-3, -1.1) {\begin{tabular}{l} $\rd{x}$ \end{tabular}};
				\node[draw, rounded corners=2mm,outer sep=0, inner sep=0.75mm, label={[font=\small]50:$t_2$}] (t2) at (-3, -3.3) {\begin{tabular}{l} $\rd{y}$ \end{tabular}};
				
				\node[draw, rounded corners=2mm,outer sep=0, inner sep=0.75mm, label={[font=\small]50:$t_3$}] (t3) at (-3, -2.2) {\begin{tabular}{l} $\wrt{y}{3}$ \end{tabular}};
				\node[draw, rounded corners=2mm,outer sep=0, inner sep=0.75mm, label={[font=\small]50:$t_4$}] (t4) at (-3, -4.4) {\begin{tabular}{l} $\wrt{x}{4}$ \end{tabular}};
				
%				\path (t0.south west) edge[above, bend right] node[left] {$\wro$} (t3.north west);
				\path (t3.south) edge node[left] {$\wro$} (t2.north);
				\path (t0.south) edge node[left] {$\wro$} (t1.north);
				%\path (t0.south east) edge[above, bend left] node[left] {$\wro$} (t4.north east);
			\end{tikzpicture}  
		}
		
		\caption{Swap $t_2$ and $t_3$.}
		\label{fig:redundant_swap:b}
	\end{subfigure}
	\hspace{.25cm}
	\centering
	\begin{subfigure}[b]{.17\textwidth}
		\resizebox{.72\textwidth}{!}{
			\begin{tikzpicture}[->,>=stealth',shorten >=1pt,auto,node distance=3cm,
				semithick, transform shape]
				\node[draw, rounded corners=2mm,outer sep=0] (t0) at (-3, 0) {\begin{tabular}{l} $\init$ \end{tabular}};
				
				\node[draw, rounded corners=2mm,outer sep=0, inner sep=0.75mm, label={[font=\small]50:$t_1$}] (t1) at (-3, -2.2) {\begin{tabular}{l} $\rd{x}$ \end{tabular}};
%				\node[draw, rounded corners=2mm,outer sep=0, inner sep=0.75mm, label={[font=\small]50:$t_2$}] (t2) at (-3, -4.5) {\begin{tabular}{l} $\rd{y}$ \end{tabular}};
				
%				\node[draw, rounded corners=2mm,outer sep=0, inner sep=0.75mm, label={[font=\small]50:$t_3$}] (t3) at (-3, -6) {\begin{tabular}{l} $\wrt{y}{3}$ \end{tabular}};
				\node[draw, rounded corners=2mm,outer sep=0, inner sep=0.75mm, label={[font=\small]50:$t_4$}] (t4) at (-3, -1.1) {\begin{tabular}{l} $\wrt{x}{4}$ \end{tabular}};
				
				%\path (t0.south west) edge[above, bend right] node[left] {$\wro$} (t3.north west);
%				\path (t0.south west) edge[above, bend right] node[left] {$\wro$} (t2.north west);
				\path (t4.south) edge node[left] {$\wro$} (t1.north);
				%\path (t0.south) edge node[left] {$\wro$} (t4.north);
			\end{tikzpicture}  
		}
		
		\caption{Swap $t_1$ and $t_4$.}
		\label{fig:redundant_swap:c}
	\end{subfigure}
	%\hspace{.5cm}
\vspace{-2mm}
	\caption{Re-ordering the same read on different branches of the recursion.}
	\label{fig:redundant_swap}
	\vspace{-5mm}
\end{figure}
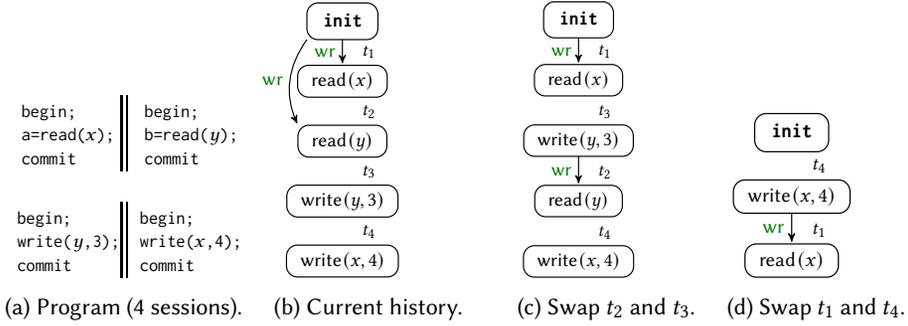

%there are two sources of redundancy in this case: (1) re-ordering the same read on different branches of the recursion, and (2) applying $\genericSwap$ on different histories may give the same result.
%TODO GIVE AN EXAMPLE FOR THE 1ST ISSUE. JUSTIFYING THE SWAPPED CONDITION. 
The restriction above is not sufficient, because the two histories for which $\genericSwap$ gives the same result may not be generated during the same recursive call (for different $\wro$ choices when adding a read).
For example, consider the program in Figure~\ref{fig:redundant_Swap:prog} that has four sessions each containing a single transaction. $\textsc{explore}$ may compute the history $h$ pictured in Figure~\ref{fig:redundant_swap:a}. Before adding transaction $t_4$, $\textsc{explore}$ can re-order $t_3$ and $t_2$ and then extend with $t_4$ and arrive at the history $h_1$ in Figure~\ref{fig:redundant_swap:b}. Also, after adding $t_4$, it can re-order $t_1$ and $t_4$ and arrive at the history $h_2$ in Figure~\ref{fig:redundant_swap:c}. However, swapping the same $t_1$ and $t_4$ in $h_1$ leads to the same history $h_2$, thereby, having two recursive branches that end up with the same input and violate optimality. Swapping $t_1$ and $t_4$ in $h_1$ should not be enabled because the $\erd{y}$ to be removed by $\genericSwap$ has been swapped in the past. Removing it makes it possible that this recursive branch explores that $\wro$ choice for $\erd{y}$ again.
%Therefore, two different histories will also be computed via swapping two transactions: $h_2$ if we swap $t_2$ and $t_3$  (Figure~\ref{fig:redundant_swap:b}) and $h_3$ if we swap $t_1$ and $t_4$(Figure~\ref{fig:redundant_swap:c}). However, from $h_2$ we can also swap $t_1$ and $t_4$ to produce a history that can be extended to $h_3$; obtaining twice the same history.

%Let's suppose we have a program $\mathcal{P}$ as depicted in figure \ref{fig:non_optimality}, assuming $\leq_{\ora}$ order transactions as presented from left to right, top to bottom. Then, the algorithm computes histories \ref{fig:non_optimality:a}, $h$, and \ref{fig:non_optimality:b}, $h'$. After adding $T_4$ in both $h,h'$ we can produce a swap between $r_a \coloneqq a \gets \rd{x}$ and $w_4 \coloneqq \wrt{x}{4}$; deleting the event $r_b \coloneqq b \gets \rd{x}$ as $\lnot(\tr(r_b) \ [\wro \cup \so]^+ \ \tr(w_4)) $. Therefore, after the swap in both cases we arrive to the history depicted in Figure~\ref{fig:non_optimality:c}; obtaining a non-optimal situation. 

%In conclusion, we cannot swap transactions without any restriction. As the example in figure \ref{fig:non_optimality} shows, the key of redundancy lies in every $\wro$ edge that is going to be modified or erased: if two histories only differ on those, the resultant history is the same.

The $\genericProtocol$ condition restricting re-orderings requires that the re-ordered history be $I$- consistent and that every read deleted by $\genericSwap$ or the re-ordered read $r$ (whose $\wro$ dependency is modified) reads from a latest valid write, \nver{cf. the example in Figure~\ref{fig:non_optimality},} and it is not already swapped, \nver{cf. the example in Figure~\ref{fig:redundant_swap}} (the set $D$ is defined as in $\genericSwap$):

\vspace{-5mm}
\begin{align*}
& \genericProtocol(h_<, r, t, \locals) \coloneqq \mbox{the history returned by }\genericSwap(h_<, r, t, \locals) \mbox{ satisfies }I\\
&\hspace{3.5cm}\land\ \forall r'\in \readOp{h}\cap (D\cup\{r\}).\ \lnot \swapped{h_<}{r'}\land \isMaximallyAdded{h_<}{r'}{t} 
%			\hspace{2cm}\text{where }  D = \{e \ | \ r < e \land (\trans{h}{e},\trans{h}{w})\not\in (\so \cup \wro)^*\} %  \\ \multicolumn{3}{c}{
\end{align*}
\vspace{-5mm}

%\vspace{-1mm}

%\nver{The $\genericProtocol$ condition restricting re-orderings requires that the re-ordered history be $I$-consistent and that every read deleted by $\genericSwap$ or the re-ordered read $r$ (whose $\wro$ dependency is modified) reads from the latest causal valid write.}

\begin{comment}
\begin{align*}
& \genericProtocol(h_<, r, t, \locals) \coloneqq \mbox{the history returned by }\genericSwap(h_<, r, t, \locals) \mbox{ satisfies }I\\
&\hspace{3.5cm}\land h = \isMaximallyAdded{h_<}{h \setminus (\{t' \ | \ r < t' \land (t', t) \not\in (\so \cup \wro)^*\} \cup r)} 
%			\hspace{2cm}\text{where }  D = \{e \ | \ r < e \land (\trans{h}{e},\trans{h}{w})\not\in (\so \cup \wro)^*\} %  \\ \multicolumn{3}{c}{
\end{align*}
\end{comment}

%\nver{The \textit{$(r,t)$-alternative history to $h$} $\alter{r}{t}{h}$ is a history that would have been obtained by an alternative swapping-based algorithm that, after adding $r$ in the same order than $\textsc{explore-sc}$, firstly it would have added every transaction causally dependent with $t$ in $h$ and then, every missing event following $\ora$. However, every event in $h_<$ whose $\wro$ differs from $\alter{r}{t}{h}_{\hat{<}}$ will differ because it does not read the latest causal value (according to $\hat{<}$). $\genericProtocol(h_<, r, t, \locals)$ is satisfied whenever $h$ coincide with its $(r,t)$-alternative history.}

%\nver{A history $h_<$ reads all causal latest values }

%A read $r$ reads from a latest valid transaction, denoted as $\isMaximallyAdded{h_<}{r}$, if reading from any other later transaction w.r.t. $<$ violates the isolation level $I$. Formally, assuming that $t$ is the transaction such that $(t,r)\in\wro$,

A read $r$ reads from a \nver{causally} latest valid transaction, denoted as $\isMaximallyAdded{h_<}{r}$, if reading from any other later transaction $t'$ w.r.t. $<$ \nver{which is in the ``causal past'' of $\trans{h_<}{r}$} violates the isolation level $I$. Formally, assuming that $t_r$ is the transaction such that $(t_r,r)\in\wro$ in $h$,

\vspace*{-3.5mm}
\begin{equation*}
\isMaximallyAdded{h_<}{r}{t} \coloneqq 
t_r = \max_{<} \left\{\begin{array}{ccc}
	\writeVar{t'}{\mathit{var}(r)} \land (t', \trans{h_<}{r}) \in (\so \cup \wro)^* \text{ in } h' \\
	\land \ h' \oplus r\oplus \wro(t',r) \models I 
	\end{array}
	\right\} \\
 %\forall t' \in \tlogs{\hist}. (t'\mbox{writes }\mathit{var}(r)\land t < t' < r) \implies (h \setminus \{e | r < e\}) \oplus \wro(t',r)\not\models I 
\end{equation*}
where $h' = h \setminus \{e \ | \ r \leq e \land (\trans{h}{e}, t) \not\in (\so \cup \wro)^* \}$.

%\nver{A read $r$ reads from a causally latest valid transaction, denoted as $\readsCasuallyLatest{h_<}{r}{t}$, if reading from any other later transaction $t'$ w.r.t. $<$ such that $\trans{h_<}{r}$ causally depends on violates the isolation level $I$. Formally, assuming that $t$ is the transaction such that $(t,r)\in\wro$ in $h$,} 

\begin{comment}
\begin{equation*}
\readsCasuallyLatest{h_<}{r}{t} \coloneqq 
\max_{<} \{\writeVar{t'}{\mathit{var}(r)} \land (t', \trans{h_<}{r}) \in (\so \cup \wro)^* \land h_< \oplus_j r \oplus \wro(t',r) \models I \}
%\forall t' \in \tlogs{\hist}. (t'\mbox{writes }\mathit{var}(r)\land t < t' < r) \implies (h \setminus \{e | r < e\}) \oplus \wro(t',r)\not\models I 
\end{equation*}
\end{comment}

We say that a read $r$ is \emph{swapped} in $\hist_<$ when (1) $r$ reads from a transaction $t$ that is a successor in the oracle order $<_{\ora}$ (the transaction was added by $\genericNext$ after the read), which is now a predecessor\footnote{The $\textsc{explore}$ maintains the invariant that every read follows the transaction it reads from in the history order $<$.} in the history order $<$, (2) there is no transaction \nver{$t'$} that is before $r$ in both $<_{\ora}$ and $<$, and which is a $(\so \cup \wro)^+$ successor of $t$, and (3) $r$ is the first read in its transaction to read from $t$. Formally, assuming that $t$ is the transaction such that $(t,r)\in\wro$, 
% of $r$ is the first in oracle order to be after transaction containing $w$ in $(\so \cup \wro)^+$, and 
%\begin{definition}
%	A $\iread$ event $r$ is \callout{swapped} if the following conditions hold:
%	\begin{itemize}
%		\item For $w= h.\wro(r)$, $w <_h r$ and $w >_{\ora} r$.
%		\item $\tr(r)$ is the first transaction that depends on $\tr(w)$: $\nexists T <_{\ora} \tr(r)$ s $T <_h \tr(r)$ and $\tr(w) \ (\so \cup \wro)^+ \ T$. %\textcolor{red}{Note: I changed $T \neq \tr(r)$ to $T <_{\ora} \tr(r)$ for the proof as by $\ora$-respectfulness (defined in proof's section), we obtain both are equivalent.}
%		\item $r$ is the first $\iread$ event that reads from $\tr(w)$: $\nexists r'\in \tr(r), r' \leq_{\ora} r$ such that $\tr(h.\wro(r')) = \tr(w)$
%	\end{itemize}
%	\label{def:swapped}
%\end{definition}
%
\vspace{-0.5mm}
\begin{align*}
& \swapped{h_<}{r}  \coloneqq t < r \land t >_{\ora} r 
\land \forall t' \in h.\ t' <_{\ora} \trans{h}{r} \implies (r < t'  \lor (t,t')\not\in (\so \cup \wro)^+) \\
&\hspace{2.5cm} \land\ \forall r' \in \readOp{h}.\ (t,r')\in\wro \implies (r',r)\not\in \po  %\trans{h}{w'} \neq \trans{h}{w}
\end{align*}

\nver{Condition (1) states a quite straightforward fact about swaps: $r$ could not have been involved in a swap if it reads from a predecessor in the oracle order which means that it was added by $\genericNext$ after the transaction it reads from. Conditions (2) and (3) are used to exclude spurious classifications as swapped reads. Concerning condition (2), suppose that in a history $h$ we swap a transaction $t$ with respect a
(previous) read event $r$. Later on, the algorithm may add a read $r'$ reading also from $t$. Condition (2) forbids $r'$ to be declared as swapped.
Indeed, taking $tr(h, r)$ as an instantiation of $t'$, $tr(h, r)$ is before $r'$ in both
$<_{or}$ and $<$ and it reads from the same transaction as $r'$, thereby, being a $(\so \cup \wro)^+$ successor of the transaction read by $r'$.
Condition (3) forbids that, after swapping $r$ and $t$ in $h$, later
read events from the same transaction as $r$ can be considered as swapped.}

\nver{
Showing that $I$-completeness holds despite discarding re-orderings is quite challenging. Intuitively, it can be shown that if some $\genericSwap$ is \emph{not} enabled in some history $h_<$ for some pair $(r,t)$ although the result would be $I$-consistent (i.e., $\genericProtocol(h_<, r, t, \locals)$ does not hold because some deleted read is swapped or does not read from a causally latest transaction), then the algorithm explores another history $h'$ which coincides with $h$ except for those deleted reads who are now reading from causally latest transactions. Then, $h'$ would satisfy $\genericProtocol(h_<, r, t, \locals)$, and moreover applying $\genericSwap$ on $h'$ for the pair $(r,t)$ would lead to the same result as applying $\genericSwap$ on $h$, thereby, ensuring completeness. 
}

\subsection{Correctness}\label{ssec:corr}

The following theorem states the correctness of the algorithm presented in this section:
\begin{restatable}{theorem}{algoCorrect}
\label{th:corr}
For any prefix-closed and causally extensible isolation level $I$, $\textsc{explore-ce}$ is $I$-sound, $I$-complete, strongly optimal, and polynomial space.
\end{restatable}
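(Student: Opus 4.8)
The plan is to prove the four properties by first fixing a structural invariant maintained along every recursive call of \textsc{explore-ce}, and then discharging soundness, strong optimality, and polynomial space by relatively direct inductions, reserving the bulk of the effort for completeness. The invariant is: every ordered history $h_<$ on which \textsc{explore} is called has \emph{at most one} pending transaction, that transaction (if present) is $(\so\cup\wro)^+$-maximal, and $<$ is consistent with $\po\cup\so\cup\wro$. This is preserved by extensions because $\genericNext$ completes the unique pending transaction before starting a new one and appends the new event last in $<$, and by swaps because the only surviving pending transaction in the output of $\genericSwap$ is $\trans{h'}{r}$ (all events $\po$-after $r$ and all later non-causally-dependent transactions lie in $D$ and are deleted), which is maximal by the choice of $D$. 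With this in hand, \emph{soundness} splits in two. Feasibility (every explored history arises from the operational semantics of Section~\ref{ssec:semantics}) follows by induction using the three stipulated properties of $\genericSwap$: the post-swap history minus $\alpha$ is a prefix of the input, so all $(\po\cup\so\cup\wro)^*$-predecessors of retained events are kept and only the last read of a transaction ever changes its read-from, whence each step matches a semantic transition. $I$-consistency holds because extensions filter $\wro$ choices through $\genericValidWrites$ and swaps are gated by $\genericProtocol$, which explicitly demands the result satisfy $I$; for the deleted-read case, prefix-closure of $I$ guarantees the prefix $h'\setminus\{r\}$ is $I$-consistent before $r$ is re-attached.

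For \emph{strong optimality} I would separate the two conjuncts. That no call is fruitless follows from the invariant together with causal extensibility: the unique pending transaction is $(\so\cup\wro)^+$-maximal, so by Definition~\ref{def:causally-extensible} a causal extension with the next event demanded by $\genericNext$ always exists; hence every call either reaches $\bot$ and outputs (as $\genericEvaluate\equiv\mathit{true}$) or recurses, and by soundness every explored history already satisfies $I$. For \emph{optimality} (no history output twice) I would show that each target history admits a \emph{unique} derivation, by ruling out the two redundancy sources of Figures~\ref{fig:non_optimality} and~\ref{fig:redundant_swap}: the $\isMaximallyAdded{\cdot}{\cdot}{\cdot}$ clause pins every read deleted by $\genericSwap$ (and the re-ordered read) to read from the $<$-latest causally-available write, eliminating the ambiguity coming from discarded $\wro$-choices, while the $\lnot\swapped{\cdot}{\cdot}$ clause forbids re-deleting a read that was itself produced by an earlier swap, eliminating the ``swap-before-or-after'' duplication. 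Formalizing this as injectivity of the map from recursion leaves to histories is the technical content. Polynomial space then comes essentially for free from the same locality: both $\genericProtocol$ clauses are evaluated by inspecting only the current ordered history, so no sleep sets or wakeup trees are stored; each stack frame holds just $h_<$ and $\locals$, polynomial in the program, and the recursion depth is bounded by the (finite) total number of events of a bounded program.

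The main obstacle is \emph{completeness}: showing that these restrictions never cause an $h^*\in\histOf[I]{\prog}$ to be missed. The plan is a well-founded induction ordering target histories by how far their $\wro$ structure deviates from the oracle order $<_{\ora}$, concretely by the set of reads that read from an $<_{\ora}$-successor and thus must be ``swapped in''. For a target with no such read, $h^*$ is produced directly along the $\genericNext$/$\genericValidWrites$ extension path. For the inductive step, choose a minimal such read $r$ in $h^*$ reading from an oracle-later transaction $t$; by the induction hypothesis the algorithm reaches a history $h$ agreeing with $h^*$ everywhere except that $r$ and the reads later deleted in the swap still read from their causally-latest alternatives, so $\isMaximallyAdded{\cdot}{\cdot}{\cdot}$ holds, and minimality of $r$ yields $\lnot\swapped{\cdot}{\cdot}$ for those reads, whence $\genericProtocol(h_<,r,t,\locals)$ is satisfied and $\genericSwap$ returns exactly the prefix of $h^*$ through $r$.

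The crux — the step the paper flags as ``quite non-trivial'' — is the confluence argument needed when the swap is \emph{blocked} in the naturally-reached history because some deleted read is swapped or is not causally-latest. Here one must exhibit an \emph{alternative} reachable history $h'$ that coincides with $h$ except that each such deleted read is re-pointed to its causally-latest write, verify that $h'$ satisfies $\genericProtocol$ for the same pair $(r,t)$, and check that $\genericSwap$ applied to $h'$ produces the identical result as the blocked swap would have, so that no read-from equivalence class is lost. Establishing that such an $h'$ is always reachable (itself invoking the induction hypothesis at a strictly smaller measure) and that the deviation measure strictly decreases across the step is the heart of the completeness proof; everything else reduces to the bookkeeping of the invariant above.
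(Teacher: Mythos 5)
Your decomposition of soundness, strong optimality, and polynomial space matches the paper's, and you correctly locate the crux of completeness in the confluence argument for blocked swaps, where the deleted reads must be re-pointed to causally-latest writes; your alternative history $h'$ is exactly the object the paper constructs with its $\maxCompletion$ subroutine. However, your completeness argument is a \emph{forward} induction on ``the set of reads that read from an $<_{\ora}$-successor,'' and as sketched it has two genuine gaps. First, the induction hypothesis quantifies over target histories in $\histOf[I]{\prog}$, yet the history $h$ you must reach before performing the swap on $(r,t)$ is a \emph{partial} history, and after $\genericSwap$ returns ``the prefix of $h^*$ through $r$'' the algorithm still has to extend that prefix to all of $h^*$, resolving the remaining deviating reads of $h^*$ that occur after $r$; your induction does not account for this residual forward segment, so it does not close as stated. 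Second, your deviation measure does not coincide with the paper's notion of a swapped read: $\swapped{h_<}{r}$ carries two extra side conditions (no transaction before $r$ in both $<_{\ora}$ and $<$ that causally succeeds $t$, and $r$ being the first read of its transaction reading from $t$) precisely to exclude reads that acquire an oracle-later $\wro$ edge without themselves being swapped, and with your coarser count it is not clear the measure strictly decreases along the derivation the algorithm actually takes.

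The paper avoids both problems by running the construction \emph{backwards}. It first proves that the history order of any reachable history coincides with a \emph{canonical} order computable from the unordered history alone, which makes ``the predecessor of $h$'' well defined independently of the exploration path; it then establishes the $\ora$-respectfulness invariant, which is strictly stronger than your ``one maximal pending transaction'' invariant because it also constrains how $<_{\ora}$ and $<$ may disagree (only across a swapped read); and it defines an explicit, deterministic $\prev$ function whose chain $h, \prev(h), \prev^2(h), \ldots$ terminates under the lexicographic measure (number of swapped events, then history size) and whose links are each forward-reachable. Crucially, this same machinery is what makes optimality go through: since the predecessor of any reachable history is the uniquely determined $\prev(h)$, no history can be reached along two branches. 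Your proposed ``injectivity of the map from recursion leaves to histories'' is exactly the statement that requires the canonical-order lemma, which your plan does not supply. To turn your sketch into a proof you would need to add these three ingredients (canonical order, $\ora$-respectfulness, an explicit $\prev$) or an equivalent substitute for each.
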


$I$-soundness is a consequence of the $\genericValidWrites$ and $\genericProtocol$ definitions which guarantee that all histories given to recursive calls are $I$-consistent, and of the $\genericSwap$ definition which ensures to only produce feasible histories (which can be obtained using the operational semantics defined in Section~\ref{ssec:semantics}). The fact that this algorithm never engages in fruitless explorations follows easily from causal-extensibility which ensures that any current history can be extended with any event returned by $\genericNext$. Polynomial space is also quite straightforward since the \textbf{for all} loops in Algorithm~\ref{algorithm:algo-class} have a linear number of iterations: the number of iterations of the loop in $\textsc{explore}$, resp., $\textsc{exploreSwaps}$, is bounded by the number of write, resp., read, events in the current history (which is smaller than the size of the program; recall that we assume bounded programs with no loops as usual in SMC algorithms). On the other hand, the proofs of $I$-completeness and optimality are quite complex. \oldver{For $I$-completeness, given a history $\hist\in \histOf[I]{\prog}$, we first show how to effectively compute the order $<$ associated to $\hist$ if $\textsc{explore}$ were to output this history. Then, we define an ``inverse'' of $\textsc{explore-sc}$ and an invariant on ordered histories reached by $\textsc{explore-sc}$, which enable constructing a chain of ordered histories going backwards from $(\hist,<)$ to the initial empty history. Then, we show that $\textsc{explore-sc}$ can proceed forwardly  in this chain and indeed output $\hist$.}

%In particular, we observe that if two transactions $t, t'$ have a $(\so \cup \wro)^*$ dependency, the history order in the algorithm orders them analogously. But if they are $(\so \cup \wro)^*$-incomparable, the algorithm prioritizes the one that is read by a smaller $\iread$ event according $\ora$. Combining both arguments recursively we obtain a \textit{canonical order} for a history, which is formally defined with the function presented below. 

\nver{$I$-completeness means that for any given program $\prog$, the algorithm outputs every history $h$ in $\histOf[I]{\prog}$. The proof of $I$-completeness defines a sequence of histories produced by the algorithm starting with an empty history and ending in $h$, for every such history $h$. 
% a collection of reachable histories that starts in the initial state and ends in $h$. The structure of the proof is as follows:
It consists of several steps:}
\begin{enumerate}
    \item \nver{Define a \emph{canonical} total order $<$ for every unordered partial history $h$, such that if the algorithm reaches $h_{<'}$, for some order $<'$, then $<$ and $<'$ coincide. This canonical order is useful in future proof steps as it allows to extend several definitions to arbitrary histories that are not necessarily reachable, such as $\genericProtocol$ or $\textsc{swapped}$.}
    \item \nver{Define the notion of \emph{$\ora$-respectfulness}, an invariant satisfied by every (partial) ordered history reached by the algorithm. Briefly, a history is $\ora$-respectful if it has only one pending transaction and for every two events $e, e'$ such that $e <_{\ora} e'$, either $e < e'$ or there is a swapped event $e''$ in between. 
%    We require of canonical order to proper define swapped events on arbitrary histories.
 } 
    \item \nver{Define a deterministic function $\prev$ which takes as input a partial history (not necessarily reachable), such that if $h$ is reachable, then $\prev(h)$ returns the history computed by the algorithm just before $h$ (i.e., the previous history in the call stack). Prove that if a history $h$ is $\ora$-respectful, then $\prev(h)$ is also $\ora$-respectful.}
    
    \item \nver{Deduce that if $h$ is $\ora$-respectful, then there is a finite collection of $\ora$-respectful histories $H_h = \{h_i\}_{i=0}^{n}$ such that $h_n = h$, $h_0 = \emptyset$, and $h_i = \prev(h_{i+1})$ for each $i$. The $\ora$-respectfulness invariant and the causal-extensibility of the isolation level are key to being able to construct such a collection. In particular, they are used to prove that $h_i$ has at most the same number of swapped events as $h_{i+1}$ and in case of equality, $h_i$ contain exactly one event less than $h_{i+1}$, which implies that the collection is indeed finite.}
    
    \item \nver{Prove that if $h$ is $\ora$-respectful and $\prev(h)$ is reachable, then $h$ is also reachable. Conclude by induction that every history in $H_h$ is reachable, as $h_0$ is the initial state and $h_i = \prev(h_{i+1})$.}
    %\item \nver{Constructing a finite collection of $\ora$-respectful histories $H_h = \{h_i\}_{i=0}^{n}$ such that $h_n = h$, $h_0 = \emptyset$ and $h_i = \prev(h_{i+1})$.}
\end{enumerate}

\nver{The proof of strong optimality relies on arguments employed for $I$-completeness. It can be shown that if the algorithm would reach a (partial) history $h$ twice, then for one of the two exploration branches, the history $h'$ computed just before $h$ would be different from $\prev(h)$, which contradicts the definition of $\prev(h)$.}

%\nver{On one hand, when no swaps are required for $\textsc{explore-ce}$ arrive to $h$ from $h'$, $\prev(h)$ coincide with $h'$ by definition of $\prev(h)$. On the other hand, if $h$ is obtained by swapping some read event with respect a transaction from $h'$ (potentially removing some other events in the process), we distinguish between two class of events in $h'$: the events that remain in $h$ and those which have been removed. The definition of $\prev$ function guarantees that $h'$ and $\prev(h)$ coincide on the events that are not deleted while the $\ora$-respectfulness of reachable histories and $\genericProtocol$'s definition allows us to conclude that they also coincide on the deleted events.}

%\nver{, one can deduce the optimality from the arguments employed for $I$-completeness. Given a history $h$, the definition of both its canonical order and $\prev(h)$ only depends on the relations $\ora, \so$ and $\wro$. Thus, exploiting both notions we can deduce that every predecessor of a reachable history $h$ coincides with $\prev(h)$. By induction on the collection $H_h$, we conclude $h$ has only one computable path.}

\nver{In terms of time complexity, the $\textsc{explore-ce}(I)$ algorithm achieves polynomial time between consecutive outputs for isolation
levels $I$ where checking $I$-consistency of a history is polynomial time, e.g., \RC, \RA, and \CC.} 
\section{Swapping-based model checking for Snapshot Isolation and Serializability}\label{sec:ser}

%As show in section \ref{sec:proofs-algorithm}, algorithm \ref{algorithm:optimal-instantiated}'s completeness proof (theorem \ref{theorem:completeness}) is model-depending, as it heavily relies on its causal-extensibility. Immediately, the question of the algorithm's extensibility to stricter isolation levels arise. For understanding the difference between the formers and $\SI$ or $\SER$, let's analyze how algorithm \ref{algorithm:optimal-instantiated} behaves for the program depicted in figure \ref{fig:ser-non-optimal:a} under them. We study this example as it has exactly two consistent executions but a dead-lock one under $\SI$ and $\SER$.
For $\textsc{explore-ce}$, the part of strong optimality concerning \emph{not} engaging in fruitless explorations was a direct consequence of causal extensibility (of the isolation level). However, isolation levels such as SI and SER are \emph{not} causally extensible (see Section~\ref{ssec:causal_ext}). Therefore, the question we investigate in this section is whether there exists another implementation of $\textsc{explore}$ that can ensure strong optimality along with $I$-soundness and $I$-completeness for $I$ being SI or SER. We answer this question in the negative, and as a result, propose an SMC algorithm that extends $\textsc{explore-ce}$ by just filtering histories before outputting to be consistent with SI or SER.

\vspace{-1mm}
\begin{restatable}{theorem}{impossibility}
\label{th:impos}
If $I$ is \textit{Snapshot Isolation} or \textit{Serializability}, there exists no $\textsc{explore}$ algorithm that is $I$-sound, $I$-complete, and strongly optimal.
\vspace{-1mm}
\end{restatable}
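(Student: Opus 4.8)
**The plan is to exhibit a single witness program $\prog$ that forces any $\textsc{explore}$ algorithm into a dilemma: to be $I$-complete it must output some history $h^*$, but producing $h^*$ (under the swapping-based machinery) requires passing through a partial configuration that is either $I$-inconsistent or blocked, violating strong optimality.**

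The core of the argument is to leverage the counterexample to causal extensibility from Figure~\ref{fig:non-causally-extensible}. First I would design $\prog$ so that its $I$-consistent complete histories (for $I\in\{\SI,\SER\}$) include one, call it $h^*$, in which a read $r$ in some transaction $t$ reads from a transaction $w$ that is scheduled \emph{after} $r$ by every possible $\genericNext$ oracle ordering. Because of the axioms for $\SI$/$\SER$, once $t$ reads $w$'s write on one variable, it is forced to read $w$'s (or a $\co$-later) write on a second variable as well --- exactly the phenomenon captured by Figure~\ref{fig:non-causally-extensible}, where extending a non-$(\so\cup\wro)^+$-maximal pending transaction with a write breaks consistency. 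The witness program couples two variables written across two transactions precisely so that this forced coupling occurs.

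The key steps, in order, are as follows. \emph{Step 1:} fix an arbitrary $\textsc{explore}$ algorithm, i.e., arbitrary functions $\genericNext$, $\genericValidWrites$, $\genericCompute$, $\genericProtocol$, $\genericSwap$, $\genericEvaluate$, subject only to the generic contract stated in Section~\ref{sec:algs} (the three conditions on $\genericSwap$). \emph{Step 2:} observe that the only mechanism for a read to read from a later-scheduled write is a swap, and recall the third $\genericSwap$ condition: a read whose read-from changes must be the \emph{last} event of its transaction. \emph{Step 3:} argue that to output $h^*$, the algorithm must at some point apply $\genericSwap$ to reroute $r$ toward $w$; but in $h^*$ the transaction $t$ has a \emph{further} event $e'$ after $r$ (the second, coupled read), so $r$ is not last in $t$, and the swap is therefore forbidden by construction. \emph{Step 4:} conclude that the only alternative is for $\genericNext$ to schedule $w$ before completing $t$ --- i.e., to interleave so that $w$ precedes $r$ in the exploration. \emph{Step 5:} show that this alternative forces the exploration through a partial history with a pending, non-$(\so\cup\wro)^+$-maximal transaction that, by the failure of causal extensibility (Figure~\ref{fig:non-causally-extensible}), cannot be completed to an $I$-consistent history; hence either completeness fails (no output of $h^*$) or the algorithm visits a blocked/$I$-inconsistent intermediate history, contradicting strong optimality. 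I would present this as a case analysis over the two possible ``shapes'' of the exploration that could yield $h^*$, showing each violates one of the two clauses (no fruitless exploration / every call yields output or recursion) in the definition of strong optimality.

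\textbf{The hard part} will be making the quantification over \emph{all} $\textsc{explore}$ instances airtight: I must show that no clever choice of $\genericNext$ scheduling order, together with no admissible restriction $\genericProtocol$ on swaps, can evade the dilemma. The subtlety is that $\genericNext$ is essentially free, so the witness $\prog$ must be robust against every scheduling of its transactions. I would handle this by choosing $\prog$ symmetric enough that \emph{any} total scheduling order places the problematic write either before or after the coupled read, and in each case one of the two failure modes is triggered; the coupling between the two variables (as in the $z,x,y$ gadget of Figure~\ref{fig:non-causally-extensible}) is what guarantees that no scheduling escapes. A secondary obstacle is ruling out that the filter $\genericEvaluate$ rescues completeness: since $\genericEvaluate$ only suppresses outputs and cannot create new histories or unblock stuck recursion, it cannot help produce $h^*$ without first reaching an $I$-inconsistent intermediate state, so strong optimality still fails. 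I would make this explicit to close the argument.
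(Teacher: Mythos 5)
Your overall strategy --- a witness program built from the $z,x,y$ gadget of Figure~\ref{fig:non-causally-extensible}, forcing a dilemma between completeness and strong optimality --- is the same as the paper's, which uses the program of Figure~\ref{fig:ser-non-optimal:prog} (the two transactions $\rd{x};\wrt{z}{1};\wrt{y}{1}$ and $\rd{y};\wrt{z}{2};\wrt{x}{2}$). But two of your key steps do not hold as stated. First, Step~3 misapplies the third $\genericSwap$ condition: that condition constrains the \emph{partial} history at the moment of the swap, not the final output $h^*$. A read can perfectly well be the last event of its transaction log when it is swapped and then be followed by further events afterwards --- this is exactly how $\textsc{explore-ce}$ produces most of its histories --- so ``in $h^*$ there is an event after $r$'' does not forbid the swap. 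What actually blocks the swap in the surviving branch of the paper's case analysis is the \emph{prefix} condition (condition~2): in the history $h_{32}$, where $\erd{y}$ reads from the left transaction's $\wrt{y}{1}$, putting $\ewrt{x,2}$ into $\beta$ drags in its $\po$-predecessor $\erd{y}$, hence its $\wro$-predecessor $\wrt{y}{1}$, hence $\erd{x}$ itself, which is supposed to be in $\alpha$ --- a circularity that forces $\alpha$ to be empty. Moreover, this argument only closes the proof once you have shown, by an exhaustive case analysis over all $\genericNext$ schedules, that $h_{32}$ is the \emph{only} partial history from which the exploration can proceed without being forced into the $I$-inconsistent history; that case analysis is precisely what you defer to ``the hard part,'' and it is where most of the proof lives.

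Second, Step~5 has the polarity of the causal-extensibility failure backwards: the counterexample of Figure~\ref{fig:non-causally-extensible} concerns a pending transaction that \emph{is} $(\so\cup\wro)^+$-maximal and still cannot be extended consistently. That is the whole point --- if only non-maximal pending transactions failed to extend, the one-pending-transaction invariant (which keeps the unique pending transaction maximal) would rescue strong optimality, exactly as it does for $\CC$. Relatedly, your description of the coupling mechanism (``once $t$ reads $w$'s write on one variable it must read $w$'s write on a second'') describes a read-snapshot coupling, whereas the gadget's inconsistency comes from both transactions writing the common variable $z$ while each misses the other's write on $x$ resp.\ $y$, which the $\mathsf{Conflict}$ and $\mathsf{Serializability}$ axioms forbid. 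These defects are fixable, but as written the proposal does not yet contain a correct proof of either horn of the dilemma.
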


The proof of Theorem~\ref{th:impos} \appendixver{given in App.~\ref{proof:impossibility} }defines a program with two transactions and shows that 
%We consider the program in Figure~\ref{fig:ser-non-optimal:prog}, and show that 
any concrete instance of $\textsc{explore}$ in Alg.~\ref{algorithm:algo-class} \emph{cannot be both} $I$-complete and strongly optimal.

Given this negative result, we define an implementation of $\textsc{explore}$ for an isolation level $I\in\{SI,SER\}$ that ensures optimality instead of strong optimality, along with soundness, completeness, and polynomial space bound. Thus, let $\textsc{explore-ce}(I_0)$ be an instance of $\textsc{explore-ce}$ parametrized by $I_0\in \{\RC,\RA,\CC\}$. We define an implementation of $\textsc{explore}$ for $I$, denoted by $\textsc{explore-ce}^*(I_0,I)$, which is exactly $\textsc{explore-ce}(I_0)$ except that 
% extension of any instance of $\textsc{explore-ce}$ parametrized by a prefix-closed and causally-extensible isolation level $I$ which is weaker than SI, which is only optimal and not strongly optimal.
% that for isolation levels $I$ like SI and SER has the same guarantees except that it is only optimal and not strongly optimal.
%any $\textsc{explore-ce}$ parametrized isolation level $I$, which 
instead of $\genericEvaluate(\hist) ::= true$, it uses 
\vspace{-1mm}
\begin{equation*}
	\begin{array}{ccc}
		\genericEvaluate(\hist) & \coloneqq & \hist \mbox{ satisfies }I
	\end{array}
\vspace{-1mm}
\end{equation*}
$\textsc{explore-ce}^*(I_0,I)$ enumerates exactly the same histories as $\textsc{explore-ce}(I_0)$ except that it outputs only histories consistent with $I$. The following is a direct consequence of Theorem~\ref{th:corr}.
%because the set of histories admitted by a program under $I\in\{SI,SER\}$ is a subset of the set of histories admitted under $I_0\in \{\RC,\RA,\CC\}$.

\vspace{-1mm}
\begin{corollary}
%For any $I_0\in \{\RC,\RA,\CC\}$ and $I\in\{SI,SER\}$, $\textsc{explore-ce}^*(I_0,I)$ is $I$-sound, $I$-complete, optimal, and polynomial space.
For any isolation levels $I_0$ and $I$ such that $I_0$ is prefix-closed and causally extensible, and $I_0$ is weaker than $I$, $\textsc{explore-ce}^*(I_0,I)$ is $I$-sound, $I$-complete, optimal, and polynomial space.
\end{corollary}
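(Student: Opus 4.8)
The plan is to exploit that $\textsc{explore-ce}^*(I_0,I)$ and $\textsc{explore-ce}(I_0)$ differ only in the definition of $\genericEvaluate$, which gates the \textbf{output} statement of Algorithm~\ref{algorithm:algo-class} and plays no role in the recursive exploration: the functions $\genericNext$, $\genericValidWrites$, $\genericCompute$, $\genericSwap$, and $\genericProtocol$ are all left untouched. Hence the two instances generate the exact same tree of recursive calls and reach exactly the same histories; the only difference is that $\textsc{explore-ce}^*(I_0,I)$ emits a reached final history $h$ precisely when $h$ is additionally $I$-consistent. Since $I_0$ is prefix-closed and causally extensible, Theorem~\ref{th:corr} applies and tells us that $\textsc{explore-ce}(I_0)$ outputs each history of $\histOf[I_0]{\prog}$ exactly once. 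Combining these two observations, $\textsc{explore-ce}^*(I_0,I)$ outputs exactly the set $\{h\in\histOf[I_0]{\prog} : h \text{ is } I\text{-consistent}\}$, each history at most once.

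Given this characterization, the crux reduces to a single set equality:
\[
\histOf[I]{\prog} = \{h\in\histOf[I_0]{\prog} : h \text{ is } I\text{-consistent}\}.
\]
I would prove it by relating executions under $\Rightarrow_{I_0}$ and $\Rightarrow_I$, which agree on every transition rule except the read rule, where a read may be resolved to a transaction $t'$ exactly when the resulting history is consistent with the respective level. For $\supseteq$ (which will yield $I$-soundness), take an $I_0$-execution producing an $I$-consistent history $h$; since begin/write/commit/abort append events compatibly with $\po$, $\so$, $\wro$ and a read always resolves to an already-present transaction, every intermediate history is a genuine prefix of $h$ in the sense of Section~\ref{sec:props}, and prefix-closure of $I$ (Section~\ref{sec:props} shows $SI$ and $SER$, indeed every axiomatic level, are prefix-closed) makes each such prefix $I$-consistent; hence every read step is also legal under $\Rightarrow_I$ and the very same execution witnesses $h\in\histOf[I]{\prog}$. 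For $\subseteq$ (which will yield $I$-completeness), take an $I$-execution producing $h$: each read step keeps the history $I$-consistent, and since $I_0$ is weaker than $I$ every $I$-consistent history is $I_0$-consistent, so each read is also legal under $\Rightarrow_{I_0}$ and $h\in\histOf[I_0]{\prog}$; moreover $h$ is $I$-consistent because $\Rightarrow_I$ maintains $I$-consistency of the current history as an invariant (Section~\ref{ssec:semantics}).

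With the set equality in hand, $I$-soundness and $I$-completeness are immediate from the output characterization above. Optimality is inherited: since $\textsc{explore-ce}(I_0)$ never outputs a history twice, the subset of reached histories that passes the filter is also output at most once. Polynomial space is inherited as well, since the exploration and its call stack are unchanged and the only addition is the per-leaf evaluation of $\genericEvaluate(h) \coloneqq$ ``$h$ satisfies $I$'', which can be decided in polynomial space (for instance by enumerating candidate commit orders with a polynomial-space counter and checking the axioms), reusing the same working space across outputs.

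I expect the main obstacle to be the set-equality lemma, and specifically keeping the two directions' hypotheses straight: $\supseteq$ relies on prefix-closure of $I$ together with the fact that every intermediate history of an execution is a genuine prefix (so one must carefully verify that events are always appended consistently with $\po$, $\so$, and $\wro$), whereas $\subseteq$ relies on $I_0$ being weaker than $I$ and on the invariant that $\Rightarrow_I$ only ever produces $I$-consistent histories. Once the lemma is established, the four claimed properties follow as routine consequences of Theorem~\ref{th:corr}.
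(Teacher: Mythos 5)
Your proposal is correct and follows essentially the same route as the paper, which simply observes that $\textsc{explore-ce}^*(I_0,I)$ enumerates exactly the same histories as $\textsc{explore-ce}(I_0)$ and filters at output time, so the four properties are a direct consequence of Theorem~\ref{th:corr}; your set-equality lemma $\histOf[I]{\prog}=\{h\in\histOf[I_0]{\prog}: h \mbox{ is } I\mbox{-consistent}\}$ is precisely the step the paper leaves implicit. One small caveat: your $\supseteq$ direction additionally relies on prefix-closure of $I$, which is not a stated hypothesis of the corollary but does hold for $\SI$ and $\SER$ (Section~\ref{sec:props}), the only instantiations the paper uses.
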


%Note that the corollary above extends to any isolation level $I_0$ which is prefix closed and causally extensible and any isolation level $I$ which is stronger than $I_0$.

%\textcolor{red}{Discuss the exponential total branches ditched}
%The reason why theorem \ref{theorem:si-ser-bad-models} holds for $\SI$ and $\SER$ but not for the rest of isolation levels described in \ref{fig:consistency_defs} is due to the lack of causal-extensibility. Nevertheless, weakening

% Under any of them the event $\wrt{x}{2}$ would always be eligible to be added to the aforementioned history $h$, and therefore any final dead-lock would be reached.

	%\input{prefix-impossibility}
	%\input{iterative}
%	\input{implementation}
	%!TEX root = main.tex
\section{Experimental evaluation}\label{sec:exp}

We evaluate an implementation of $\textsc{explore-ce}$ and $\textsc{explore-ce}^*$ in the context of the Java Pathfinder (JPF)~\cite{DBLP:conf/issta/VisserPK04} model
checker for Java concurrent programs. As benchmark, we use bounded-size client programs of a number of database-backed applications drawn from the literature. % of distributed systems and databases
The experiments were performed on an Apple M1 with $8$ cores and $16$ GB of RAM.

%%Each client is a different program (following the terminology in Section 2). 5 clients x 3 sessions means 5 independent programs where each program executes 3 sessions. Those clients are studied in isolation, not considering any interleaving between them. We will clarify.

\subsection{Implementation}

We implemented our algorithms as an extension of the \texttt{DFSearch} class in JPF. For performance reasons, we implemented an iterative version of these algorithms where roughly, inputs to recursive calls are maintained as a collection of histories instead of relying on the call stack. For checking consistency of a history with a given isolation level, we implemented the algorithms proposed by \citet{DBLP:journals/pacmpl/BiswasE19}. %We plan to make our implementation publicly available.

%$\textsc{explore-ce}$ and $\textsc{explore-ce}^*$ on top of JPF and an implementation of the algorithms proposed by \citet{DBLP:journals/pacmpl/BiswasE19} for checking consistency of a history with a given isolation level. For performance issues, we implemented an iterative version of these algorithms 

%\ref{algorithm:algo-class} for reducing the actual memory consumption. We employ an annotated stack of database states that is modified in each $\textsc{explore-ce}$ call; and we consider the algorithm works with list instead of simply sets.
%At every step where we execute a $\iread$, we annotate which $\iwrite$ event is reading from and  when the state pops from the stack, we change the $\wro$-dependency. Analogously, at every step we produce a swap, we annotate which tuple given by $\compute$ we used in the swap for selecting a different tuple when the state is popped out.
Our tool takes as input a Java program and isolation levels as parameters. We assume that the program uses a fixed API for interacting with the database, similar to a key-value store interface. This API consists of specific methods for starting/ending a transaction, and reading/writing a global variable. The fixed API is required for being able to maintain the database state separately from the JVM state (the state of the Java program) and update the current history in each database access. This relies on a mechanism for ``transferring'' values read from the database state to the JVM state. 

\vspace{-1mm}
\subsection{Benchmark}

We consider a set of benchmarks inspired by real-world applications and evaluate them under different types of client programs and isolation levels.
%both implemented following the aforementioned specifications. We provide also examples of different behaviors depending on the isolation level for each applications.

%\vspace{1mm}
%\noindent
\textit{Shopping Cart~\cite{sivaramakrishnan2015declarative5}} allows users to add, get and remove items from their shopping cart and modify the quantities of the items present in the cart. 
%We employ only one table, $\texttt{cart}$, in this application. Given a program that add an item in one section and deletes it in another one, we may observe, depending on the isolation level, that at the end of the execution the cart contains either zero, one or two items.

%\vspace{1mm}
%\noindent
\textit{Twitter~\cite{difallah2013oltp}} allows users to follow other users, publish tweets and get their followers, tweets and tweets published by other followers. 
%We model twitter with four tables: $\texttt{users}, \texttt{tweets}, \texttt{followed}, \texttt{followers}$. Under weak isolation levels, it is possible that one user can publish a tweet and not being able to obtain it from a different session. We can also detect other behaviors as users following another users in one session but the latter not being able to find the former as a follower.

%\vspace{1mm}
%\noindent
\textit{Courseware~\cite{DBLP:conf/esop/NairP020}} manages the enrollment of students in courses in an institution. It allows to open, close and delete courses, enroll students and get all enrollments. One student can only enroll to a course if it is open and its capacity has not reached a fixed limit. 
%It employs three tables, $\texttt{student}, \texttt{course}$ and $\texttt{enrollments}$. Under weak isolation levels, two students in different sessions may enroll to a course with only one free place or being able to enroll into a course that has been deleted in another session.

%\vspace{1mm}
%\noindent
\textit{Wikipedia~\cite{difallah2013oltp}} allows users to get the content of a page (registered or not), add or remove pages to their watching list and update pages. 
%It employs ten tables, but the vast majority of procedures only access to a small subset of them. Under weak isolation levels, one change in a page may be overwritten by another one done in a different session as well as the watching list may contain a variable number of pages if they are added/deleted from different sessions.

%\vspace{1mm}
%\noindent
\textit{TPC-C~\cite{TPCC}} models an online shopping application with five types of transactions: reading the stock of a product, creating a new order, getting its status, paying it and delivering it. 
%TPC-C employs nine tables and all its procedures read and write several variables. Under weak isolation level two orders may be created from different sessions or the account balance may be inconsistent if some order is payed twice.

\nver{SQL tables are modeled using a ``set'' global variable whose content is the set of ids (primary keys) of the rows present in the table, and a set of global variables, one variable for each row in the table (the name of the variable is the primary key of that row). SQL statements such as INSERT and DELETE statements are modeled as writes on that ``set'' variable while SQL statements with a WHERE clause (SELECT, JOIN, UPDATE) are compiled to a read of the table's set variable followed by reads or writes of variables that represent rows in the table (similarly to~\cite{DBLP:journals/pacmpl/BiswasKVEL21}).} % following a starting read of the table's set variable.} (see~\cite{DBLP:journals/pacmpl/RahmaniNDJ19,DBLP:journals/pacmpl/BiswasKVEL21}).
% \nver{All SQL instructions are translated to key-value paradigm by adding a ``set'' variable per table whose content is the set of  ids (primary keys) of the rows present in the table. The translation of every SQL instruction with a WHERE clause starts reading this variable while INSERT and DELETE are modelled as writes on it.}

\vspace{-1mm}
\subsection{Experimental Results}

We designed three experiments where we compare the performance of a baseline model checking algorithm, $\textsc{explore-ce}$ and $\textsc{explore-ce}^*$ for different (combinations of) isolation levels, and we explore the scalability of $\textsc{explore-ce}$ when increasing the number of sessions and transactions per session, respectively.
\appendixver{All data is exhibited in Appendix~\ref{sec:appendix:experiments} in the supplementary material \cite{bouajjani2023dynamic}. }For each experiment we report running time, memory consumption, and the number of end states, i.e., histories of complete executions and in the case of $\textsc{explore-ce}^*$, before applying the $\genericEvaluate$ filter. As the number of end states for a program on a certain isolation level increases, the running time of our algorithms naturally increases as well. 

%We have run three type of experiments to determine the algorithm's scalability. The more transactions that write into the database, the more potential behaviors allowed. Therefore, in each case we design the experiments with this guideline.

%\footnotetext{\nver{ $\CC$, $\CC + \SI$ and $\CC + \SER$ graphics overlap.}}

%Each client is a different program (following the terminology in Section 2). 5 clients x 3 sessions means 5 independent programs where each program executes 3 sessions. Those clients are studied in isolation, not considering any interleaving between them. We will clarify.

The first experiment  compares the performance of our algorithms for different combinations of isolation levels and a baseline model checking algorithm that performs no partial order reduction. We consider as benchmark five \nver{(independent)} client programs\footnote{For an application that defines a number of transactions, a client program consists of a number of sessions, each session containing a sequence of transactions defined by the application.} for each application described above ($25$ in total), each \nver{program} with 3 sessions and 3 transactions per session. Running time, memory consumption, and number of end states are reported in Fig.~\ref{fig:results1} as \nver{cactus plots~\cite{DBLP:conf/issac/BrainDG17}.} %added the citation of cactus plot.

\begin{figure}[t]
	\centering
	\begin{subfigure}[t]{0.322\linewidth}
		\centering
		\includegraphics[width=\textwidth]{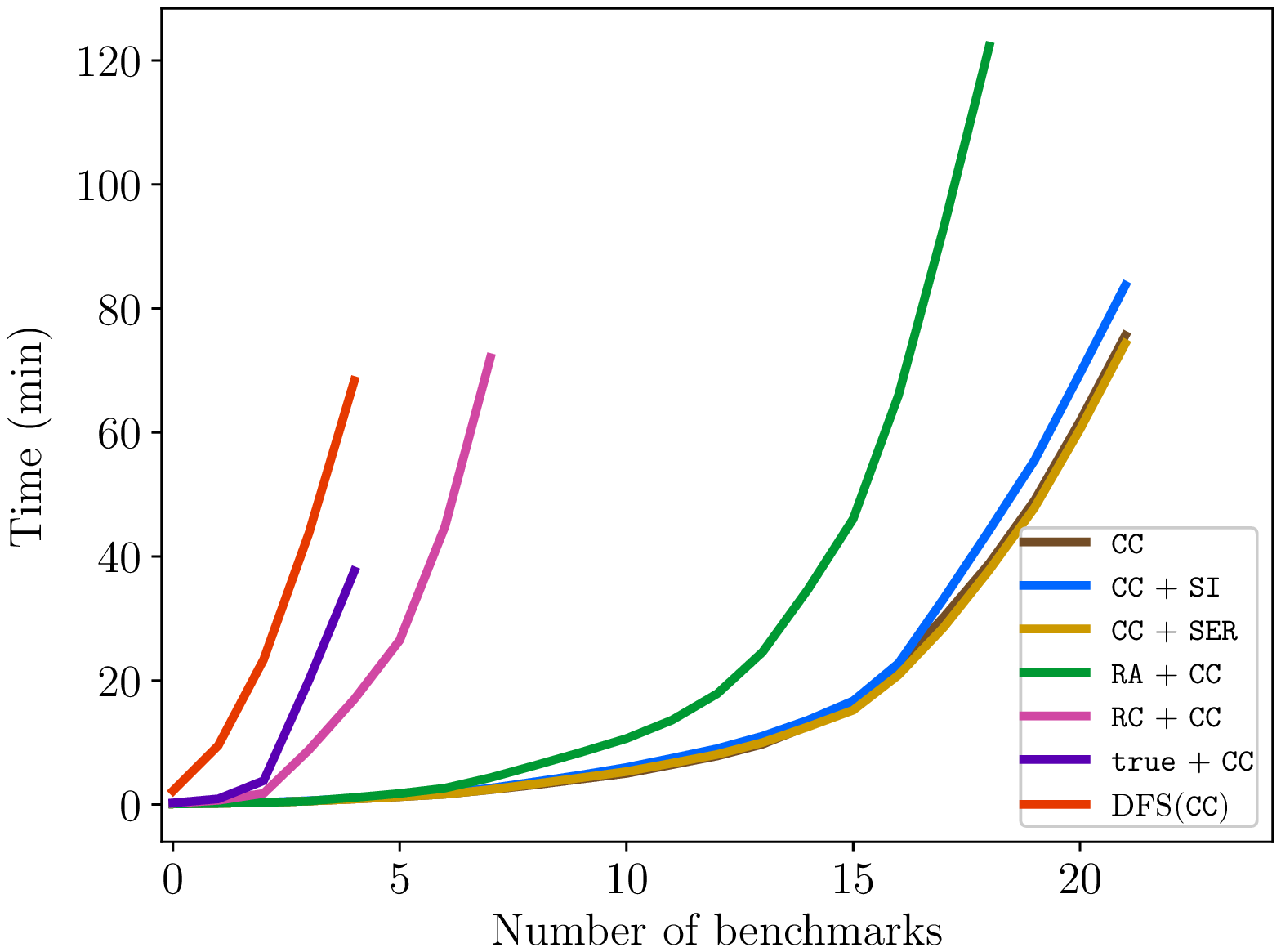}
		\vspace{-5mm}
		\caption{Running time.}%$10$ $11$ and $18$ benchmarks timeout for $\langle \RC, \CC \rangle$, $\langle \texttt{true}, \CC \rangle$ and $\textsc{DFS}(\CC)$ respectively.
		%\caption{Execution time. For the last program, the running time is $25' 17''$ under $\RA$ and $33' 3''$ under $\langle\RA, \CC\rangle$ (out of the figure).}
		\label{fig:results-ra-cc}
	\end{subfigure}
	\hfill
	\begin{subfigure}[t]{0.322\linewidth}
		\centering
		\includegraphics[width=\textwidth]{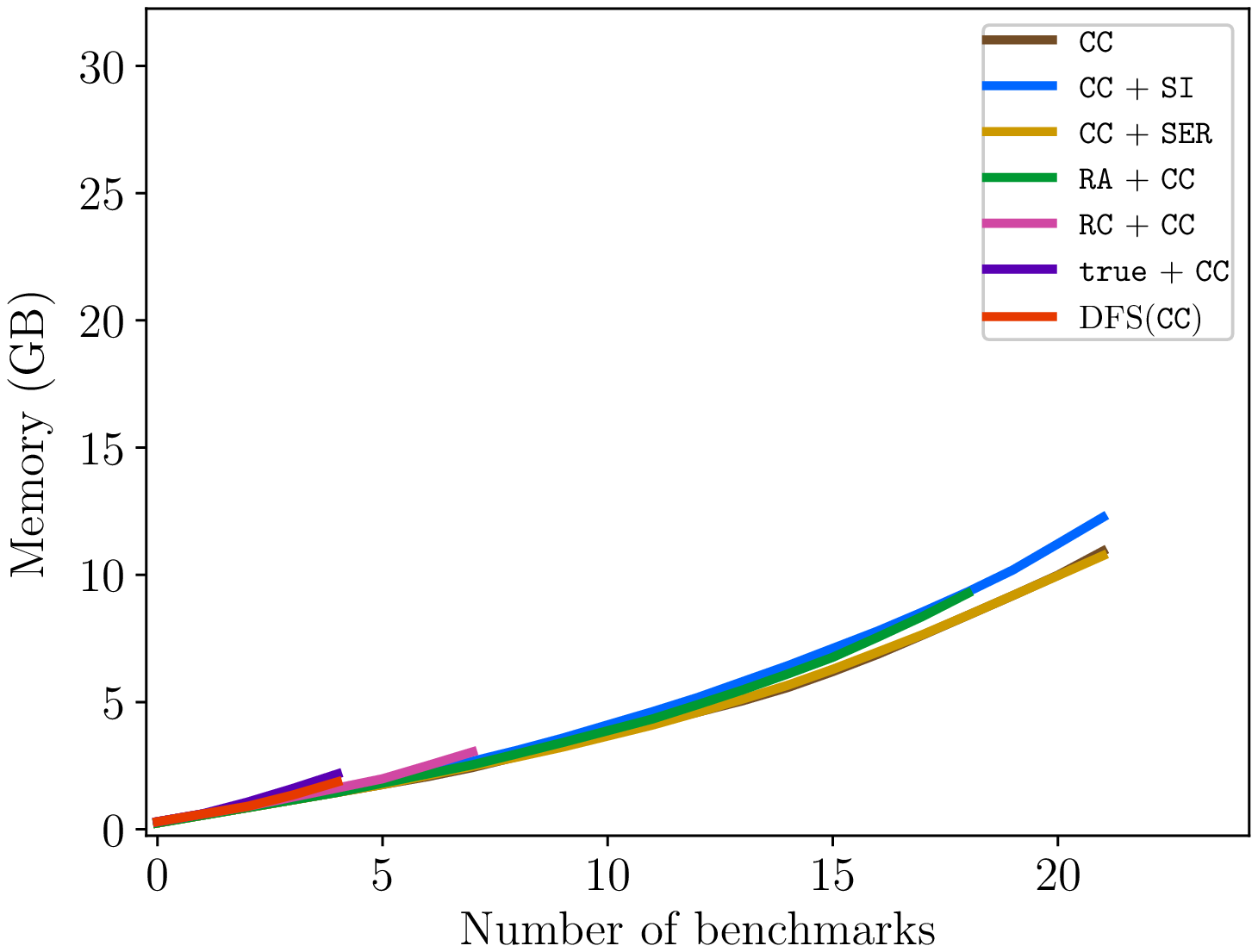}
		\vspace{-5mm}
		\caption{Memory consumption.}
		\label{fig:results-memory}
	\end{subfigure}
	\hfill
	\begin{subfigure}[t]{0.322\linewidth}
		\centering
		\includegraphics[width=\textwidth]{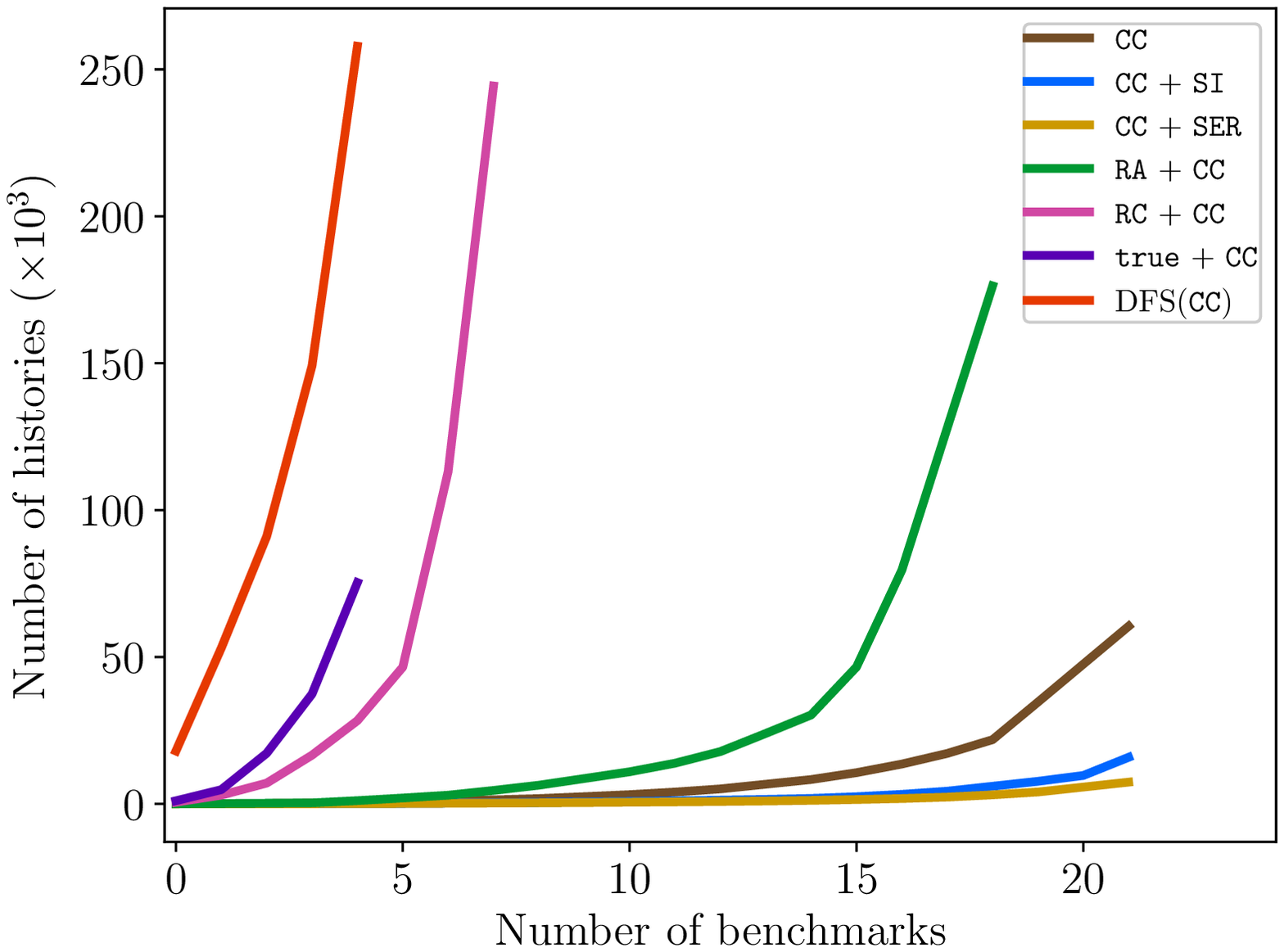}
		\vspace{-5mm}
		\caption{End states.}
		\label{fig:results-histories}
	\end{subfigure}
\vspace{-3mm}
	\caption[where every history is consistent]{Cactus plots comparing different algorithms in terms of time, memory, and end states. For readability, we use $\CC$ to denote $\textsc{explore-ce}$ under $\CC$, $I_1 + I_2$ stands for $\textsc{explore-ce}^*(I_1, I_2)$, and $\texttt{true}$ is the trivial isolation level where every history is consistent. \nver{Differences between $\CC$, $\CC + \SI$ and $\CC + \SER$ are very small and their graphics overlap.} Moreover, $\textsc{DFS}(\CC)$ denotes a standard DFS traversal of the semantics defined in Section~\ref{ssec:semantics}. These plots exclude benchmarks that timeout ($30$ mins): $3$ benchmarks for $\CC$, $\langle \SI, \CC \rangle$ and $\langle \SER, \CC \rangle$ and $6$, $17$, $20$ and $20$ benchmarks timeout for $\langle \RA, \CC \rangle$, $\langle \RC, \CC \rangle$, $\langle \texttt{true}, \CC \rangle$ and $\textsc{DFS}(\CC)$ respectively.}%In Figure~\ref{fig:results-memory} is deduced the minimum JPF's memory consumption ($256$ MB).
	%\caption{Comparing $\textsc{explore-ce}$ under $\CC$ and $\RA$ and $\textsc{explore-ce}^*$ under $\langle\RA, \CC\rangle, \langle\CC, \SI\rangle$ and $\langle\CC, \SER\rangle$. Client programs are ordered according to running time under $\RA$. The execution time of $\textsc{explore-ce}$ under $\CC$ is indistinguishable from the execution time of $\textsc{explore-ce}^*$ under $\langle\CC, \SI\rangle$ and $\langle\CC, \SER\rangle$.}
	\label{fig:results1}
\vspace{-4mm}
\end{figure}

%These are cactus plots. We start with a list of benchmark programs sorted by the running time (of our tool), and then plot the accumulated time for our tool on the first $i$-benchmarks in this list, for every $i$.

%One client has only one writing transaction per session while the second one has two writing transactions per session. 
 %for each application we have designed two programs with three threads each and three transactions per thread; where one contains only one transaction that writes per thread instead of the two writing transactions the second program have (in the case of TPC-C we discriminate transactions depending on the number of variables they write). 
To justify the benefits of partial order reduction, we implement a baseline model checking algorithm $\textsc{DFS}(\CC)$ that performs a standard DFS traversal of the execution tree w.r.t. the formal semantics defined in Section~\ref{ssec:semantics} for $\CC$ (for fairness, we restrict interleavings so at most one transaction is pending at a time). This baseline algorithm may explore the same history multiple times since it includes no partial order reduction mechanism. In terms of time, $\textsc{DFS}(\CC)$ behaves poorly: it timeouts for $20$ out of the $25$ programs and it is less efficient even when it terminates. We consider a timeout of $30$ mins. In comparison the strongly optimal algorithm $\textsc{explore-ce}(\CC)$ (under $\CC$) finishes in \oldver{$17$ seconds in average.}\nver{in $3' 26''$ seconds in average (counting timeouts).} $\textsc{DFS}(\CC)$ is \nver{similiar to $\textsc{explore-ce}(\CC)$}\oldver{also worse} in terms of memory consumption. The memory consumption of $\textsc{DFS}(\CC)$ is \oldver{$441$MB}\nver{$381$MB} in average, compared to \oldver{$317$MB}\nver{$508$MB} for $\textsc{explore-ce}(\CC)$ (JPF forces a minimum consumption of $256$MB).

To show the benefits of \emph{strong} optimality, we compare $\textsc{explore-ce}(\CC)$ which is strongly optimal with ``plain'' optimal algorithms $\textsc{explore-ce}^*(I_0,\CC)$ for different levels $I_0$. As shown in Figure~\ref{fig:results1}(a), $\textsc{explore-ce}(\CC)$ is more efficient time-wise than every ``plain'' optimal algorithm, and the difference in performance grows as $I_0$ becomes weaker. In the limit, when $I_0$ is the trivial isolation level $\texttt{true}$ where every history is consistent, $\textsc{explore-ce}^*(\texttt{true},\CC)$ timeouts for \nver{$20$}\oldver{$12$} out of the $25$ programs. The average speedup (average of individual speedups) of $\textsc{explore-ce}(\CC)$ w.r.t. $\textsc{explore-ce}^*(\RA,\CC)$, $\textsc{explore-ce}^*(\RC,\CC)$ and $\textsc{explore-ce}^*(\texttt{true},\CC)$ is \oldver{$2$, $31$, and $54$}\nver{$3$, $18$ and $15$.} respectively (we exclude timeout cases when computing speedups). All algorithms consume around \oldver{300 MB}\nver{$500$MB} of memory in average.

%\begin{table}[!ht]
%	\resizebox{\textwidth}{!}{
%    \centering
%
%	\begin{tabular}{c|c|c|c|c|c|c|c|}
%	\cline{2-8}
%	 & \CC     & \SI + \CC & \SER + \CC & \RA + \CC & \RC+ \CC   & \texttt{true} + \CC & \textsc{DFS}$(\CC)$  \\ \hline
%	\multicolumn{1}{|c|}{Avg. speedup}                        & 1      & 1     & 1      & 2       & 31       & 54        & 297      \\ \hline
%	\multicolumn{1}{|c|}{Avg. time (s)}    & 17     & 18      & 17       & 79      & 184      & 302       & 953      \\ \hline
%	\multicolumn{1}{|c|}{Avg. memory (MB)} & 317  & 335 & 332  & 339 & 332 & 341 & 441 \\ \hline
%	\multicolumn{1}{|c|}{Avg. histories ($\times 10^3$)} & 421.28 & 31.00   & 9.64     & 1746.52 & 25976.29 & 55789.54  & 68106.00 \\ \hline
%	\end{tabular}
%	}
%	\caption{Average speedup, time, memory and histories in the first experiment after excluding benchmarks that timeout.}
%\end{table}

For the $\SI$ and $\SER$ isolation levels that admit no strongly optimal $\textsc{explore}$ algorithm, we observe that the overhead of $\textsc{explore-ce}^*(\CC,\SI)$ or $\textsc{explore-ce}^*(\CC,\SER)$ relative to $\textsc{explore-ce}(\CC)$ is negligible (the corresponding lines in Figure~\ref{fig:results1} are essentially overlapping). This is due to the fact that the consistency checking algorithms of \citet{DBLP:journals/pacmpl/BiswasE19} are polynomial time when the number of sessions is fixed, which makes them fast at least on histories with few sessions.

In our second experiment, we investigate the scalability of $\textsc{explore-ce}$ when increasing the number of sessions. For each $i\in [1,5]$, we consider 5 \nver{(independent)} client programs for TPC-C and 5 for Wikipedia\oldver{\footnote{We focus on these two applications because they are of similar complexity.}}
(10 in total) with $i$ sessions, each session containing 3 transactions. We start with 10 programs with 5 sessions, and remove sessions one by one to obtain programs with fewer sessions. We take $\CC$ as isolation level. 
%We take as benchmark five client programs for TPC-C and Wikipedia (10 in total) and $\CC$ as isolation level. 
%For each $i\in [1,5]$, we consider three clients per application with $i$ sessions, each containing three transactions. Starting with a client with $5$ sessions, we remove $5-i$ to obtain a client with $i$ sessions.
%Clients with less than five sessions are obtained by removing several sessions from a client with five sessions.  %Every client with $i$ sessions is obtained removing $5-i$ sessions satisfying every client with $i < 5$ sessions is a sub-client of a $i+1$ session client in our benchmark.
%of increasing levels of difficulty that we call light, medium, and heavy. For light clients, each session contains a different type of transaction, so the number of read/write interactions on the same variable is small. All sessions of a heavy client contain the same type of transaction in all sessions, and medium difficulty is an average of these two. 
The plot in Figure~\ref{fig:results-threads} shows average running time and memory consumption for each number $i\in [1,5]$ of sessions. 
As expected, increasing the number of sessions is a bottleneck running time wise because the number of histories increases significantly. However, memory consumption does not grow with the same trend, cf. the polynomial space bound.
%As expected, at least for ``heavy'' clients, increasing the number of sessions remains a bottleneck. This is expected also because the size of the output (number of histories) increases significantly as well.
%However, under the small-scope hypothesis, the systematic enumeration of \emph{all} executions of small clients enabled by our algorithms should suffice to uncover all potential bugs in an application. 

%In Figure~\ref{fig:results2} we can observe that, in general, the running time of this algorithm does not exceed the couple of minutes in almost every situation. Moreover, we point that the cost of swapping two events has a sensible impact on the overall performance.

\begin{figure}[t]
	\vspace{-4mm}
	\centering
	\begin{subfigure}[b]{0.322\linewidth}
		\centering
		\includegraphics[width=\textwidth]{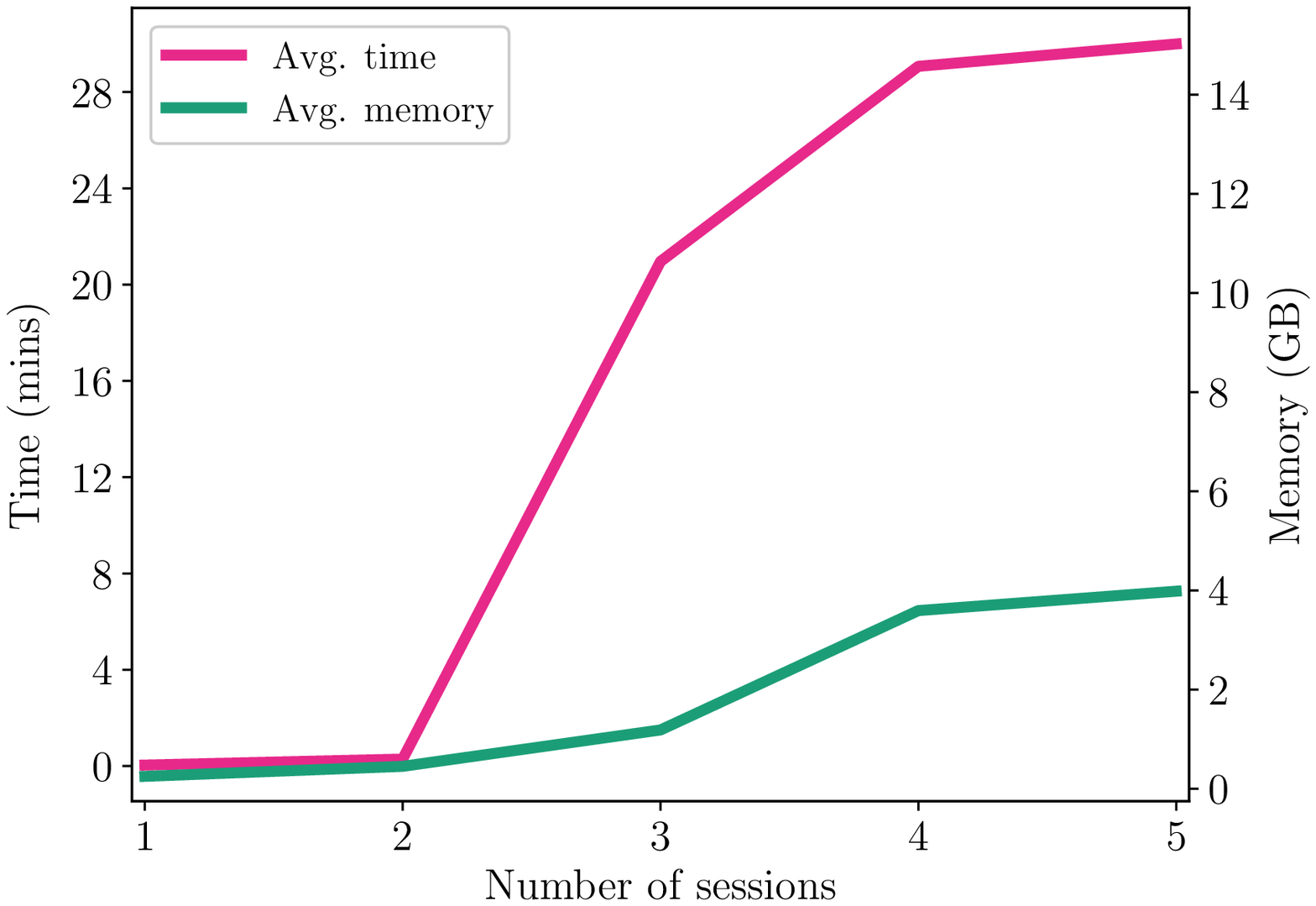}
		\vspace{-5mm}
		\caption{Increasing sessions.}
		%\caption{Execution time. For the last ``heavy'' client with 5 sessions, the running time  $20' 40''$ (out of the figure).}
		\label{fig:results-threads}
	\end{subfigure}
	\hspace{1cm}
	%\hfill
	\begin{subfigure}[b]{0.422\linewidth}
		\centering
		\includegraphics[width=0.76\textwidth]{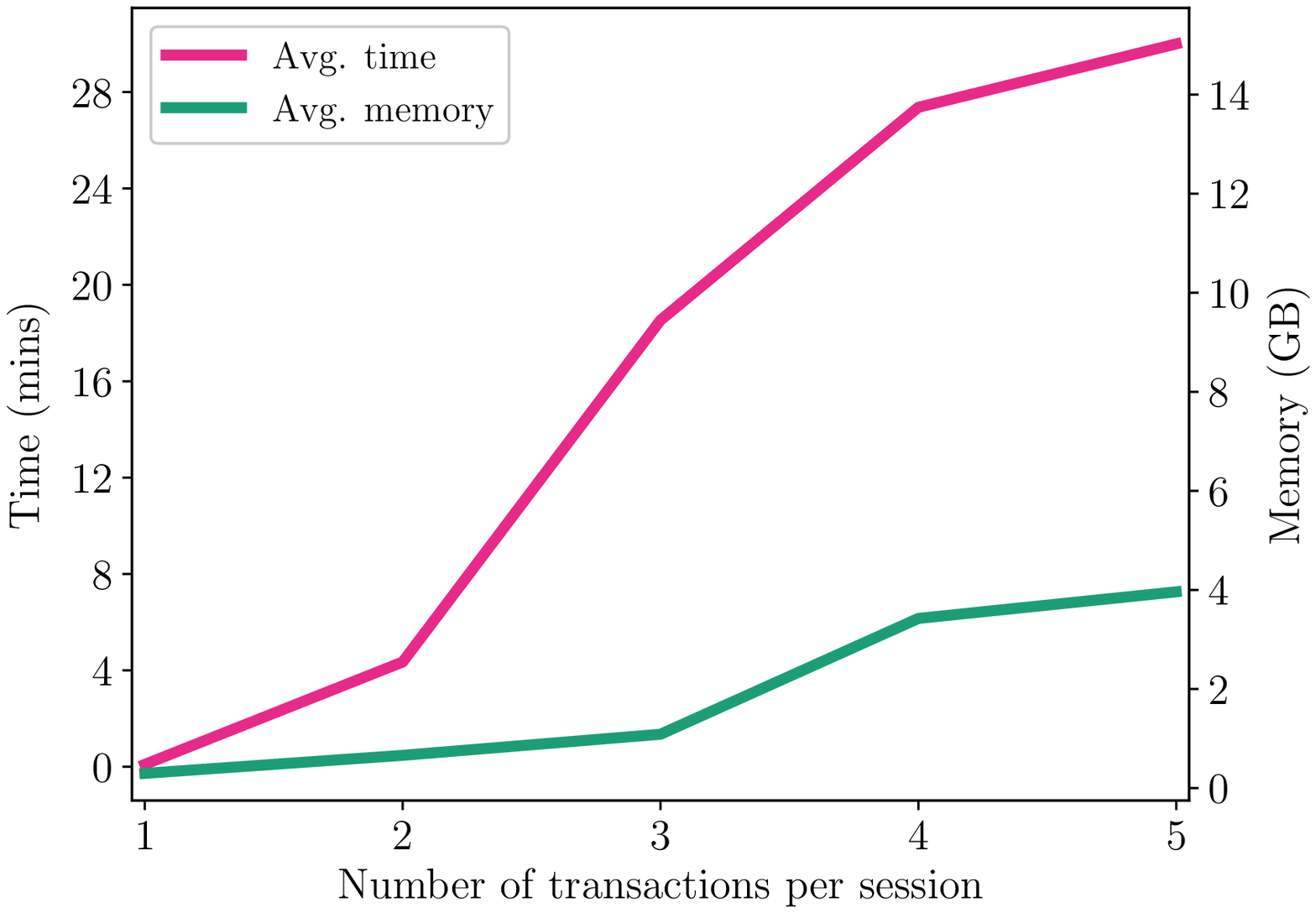}
		\vspace{-1mm}
		\caption{Increasing transactions per session.}
		%\caption{Number of histories. For the last ``heavy'' client with 5 sessions, the number of histories $1296$.}
		\label{fig:results-transactions}
	\end{subfigure}
\vspace{-3.5mm}
	\caption{Evaluating the scalability of $\textsc{explore-ce}(\CC)$ for TPC-C and Wikipedia client programs when increasing their size. These plots include benchmarks that timeout ($30$ mins): $4$, $9$ and $10$ for $3$, $4$ and $5$ sessions respectively in Figure~\ref{fig:results-threads}, and $5$, $8$ and $10$ for $3$, $4$ and $5$ transactions per sessions respectively in Figure~\ref{fig:results-transactions}.}
	\label{fig:results2}
\vspace{-4mm}
\end{figure}

Finally, we evaluate the scalability of $\textsc{explore-ce}(\CC)$ when increasing the number of transactions per session. We consider 5 \nver{(independent)} TPC-C client programs and 5 \nver{(independent)} Wikipedia programs with $3$ sessions and $i$ transactions per session, for each $i\in [1,5]$. 
%As before, we remove $5-i$ transactions in each session of a client with $5$ transaction per thread to obtain our benchmarks. 
Figure~\ref{fig:results-transactions} shows average running time and memory consumption for each number $i\in [1,5]$ of transactions per session. Increasing the number of transactions per session is a bottleneck for the same reasons.
% consists on, fixed the number of threads per program, studying how the algorithm performance when those programs have different number of transactions. Analogously to the previous situation, we design three types of performance situations, ``Heavy'', ``Medium'' and ``Light''; and for each of them, we analyze five TPC-C programs with different number of transactions, from $1$ to $5$.

%The memory consumption in all these experiments fluctuates around $500$MB, taking into account that JPF forces a minimum consumption of $256$MB. There are no remarkable memory consumption variations between different isolation levels. There are slight grows in memory consumption when the number of histories increases. The particular implementation of each benchmark represent the most important source of memory consumption's variability.

%The results in Figure~\ref{fig:results3} supports our claims as the bigger the total space to explore, the greater the time it has to be consumed in the exploration.
	%!TEX root = main.tex
\section{Related Work}\label{sec:related}

\noindent
\textbf{Checking Correctness of Database-Backed Applications.}
%There here have been several directions of work addressing the correctness of database-backed applications. 
One line of work is concerned with the logical formalization
of isolation levels 
\cite{ansi,DBLP:conf/icde/AdyaLO00,DBLP:conf/sigmod/BerensonBGMOO95,DBLP:conf/concur/Cerone0G15,DBLP:journals/pacmpl/BiswasE19}.
Our work relies on the axiomatic definitions of isolation levels introduced by \citet{DBLP:journals/pacmpl/BiswasE19}, which have also investigated
the problem of checking whether a given history satisfies a certain isolation
level. Our SMC algorithms rely on these algorithms to check consistency of a history with a given isolation level. 
%the validity of the values returned by read operations. Working with a
%logical formalization allowed us to avoid implementing an actual database with replication or
%sophisticated synchronization.

Another line of work focuses on the problem of finding ``anomalies'': 
behaviors that are not possible under serializability. This is typically done
via a static analysis of the application code that builds a static dependency graph that
over-approximates the data dependencies in all possible
executions of the application~\cite{DBLP:journals/jacm/CeroneG18,DBLP:conf/concur/0002G16,DBLP:journals/tods/FeketeLOOS05,DBLP:conf/vldb/JorwekarFRS07,acidrain,isodiff}.
Anomalies with respect to a given isolation level then correspond to a
particular class of cycles in this graph. Static dependency graphs turn out to
be highly imprecise in representing feasible executions, leading to false
positives. Another source of false positives is that an anomaly might not be a
bug because the application may already be designed to handle the
non-serializable behavior \cite{DBLP:conf/pldi/BrutschyD0V18,isodiff}. 
Recent work has tried to address these issues by using more precise 
logical encodings of the application~\cite{DBLP:conf/popl/BrutschyD0V17,DBLP:conf/pldi/BrutschyD0V18}, or
by using user-guided heuristics~\cite{isodiff}. 
Another approach consists of modeling the application
logic and the isolation level in first-order logic and relying on SMT solvers to
search for anomalies~\cite{DBLP:journals/pacmpl/KakiESJ18,DBLP:conf/concur/NagarJ18,burcu-netys},
or defining specialized reductions to assertion
checking~\cite{DBLP:conf/concur/BeillahiBE19,DBLP:conf/cav/BeillahiBE19}.
Our approach, based on SMC, does not generate false positives because we systematically
enumerate only valid executions of a program which allows to check for user-defined assertions.

Several works have looked at the problem of reasoning about the correctness of
applications executing under weak isolation and introducing additional
synchronization when
necessary~\cite{DBLP:conf/eurosys/BalegasDFRPNS15,DBLP:conf/popl/GotsmanYFNS16,DBLP:conf/esop/NairP020,DBLP:conf/usenix/0001LCPRV14}.
These are based on static analysis or logical proof arguments. 
The issue of repairing applications is orthogonal to our work. 

MonkeyDB~\cite{DBLP:journals/pacmpl/BiswasKVEL21} is a mock storage system for testing storage-backed applications. 
While being able to scale to larger code, it has the inherent incompleteness of testing. As opposed to MonkeyDB, our algorithms perform a systematic and complete exploration of executions and can establish correctness at least in some bounded context, and they avoid redundancy, enumerating equivalent executions multiple times. Such guarantees are beyond the scope of MonkeyDB.

\vspace{.5mm}
\noindent
\textbf{Dynamic Partial Order Reduction.}
%DPOR~\cite{DBLP:conf/popl/FlanaganG05} algorithms explore the execution space (and tracking the equivalence relation) on-the-fly without relying on a-priori static analyses. This is typically coupled with SMC~\cite{DBLP:conf/popl/Godefroid97} which explores executions of a program without storing visited states, thereby, avoiding excessive memory consumption. 
\citet{DBLP:journals/jacm/AbdullaAJS17} introduced the concept of \emph{source sets} which provided the first strongly optimal DPOR algorithm for Mazurkiewicz trace equivalence. Other works study DPOR techniques for coarser equivalence relations, e.g.,~\cite{DBLP:journals/pacmpl/AbdullaAJLNS19,DBLP:conf/cav/AgarwalCPPT21,DBLP:conf/tacas/AronisJLS18,DBLP:journals/pacmpl/ChalupaCPSV18,DBLP:journals/pacmpl/ChatterjeePT19}. In all cases, the space complexity is exponential when strong optimality is ensured. % but consumes exponential memory.

%First, many techniques aim to combat the state-space explosion problem by introducing coarser equivalence partitioning\cite{DBLP:journals/pacmpl/AbdullaAJLNS19,DBLP:conf/cav/AgarwalCPPT21,DBLP:conf/tacas/AronisJLS18,DBLP:journals/pacmpl/ChalupaCPSV18,DBLP:journals/pacmpl/ChatterjeePT19}. Among these, only Nidhugg~\cite{DBLP:conf/tacas/AronisJLS18,DBLP:journals/pacmpl/AbdullaAJLNS19} is optimal w.r.t. to its equivalence partitioning(s), although, as demonstrated in 6.1, can suffer from exponential memory consumption.

Other works focus on extending DPOR to weak memory models either by targeting a specific memory model~\cite{DBLP:conf/cav/AbdullaAJL16,DBLP:journals/acta/AbdullaAAJLS17,DBLP:journals/pacmpl/AbdullaAJN18,DBLP:conf/oopsla/NorrisD13} or by being parametric with respect to an axiomatically-defined memory model~\cite{DBLP:conf/pldi/Kokologiannakis19,DBLP:conf/asplos/Kokologiannakis20,DBLP:journals/pacmpl/Kokologiannakis22}. Some of these works can deal with the coarser reads-from equivalence, e.g.,~\cite{DBLP:journals/pacmpl/AbdullaAJN18,DBLP:conf/pldi/Kokologiannakis19,DBLP:conf/asplos/Kokologiannakis20,DBLP:journals/pacmpl/Kokologiannakis22}. Our algorithms build on the work of \citet{DBLP:journals/pacmpl/Kokologiannakis22} which for the first time, proposes a DPOR algorithm which is both strongly optimal and polynomial space. The definitions of database isolation levels are quite different with respect to weak memory models, which makes these previous works not extensible in a direct manner. These definitions include a semantics for \emph{transactions} which are collections of reads and writes, and this poses new difficult challenges. 
For instance, reasoning about the completeness and the (strong) optimality of existing DPOR algorithms for shared-memory is agnostic to the scheduler ($\genericNext$ function) while the strong optimality of our $\textsc{explore-ce}$ algorithm relies on the scheduler keeping at most one transaction pending at a time. \nver{In addition, unlike TruSt, \textsc{explore-ce} ensures that no swapped events can be swapped again and that the history order $<$ is an extension of $\so \cup \wro$.} This makes our completeness and optimality proofs radically different. %\nver{Also, while $\textsc{explore-ce}$ only does consistency checks over partial histories in $\genericValidWrites$, TruSt do additional checks on $\genericNext$ function as their study case involves non-causally extensible weak memory models such as SC.} 
\nver{Moreover, even for transactional programs with one access per transaction, where $\SER$ and $\SC$ are equivalent, TruSt under $\SC$ and $\textsc{explore-ce}^*(I_0,\SER)$ do not coincide, for any $I_0 \in \{\RC, \RA, \CC\}$. In this case, TruSt enumerates only $\SC$-consistent histories at the cost of solving an NP-complete problem at each step while the $\textsc{explore-ce}^*$ step cost is polynomial time at the price of not being strongly-optimal.}
Furthermore, we identify isolation levels ($\SI$ and $\SER$) for which it is impossible to ensure both strong optimality and polynomial space bounds with a swapping-based algorithm, a type of question that has not been investigated in previous work. 

\section{Conclusions}\label{sec:conc}

We presented efficient SMC algorithms based on DPOR for transactional programs running under standard isolation levels. These algorithms are instances of a generic schema, called swapping-based algorithms, which is parametrized by an isolation level. Our algorithms are sound and complete, and  polynomial space. Additionally, we identified a class of isolation levels, including $\RC$, $\RA$, and $\CC$, for which our algorithms are strongly optimal, and we showed that swapping-based algorithms cannot be strongly optimal for stronger levels $\SI$ and $\SER$ (but just optimal). For the isolation levels we considered, there is an intriguing coincidence between the existence of a strongly optimal swapping-based algorithm and the complexity of checking if a given history is consistent with that level. Indeed, checking consistency is polynomial time for $\RC$, $\RA$, and $\CC$, and NP-complete for $\SI$ and $\SER$. 
%for RC, RA, and CC, , and for both SI and SER, there are no strongly optimal swapping-based algorithms and checking consistency is NP-complete. 
Investigating further the relationship between strong optimality and polynomial-time consistency checks is an interesting direction for future work. 
%An interesting question to investigate is the characterization of isolation levels for which strong optimality can be ensured (under the swapping-based paradigm), and whether this characterization is related to the complexity of checking consistency. 
	%!TEX root = main.tex
\section*{Acknowledgements}

We thank anonymous reviewers for their feedback, and Ayal Zaks for shepherding our paper. This work was partially supported by the project AdeCoDS of the French National Research Agency. % ANR (France).
% supported by the Agence National de Recherche (ANR) grant for the project ``AdeCoDS''.

	%!TEX root = main.tex
\section*{Data availability statement}

The implementation is open-source and can be found in \cite{anonymous_2023_7824546}.
%	\input{proofs:sound-complete-optimal}

	%\newpage
	%\input{intro-ranadeep}
	%\input{prog_lang}
	%\input{ax-kv}
	%\input{op-kv}
	%\input{formal}
	%\input{impl}
	%\input{microbenchmark}
	%\input{oltp}
	%\input{jepsen}
	%\input{related}
	%\input{concl}

%	%% Acknowledgments
%	\begin{acks}                            %% acks environment is optional
%		%\input{acknowledgment}
%	\end{acks}
	
	%\newpage
	
	%% Bibliography
	\bibliography{main,dblp,acmart,misc}

	\newpage
	%Appendix
	\appendix
	\counterwithin{figure}{section}
	\counterwithin{table}{section}

	\section{Axiomatic levels: Read Committed and Read Atomic.}
\label{sec:def:RA-RC}

\begin{figure}[H]
	\resizebox{\textwidth}{!}{
		\footnotesize
		\begin{tabular}{|c|c|}
			\hline &  \\
			
            \begin{subfigure}[t]{.4\textwidth}
                \centering
                \begin{tikzpicture}[->,>=stealth,shorten >=1pt,auto,node distance=1cm,
                    semithick, transform shape]
                    \node[transaction state, text=black] at (0,0)       	(t_1)           {$t_1$};
                    \node[transaction state, text=black, label={above:\textcolor{black}{$\writeVar{ }{x}$}}] at (-0.5,1.5) (t_2) {$t_2$};
                    \node[transaction state, text=black] at (2,0)       (o_1)           {$\alpha$};
                    \node[transaction state] at (1.5,1.5) (o_2) {$\beta$};
                    \path (t_1) edge[color=wrColor] node {$\wro_x$} (o_1);
                    % \path (t_2) edge[blue] node {$\CO$} (t_1);
                    \path (t_2) edge[color=wrColor] node {$\wro$} (o_2);
                    \path (o_2) edge[color=poColor] node {$\po$} (o_1);
                    \path (t_2) edge[left,double,color=coColor] node {$\co$} (t_1);
                \end{tikzpicture}
                \parbox{\textwidth}{
                    $\forall x,\ \forall t_1, t_2,\ \forall \alpha.\ t_1\neq t_2\ \land$
                    
                    \hspace{2mm}$\tup{t_1,\alpha}\in \wro_x \land \writeVar{t_2}{x}\ \land$ 
                    
                    \hspace{2mm}$\tup{t_2,\alpha}\in\wro\circ\po$
                    
                    \hspace{4mm}$\implies \tup{t_2,t_1}\in\co$
                }
                %\end{align*}
                
                \caption{$\mathsf{Read\ Committed}$}
                \label{lock_rc_def}
            \end{subfigure}   
			
			&

			\begin{subfigure}[t]{.4\textwidth}
				\centering
				\begin{tikzpicture}[->,>=stealth,shorten >=1pt,auto,node distance=1cm,
					semithick, transform shape]
					\node[transaction state, text=black] at (0,0)       (t_1)           {$t_1$};
					\node[transaction state] at (2,0)       (t_3)           {$t_3$};
					\node[transaction state, text=black,label={above:\textcolor{black}{$\writeVar{ }{\xvar}$}}] at (-.5,1.5) (t_2) {$t_2$};
					\path (t_1) edge[wrColor] node {$\wro_x$} (t_3);
					% \path (t_2) edge[blue] node {$\CO$} (t_1);
					\path (t_2) edge[bend left] node {$\so \cup \wro$} (t_3);
					\path (t_2) edge[left, double ,coColor] node {$\co$} (t_1);
				\end{tikzpicture}
				\parbox{\textwidth}{
					$\forall x,\ \forall t_1, t_2,\ \forall t_3.\ t_1\neq t_2\ \land$
					
					\hspace{2mm}$\tup{t_1,t_3}\in \wro_x \land \writeVar{t_2}{x}\ \land$ 
					
					\hspace{2mm}$\tup{t_2,t_3}\in\so \cup \wro$
					
					\hspace{4mm}$\implies \tup{t_2,t_1}\in\co$
				}
				
				\caption{$\mathsf{Read\ Atomic}$}
				\label{ra_def}
			\end{subfigure}

		\\\hline
		\end{tabular}
	}
	%  \vspace{-3mm}
	\caption{Axioms defining isolations levels. The reflexive and transitive, resp., transitive, closure of a relation $rel$ is denoted by $rel^*$, resp., $rel^+$. Also, $\circ$ denotes the composition of two relations, i.e., $rel_1 \circ rel_2 = \{\tup{a, b} | \exists c. \tup{a, c} \in rel_1 \land \tup{c, b} \in rel_2\}$.}
	\label{fig:consistency_defs_rc_ra}
	%  \vspace{-2mm}
\end{figure}

The axioms defined above in Figure~\ref{fig:consistency_defs_rc_ra} define the homonymous isolation levels \textit{Read Atomic} (also called Repeatable Read in the literature) and \textit{Read Committed}.

	\newpage
	%!TEX root = ../../main.tex
\section{Rules of the operational semantics (Section~\ref{ssec:semantics}).}
\label{app:rules}

\begin{figure} [H]
    \small
      \centering
      \begin{mathpar}
        \inferrule[spawn]{\tr \mbox{ fresh}\quad e \mbox{ fresh}\quad  \mathsf{P}(j) = \ibegin; \mathsf{Body}; \icommit; \mathsf{S} \quad \vec{\mathsf{B}}(j) = \epsilon}{
          \hist,\vec{\gamma},\vec{\mathsf{B}},\mathsf{P}
          \Rightarrow_I
          \hist \oplus_j \tup{\tr,\{\tup{e,\ebegin}\},\emptyset},\vec{\gamma}[j\mapsto \emptyset],\vec{\mathsf{B}}[j\mapsto \mathsf{Body}; \icommit],\mathsf{P}[j\mapsto \mathsf{S}]
        } 
    
        \inferrule[if-true]{\psi(\vec{x})[x\mapsto \vec{\gamma}(j)(x): x\in\vec{x}]\mbox{ true} \\
        \vec{\mathsf{B}}(j) = \iif{\psi(\vec{x})}{\mathsf{Instr}};\mathsf{B}
        }{
          \hist,\vec{\gamma},\vec{\mathsf{B}}, \mathsf{P}
          \Rightarrow_I
          \hist,\vec{\gamma},\vec{\mathsf{B}}[j\mapsto \mathsf{Instr};\mathsf{B}],\mathsf{P}
        } 
    
        \inferrule[if-false]{\psi(\vec{x})[x\mapsto \vec{\gamma}(j)(x): x\in\vec{x}]\mbox{ false} \\
        \vec{\mathsf{B}}(j) = \iif{\psi(\vec{x})}{\mathsf{Instr}};\mathsf{B}
        }{
          \hist,\vec{\gamma},\vec{\mathsf{B}}, \mathsf{P}
          \Rightarrow_I
          \hist,\vec{\gamma},\vec{\mathsf{B}}[j\mapsto \mathsf{B}],\mathsf{P}
        } 
    
        \inferrule[local]{v = \vec{\gamma}(j)(e) \\ \vec{\mathsf{B}}(j) = a := e;\mathsf{B}
        }{
          \hist,\vec{\gamma},\vec{\mathsf{B}}, \mathsf{P}
          \Rightarrow_I
          \hist,\vec{\gamma}[(j,\xvar)\mapsto \val],\vec{\mathsf{B}}[j\mapsto \mathsf{B}],\mathsf{P}
        } 
    
        \inferrule[write]{v = \vec{\gamma}(j)(x)\quad e\mbox{ fresh} \quad 
        \vec{\mathsf{B}}(j) = \iwrite(\key,\xvar);\mathsf{B}\quad \hist \oplus_j \tup{e,\ewrt{\key,\val}}\mbox{ satisfies $I$}
        }{
          \hist,\vec{\gamma},\vec{\mathsf{B}}, \mathsf{P}
          \Rightarrow_I
          \hist \oplus_j \tup{e,\ewrt{\key,\val}},\vec{\gamma},\vec{\mathsf{B}}[j\mapsto \mathsf{B}], \mathsf{P}
        } 
    
        \inferrule[read-local]{
        \writeOp{\mathit{last}(\hist,j)}\mbox{ contains a }\wrt{\key}{\val}\mbox{ event}\\
        e\mbox{ fresh } \\
        \vec{\mathsf{B}}(j) = \xvar := \iread(\key);\mathsf{B}
        }{
          \hist,\vec{\gamma},\vec{\mathsf{B}}, \mathsf{P}
          \Rightarrow_I
          \hist \oplus_j \tup{e,\erd{\key}},\vec{\gamma}[(j,\xvar)\mapsto \val],\vec{\mathsf{B}}[j\mapsto \mathsf{B}],\mathsf{P}
        } 
    
        \inferrule[read-extern]{
        \writeOp{\mathit{last}(\hist,j)}\mbox{ does not contain a }\wrt{\key}{\val}\mbox{ event} \\
        e\mbox{ fresh }\\
        \vec{\mathsf{B}}(j) = \xvar := \iread(\key);\mathsf{B} \\
        \hist=(T,\so,\wro) \\
        \tr =\mathit{last}(\hist,j) \\
        \wrt{\key}{\val}\in\writeOp{\tr'}\mbox{ with $\tr'\in \transC{\hist}$ and $\tr\neq \tr'$} \\
        \hist' = (\hist \oplus_j \tup{e,\erd{\key}}) \oplus \wro(\tr',e) \\
        \hist' \mbox{ satisfies }I }{
          \hist,\vec{\gamma},\vec{\mathsf{B}}, \mathsf{P}
          \Rightarrow_I
          \hist',\vec{\gamma}[(j,\xvar)\mapsto \val],\vec{\mathsf{B}}[j\mapsto \mathsf{B}],\mathsf{P}
        } 
    
        \inferrule[commit]{e\mbox{ fresh} \quad 
        \vec{\mathsf{B}}(j) = \icommit 
        }{
          \hist,\vec{\gamma},\vec{\mathsf{B}}, \mathsf{P}
          \Rightarrow_I
          \hist \oplus_j \tup{e,\icommit},\vec{\gamma},\vec{\mathsf{B}}[j\mapsto \epsilon], \mathsf{P}
        } 
    
        \inferrule[abort]{e\mbox{ fresh} \quad 
        \vec{\mathsf{B}}(j) = \iabort; B
        }{
          \hist,\vec{\gamma},\vec{\mathsf{B}}, \mathsf{P}
          \Rightarrow_I
          \hist \oplus_j \tup{e,\iabort},\vec{\gamma},\vec{\mathsf{B}}[j\mapsto \epsilon], \mathsf{P}
        }
      \end{mathpar}
    % \vspace{-5mm}
      \caption{An operational semantics for transactional programs. Above, $\mathit{last}(h,j)$ denotes the last transaction log in the session order $\so(j)$ of $h$, and $\transC{\hist}$ denotes the set of transaction logs in $\hist$ that are committed}.
      \label{fig:op:sem:baseline}
    \end{figure}
    
Figure~\ref{fig:op:sem:baseline} uses the following notation. Let $\hist$ be a history that contains a representation of $\so$ as above. We use $\hist\oplus_j \tup{\tr,E,\po_t}$ to denote a history where $\tup{\tr,E,\po_t}$ is appended to $\so(j)$. 
Also, for an event $e$, $\hist\oplus_j e$ is the history obtained from $\hist$ by adding $e$ to the last transaction log in $\so(j)$ and as a last event in the program order of this log (i.e.,  if $\so(j)=\sigma; \tup{t,E,\po_t}$, then the session order $\so'$ of $\hist\oplus_j e$ is defined by $\so'(k)=\so(k)$ for all $k\neq j$ and $\so(j) =\sigma; \tup{t,E\cup\{e\},\po_t\cup \{(e',e): e'\in E\}}$). Finally, for a history $\hist = \tup{T, \so, \wro}$, $\hist\oplus\wro(\tr,e)$ is the history obtained from $\hist$ by adding $(\tr,e)$ to the write-read relation.

\textsc{spawn} starts a new transaction in a session $j$ provided that this session has no live transaction ($\vec{\mathsf{B}}(j) = \epsilon$). It adds a transaction log with a single $\ebegin$ event to the history and schedules the body of the transaction. \textsc{if-true} and \textsc{if-false} check the truth value of a Boolean condition of an $\mathtt{if}$ conditional. \textsc{local} models the execution of an assignment to a local variable which does not impact the stored history. \textsc{read-local} and \textsc{read-extern} concern read instructions. \textsc{read-local} handles the case where the read follows a write on the variable $x$ in the same transaction: the read returns the value written by the last write on $x$ in that transaction. Otherwise, \textsc{read-extern} corresponds to reading a value written in another transaction $\tr'$. The transaction $\tr'$ is chosen non-deterministically as long as extending the current history with the write-read dependency associated to this choice leads to a history that still satisfies $I$. \textsc{read-extern} applies only when the executing transaction contains no write on the same variable. \textsc{commit} confirms the end of a transaction making its writes visible while \textsc{abort} ends the transaction's execution immediately.

	\newpage

	%!TEX root = ../../main.tex
\section{Proof of Theorem~\ref{theorem:causalExtensibleModels-CC-RA-RC}}
\label{app:causalExtensible}

\causalExtensibleModels*

\begin{proof}
	%$\co$ a commit order that witness this property and $e$ a $\so \cup \wro$-maximal event.
Let $I$ be an isolation level in $\{\CC, \RA, \RC\}$. We show that any commit order $\co$ justifying that a history $h$ is $I$-consistent can also be used to justify that a causal extension $\hist'$ of a $(\so \cup \wro)^*$-maximal pending transaction $t$ in $h$ with an event $e$ is $I$-consistent as well.
We consider a causal extension $\hist'$ where if $e$ is a read event, then it reads from the last transaction $t_w$ in $\co$ such that $t_w$ writes $\mathit{var}(e)$ and $(t_w, t) \in (\so \cup \wro)^+$.
%$\co$-maximal transaction that writes $x$ and $(t, \trans{h}{e}) \in (\so \cup \wro)^+$∫
%Let us show that for every non-total consistent history $h$, $\co$ a commit order that witness this property and $e$ a $\so \cup \wro$-maximal event there exists a causal extension of $h$ and $e$. For doing so, we will show that there exists a commit order $\co'$ for $h$ that is also a commit order for a causal extension of $h$. During the following, let us call 
Assume by contradiction that this is not the case. Let $\phi_{\CC}(h', t', e') = t' \ (\so \cup \wro)^+ \ \trans{h'}{e'}$, $\phi_{\RA}(h',t', e') = t' \ (\so \cup \wro) \ \trans{h'}{e'}$ and $\phi_{\RC}(h',t', e') = t' \ (\wro \circ \po) \ e'$ be sub-formulas of the axioms defining the corresponding isolation level. Then, 
%
%On one hand, if $e$ is a $\ibegin$ event, let $\co' = \co \cup \{\langle t, \tr(e) \rangle, t \in h\}$ and if $e$ is either a $\iwrite$ or an $\iend$ event, let simply $\co' = \co$. On the other hand, if $e$ is a read event that reads $x$, let $t$ the $\co$-maximum transaction that writes $x$ and $(t, \trans{h}{e}) \in (\so \cup \wro)^+$; and let's pick $\co' = \co$. Either way, let us call $h' = h \oplus e$ or $h' = h \oplus \wro(t, e)$.
%As $\co'\restriction_{h} = \co$, it is clearly a commit order for $h$. Moreover, if it does not witness $\mathcal{I}$-consistency for $h'$, 
%
$\hist'$ contains transactions $t_1, t_2, t_3$  such that $t_2$ writes some variable $x$, $t_3$ contains some read event $e'$, $(t_1, e') \in \wro_x$ and $\phi_{I}(h', t_2, e')$ but $(t_1, t_2) \in \co$. The assumption concerning $\co$ implies that the extended transaction $t$ is one of $t_1, t_2, t_3$ (otherwise, $\co$ would not be a ``valid'' commit order for $\hist$). Since $t$ is $(\so \cup \wro)^+$-maximal in $\hist$, we have that $t\not\in \{t_1,t_2\}$. If 
%Firstly, as $h'$ is an extension of $h$ and $\mathcal{I}$ is prefix-closed, if $\trans{h}{e}$ is not equal to any of them, $\co$ would not witness $I$-consistency for $h$. As $\trans{h}{e}$ is $\so \cup \wro$-maximal, $t_1 \neq \trans{h}{e} \neq t_2$. Moreover, 
$e$ is \emph{not} a read event, or if $e$ is a read event different from $e'$, then $t \neq t_3$, as $t_1$, $t_2$ and $t_3$ would satisfy the same constraints in $h$, which is impossible by the hypothesis. Otherwise, if $e=e'$, then this contradicts the choice we made for the transaction $t_w$ that $e$ reads from. Since $(t_1, t_2) \in \co$ and $t_2$ writes $\mathit{var}(e)$, it means that $t_w=t_1$ is not maximal w.r.t. $\co$ among transactions that write $\mathit{var}(e)$ and precede $t$ in $(\so \cup \wro)^+$. Both cases lead to a contradiction, which implies that $\hist'$ is $I$-consistent, and therefore the theorem holds.
% $t$ cannot be $t_1$ as it is the $\co$-maximum history among those that write $x$ and $\trans{h}{e}$ causally depends on. In addition, $\trans{h}{e}$ cannot be $t_2$ as that would mean $\co$ does not witness $\mathcal{I}$-completeness. And analogously, as $\trans{h}{w}$ is neither of them, those transactions cannot exists, otherwise $h$ would not be $\mathcal{I}$-consistent. Succinctly, $h'$ is $\mathcal{I}$-consistent.
\end{proof}
	%!TEX root = ../../main.tex
\newpage
\section{Proof of Theorem~\ref{th:impos}}
\label{app:impossibility}

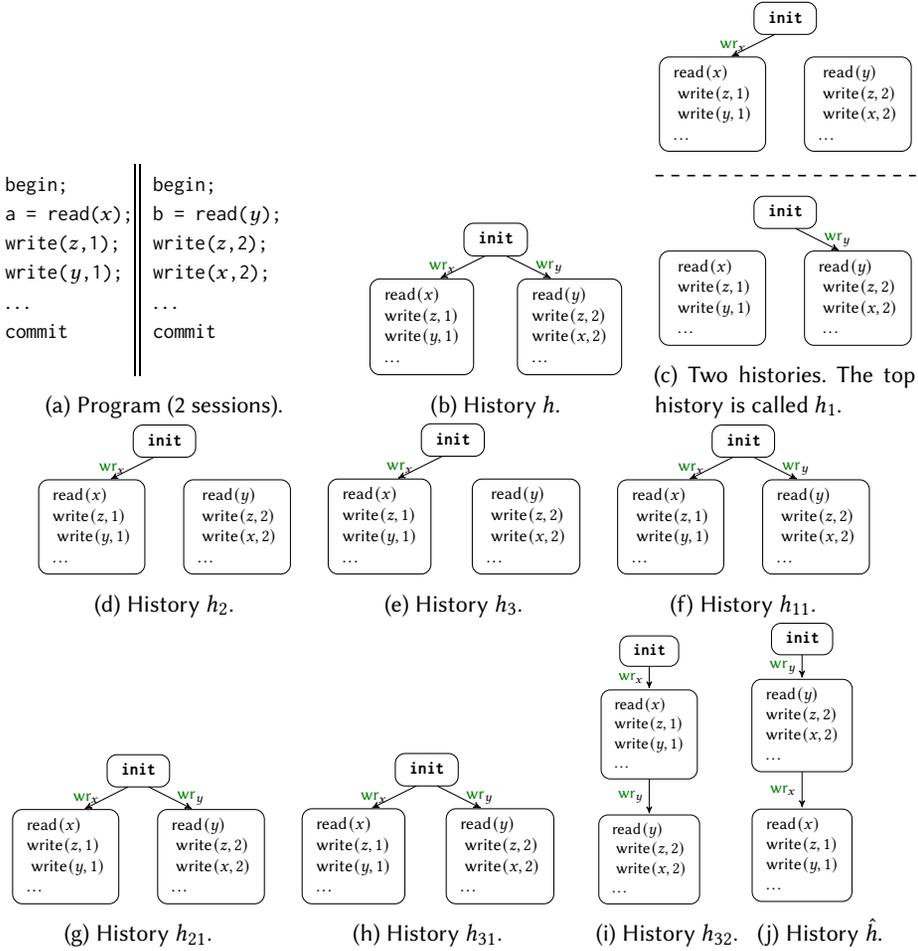
\begin{figure}[H]
	\centering
	\begin{subfigure}[b]{.33\textwidth}

		\begin{adjustbox}{max width=\textwidth}
			\begin{tabular}{c||c}
				\begin{lstlisting}[xleftmargin=5mm,basicstyle=\ttfamily\footnotesize,escapeinside={(*}{*)}, tabsize=1]
begin;
a = read((*$x$*));
write((*$z$*),1);
write((*$y$*),1);
(*$\ldots$*)
commit
				\end{lstlisting} &
				\begin{lstlisting}[xleftmargin=5mm,basicstyle=\ttfamily\footnotesize,escapeinside={(*}{*)}, tabsize=1]
begin;
b = read((*$y$*));
write((*$z$*),2);
write((*$x$*),2);
(*$\ldots$*)
commit
				\end{lstlisting} 
	
			\end{tabular} 
		\end{adjustbox}
%		\resizebox{\textwidth}{!}{
%			\begin{tikzpicture}[->,>=stealth',shorten >=1pt,auto,node distance=3cm,
%				semithick, transform shape]
%				\node[draw, rounded corners=2mm,outer sep=0] (t1) at (-1.5, -0.25) {\begin{tabular}{l} $\init$ \end{tabular}};
%				\node[draw, rounded corners=2mm,outer sep=0, label={[font=\small]130:$T_1$}] (t2) at (-3, -2) {\begin{tabular}{l} 
%						$\wrt{z}{1}$ \\ $a \gets \rd{x}$ \\ $\wrt{y}{1}$ \\ \ldots
%				\end{tabular}};
%				\node[draw, rounded corners=2mm,outer sep=0,  label={[font=\small]50:$T_2$}] (t2) at (0, -2) {\begin{tabular}{l} 
%						$\wrt{z}{2}$ \\ $b \gets \rd{y}$ \\ $\wrt{x}{2}$ \\ \ldots
%				\end{tabular}};		
%				
%				\path (t1.south west) -- (t1.south) coordinate[pos=0.67] (t1x);
%				\path (t2.north west) -- (t2.north) coordinate[pos=0.67] (t2x);
%				\path (t3.north west) -- (t3.west) coordinate[pos=0.67] (t3x);
%				
%				%\path (t3x) edge [above] node[yshift=8,xshift=0] {$\wro_x$} (t2.north east);
%				%\path (t1x) edge [right] node {$\wro_y$} (t2x);
%			\end{tikzpicture}  
%		}

		\caption{Program \nver{(2 sessions)}.}
		\label{fig:ser-non-optimal:prog}
	\end{subfigure}
	\hspace{.2cm}
	\centering
	\begin{subfigure}[b]{.25\textwidth}
		\resizebox{\textwidth}{!}{
			\begin{tikzpicture}[->,>=stealth',shorten >=1pt,auto,node distance=3cm,
				semithick, transform shape]
				\node[draw, rounded corners=2mm,outer sep=0] (t1) at (-1.5, -0.25) {\begin{tabular}{l} $\init$ \end{tabular}};
				\node[draw, rounded corners=2mm,outer sep=0] (t2) at (-3, -2) {\begin{tabular}{l} 
					$\rd{x}$ \\$\wrt{z}{1}$ \\  $\wrt{y}{1}$ \\ \ldots
				\end{tabular}};
				\node[draw, rounded corners=2mm,outer sep=0] (t3) at (0, -2) {\begin{tabular}{l} 
					$\rd{y}$ \\ $\wrt{z}{2}$ \\ $\wrt{x}{2}$ \\ \ldots
				\end{tabular}};		
				
				\path (t1.south west) -- (t1.south) coordinate[pos=0.67] (t1sw);
				\path (t1.south east) -- (t1.south) coordinate[pos=0.67] (t1se);
				\path (t2.north east) -- (t2.north) coordinate[pos=0.67] (t2x);
				\path (t3.north west) -- (t3.north) coordinate[pos=0.67] (t3x);
				
				%\path (t3x) edge [above] node[yshift=8,xshift=0] {$\wro_x$} (t2.north east);
				\path (t1sw) edge [left] node {$\wro_x$} (t2x);
				\path (t1se) edge [right] node {$\wro_y$} (t3x);
			\end{tikzpicture}  
			
		}
		\caption{History $\hist$.}
		\label{fig:ser-non-optimal:h}
	\end{subfigure}
	\hspace{.2cm}
	\centering
	\begin{subfigure}[b]{.25\textwidth}
		\resizebox{\textwidth}{!}{
			\begin{tikzpicture}[->,>=stealth',shorten >=1pt,auto,node distance=3cm,
				semithick, transform shape]
				\node[draw, rounded corners=2mm,outer sep=0] (t1) at (-1.5, -0.25) {\begin{tabular}{l} $\init$ \end{tabular}};
				\node[draw, rounded corners=2mm,outer sep=0] (t2) at (-3, -2) {\begin{tabular}{l} 
					$\rd{x}$ \\  \pgfsetfillopacity{0.3}$\wrt{z}{1}$ \\ \pgfsetfillopacity{0.3}$\wrt{y}{1}$  \\ \ldots
				\end{tabular}};
				\node[draw, rounded corners=2mm,outer sep=0] (t3) at (0, -2) {\begin{tabular}{l} 
					\pgfsetfillopacity{0.3}$\rd{y}$ \\ 
					\pgfsetfillopacity{0.3}$\wrt{z}{2}$ \\ \pgfsetfillopacity{0.3}$\wrt{x}{2}$ \\ \ldots
				\end{tabular}};		
				
				\path (t1.south west) -- (t1.south) coordinate[pos=0.67] (t1sw);
				\path (t1.south east) -- (t1.south) coordinate[pos=0.67] (t1se);
				\path (t2.north east) -- (t2.north) coordinate[pos=0.67] (t2x);
				\path (t3.north west) -- (t3.north) coordinate[pos=0.67] (t3x);
				
				%\path (t3x) edge [above] node[yshift=8,xshift=0] {$\wro_x$} (t2.north east);
				\path (t1sw) edge [left] node {$\wro_x$} (t2x);
%				\path (t1se) edge [right] node {$\wro_y$} (t3x);
			\end{tikzpicture}  
		}

			\vspace{-1mm}
			\begin{tikzpicture}[shorten >=1pt,auto,node distance=3cm,
				semithick, transform shape]
				%\node (n1) at (0,0);
				%\node (n2) at (0,-4);
				\path (-1.5,0) edge[dashed] (2,0);
			\end{tikzpicture}
			\vspace{-2mm}

		\resizebox{\textwidth}{!}{
			\begin{tikzpicture}[->,>=stealth',shorten >=1pt,auto,node distance=3cm,
				semithick, transform shape]
				\node[draw, rounded corners=2mm,outer sep=0] (t1) at (-1.5, -0.25) {\begin{tabular}{l} $\init$ \end{tabular}};
				\node[draw, rounded corners=2mm,outer sep=0] (t2) at (-3, -2) {\begin{tabular}{l} 
					\pgfsetfillopacity{0.3}$\rd{x}$ \\ 
					\pgfsetfillopacity{0.3}$\wrt{z}{1}$ \\ \pgfsetfillopacity{0.3}$\wrt{y}{1}$  \\ \pgfsetfillopacity{0.3}\ldots
				\end{tabular}};
				\node[draw, rounded corners=2mm,outer sep=0] (t3) at (0, -2) {\begin{tabular}{l} 
					$\rd{y}$ \\  \pgfsetfillopacity{0.3}$\wrt{z}{2}$ \\ \pgfsetfillopacity{0.3}$\wrt{x}{2}$ \\ \pgfsetfillopacity{0.3}\ldots
				\end{tabular}};		
				
				\path (t1.south west) -- (t1.south) coordinate[pos=0.67] (t1sw);
				\path (t1.south east) -- (t1.south) coordinate[pos=0.67] (t1se);
				\path (t2.north east) -- (t2.north) coordinate[pos=0.67] (t2x);
				\path (t3.north west) -- (t3.north) coordinate[pos=0.67] (t3x);
				
				%\path (t3x) edge [above] node[yshift=8,xshift=0] {$\wro_x$} (t2.north east);
				\path (t1se) edge [right] node {$\wro_y$} (t3x);
%				\path (t1se) edge [right] node {$\wro_y$} (t3x);
			\end{tikzpicture}  
		}
		\caption{Two histories. The top history is called $\hist_1$.}
		\label{fig:ser-non-optimal:1}
	\end{subfigure}
	\centering
	%% line
	\begin{subfigure}[b]{.25\textwidth}
	\resizebox{\textwidth}{!}{
		\begin{tikzpicture}[->,>=stealth',shorten >=1pt,auto,node distance=3cm,
			semithick, transform shape]
			\node[draw, rounded corners=2mm,outer sep=0] (t1) at (-1.5, -0.25) {\begin{tabular}{l} $\init$ \end{tabular}};
			\node[draw, rounded corners=2mm,outer sep=0] (t2) at (-3, -2) {\begin{tabular}{l} 
				$\rd{x}$ \\ $\wrt{z}{1}$ \\ \pgfsetfillopacity{0.3}$\wrt{y}{1}$ \\ \ldots
			\end{tabular}};
			\node[draw, rounded corners=2mm,outer sep=0] (t3) at (0, -2) {\begin{tabular}{l} 
				\pgfsetfillopacity{0.3}$\rd{y}$ \\ \pgfsetfillopacity{0.3}$\wrt{z}{2}$ \\ \pgfsetfillopacity{0.3}$\wrt{x}{2}$ \\ \ldots
			\end{tabular}};		
			
			\path (t1.south west) -- (t1.south) coordinate[pos=0.67] (t1sw);
			\path (t1.south east) -- (t1.south) coordinate[pos=0.67] (t1se);
			\path (t2.north east) -- (t2.north) coordinate[pos=0.67] (t2x);
			\path (t3.north west) -- (t3.north) coordinate[pos=0.67] (t3x);
			
			%\path (t3x) edge [above] node[yshift=8,xshift=0] {$\wro_x$} (t2.north east);
			\path (t1sw) edge [left] node {$\wro_x$} (t2x);
			%\path (t1se) edge [right] node {$\wro_y$} (t3x);
		\end{tikzpicture}  			
	}
	\caption{History $\hist_{2}$.}
	\label{fig:ser-non-optimal:2}
	\end{subfigure}
	\hspace{.2cm}
	\centering
	\begin{subfigure}[b]{.25\textwidth}
		\resizebox{\textwidth}{!}{
			\begin{tikzpicture}[->,>=stealth',shorten >=1pt,auto,node distance=3cm,
				semithick, transform shape]
				\node[draw, rounded corners=2mm,outer sep=0] (t1) at (-1.5, -0.25) {\begin{tabular}{l} $\init$ \end{tabular}};
				\node[draw, rounded corners=2mm,outer sep=0] (t2) at (-3, -2) {\begin{tabular}{l} 
						$\rd{x}$ \\ $\wrt{z}{1}$ \\ $\wrt{y}{1}$ \\ \ldots
				\end{tabular}};
				\node[draw, rounded corners=2mm,outer sep=0] (t3) at (0, -2) {\begin{tabular}{l} 
					\pgfsetfillopacity{0.3}$\rd{y}$ \\ \pgfsetfillopacity{0.3}$\wrt{z}{2}$ \\  \pgfsetfillopacity{0.3}$\wrt{x}{2}$ \\ \ldots
				\end{tabular}};		
				
				\path (t1.south west) -- (t1.south) coordinate[pos=0.67] (t1sw);
				\path (t1.south east) -- (t1.south) coordinate[pos=0.67] (t1se);
				\path (t2.north east) -- (t2.north) coordinate[pos=0.67] (t2x);
				\path (t3.north west) -- (t3.north) coordinate[pos=0.67] (t3x);
				
				%\path (t3x) edge [above] node[yshift=8,xshift=0] {$\wro_x$} (t2.north east);
				\path (t1sw) edge [left] node {$\wro_x$} (t2x);
				%\path (t1se) edge [right] node {$\wro_y$} (t3x);
			\end{tikzpicture}  
		}
		\caption{History $\hist_{3}$.}
		\label{fig:ser-non-optimal:3}
	\end{subfigure}
	\hspace{.2cm}
	\begin{subfigure}[b]{.25\textwidth}
		\resizebox{\textwidth}{!}{
			\begin{tikzpicture}[->,>=stealth',shorten >=1pt,auto,node distance=3cm,
				semithick, transform shape]
				\node[draw, rounded corners=2mm,outer sep=0] (t1) at (-1.5, -0.25) {\begin{tabular}{l} $\init$ \end{tabular}};
				\node[draw, rounded corners=2mm,outer sep=0] (t2) at (-3, -2) {\begin{tabular}{l} 
					$\rd{x}$ \\ \pgfsetfillopacity{0.3}$\wrt{z}{1}$ \\  \pgfsetfillopacity{0.3}$\wrt{y}{1}$ \\ \ldots
				\end{tabular}};
				\node[draw, rounded corners=2mm,outer sep=0] (t3) at (0, -2) {\begin{tabular}{l} 
					$\rd{y}$ \\ \pgfsetfillopacity{0.3}$\wrt{z}{2}$ \\ \pgfsetfillopacity{0.3}$\wrt{x}{2}$ \\ \ldots
				\end{tabular}};		
				
				\path (t1.south west) -- (t1.south) coordinate[pos=0.67] (t1sw);
				\path (t1.south east) -- (t1.south) coordinate[pos=0.67] (t1se);
				\path (t2.north east) -- (t2.north) coordinate[pos=0.67] (t2x);
				\path (t3.north west) -- (t3.north) coordinate[pos=0.67] (t3x);
				
				%\path (t3x) edge [above] node[yshift=8,xshift=0] {$\wro_x$} (t2.north east);
				\path (t1sw) edge [left] node {$\wro_x$} (t2x);
				\path (t1se) edge [right] node {$\wro_y$} (t3x);
			\end{tikzpicture}  
			
		}
		\caption{History $\hist_{11}$.}
		\label{fig:ser-non-optimal:11}
	\end{subfigure}
	\centering
	%% line
	\begin{subfigure}[b]{.25\textwidth}
	\resizebox{\textwidth}{!}{
		\begin{tikzpicture}[->,>=stealth',shorten >=1pt,auto,node distance=3cm,
			semithick, transform shape]
			\node[draw, rounded corners=2mm,outer sep=0] (t1) at (-1.5, -0.25) {\begin{tabular}{l} $\init$ \end{tabular}};
			\node[draw, rounded corners=2mm,outer sep=0] (t2) at (-3, -2) {\begin{tabular}{l} 
				$\rd{x}$ \\ $\wrt{z}{1}$ \\ \pgfsetfillopacity{0.3}$\wrt{y}{1}$ \\ \ldots
			\end{tabular}};
			\node[draw, rounded corners=2mm,outer sep=0] (t3) at (0, -2) {\begin{tabular}{l} 
				$\rd{y}$ \\ \pgfsetfillopacity{0.3}$\wrt{z}{2}$ \\ \pgfsetfillopacity{0.3}$\wrt{x}{2}$ \\ \ldots
			\end{tabular}};		
			
			\path (t1.south west) -- (t1.south) coordinate[pos=0.67] (t1sw);
			\path (t1.south east) -- (t1.south) coordinate[pos=0.67] (t1se);
			\path (t2.north east) -- (t2.north) coordinate[pos=0.67] (t2x);
			\path (t3.north west) -- (t3.north) coordinate[pos=0.67] (t3x);
			
			%\path (t3x) edge [above] node[yshift=8,xshift=0] {$\wro_x$} (t2.north east);
			\path (t1sw) edge [left] node {$\wro_x$} (t2x);
			\path (t1se) edge [right] node {$\wro_y$} (t3x);
		\end{tikzpicture}  			
	}
	\caption{History $\hist_{21}$.}
	\label{fig:ser-non-optimal:21}
	\end{subfigure}
	\hspace{.2cm}
	\centering
	\begin{subfigure}[b]{.25\textwidth}
		\resizebox{\textwidth}{!}{
			\begin{tikzpicture}[->,>=stealth',shorten >=1pt,auto,node distance=3cm,
				semithick, transform shape]
				\node[draw, rounded corners=2mm,outer sep=0] (t1) at (-1.5, -0.25) {\begin{tabular}{l} $\init$ \end{tabular}};
				\node[draw, rounded corners=2mm,outer sep=0] (t2) at (-3, -2) {\begin{tabular}{l} 
						$\rd{x}$ \\ $\wrt{z}{1}$ \\ $\wrt{y}{1}$ \\ \ldots
				\end{tabular}};
				\node[draw, rounded corners=2mm,outer sep=0] (t3) at (0, -2) {\begin{tabular}{l} 
						$\rd{y}$ \\ \pgfsetfillopacity{0.3}$\wrt{z}{2}$ \\  \pgfsetfillopacity{0.3}$\wrt{x}{2}$ \\ \ldots
				\end{tabular}};		
				
				\path (t1.south west) -- (t1.south) coordinate[pos=0.67] (t1sw);
				\path (t1.south east) -- (t1.south) coordinate[pos=0.67] (t1se);
				\path (t2.north east) -- (t2.north) coordinate[pos=0.67] (t2x);
				\path (t3.north west) -- (t3.north) coordinate[pos=0.67] (t3x);
				
				%\path (t3x) edge [above] node[yshift=8,xshift=0] {$\wro_x$} (t2.north east);
				\path (t1sw) edge [left] node {$\wro_x$} (t2x);
				\path (t1se) edge [right] node {$\wro_y$} (t3x);
			\end{tikzpicture}  
		}
		\caption{History $\hist_{31}$.}
		\label{fig:ser-non-optimal:31}
	\end{subfigure}
	\hspace{.3cm}
	\centering
	\begin{subfigure}[b]{.14\textwidth}
	\resizebox{.75\textwidth}{!}{
		\begin{tikzpicture}[->,>=stealth',shorten >=1pt,auto,node distance=3cm,
			semithick, transform shape]
			\node[draw, rounded corners=2mm,outer sep=0] (t1) at (0, 0) {\begin{tabular}{l} $\init$ \end{tabular}};
			\node[draw, rounded corners=2mm,outer sep=0] (t2) at (0, -1.8) {\begin{tabular}{l} 
					$\rd{x}$ \\ $\wrt{z}{1}$ \\ $\wrt{y}{1}$ \\ \ldots
			\end{tabular}};
			\node[draw, rounded corners=2mm,outer sep=0] (t3) at (0, -4.5) {\begin{tabular}{l} 
					$\rd{y}$ \\ \pgfsetfillopacity{0.3}$\wrt{z}{2}$ \\ \pgfsetfillopacity{0.3}$\wrt{x}{2}$ \\ \ldots
			\end{tabular}};		
			
			\path (t1.south west) -- (t1.south) coordinate[pos=0.67] (t1sw);
			\path (t1.south east) -- (t1.south) coordinate[pos=0.67] (t1se);
			\path (t2.north east) -- (t2.north) coordinate[pos=0.67] (t2x);
			\path (t3.north west) -- (t3.north) coordinate[pos=0.67] (t3x);
			
			%\path (t3x) edge [above] node[yshift=8,xshift=0] {$\wro_x$} (t2.north east);
			\path (t1.south) edge [left] node {$\wro_x$} (t2.north);
			\path (t2.south) edge [left] node {$\wro_y$} (t3.north);
		\end{tikzpicture}  
			
	}
	\caption{History $\hist_{32}$.}
	\label{fig:ser-non-optimal:32}
	\end{subfigure}
	\begin{subfigure}[b]{.14\textwidth}
		\resizebox{.75\textwidth}{!}{
			\begin{tikzpicture}[->,>=stealth',shorten >=1pt,auto,node distance=3cm,
				semithick, transform shape]
				\node[draw, rounded corners=2mm,outer sep=0] (t1) at (0, 0) {\begin{tabular}{l} $\init$ \end{tabular}};
				\node[draw, rounded corners=2mm,outer sep=0] (t3) at (0, -1.8) {\begin{tabular}{l} 
					$\rd{y}$ \\ $\wrt{z}{2}$ \\ $\wrt{x}{2}$ \\ \ldots
				\end{tabular}};	
				\node[draw, rounded corners=2mm,outer sep=0] (t2) at (0, -4.5) {\begin{tabular}{l} 
					$\rd{x}$ \\ $\wrt{z}{1}$ \\ $\wrt{y}{1}$ \\ \ldots
				\end{tabular}};

				\path (t1.south west) -- (t1.south) coordinate[pos=0.67] (t1sw);
				\path (t1.south east) -- (t1.south) coordinate[pos=0.67] (t1se);
				\path (t2.north east) -- (t2.north) coordinate[pos=0.67] (t2x);
				\path (t3.north west) -- (t3.north) coordinate[pos=0.67] (t3x);
				
				%\path (t3x) edge [above] node[yshift=8,xshift=0] {$\wro_x$} (t2.north east);
				\path (t1.south) edge [left] node {$\wro_y$} (t3.north);
				\path (t3.south) edge [left] node {$\wro_x$} (t2.north);
			\end{tikzpicture}  
				
		}
		\caption{History $\hat{\hist}$.}
		\label{fig:ser-non-optimal:32}
		\end{subfigure}
\vspace{-2mm}
	\caption{A program and some partial histories. Events in grey are not yet added to the history. For $\hist_3$, $\hist_{31}$ and $\hist_{32}$, the number of events that follow $\ewrt{y,1}$ and $\ewrt{x,2}$ is not important (we use black $\ldots$  to signify that).}
	\label{fig:ser-non-optimal}
\end{figure}

\vspace{-0.5cm}

\begin{figure}[H]
	\resizebox{.5\textwidth}{!}{
		\begin{tikzpicture}[->,>=stealth',shorten >=1pt,auto,node distance=3cm,
			semithick, transform shape]
			\node[draw, rounded corners=2mm,outer sep=0] (h0) at (-3, -1.5) {$\emptyset$};
			\node[draw, rounded corners=2mm,outer sep=0] (h1) at (-1.5, -1.5) { $h_1$};
			\node[draw, rounded corners=2mm,outer sep=0] (h2) at (0, -1.5) {$h_2$};
			\node[draw, rounded corners=2mm,outer sep=0] (h3) at (1.5, -1.5) {$h_3$};
			\node[draw, rounded corners=2mm,outer sep=0] (h11) at (-1.5, -3) {$h_{11}$};
			\node[draw, rounded corners=2mm,outer sep=0] (h21) at (0, -3) { $h_{21}$};
			\node[draw, rounded corners=2mm,outer sep=0] (h31) at (1.5, -3) {$h_{31}$};
			\node[draw, rounded corners=2mm,outer sep=0] (h32) at (3, -3) {$h_{32}$};	
			\node[draw, red, rounded corners=2mm,outer sep=0] (h) at (0, -4.5) {$h$};
			\node[draw, rounded corners=2mm,outer sep=0] (h') at (3, -4.5) {$\hat{h}$};	
			\node [red] (x) at (3, -3.65) {$\bigtimes$};

			\path (h0.east) edge [below] node[right] {} (h1.west);
			\path (h1.east) edge [below] node[right] {} (h2.west);			
			\path (h2.east) edge [below] node[right] {} (h3.west);		
			\path (h1.south) edge [below] node[right] {} (h11.north);				
			\path (h2.south) edge [below] node[right] {} (h21.north);			
			\path (h3.south) edge [below] node[right] {} (h31.north);				
			\path (h3.south) edge [below] node[right] {} (h32.north);			
			\path (h11.south) edge [below, dashed] node[right] {} (h.north);			
			\path (h21.south) edge [below, dashed] node[right] {} (h.north);			
			\path (h31.south) edge [below, dashed] node[right] {} (h.north);
			\path (h32.south) edge [below, dashed] node[right] {} (h'.north);			
			
			% \path (t1.south west) -- (t1.south) coordinate[pos=0.67] (t1x);
			% \path (t1.south west) -- (t1.south) coordinate[pos=0.67] (t1x);
			% \path (t2.north west) -- (t2.north) coordinate[pos=0.67] (t2x);
			% \path (t3.north west) -- (t3.west) coordinate[pos=0.67] (t3x);
			
			% \path (t3.west) edge [below] node[right] {$\wro_x$} (t2.north east);
			% \path (t1.south) edge [above] node[left] {$\so \cap \wro_y$} (t2.north);
			% \path (t1.east) edge [above] node[above] {$\so$} (t3.west);
			% \path (t3.south) edge node[right] {$\wro_x$} (t4.north);
			% \path (t1.south east) edge node[below] {$\so$} (t4.north west);
		\end{tikzpicture}
		
	}
	\caption{Summary of all possible execution paths from $\textsc{explore}$. Black arrows represent alternative explored options depending on $\genericNext$ while dashed arrows are mandatory visited histories from such state.}
	\label{fig:execution-ser-si-impossibility}
\end{figure}
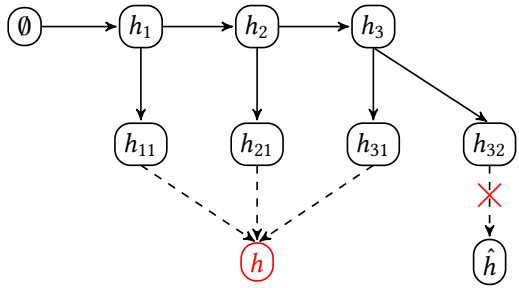

\impossibility*

\begin{proof}
\label{proof:impossibility}
	We consider the program in Figure~\ref{fig:ser-non-optimal:prog}, and show that any concrete instance of the $\textsc{explore}$ function in Algorithm~\ref{algorithm:algo-class} \emph{can not be both} $I$-complete and strongly optimal. This program contains two transactions, where only the first three instructions in each transaction are important.
	We show that if $\textsc{explore}$ is $I$-complete, then it will necessarily be called recursively on a history $\hist$ like in Figure~\ref{fig:ser-non-optimal:h} which does not satisfy $I$, thereby violating strong optimality. In the history $\hist$, both \textit{Snapshot Isolation} and \textit{Serializability} forbid the two reads reading initial values while the writes following them are also executed (committed). A diagram of the proof can be seen in Figure~\ref{fig:execution-ser-si-impossibility}.

	Assuming that the function $\genericNext$ is not itself blocking (which would violate strong optimality), \nver{the $\textsc{explore}$ will be called recursively on \emph{exactly one} of the two histories in Figure~\ref{fig:ser-non-optimal:1}, depending on which of the two reads is returned first by $\genericNext$. 
	We will continue our discussion with the history $\hist_1$ on the top of Figure~\ref{fig:ser-non-optimal:1}. The other case is similar (symmetric).}
	% \emph{on only one of the two histories} $\hist_1$ in Figure~\ref{fig:ser-non-optimal:1} and $\hist_2$ in Figure~\ref{fig:ser-non-optimal:2}, depending on the order defined by $\genericNext$ between $\ewrt{y,1}$ and $\ewrt{z,2}$ ($\ewrt{z,2}$ is returned by $\genericNext$ before $\ewrt{y,1}$ in $\hist_1$ and vice-versa in $\hist_2$).
	%We are actually assuming that $\genericNext$ prioritizes the transaction on the left in Figure~\ref{fig:ser-non-optimal:prog}, but this is without loss of generality because the program is symmetric. 
	
	\nver{From $\hist_1$, depending on order defined by $\genericNext$ between $\ewrt{z,1}$ and $\erd{y}$, $\textsc{explore}$ can be called recursively either on $\hist_{11}$ in Figure~\ref{fig:ser-non-optimal:11} or on $\hist_{2}$ in Figure~\ref{fig:ser-non-optimal:2}. Analogously, from $\hist_2$ two alternatives arise depending on the order defined by $\genericNext$ between $\erd{y}$ and the rest of events in the left transaction: exploring $\hist_{21}$ in Figure~\ref{fig:ser-non-optimal:21} if $\erd{y}$ is added before $\ewrt{y,1}$ or $\hist_3$ in Figure~\ref{fig:ser-non-optimal:3} otherwise. Thus, from $\hist_3$ two alternatives arise when added $\erd{y}$ depending on where it reads from: $\hist_{31}$ in Figure~\ref{fig:ser-non-optimal:31} if it reads from $\init$ and $\hist_{32}$ in~Figure~\ref{fig:ser-non-optimal:32} if it reads from the left transaction.
	}
	%(if there is exactly one event from the left transaction ($\ewrt{x,1}$) between $\erd{x}$ and $\erd{y}$), or on $\hist_{3}$ in Figure~\ref{fig:ser-non-optimal:3} (otherwise, where $\ewrt{y,1}$ is added before $\erd{y}$).

	%\nver{From $\hist_1$, depending on order defined by $\genericNext$ between $\erd{x}$ and $\erd{y}$, $\textsc{explore}$ can be called recursively either on $\hist_1$ (if $\erd{x}$ comes immediately before $\erd{y}$ in $\genericNext$ 's order), on $\hist_{2}$ in Figure~\ref{fig:ser-non-optimal:2} (if there is exactly one event from the left transaction ($\ewrt{x,1}$) between $\erd{x}$ and $\erd{y}$), or on $\hist_{3}$ in Figure~\ref{fig:ser-non-optimal:3} (otherwise, where $\ewrt{y,1}$ is added before $\erd{y}$).} 

	%\nver{From $\hist_1$ and $\hist_2$, $\textsc{explore}$ explores $\hist_{11}$ in Figure~\ref{fig:ser-non-optimal:11} and $\hist_{21}$ in Figure~\ref{fig:ser-non-optimal:21} respectively; while from $\hist_3$ two alternative histories may be explored: $\hist_{31}$ and $\hist_{32}$ in Figure~\ref{fig:ser-non-optimal:31} and Figure~\ref{fig:ser-non-optimal:32} respectively.} 

	\nver{However, from histories $\hist_{11}$, $\hist_{21}$ or $\hist_{31}$ $\textsc{explore}$ will necessarily be called recursively on a history $\hist$ like in Figure~\ref{fig:ser-non-optimal:h} which does not satisfy $I$, thereby violating strong optimality: $\textsc{explore}$ always explore branches that enlarge the current history. Thus, any $\textsc{explore}$ implementation that is strong optimal should only explore $\hist_{32}$. In such case, by the restrictions on the $\genericSwap$ function (defined in Section~\ref{sec:algs}), any extension of $\hist_{32}$ does not allow to explore the history $\hat{h}$ in Figure~\ref{fig:ser-non-optimal:3} where $\erd{x}$ reads from $\ewrt{x,2}$: any outcome of a re-ordering between two contiguous subsequences $\alpha$ and $\beta$ must be prefix of such extension when the events in $\alpha$ are taken out. In particular, for any extension $\hist'$ of $\hist_{32}$ and pair of contiguous sequences $\alpha, \beta$ such that $\hist' \setminus \alpha$ is a prefix of $\hist'$, if an event from the second transaction belongs to $\beta$, $\erd{y}$ must also be in $\beta$. Therefore, $\ewrt{x, 2}$ must be in $\beta$ as it is $\wro^{-1}(\erd{y})$. Hence, $\erd{x}$ must also be in $\beta$. Analogously, if $\erd{x}$ belongs to $\beta$, $\init$ belongs to it. Altogether, if $\beta$ contains any element, then $\alpha$ must be empty; so no swaps can be produced from $h_{32}$. To conclude, in this case $\textsc{explore}$ violates $I$-completeness.}

\end{proof}

    \oldver{We consider the program in Figure~\ref{fig:ser-non-optimal:prog}, and show that any concrete instance of the $\textsc{explore}$ function in Algorithm~\ref{algorithm:algo-class} \emph{can not be both} $I$-complete and strongly optimal. This program contains two transactions, where only the first three instructions in each transaction are important.
    We show that if $\textsc{explore}$ is $I$-complete, then it will necessarily be called recursively on a history $\hist$ like in Figure~\ref{fig:ser-non-optimal:221} which does not satisfy $I$, thereby violating strong optimality. In the history $\hist$, both \textit{Snapshot Isolation} and \textit{Serializability} forbid the two reads reading initial values while the writes following them are also executed (committed).}
    
    \oldver{Assuming that the function $\genericNext$ is not itself blocking (which would violate strong optimality), the $\textsc{explore}$ will be called recursively on \emph{exactly one} of the four histories in Figure~\ref{fig:ser-non-optimal:1} and Figure~\ref{fig:ser-non-optimal:2}, depending on which of the two reads is returned first by $\genericNext$ and the order defined by $\genericNext$ between the writes. 
    We will continue our discussion with the history $\hist_1$ on the top of Figure~\ref{fig:ser-non-optimal:1} and the history $\hist_2$ on the left of Figure~\ref{fig:ser-non-optimal:2}. The other cases are similar (symmetric).} 
    % \emph{on only one of the two histories} $\hist_1$ in Figure~\ref{fig:ser-non-optimal:1} and $\hist_2$ in Figure~\ref{fig:ser-non-optimal:2}, depending on the order defined by $\genericNext$ between $\ewrt{y,1}$ and $\ewrt{z,2}$ ($\ewrt{z,2}$ is returned by $\genericNext$ before $\ewrt{y,1}$ in $\hist_1$ and vice-versa in $\hist_2$).
    %We are actually assuming that $\genericNext$ prioritizes the transaction on the left in Figure~\ref{fig:ser-non-optimal:prog}, but this is without loss of generality because the program is symmetric. 
    
    \oldver{From $\hist_1$, $\textsc{explore}$ can be called recursively either on $\hist_{11}$ in Figure~\ref{fig:ser-non-optimal:11}, or on $\hist_{12}$ and $\hist_{12}'$ in Figure~\ref{fig:ser-non-optimal:12} and Figure~\ref{fig:ser-non-optimal:22}, depending on the order defined by $\genericNext$ between $\erd{y}$ and $\ewrt{y,1}$ ($\erd{y}$ is returned by $\genericNext$ before $\ewrt{y,1}$ in $\hist_{11}$ and vice-versa in $\hist_{12}$ and $\hist_{12}'$).} 
    
    \oldver{The histories $\hist_{12}$ and $\hist_{12}'$ differ in the read-from associated to $\erd{y}$, and exploring at least $\hist_{12}'$ is the best scenario towards ensuring $I$-completeness. 
    % the latter histories differ in the read-from associated to $\erd{y}$). 
    %Being called recursively on both $\hist_{12}$ and $\hist_{12}'$ is the best case scenario towards ensuring $I$-completeness. 
    If $\textsc{explore}$ is called recursively only on $\hist_{12}$, then $I$-completeness is violated because $\hist_{12}$ and any extension does not enable any re-ordering, and the history where $\erd{x}$ reads from $\ewrt{x,2}$ will never be explored. Thus, let $\hist_e$ be an extension of $h_{12}$. By the restrictions on the $\genericSwap$ function (defined in Section~\ref{sec:algs}), any outcome of a re-ordering between two subsequences $\alpha$ and $\beta$ must be prefix of $\hist_e$ when the events in $\alpha$ are taken out. Since the bottom transaction $t_2$ in $h_e$  reads from the top transaction $t_1$, removing any event different from $t_2$'s last event (the read of $y$) from $\hist_e$ will make the history a non-prefix. Then, having $\alpha$ be this last event is not possible since there is no other event after it to create a $\beta$ (as mentioned in Section~\ref{sec:algs}, the events in $\alpha$ should occur before those in $\beta$).}
        
    %Indeed, the two transactions in $\hist_{12}$ are related by $\wro$ and events can be re-ordered earlier only together with their $(\so\cup\wro)^*$ predecessors. 
    \oldver{From histories $\hist_{11}$ or $\hist_{12}'$, $\textsc{explore}$ will necessarily be called recursively on a history $\hist$ like in Figure~\ref{fig:ser-non-optimal:221} 
    %(the events that follow $\ewrt{y,1}$ and $\ewrt{x,2}$ are not important) 
    which does not satisfy $I$, thereby violating strong optimality.}
    
    \oldver{From $\hist_2$, $\textsc{explore}$ can be called recursively on $\hist_{21}$ in Figure~\ref{fig:ser-non-optimal:12} and $\hist_{21}'$ in Figure~\ref{fig:ser-non-optimal:22}. As explained above for $\hist_{12}=\hist_{21}$, being called recursively only on $\hist_{21}$ violates $I$-completeness, while being called recursively on $\hist_{21}' = \hist_{12}'$ leads to an inconsistent history, thereby violating strong optimality.}

	\newpage
	%!TEX root = main.tex
\section{Proof of Theorem~\ref{th:corr}}
\label{sec:proofs-algorithm}

\algoCorrect*

As explained in Section~\ref{ssec:corr}, $I$-soundness, the polynomial space bound, and the part of strong completeness that refers to not engaging in fruitless explorations follow directly from definitions. In the following, we focus on $I$-completeness and then optimality. For the sake of the proof's readability, we will omit all local states of the algorithm's definition during the proof. Therefore, we consider programs where we can describe all their events.

%\textcolor{red}{Introduction}
%
%
%\textcolor{red}{The proofs are written when no local state was added, as otherwise I have to assign to every history a local state (randomness is allowed? in that case not one but a set of local states...)}
%%As we are going to prove the completeness of algorithm \ref{algorithm:stmc}, the first property we want to prove is its equivalence with its recursive counterpart, \ref{algorithm:stmc2}. Here, line \ref{algorithm:stmc:inner_loop} refers to 
%\input{soundness}

%!TEX root = main.tex
\subsection{Completeness}

By definition, $\textsc{explore-ce}$ is $I$-complete if for any given program $\prog$, it outputs every history in $\histOf[I]{\prog}$. Let $h \in \histOf[I]{\prog}$. Our objective is to produce a computable path of ordered histories that lead to $h$ (i.e. a (finite) ordered collection of ordered histories such that $h_0 = \emptyset$ and for every $n$, if $e= \nextEvent(h_n)$, either $h_{n+1} = h_n \oplus e$, $h_{n+1} = h_n \oplus \wro(e, t)$ for some $t \in h_n$ or $h_{n+1} = \swap(h_n, r, t)$ for some $r, t \in h_n$).

However, the algorithm $\textsc{explore-ce}$ works with ordered histories. Therefore, we first have to furnish $h$ with a total order called \textit{canonical order} that, if $h$ were reachable, it would coincide with its history order.  Secondly, we describe a function $\prev$ defined over the set of all partial histories that, if $h$ is reachable, $\prev(h)$ returns the previous history of $h$ computed by $\textsc{explore-ce}$. Then, we prove that there exists a finite collection of histories $H = \{h_i\}_{i = 0}^n$ such that $h_n = h$, $h_0 = \emptyset$ and $h_i = \prev(h_{i+1})$. As it ends in the initial state, we can therefore prove that this collection conforms an actual computable path; which allow us to conclude that $h$ is reachable. Nevertheless, for proving both the equivalence between history order and canonical order and the soundness of function $\prev$ we will define the notion of \textit{$\ora$-respectfulness}, an invariant satisfied by every reachable history based on the events' relative positions in the oracle order.

%\textcolor{red}{Problem of this proofs: I assume optimality before proving it! \sout{ I don't know what wording we shall apply} (I tried to modify it, let's see if it is noticeable).}

\subsubsection{Canonical order} $ $\\

As mentioned, we need to formally define a total order for every history that coincide on reachable histories with the history order. For achieving it, we analyze how the algorithm orders transaction logs in a history. In particular, we observe that if two transactions $t, t'$ have a $(\so \cup \wro)^*$ dependency, the history order in the algorithm orders them analogously. But if they are $(\so \cup \wro)^*$-incomparable, the algorithm prioritizes the one that is read by a smaller $\iread$ event according $\ora$. Combining both arguments recursively we obtain a \textit{canonical order} for a history, which is formally defined with the function presented below. 

\begin{algorithm}[h]
	\caption{\textsc{Canonical order}}
	\begin{algorithmic}[1]
		
		\Statex
		\Procedure{\textsc{canonicalOrder}}{$h, t, t'$}
		\State \Return $t \ [\so \cup \wro]^* \ t' \lor$
		\State \qquad $(\lnot(t' \ [\so \cup \wro]^* \ t) \land \minimalDependency(h, t, t', \bot)$
		\EndProcedure
		
		\Procedure{\minimalDependency}{$h, t, t', e$}
		\Let $a = \min_{<_{\ora}} \dep(h, t, e)$; $a' = \min_{<_{\ora}} \dep(h, t', e)$
		\If{$a \neq a'$}
		\State \Return $a <_{\ora} a'$
		\Else
		\State \Return $\minimalDependency(h, t, t', a)$
		\EndIf
		\EndProcedure
		
		\Procedure{\dep}{$h, t, e$}
		\State \Return $\{r \ | \exists t' \text{ s.t. } t \ [\so \cup \wro]^* \ t' \land \ t' \ [\wro] \ r \land \trans{h}{r} \ [\so \cup \wro]^+ \trans{h}{e} \} \cup t$ 
		\EndProcedure	
	\end{algorithmic}
	\label{algorithm:canonical-order}
	%\caption{Generic method for exploring every possible history $h$ of a program $\mathcal{P}$ running under a database with $\mathcal{M}$ as isolation level.}
\end{algorithm}

The function $\textsc{canonicalOrder}$ produces a relation between transactions in a history, denoted $\leq^h$. In algorithm \ref{algorithm:canonical-order}'s description, we denote $\bot$ to represent the end of the program, which always exists, and that is $\so$-related with every single transaction.

Firstly, we prove our canonical order is well defined for every pair of transactions.

\begin{lemma}
	\label{lemma:dep_shrinks}
	For every history $h$, event $e$ and transaction $t$, $\dep(h, t, \min_{<_{\ora}} \dep(h, t, e)) \subseteq \dep(h, t, e)$. Moreover, if $\dep(h, t, e) \neq t$, the inclusion is strict.
	
\end{lemma}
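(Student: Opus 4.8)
The plan is to unfold the definition of $\dep$ and to track exactly how its third argument constrains membership. Writing $e' := \min_{<_{\ora}} \dep(h, t, e)$, the first observation is simply that $e'$ is an element of $\dep(h, t, e)$, so either $e' = t$ or $e'$ is some read event $r_0$ whose membership (via the set-builder, the only route by which a read enters $\dep$) supplies a transaction $t'$ with $t \ (\so \cup \wro)^* \ t'$, $t' \ \wro \ r_0$, and crucially $\trans{h}{r_0} \ (\so \cup \wro)^+ \ \trans{h}{e}$. Throughout I would fix the convention that $\trans{h}{t} = t$ for a transaction argument, treating $t$ (and the sentinel $\bot$) uniformly as elements carrying a position in $<_{\ora}$ and in $(\so \cup \wro)$.

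First I would isolate a monotonicity claim: for any event $f$ with $\trans{h}{f} \ (\so \cup \wro)^* \ \trans{h}{e}$, one has $\dep(h, t, f) \subseteq \dep(h, t, e)$. This is a one-line transitivity argument — any $r \in \dep(h,t,f)$ other than $t$ satisfies $\trans{h}{r} \ (\so \cup \wro)^+ \ \trans{h}{f}$, and composing with $\trans{h}{f} \ (\so \cup \wro)^* \ \trans{h}{e}$ gives $\trans{h}{r} \ (\so \cup \wro)^+ \ \trans{h}{e}$, while the witnessing $t'$ in the first two conjuncts is untouched; and $t$ lies in every $\dep(h,t,\cdot)$ by definition. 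The non-strict inclusion then follows by case split. If $e' = r_0$ is a read event, its membership in $\dep(h, t, e)$ is precisely the hypothesis $\trans{h}{r_0} \ (\so \cup \wro)^+ \ \trans{h}{e}$ of the claim, so $\dep(h, t, r_0) \subseteq \dep(h, t, e)$. If $e' = t$, I would instead compute directly that $\dep(h, t, t) = \{t\}$: any read $r$ in it would need both $t \ (\so \cup \wro)^+ \ \trans{h}{r}$ (from $t \ (\so \cup \wro)^* \ t' \ \wro \ r$) and $\trans{h}{r} \ (\so \cup \wro)^+ \ t$ (the third conjunct with the third argument equal to $t$), contradicting acyclicity of $\so \cup \wro$ from the definition of a history; and $\{t\} \subseteq \dep(h, t, e)$ always.

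For the strict inclusion under the hypothesis $\dep(h, t, e) \neq \{t\}$, I would exhibit a dropped element in each case. When $e' = t$, we get $\dep(h, t, e') = \{t\} \subsetneq \dep(h, t, e)$ immediately from the hypothesis. When $e' = r_0$ is a read event, $r_0 \in \dep(h, t, e)$ but $r_0 \notin \dep(h, t, r_0)$, because $r_0 \in \dep(h, t, r_0)$ would force $\trans{h}{r_0} \ (\so \cup \wro)^+ \ \trans{h}{r_0}$, again ruled out by acyclicity; this single witness gives strictness (and here $\dep(h,t,e) \supseteq \{t, r_0\} \neq \{t\}$ automatically, so the hypothesis is consistent with this branch).

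The genuine mathematical content is light: two appeals to acyclicity of $(\so \cup \wro)$ and one composition $(\so \cup \wro)^+ \circ (\so \cup \wro)^* = (\so \cup \wro)^+$. Consequently the part I expect to require the most care is not calculation but the mild type overloading in the definition of $\dep$, where a set of read events is unioned with the transaction $t$ and the third argument ranges over events, over $t$ itself, and over $\bot$. Settling the convention $\trans{h}{t} = t$ up front, and checking that the $e' = t$ and base cases behave degenerately (reducing to $\{t\}$), is the main thing to get right before the transitivity and acyclicity steps go through cleanly.
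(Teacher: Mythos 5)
Your proof is correct and follows essentially the same route as the paper's: unfold the membership condition for the minimum $e'$, compose the two $(\so \cup \wro)$ chains by transitivity to get the inclusion, and invoke acyclicity of $\so \cup \wro$ to show the minimum itself is dropped, giving strictness. The only difference is presentational — you factor the key step into an explicit monotonicity claim and treat the degenerate case $e' = t$ (where $\dep(h,t,t)=\{t\}$) separately, which the paper handles implicitly under ``$\dep(h,t,r')=t$ is trivial.''
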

\begin{proof}
	Let $r' = \min_{<_{\ora}} \dep(h, t, e)$. If $\dep(h, t, r') = t$ the lemma is trivially proved, so let's suppose there exists $r \in \dep(h, t,r') \setminus t$. Then, $\exists t' \text{ s.t. } t \ [\so \cup \wro]^* \ t' \land \ t' \ [\wro] \ r \land \trans{h}{r} \ [\so \cup \wro]^+ \trans{h}{r'}$ and $\exists t'' \text{ s.t. } t \ [\so \cup \wro]^* \ t'' \land \ t'' \ [\wro] \ r' \land \trans{h}{r'} \ [\so \cup \wro]^+ \trans{h}{e}$; so $\trans{h}{r} \ [\so \cup \wro]^+ \trans{h}{r'} \ [\so \cup \wro]^+ \trans{h}{e}$. In other words, $r \in \dep(h, t, e)$. The moreover comes trivially as $r' \not\in \dep(h, t, r')$.
\end{proof}

\begin{lemma}
	\label{lemma:minimalDependency-halts}
	For every pair of distinct transactions $t, t'$, $\minimalDependency(h,t,t',\bot)$ always halts.
\end{lemma}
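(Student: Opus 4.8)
The plan is to regard each recursive call to $\minimalDependency$ as merely replacing its third argument, and to exhibit a well-founded measure on this argument that strictly decreases whenever the recursion proceeds, so the recursion cannot continue forever. Concretely, since the function either halts (when $a\neq a'$) or recurses with third argument $a=\min_{<_{\ora}}\dep(h,t,e)$, I would trace the sequence $e_0=\bot,e_1,e_2,\dots$ of third arguments, where $e_{n+1}=\min_{<_{\ora}}\dep(h,t,e_n)$ on every step at which the guard $a=a'$ holds. The whole argument then rests on Lemma~\ref{lemma:dep_shrinks}, which I expect to do almost all the work.

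First I would set $S_n=\dep(h,t,e_n)$ and observe, directly from Lemma~\ref{lemma:dep_shrinks}, that $S_{n+1}=\dep(h,t,\min_{<_{\ora}}S_n)\subseteq S_n$, with the inclusion strict as long as $S_n\neq\{t\}$. Because every $S_n$ is a subset of the finite set $\readOp{h}\cup\{t\}$ (recall each $r\in\dep(h,t,e)$ other than $t$ is a read event of $h$, and histories are finite), a chain of strict inclusions can have only finitely many steps. Hence, if the recursion did not halt earlier, the sequence $S_0\supseteq S_1\supseteq\cdots$ must stabilize after finitely many steps, and by the strictness clause the stable value can only be $\{t\}$; so there is an index $N$ with $S_N=\{t\}$ reached while the recursion is still live (that is, every preceding step satisfied $a=a'$).

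The second step is to show that the step at argument $e_N$ must then halt, which is the one place where Lemma~\ref{lemma:dep_shrinks} no longer guarantees progress and where the hypothesis $t\neq t'$ is used. Since $S_N=\dep(h,t,e_N)=\{t\}$, its $<_{\ora}$-minimum is $a=t$. On the other side, $a'=\min_{<_{\ora}}\dep(h,t',e_N)$ is an element of $\dep(h,t',e_N)$, whose members are read events together with the single transaction $t'$; as $t\neq t'$ and $t$ is not a read event, we get $t\notin\dep(h,t',e_N)$, hence $a'\neq t=a$. Thus the guard $a\neq a'$ fires at $e_N$ and the call returns, contradicting the assumption of an infinite recursion. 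Therefore $\minimalDependency(h,t,t',\bot)$ halts.

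The main obstacle is precisely this boundary case: Lemma~\ref{lemma:dep_shrinks} yields strict shrinkage only while $S_n\neq\{t\}$, so termination cannot be read off from a single monotone measure throughout the recursion. The careful point is to combine the finite strictly decreasing chain (which drives $S_n$ down to the singleton $\{t\}$) with the separate distinctness argument that forces $a\neq a'$ exactly when the measure can no longer decrease. A minor auxiliary check I would also include is that each $\dep(h,\cdot,\cdot)$ is nonempty, since it always contains the corresponding transaction, so that the $<_{\ora}$-minima, and hence the comparisons of $a$ against $a'$, are well defined at every step.
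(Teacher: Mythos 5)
Your proof is correct and takes essentially the same route as the paper's: both rely on Lemma~\ref{lemma:dep_shrinks} to turn the sequence of third arguments into a strictly decreasing chain of finite $\dep$-sets that must collapse onto the transaction itself, and then use the distinctness of $t$ and $t'$ to finish. The only cosmetic difference is the last step --- the paper waits for both $\dep(h,t,\cdot)$ and $\dep(h,t',\cdot)$ to collapse and derives the contradiction that some $e_k$ lies in $t\cap t'$, whereas you stop as soon as $\dep(h,t,e_N)$ collapses and argue directly that the guard $a\neq a'$ must then fire.
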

\begin{proof}
	Let's suppose by contrapositive that $\minimalDependency(h,t, t',\bot)$ does not halt. Therefore, there would exist an infinite chain of events $e_n, n \in \mathbb{N}$ such that $e_0 = \bot, e_{n+1} = \min_{\ora}\dep(h, t, e_{n}) = \min_{\ora}\dep(h, t', e_{n})$. Firstly, as $h$ is finite, so are both $\dep(h, t, e_{n})$ and $\dep(h, t', e_{n})$. Moreover, if $e_n \not \in t$, $\dep(h, t, e_{n+1}) \subsetneq \dep(h, t, e_{n})$ (and analogously for $t'$). Therefore, there exist some indexes $n_0, m_0$ such that $e_{n_0} \in t$ and $e_{m_0} \in t'$. Let $k = \max\{n_0, m_0\}$. Because ; but if $e_n \in t$, $t = \dep(h, t, e_n)$ and $e_{n+1} = e_n$, so $e_k = e_{n_0}$ and $e_k = e_{m_0}$. Therefore $e_k \in t \cap t'$; so $t = t'$ as transaction logs do not share events; which contradict the assumptions.
\end{proof}

\begin{corollary}
	The relation $\leq^h$ is well defined for every pair of transactions.
\end{corollary}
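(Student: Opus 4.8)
The plan is to read ``well defined'' as the statement that, for every pair of transactions $t, t'$ of $h$, the procedure $\textsc{canonicalOrder}(h, t, t')$ terminates and returns a definite truth value, so that $t \leq^h t'$ is determined for each pair. Since the only potentially non-terminating ingredient of $\textsc{canonicalOrder}$ is the recursive call to $\minimalDependency$, the whole argument reduces to applying Lemma~\ref{lemma:minimalDependency-halts}, which is exactly why this is stated as a corollary.

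First I would observe that the two comparability tests $t\ (\so \cup \wro)^*\ t'$ and $t'\ (\so \cup \wro)^*\ t$ appearing in $\textsc{canonicalOrder}$ are decidable: the history $h$ is finite, so reachability in the relation $\so \cup \wro$ over the finitely many transactions is computable. Next I would trace the short-circuit structure of the defining disjunction $t\ (\so \cup \wro)^*\ t' \lor (\lnot(t'\ (\so \cup \wro)^*\ t) \land \minimalDependency(h, t, t', \bot))$ to pin down exactly when $\minimalDependency$ is evaluated: it is reached only when the first disjunct fails, i.e.\ $\lnot(t\ (\so \cup \wro)^*\ t')$, and when its guard $\lnot(t'\ (\so \cup \wro)^*\ t)$ holds. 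Hence $\minimalDependency(h,t,t',\bot)$ is invoked precisely when $t$ and $t'$ are $(\so \cup \wro)^*$-incomparable, which in particular forces $t \neq t'$ (by reflexivity of $(\so \cup \wro)^*$ a transaction is always comparable to itself).

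With the distinctness of $t$ and $t'$ established on the branch where it matters, I would then apply Lemma~\ref{lemma:minimalDependency-halts} directly: $\minimalDependency(h,t,t',\bot)$ halts and, by its definition, returns $a <_{\ora} a'$ where $a$ and $a'$ are the $<_{\ora}$-minimal elements of the respective $\dep$-sets at the terminating step. The recursion stops exactly when $a \neq a'$, and since $<_{\ora}$ is a total order, $a <_{\ora} a'$ evaluates to a definite Boolean. Combining this with the decidability of the comparability tests shows that $\textsc{canonicalOrder}(h,t,t')$ always returns a value, so $\leq^h$ is defined on every pair.

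The main (and essentially only) obstacle is a bookkeeping one rather than a mathematical one: I must make sure that $\minimalDependency$ is never silently invoked on identical transactions, a case Lemma~\ref{lemma:minimalDependency-halts} does not cover. This is handled entirely by the guard analysis above --- the reflexivity of $(\so \cup \wro)^*$ guarantees that whenever the procedure descends into $\minimalDependency$ the two transactions are incomparable and hence distinct. All the genuine termination content was already discharged in Lemmas~\ref{lemma:dep_shrinks} and~\ref{lemma:minimalDependency-halts}, so nothing further is required.
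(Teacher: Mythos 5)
Your proposal is correct and follows essentially the same route as the paper: the corollary is reduced to Lemma~\ref{lemma:minimalDependency-halts}, with the only extra care being the observation that $\minimalDependency$ is reached only on the branch where $t$ and $t'$ are $(\so\cup\wro)^*$-incomparable and hence distinct. The paper's own proof is a one-line application of that lemma and leaves this guard analysis implicit, so your version is just a slightly more explicit rendering of the same argument.
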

\begin{proof}
	As by lemma \ref{lemma:minimalDependency-halts}, we know that $\minimalDependency(h, t, t', \bot)$ always halts if $t \neq t'$; it is clear that $\textsc{canonicalOrder}(h, t, t')$ also does it. Therefore, the relation is well defined.
\end{proof}

Now that $\leq^h$ has been proved a well defined relation between each pair of transactions, let us prove that it is indeed a total order.

\begin{lemma}
	\label{lemma:canonincal-total-order}
	The relation $\leq^h$ is a total order.
\end{lemma}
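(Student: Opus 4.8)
The plan is to verify the three order axioms — reflexivity, totality together with antisymmetry, and transitivity — in turn, reusing Lemma~\ref{lemma:minimalDependency-halts} (halting) and Lemma~\ref{lemma:dep_shrinks} (strict shrinking of $\dep$). Reflexivity is immediate: $t\ (\so\cup\wro)^*\ t$ always holds, so the first disjunct of $\textsc{canonicalOrder}(h,t,t)$ is true and $t\leq^h t$. The corollary above already established that $\leq^h$ is well defined on every pair, so it remains to treat distinct $t\neq t'$. Throughout I use that, since $\hist$ is a history, $\so\cup\wro$ is acyclic, hence $(\so\cup\wro)^+$ is a strict partial order and $(\so\cup\wro)^*$ a partial order.

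For totality and antisymmetry on a distinct pair $t\neq t'$ I would split on $(\so\cup\wro)^*$-comparability. If $t\ (\so\cup\wro)^*\ t'$, then $t\leq^h t'$ by the first disjunct; moreover $t'\ (\so\cup\wro)^*\ t$ is impossible by acyclicity, and the conjunct $\lnot(t\ (\so\cup\wro)^*\ t')$ inside the second disjunct of $\textsc{canonicalOrder}(h,t',t)$ fails, so $t'\not\leq^h t$; exactly one direction holds (symmetrically if $t'\ (\so\cup\wro)^*\ t$). If $t,t'$ are incomparable, then for both orientations the first disjunct is false and the $\lnot(\cdot)$ conjunct is true, so $t\leq^h t'$ iff $\minimalDependency(h,t,t',\bot)$ and $t'\leq^h t$ iff $\minimalDependency(h,t',t,\bot)$. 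By Lemma~\ref{lemma:minimalDependency-halts} both calls halt. The key observation is that the two calls generate the \emph{same} chain $e_0=\bot,\ e_{n+1}=\min_{<_{\ora}}\dep(h,t,e_n)=\min_{<_{\ora}}\dep(h,t',e_n)$ as long as the minima agree, and at the first level where they differ the two calls return $\min_{<_{\ora}}\dep(h,t,e_n)<_{\ora}\min_{<_{\ora}}\dep(h,t',e_n)$ and its converse respectively. Since $<_{\ora}$ is a strict total order on events, exactly one of these holds, which gives both totality and antisymmetry for incomparable pairs.

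To prepare for transitivity I would recast the $\minimalDependency$ test as a lexicographic comparison of a per-transaction key. Define $f_0(t)=\bot$ and $f_{n+1}(t)=\min_{<_{\ora}}\dep(h,t,f_n(t))$, and set $S(t)=(f_1(t),f_2(t),\dots)$. By Lemma~\ref{lemma:dep_shrinks} the sets $\dep(h,t,f_n(t))$ strictly shrink until they equal $\{t\}$, after which $f_n(t)$ is constant; hence $S(t)$ is finite/eventually stationary and depends only on $t$. An induction on recursion depth then shows that while the chains agree the current event $e_n$ equals $f_n(t)=f_n(t')$, so that $\minimalDependency(h,t,t',\bot)$ holds iff $S(t)<_{\mathrm{lex}}S(t')$, where $<_{\mathrm{lex}}$ is the lexicographic order induced by $<_{\ora}$. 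Thus on incomparable pairs $\leq^h$ coincides with a strict lexicographic comparison of signatures, and on comparable pairs with $(\so\cup\wro)^*$.

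Transitivity is the main obstacle, and the plan is to reduce it, via this reformulation, to transitivity of $<_{\mathrm{lex}}$ plus one \emph{monotonicity} lemma: if $t\ (\so\cup\wro)^+\ t'$ then $S(t)<_{\mathrm{lex}}S(t')$. Granting it, given $t_1\leq^h t_2\leq^h t_3$ I case on which disjunct justifies each step. If both are $(\so\cup\wro)^*$ steps, transitivity of $(\so\cup\wro)^*$ closes the case. In every mixed or doubly-lexicographic case, either $t_1\ (\so\cup\wro)^*\ t_3$ (done) or, using acyclicity and the monotonicity lemma to exclude $t_3\ (\so\cup\wro)^*\ t_1$, the pair $t_1,t_3$ is incomparable; in the latter case the monotonicity lemma turns each $(\so\cup\wro)^*$ step into a weak $<_{\mathrm{lex}}$ inequality on signatures and each $\minimalDependency$ step into a strict one, so $S(t_1)<_{\mathrm{lex}}S(t_3)$ follows by transitivity of $<_{\mathrm{lex}}$, i.e.\ $t_1\leq^h t_3$. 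The delicate point is the monotonicity lemma itself: moving backward along $\so\cup\wro$ only enlarges the reachable set $\{s : t\ (\so\cup\wro)^*\ s\}$, hence enlarges the read-sets $\dep(h,t,e)$, but one must track this enlargement level by level through the recursion and argue, using that $<_{\ora}$ is consistent with $\so$, that the first level at which $S(t)$ and $S(t')$ diverge is decided in favour of $t$. I expect this lemma to carry the bulk of the technical work, while the remaining axioms fall out of acyclicity, Lemma~\ref{lemma:minimalDependency-halts}, and the lexicographic reformulation.
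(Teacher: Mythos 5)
Your treatment of reflexivity, totality, and antisymmetry is correct and essentially the paper's argument, and your reformulation of $\minimalDependency$ as a lexicographic comparison of the signatures $S(t)$ is an accurate repackaging of the recursion. The gap is in transitivity: the monotonicity lemma you defer to --- $t\ (\so\cup\wro)^+\ t'$ implies $S(t)<_{\mathrm{lex}}S(t')$ --- is not merely delicate, it is false. Your enlargement argument controls only the read-part of $\dep$; it says nothing about the ``$\cup\ t$'' term, and that term can win the minimum. Concretely, let $t'$ sit in a session the oracle orders first and $t$ in a session ordered second, with $t'$ containing a read $r$ that reads from $t$ (so $t\ \wro\ t'$ and hence $t\ (\so\cup\wro)^+\ t'$, while every event of $t'$ is $<_{\ora}$-before every event of $t$), and suppose nothing reads from $t'$. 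Then $\dep(h,t',\bot)$ collapses to the events of $t'$, so $f_1(t')$ is the $\ebegin$ event of $t'$, whereas $r\in\dep(h,t,\bot)$ forces $f_1(t)=r$; since $\ebegin$ of $t'$ precedes $r$ in $\po$ and hence in $<_{\ora}$, we get $S(t')<_{\mathrm{lex}}S(t)$ --- the opposite of what you need. A causal step therefore need not be a weak lex step, and the reduction of transitivity of $\leq^h$ to transitivity of $<_{\mathrm{lex}}$ does not go through.

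The paper's proof is built precisely around this failure. It runs the three chains $e^1_n,e^2_n,e^3_n$ simultaneously and uses the causal hypothesis $t_3\ (\so\cup\wro)^*\ t_1$ only in the guarded form ``if $e^1_n\notin t_1$ then $e^3_n\leq_{\ora}e^1_n$'', i.e., the subset argument is invoked only while the minimum of $\dep(h,t_1,\cdot)$ is a read rather than an event of $t_1$ itself. At the first index $n_0$ where the chain bottoms out in $t_1$ (exactly the situation of the counterexample above), the paper switches arguments entirely: the induction still gives $e^2_{n_0}=e^1_{n_0}\in t_1$, which forces $t_2\ (\so\cup\wro)^*\ t_1$ and contradicts $t_1\leq^h t_2$ via the second disjunct of $\textsc{canonicalOrder}$, instead of extracting a lex inequality. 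To salvage your route you would have to weaken your monotonicity lemma so that it holds only up to the level at which the successor's chain reaches its own transaction, and supply a separate causality/antisymmetry argument for that terminal level --- at which point you have reconstructed the paper's three-chain induction.
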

\begin{proof}
	$ $
	\begin{itemize}
		\item \underline{Strongly connection} Let $t_1, t_2$ s.t. $t_1 \not \leq^h t_2$. If $t_2 \ [\so \cup \wro]^* t_1$, then $t_2 \leq^h t_1$. Otherwise, as $\lnot(t_1 \ [\so \cup \wro]^* \ t_2)$ and $\minimalDependency$ halts (lemma \ref{lemma:minimalDependency-halts}) either \linebreak $\minimalDependency(h, t_1, t_2, \bot)$ or $\minimalDependency(h,t_2, t_1, \bot)$ holds. But as $t_1 \not\leq^h t_2$, $t_2 \leq^h t_1$.
		\item \underline{Reflexivity:} By definition, for every $t$, $t \leq^h t$.
		\item \underline{Transitivity:} Let $t_1, t_2, t_3$ three distinct transactions such that $t_1 \leq^h t_2 $ and $t_2 \leq^h t_3$. Clearly, if $t_1 \ [\so \cup \wro]^* \ t_3$, $t_1 \leq^h t_3$. However, if $t_3 \ [\so \cup \wro]^* \ t_1$, we would find one of the following three scenarios:
		\begin{itemize}
			\item $t_1 \ [\so \cup \wro]^* \ t_2$, which is impossible by strong connectivity as that would mean $t_3 \leq^h t_2$.
			\item $t_2 \ [\so \cup \wro]^* \ t_3$, which is also impossible by strong connectivity, as $t_2 \leq^h t_1$.
			\item $\lnot(t_1 \ [\so \cup \wro]^* \ t_2)$ and $\lnot(t_2 \ [\so \cup \wro]^* \ t_3)$. Then, let us call $e^i_0 = \bot$ and $e^i_{n+1} = \min_{<_{\ora}}\dep(h, t_i, e^i_{n})$ for $i \in \{1,2, 3\}$. Let's prove by induction that if for every $k < n$ $e^1_n \not\in t^1$, then $e^1_n = e^{2}_n = e^3_n$. Clearly this hold for $n = 0$ and, assuming it holds for every $k \leq n-1$, as $t_1 \leq^h t_2$, $t_2 \leq^h t_3$, we know $e^1_n \leq_{\ora} e^2_n \leq_{\ora} e^3_n$ and as $t^3 \ [\so \cup \wro]^* \ t^1$, if $e^1_n \not\in t^1$, $ e^3_n \leq_{\ora} e^1_n$. In other words, they coincide. However, by lemma \ref{lemma:minimalDependency-halts}, we know $\minimalDependency(h, t^1, t^3, \bot)$ halts, so there exists some minimal $n_0$ such that $e^1_{n_0} \in t^1$; so $e^2_{n_0} \in t_1$. That implies $t^2 \ [\so \cup \wro]^* \ t_1$; which is impossible as $t_1 \leq^h t_2$.
		\end{itemize}
		
		We deduce then that either $t_1 \ [\so \cup \wro]^* \ t_3$ or $\lnot(t_3 \ [\so \cup \wro]^* \ t_1)$. In the latter case, let's take the sequence $e^i_n$, $i \in \{1,2,3\}$ defined in the last paragraph. Then, as by lemma \ref{lemma:minimalDependency-halts} $\minimalDependency(h, t_1, t_3, \bot)$ halts, there exists a maximum index $n_0$ such that $e^1_{n_0} = e^2_{n_0} = e^3_{n_0}$. Then $e^1_{n_0 + 1} <_{\ora} e^2_{n_0+1}$ or $e^2_{n_0+1} <_{\ora} e^3_{n_0}$; so $t_1 \leq^h t_3$.
		
		\item \underline{Antisymmetric} Let $t_1, t_2$ s.t. $t_1 \leq^h t_2$ and $t_2 \leq^h t_1$. If $t_1 \ [\so \cup \wro]^* t_2$, then $t_1 = t_2$. If not, by the symmetric argument, $\lnot(t_2 \ [\so \cup \wro]^* t_1)$. In that situation, by lemma \ref{lemma:minimalDependency-halts} we know both $\minimalDependency(h,t_1,t_2,\bot)$ and $\minimalDependency(h,t_1,t_2,\bot)$ halt and cannot be satisfied at the same time. This contradicts that both $t_1 \leq^h t_2$ and $t_2 \leq^h t_1$ hold; so $t_1 = t_2$.		
	\end{itemize}
\end{proof}

\subsubsection{Oracle-respectful histories} $ $\\

The second step in this proof is characterizing all reachable histories with some general invariant that can be generalized to every total history. For doing so, we will show that for reachable histories any history order coincide with its canonical order; so any property based on a history order can be generalized to be based on its canonical order.

\begin{definition}
\label{def:oracle-respectful}
An ordered history $(h, \leq)$ is \callout{$\ora$-respectful} with respect to $\leq$ if it has at most one pending transaction log and for every pair of events $e \in \prog, e' \in h$ s.t. $e \leq_{\ora} e'$, either $e \leq e'$ or $\exists e'' \in h, \trans{h}{e''} \leq_{\ora} \trans{h}{e}$ s.t. $\trans{h}{e'} \ [\so \cup \wro]^* \ \trans{h}{e''}$, $e'' \leq e$ and $\swapped{h}{e''}$; where if $e \not\in h$ we state $e' \leq e$ always hold but $e \leq e'$ never does. We will denote it by $\oraRespectful{h}{\leq}$. %\textcolor{olive}{REVISAR EL CAMBIO DEL $e'' < e$ -> $T'' < T$.}
\end{definition}

\begin{lemma}
\label{lemma:reachable-or-respectful}
Let $p$ a computable path. Every ordered history $(h, \leq_h)$ in $p$ is $\ora$-respectful with respect to $\leq_h$.
\end{lemma}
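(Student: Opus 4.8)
The plan is to prove the statement by induction on the length of the computable path $p = h_0, h_1, \ldots, h_m$, establishing that if $(h_n, \leq_{h_n})$ is $\ora$-respectful then so is $(h_{n+1}, \leq_{h_{n+1}})$. The base case $h_0 = \emptyset$ holds vacuously, since there are no events and no pending transactions. For the inductive step I would distinguish the transition types allowed in a computable path: a \emph{forward extension}, where $h_{n+1}$ is obtained from $h_n$ by appending $e = \nextEvent(h_n)$ as the $\leq_{h_{n+1}}$-maximum (followed, when $e$ is a read, by an edge $\wro(t,e)$), and a \emph{swap} $h_{n+1} = \swap(h_n, r, t)$. In each case I must re-establish the two conjuncts of Definition~\ref{def:oracle-respectful}: (i) at most one pending transaction, and (ii) for every $a \in \prog$, $b \in h_{n+1}$ with $a <_{\ora} b$, either $a \leq_{h_{n+1}} b$, or there is a witness $e''$ with $\trans{h_{n+1}}{e''} \leq_{\ora} \trans{h_{n+1}}{a}$, $\trans{h_{n+1}}{b} \ (\so \cup \wro)^* \ \trans{h_{n+1}}{e''}$, $e'' \leq_{h_{n+1}} a$, and $\swapped{h_{n+1}}{e''}$.

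Conjunct (i) is the routine part. For a forward extension, $\genericNext$ prioritises completing the unique pending transaction guaranteed by the induction hypothesis and only emits a $\ebegin$ (spawning a new pending transaction) when no transaction is pending, so at most one pending transaction survives. For a swap I would invoke the property observed when $\genericSwap$ was defined in Section~\ref{subsection:SwappingHistories}: after deleting the set $D$ and reappending $\trans{h_{n+1}}{r}$, the only pending transaction is the one containing $r$, because the kept transactions other than it form a prefix of $h_n$ and are therefore complete.

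Conjunct (ii) is the crux. For a forward extension the interesting pairs $(a,b)$ are those involving $e$. If $b = e$, then $a \leq_{h_{n+1}} e$ for every older $a$ since $e$ is maximal, and the first disjunct holds. If $a = e$, i.e.\ $a \notin h_n$, then the induction hypothesis applied to $(e,b)$ in $h_n$ (where, by the convention of Definition~\ref{def:oracle-respectful}, $e \leq_{h_n} b$ never holds) already supplies a witness $e'' \in h_n$. I would then show this same $e''$ works in $h_{n+1}$: the oracle relation $\trans{h_{n+1}}{e''} \leq_{\ora} \trans{h_{n+1}}{e}$ is fixed, $e'' \leq_{h_{n+1}} e$ holds by maximality of $e$, and the causal relation $\trans{h_{n+1}}{b} \ (\so \cup \wro)^* \ \trans{h_{n+1}}{e''}$ is unchanged, since appending $e$ or an edge $\wro(t,e)$ into the current transaction creates no $(\so \cup \wro)$ relation among older transactions. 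The one genuine obligation is the preservation lemma $\swapped{h_n}{e''} \Rightarrow \swapped{h_{n+1}}{e''}$, which I would verify by checking the universally quantified clauses of $\swapped$ against the at most one new or extended transaction, again using maximality of the appended event.

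The swap step $h_{n+1} = \swap(h_n, r, t)$ is where I expect the main difficulty. Relocating the write transaction $t$ (with $t >_{\ora} r$ and $t$ a writer of $\variable{r}$) to precede $r$ in $\leq_{h_{n+1}}$ and redirecting $\wro(t,r)$ makes $r$ swapped in $h_{n+1}$, and $r$ is exactly the witness explaining every oracle-inversion freshly introduced by the reordering. For pairs of surviving events the witnesses should be inherited from the induction hypothesis, and the delicate obligations are (a) that no inherited witness lies in the deleted set $D$, and (b) that reappending $\trans{h_{n+1}}{r}$ as the $\leq_{h_{n+1}}$-maximum invalidates neither a witness nor the consistency of $\leq_{h_{n+1}}$ with $\so \cup \wro$. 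These are precisely the points secured by the side conditions of $\genericProtocol$ and $\swapped$: the requirement that every read in $D \cup \{r\}$ is not already swapped and reads from a causally-latest write, together with conditions (2) and (3) in the definition of $\swapped$ that rule out spurious swapped-classifications, ensures both that $D$ carries no needed witness and that $r$ is the unambiguous new witness. Proving that $D$ never removes a witness and that the reordered $\leq_{h_{n+1}}$ remains an extension of $\so \cup \wro$ is the step I anticipate will consume most of the argument.
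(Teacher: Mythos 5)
Your skeleton --- induction along the computable path, a case split on whether the step is a forward extension or a swap, a preservation argument for $\swapped{\cdot}{\cdot}$, and the role of $\genericProtocol$ in protecting witnesses during a swap --- is the same as the paper's. The genuine gap is in the forward-extension case, in the sub-case you dispose of in one clause: for a pair with $b = e$ (the freshly appended event is the $\ora$-larger element) you assert that ``$a \leq_{h_{n+1}} e$ for every older $a$ since $e$ is maximal, and the first disjunct holds.'' That is only true when $a \in h_{n+1}$. Definition~\ref{def:oracle-respectful} lets the first element of the pair range over all of $\prog$, and its convention stipulates that $a \leq e$ \emph{never} holds when $a \notin h$; so for $a \in \prog \setminus h_{n+1}$ with $a <_{\ora} e$ the first disjunct is unavailable and you must exhibit a witness. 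This sub-case is where the lemma's actual content lies, and it is the only place the specific behaviour of $\genericNext$ enters. The paper closes it as follows: when $e$ is a $\ebegin$ event, no such $a$ exists because $\genericNext$ opens a new transaction only when none is pending and picks the $\ora$-minimal one not yet started; when $e$ is a later event of the pending transaction $t$, one has $a \leq_{\ora} b$ where $b$ is the $\ebegin$ event of $t$ and $b \in h_n$, so the induction hypothesis applied to the pair $(a,b)$ yields a witness $c$ with $(t, \trans{h_n}{c}) \in (\so\cup\wro)^*$ and $\swapped{h_n}{c}$, and since $\trans{h_{n+1}}{e} = \trans{h_{n+1}}{b} = t$ the same $c$ witnesses the pair $(a,e)$ in $h_{n+1}$. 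Without this argument the inductive invariant is not re-established for forward extensions.

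A secondary remark: for the swap case you correctly enumerate the obligations but discharge none of them, whereas they are the bulk of the work. In particular, to let $r$ serve as the witness for a pair whose $\ora$-smaller element $e$ was deleted, you need $\trans{h_{n+1}}{r} \leq_{\ora} \trans{h_{n+1}}{e}$; the paper obtains this by showing that every deleted event is $\ora$-after $r$ (otherwise $\ora$-respectfulness of $h_n$ would force a swapped causal successor of $\trans{h_n}{r}$ ordered before $r$, contradicting that $\trans{h_n}{r}$ is the unique pending transaction whose maximal event is $r$). That step, and the fact that $\genericProtocol$ forbids deleting swapped reads (so inherited witnesses survive), should be made explicit rather than anticipated.
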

\begin{proof}
	
We will prove this property by induction on the number of histories this path has. The base case, the empty path, trivially holds; so let us prove the inductive case: for every path of at most length $n$ the property holds. Let $p$ a path of length $n+1$ and $h_<$ the last reachable history of this path. As $p \setminus \{h\}$ is a computable path of length $n$, the immediate predecessor of $h$ in $p$, $(h_p,{<_{h_p}})$ is $\ora$-respectful with respect to $<_p$. Let $a = \nextEvent(h_p)$.

Firstly, if $a$ is not a $\iread$ nor a $\ibegin$ event and $h = h_p \oplus a$, as $\leq_h$ is an extension of $\leq_{h_p}$, $a$ belongs to the only pending transaction and $\ora$ orders transactions completely, we can deduce that $h$ is $\ora$-respectful with respect to $\leq$. 

In addition, if $a$ is a $\ibegin$ event and $h = h_p \oplus a$, let $e \in \prog, e' \in h$ s.t. $e <_{\ora} e'$. If $e \in h_p$ or $e' \neq a$, as $\leq_{h}$ is an extension of $\leq_{h_p}$ and $\oraRespectful{h_p}{\leq_{h_p}}$ holds, the condition for satisfying $\oraRespectful{h}{\leq}$ holds with $e$ and $e'$. Moreover, as $a = \min_{\ora} \prog \setminus h_p$, there is no event $e \in \prog \setminus h_p$ s.t. $e \leq_{\ora} a$; so $\oraRespectful{h}{\leq}$ holds.

Moreover, if $a$ is a $\iread$ event and $h = h_p \oplus \wro(t, a)$ for some transaction log $t$, let us call $e \in \prog, e' \in h$ s.t. $e <_{\ora} e'$. Once again, if $e \in h$ or $e' \neq a$ the property holds; so let's suppose $e \in \prog \setminus h_p$ and $e' = a$. Let $b = \ibegin(\trans{h}{a})$, that also belongs to $h_p$. Firstly, as $e \leq_{\ora} \trans{h}{e'} = \trans{h}{b}$ we know that $e \leq_{\ora} b$. Secondly, as $\oraRespectful{h_p}{\leq_{h_p}}$, $e \not\in h_p$ and $e \leq_{\ora} b$; there exists $c \in h_p$, $\trans{h_p}{c} \leq_{\ora} \trans{h_p}{a}$ s.t. $(\trans{h_p}{b}, \trans{h_p}{c}) \in (\so \cup \wro)^*$, $c \leq b$ and $\swapped{(h_p,{<_{h_p}})}{c}$. As $\trans{h}{a} = \trans{h}{b}$ and $\swapped{(h_p,{<_{h_p}})}{c}$ implies $\swapped{h_<}{c}$, we conclude $\oraRespectful{h}{\leq}$.

But if no previous case is satisfied, it is because $h = \swap((h_p,{<_{h_p}}), r, t)$ for some $r, t \in h_p$ s.t. $\genericProtocol((h_p,{<_{h_p}}), r, t)$ holds. Let $e, e'$ two events s.t. $e \leq_{\ora} e'$. On one hand, if $e \leq e'$, $\oraRespectful{h}{e}$ holds. On the other hand, if $e' < e$ and $e' \leq_{h_p} e$, as $\oraRespectful{h_p}{\leq_{h_p}}$ holds and no swapped event is deleted by $\genericProtocol((h_p,{<_{h_p}}), r, t)$'s definition, the property is also satisfied. Finally, if \oldver{$e < e'$} \nver{$e' < e$} and $e \leq_{h_p} e'$, $e$ has to be a deleted event so $e \in \prog \setminus h$. As $r \leq_{h_p} e$, if $e \leq_{\ora} a$, \nver{as $e \not\leq a$,} there would exist a $c \in h_p$, $\trans{h_p}{c} \leq_{\ora} \trans{h_p}{e} \leq_{\ora} \trans{h_p}{r}$ s.t. $(\trans{h_p}{r}, \trans{h_p}{c}) \in (\so \cup \wro)^*$ and $\swapped{h_<}{c}$. However, \nver{this is impossible as $\trans{h_<}{r}$ has as maximal event $r$ and the algorithm preserves at most one pending transaction}\oldver{ contradicts $\genericProtocol((h_p,{<_{h_p}}), r, t)$ as $a$ must be in the last transaction of $h$, that is $(\so \cup \wro)^+$-maximal}; so $e \leq_{\ora} a$. Taking $e'' = r$ the property is witnessed. 
\end{proof}

\begin{proposition}
	\label{proposition:orders-coincide}
	For any reachable history $h$, $\leq^h \equiv \leq_h$.
\end{proposition}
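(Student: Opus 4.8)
The plan is to prove that the two total orders coincide by comparing them pairwise. By Lemma~\ref{lemma:canonincal-total-order} the relation $\leq^h$ is a total order, and $\leq_h$ is a total order by construction, so it suffices to establish, for every pair of distinct transaction logs $t,t'$, that $t \leq_h t' \implies t \leq^h t'$; by antisymmetry and totality of both relations, this single implication (applied to $t,t'$ and to $t',t$) forces $\leq^h \equiv \leq_h$ on transaction logs, and hence on events, since within a completed transaction $\po$ fixes the event order and the algorithm keeps each transaction's events contiguous in $<_h$. I interpret $\leq_h$ throughout as the order on transaction logs induced by the history order $<$.

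First I would dispatch the causally-comparable case. If $(t,t')\in(\so\cup\wro)^*$, then $t\leq^h t'$ holds directly by the first disjunct of \textsc{canonicalOrder}, whereas $t\leq_h t'$ holds because every history order maintained by the algorithm is consistent with $(\so\cup\wro)^+$ (this is stipulated for ordered histories in Section~\ref{sec:algs}: $e_1<e_2$ whenever $(\trans{h}{e_1},\trans{h}{e_2})\in(\so\cup\wro)^+$). The symmetric subcase $(t',t)\in(\so\cup\wro)^*$ is identical. Thus it remains to treat $(\so\cup\wro)^*$-incomparable $t,t'$, for which \textsc{canonicalOrder} reduces to $\minimalDependency(h,t,t',\bot)$.

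The crux is to show, for incomparable $t,t'$, that the verdict of $\minimalDependency(h,t,t',\bot)$ agrees with $\leq_h$. I would exploit the $\ora$-respectfulness invariant: since $(h,\leq_h)$ is reachable it is $\ora$-respectful by Lemma~\ref{lemma:reachable-or-respectful}, so $<_{\ora}$ and $\leq_h$ differ only across swapped events. The guiding intuition is that the algorithm moves a causally-incomparable transaction $t$ before a read $r$ only when a swap makes $r$ read from $t$ or from a $(\so\cup\wro)^*$-successor of $t$; hence the $\leq_h$-position of $t$ relative to $t'$ is governed by the $\ora$-minimal read causally depending on it, which is exactly the quantity that $\dep(h,t,\cdot)$ isolates. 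Concretely, I would run the refinement sequence $e_0=\bot$, $e_{k+1}=\min_{<_{\ora}}\dep(h,t,e_k)=\min_{<_{\ora}}\dep(h,t',e_k)$ underlying the recursion (finite by Lemmas~\ref{lemma:dep_shrinks} and~\ref{lemma:minimalDependency-halts}), and prove by induction on $k$ that at the first index where $a=\min_{<_{\ora}}\dep(h,t,e_k)$ and $a'=\min_{<_{\ora}}\dep(h,t',e_k)$ diverge, the $<_{\ora}$-comparison of $a$ and $a'$ matches the $\leq_h$-comparison of $t$ and $t'$.

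The delicate point inside this induction, and the step I expect to be the main obstacle, is the swapped-event exception in the definition of $\ora$-respectfulness: a priori $a<_{\ora}a'$ need not entail $a<_h a'$. I would argue that any swapped event witnessing a deviation between $<_{\ora}$ and $\leq_h$ on the relevant reads must itself lie in the causal closure of $t$ (respectively $t'$), and is therefore already captured inside $\dep(h,t,e_k)$ (respectively $\dep(h,t',e_k)$), so it cannot flip the comparison; the shrinking property of $\dep$ (Lemma~\ref{lemma:dep_shrinks}) is what guarantees that the reference event $e_k$ faithfully tracks the common causal prefix shared by $t$ and $t'$ in the history order, so the divergence index pins down exactly the ordering decision the algorithm made when placing $t$ and $t'$. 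Combining the comparable and incomparable cases yields $\leq_h\,\subseteq\,\leq^h$, and by totality and antisymmetry of both orders, $\leq^h\equiv\leq_h$, as required.
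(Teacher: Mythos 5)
Your overall strategy differs from the paper's: the paper proves the proposition by induction on the length of the computable path reaching $h$, using the inductive hypothesis $\leq^{h_p}\equiv\leq_{h_p}$ for the predecessor history $h_p$ and then checking, for each kind of algorithm step ($\ebegin$, non-$\ebegin$ extension, $\genericSwap$), that the $\dep$/$\minimalDependency$ computations are preserved so the coincidence carries over to $h$. You instead attempt a static, pairwise argument on $h$ alone. Your comparable case is fine, and you correctly identify $\ora$-respectfulness (Lemma~\ref{lemma:reachable-or-respectful}) as the relevant invariant for the incomparable case. But the central step is not actually proved: the claim that, at the first divergence index $k$ of the refinement sequence, the $<_{\ora}$-comparison of $a=\min_{<_\ora}\dep(h,t,e_k)$ and $a'=\min_{<_\ora}\dep(h,t',e_k)$ matches the $\leq_h$-comparison of $t$ and $t'$ is exactly the content of the proposition for incomparable pairs, and your justification for it is only a gesture. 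The sentence ``any swapped event witnessing a deviation between $<_{\ora}$ and $\leq_h$ on the relevant reads must itself lie in the causal closure of $t$ and is therefore already captured inside $\dep(h,t,e_k)$'' does not follow from the definitions: the witness $e''$ in the definition of $\ora$-respectfulness is a causal \emph{successor} of the transaction of the $<_{\ora}$-later event and need not be one of the reads collected by $\dep(h,t,\cdot)$, which gathers reads that read \emph{from} causal successors of $t$ and causally precede $e_k$. No argument is given that such a witness cannot flip the comparison.

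The deeper problem is that $\leq_h$ is defined only operationally, by the sequence of $\oplus$ and $\genericSwap$ steps that built $h$; $\ora$-respectfulness is a comparatively weak consequence of that construction (it constrains $<_h$ relative to $<_{\ora}$ only up to swapped-event exceptions), and you have not shown that it, together with consistency with $(\so\cup\wro)^+$, determines $<_h$ uniquely. Without some induction over the construction of $h$ --- which is precisely what the paper's proof supplies, and where it uses the definitions of $\genericProtocol$ and $\genericSwap$ to control how the order of $(\so\cup\wro)^*$-incomparable transactions evolves across a swap --- your argument has no handle on where the algorithm actually placed $t$ relative to $t'$. As written, the key inductive claim is a restatement of the goal rather than a reduction of it, so the proof has a genuine gap.
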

\begin{proof}
For proving this equivalence, we will show that in any computable path and for any ordered history $(h, \leq_h)$, if $t \leq_h t'$, then $t \leq^h t'$, as by lemma \ref{lemma:canonincal-total-order} $\leq^h$ is a total order and therefore they have to coincide. We will prove this by induction on the number of histories a path has. The base case, the empty path, trivially holds; so let us prove the inductive case: for every path of at most length $n$ the property holds. Let $p$ a path of length $n+1$ and $h_{<_h}$ the last reachable ordered history of this path. As $p \setminus \{h\}$ is a computable path of length $n$, the immediate predecessor of $h$ in $p$, $\leq^{h_p} \equiv \leq_{h_p}$. Let $e = \nextEvent(h_p)$. Firstly, let's note that if $h$ is an extension of $h_p$, as $\oraRespectful{h_p}{<_{h_p}}$, the property can only fail while comparing a transaction $t$ with $\trans{h}{e}$.
	\begin{itemize}
		\item \underline{$h$ extends $h_p$ and $e$ is a $\ibegin$:} As $\dep(h_p, t, \bot) = \dep(h, t, \bot)$ for every transaction in $h_p$, if $t \leq^{h_p} t'$, then $t \leq^h t'$. Moreover, $\dep(h, \trans{h}{e}, \bot) = \{e\}= \min_{\ora} \prog \setminus h_p$. By lemma \ref{lemma:reachable-or-respectful} $h$ is $\ora$-respectful, so for every $t$, $\min_{\ora} \dep(h, t, \bot) <_{\ora} e$; which implies $t <^h \trans{h}{e}$. By lemma \ref{lemma:canonincal-total-order}, $\leq^h$ is a total order, so it coincides with $\leq_h$.
		
		\item \underline{$h$ extends $h_p$ and $e$ is not a $\ibegin$:} As no transaction depends on $\trans{h}{e}$ and $\trans{h}{e} = $ \linebreak $\last{h_p}$, if we prove that for every pair of transactions $\minimalDependency(h_p, t', t'', \bot) $ \linebreak $= \minimalDependency(h, t', t'', \bot)$, the lemma would hold. On one hand, \linebreak $\dep(h, \trans{h}{e}, \bot) = \dep(h_p, \trans{h}{e}, \bot) = \trans{h}{e}$ and in the other hand, by lemma \ref{lemma:reachable-or-respectful}, $\min_{\ora} \dep(h_p, t, \bot) <_{\ora} \trans{h}{e}$. Finally, as $e \not\in \dep(h, \hat{t}, e')$, for every $\hat{t} \neq \trans{h}{e}, e' \neq \bot$, for every pair of transactions $t', t''$, $\minimalDependency(h_p, t', t'' \bot) =$ \linebreak $ \minimalDependency(h, t', t'', \bot)$. 
		
		\item \underline{$h = \swap(h_p, r, t)$, where $t = \trans{h}{e}$:} As $\genericProtocol(h_p, r, t)$ is satisfied and $h$ is \linebreak $\ora$-respectful, for every event $e'$ and transaction $t'$ in $h$, $\min_{\ora} \dep(h_p, t', e') =$ \linebreak$ \min_{\ora} \dep(h, t', e')$, so for every pair of transactions $\minimalDependency(h_p, t', t'', \bot) =$ \linebreak $ \minimalDependency(h, t', t'', \bot)$. In particular, this implies $t' \leq^{h_p} t''$ if and only if $t' \leq^h t''$ for every  pair $t', t'' \in h$. Finally, as for every $t' \in h$, $t' \leq^h \trans{h}{r}$ (because $\trans{h}{r}$ is $(\so \cup \wro)^+$-maximal); we conclude that $\leq^h \equiv \leq_h$. 
	\end{itemize}
\end{proof}

Proposition \ref{proposition:orders-coincide} is a very interesting result as it express the following fact: regardless of the computable path that leads to a history, the final order between events will be the same. Therefore, all possible history orders collapse to one, the canonical one. This result will have a key role during both completeness and optimality, as it restricts the possible histories that precede another while describing the computable path leading to it. In addition, proposition \ref{proposition:orders-coincide} together with lemma \ref{lemma:reachable-or-respectful} justify enlarging definition \ref{def:oracle-respectful} with a general order as for reachable histories, $\oraRespectful{h}{\leq_h}$ is equivalent to $\oraRespectful{h}{\leq^h}$. From what follows, we will simply state $h$ is $\ora$-respectful and we will denote it by $\oraRespectfulCanon{h}$. Moreover, we will assume every history is ordered with the canonical order.

\begin{corollary}
	\label{corollary:soundness-swapped}
	Let $h_p$ a reachable history and let $h$ a immediate successor of $h_p$ whose last event $r$ is a $\iread$. Then $h_< = \swap((h_p, <_{h_p}), r, t)$ if and only if $\swapped{h}{r}$ does.
\end{corollary}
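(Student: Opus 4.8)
The plan is to first pin down how an ordered history $h$ can arise as an immediate successor of the reachable history $h_p$ with a read $r$ as its $<$-last event, and then to match this structural classification against the syntactic predicate $\swapped{h}{r}$. Inspecting Algorithms~\ref{algorithm:algo-class} and~\ref{algorithm:explore-swaps}, there are exactly two possibilities: either \textbf{(A)} $r$ is a freshly scheduled read, i.e. $h = h_p \oplus_j r \oplus \wro(t,r)$ with $t \in \genericValidWrites(h_p, r) \subseteq \transC{h_p}$; or \textbf{(B)} $h = \swap((h_p,<_{h_p}), r', t)$ for some pair $(r',t)$ produced by $\genericCompute(h_p)$ and enabled by $\genericProtocol$. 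In case (B), since $\genericSwap$ keeps the transaction-log prefix of $\trans{h}{r'}$ ending exactly at $r'$ and appends that whole transaction as the $<'$-last block, the final event of $h$ is $r'$; comparing with the hypothesis gives $r' = r$. In both cases $t$ is the unique transaction with $(t,r)\in\wro$ in $h$, which is the $t$ occurring in the statement. Thus the corollary reduces to: case (B) holds \emph{iff} $\swapped{h}{r}$ holds. Throughout I will use that $h$, being an immediate successor of a reachable history, is itself reachable, hence $\ora$-respectful by Lemma~\ref{lemma:reachable-or-respectful}, and that its history order coincides with the canonical order by Proposition~\ref{proposition:orders-coincide}.

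For the direction (B) $\Rightarrow \swapped{h}{r}$, I would check the three conjuncts of $\swapped{h}{r}$ separately, first recording the key fact that $t$ is $(\so\cup\wro)^+$-maximal in $h_p$: indeed $\genericCompute$ is non-empty only when the $<_{h_p}$-last event is a $\ecommit$, this event belongs to $t$, and since $<_{h_p}$ is consistent with $(\so\cup\wro)^+$ while the algorithm keeps at most one (hence now zero) pending transaction, nothing is causally after $t$ in $h_p$. Condition~(1)'s first half, $t <_{h} r$, is immediate from the $\genericSwap$ definition (it retains $t$ and moves $\trans{h}{r}$ to the end). Condition~(2) then holds essentially vacuously: after the swap the only new causal edge is $\wro(t,r)$ and $D$ is deleted, so by maximality of $t$ in $h_p$ its sole $(\so\cup\wro)^+$-successor in $h$ is $\trans{h}{r}$ itself; as $r$ is the $<$-last event, no $t'$ with $t' <_{\ora}\trans{h}{r}$ is such a successor. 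Condition~(3) holds because an earlier read $r''$ of $\trans{h}{r}$ with $(r'',r)\in\po$ and $(t,r'')\in\wro$ would give $(t,\trans{h_p}{r})\in(\so\cup\wro)^+$, contradicting $\trans{h_p}{r} <_{h_p} t$ (an entry of $\genericCompute$) and acyclicity of $<$ with respect to causality.

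For the converse $\swapped{h}{r}\Rightarrow$ (B), I would argue the contrapositive: if $h$ is obtained by case (A), then $\lnot\swapped{h}{r}$. Here $r$ is returned by $\nextEvent$ as the $<_{\ora}$-minimal non-completed event while $t\in\transC{h_p}$ is already completed. The plan is to show that condition~(1) of $\swapped{h}{r}$, namely $t >_{\ora} r$, cannot hold: were $t$ both completed in $h_p$ and oracle-after $r$, then $t$ would have been completed out of oracle order, which by $\ora$-respectfulness of $h_p$ forces an intervening swapped event sitting causally between $\trans{h}{r}$ and $t$; conditions~(2) and~(3) of $\swapped{h}{r}$, together with the maximality argument above, are designed precisely to exclude such a configuration, yielding a contradiction. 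Hence case (A) gives $\lnot\swapped{h}{r}$, so $\swapped{h}{r}$ forces case (B), and the swap pair is $(r,t)$ since $t$ must be the $\wro$-source of $r$ in $h$.

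The main obstacle is the oracle-order bookkeeping in condition~(1): proving $t >_{\ora} r$ in direction (B) and its failure under case (A). Both rest on the subtle interplay between the history/canonical order, the oracle order $<_{\ora}$, and $\ora$-respectfulness, and exploit two structural invariants of $\textsc{explore-ce}$ — that $t$ is $(\so\cup\wro)^+$-maximal in $h_p$ and that at most one transaction is ever pending. I expect to package the required statement as an auxiliary lemma of the form ``for a reachable $h_p$, if $\trans{h_p}{r}<_{h_p} t$ with $t$ causally incomparable to $\trans{h_p}{r}$ and $(\so\cup\wro)^+$-maximal, then $\trans{h_p}{r} <_{\ora} t$'', proved by instantiating $\ora$-respectfulness on the pair $(b_t,r)$ (where $b_t$ is the $\ebegin$ of $t$) and tracing the resulting swapped witness back through $\minimalDependency$ and the definition of the canonical order in Algorithm~\ref{algorithm:canonical-order}; the $I$-consistency clauses and $\isMaximallyAdded{h_p}{r}{t}$ enforced by $\genericProtocol$ will be used to close the residual cases.
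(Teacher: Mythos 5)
Your proposal is correct and follows essentially the same route as the paper's proof: both directions reduce to the oracle-order clause of $\swapped{h}{r}$, established via $\ora$-respectfulness of $h_p$ (Lemma~\ref{lemma:reachable-or-respectful}) together with the $(\so\cup\wro)^+$-maximality of $t$ and the $\lnot\swapped{}$ requirements of $\genericProtocol$, while conditions (2) and (3) fall out of the structure of $\genericSwap$/$\genericCompute$. One caution on your converse: the claim that condition (1) ``cannot hold'' under case (A) is false as stated --- a freshly added read may well read from an oracle-later transaction that was swapped with respect to an \emph{earlier} read (this is exactly the scenario conditions (2) and (3) are designed to exclude, per \S\ref{ssec:optimality}); what is actually provable, and what both the paper and the remainder of your own sentence use, is that condition (1) plus $\ora$-respectfulness of $h_p$ produces a swapped witness $e''$ with $\trans{h}{e''}\leq_{\ora}\trans{h}{r}$ and $(t,\trans{h}{e''})\in(\so\cup\wro)^*$ that falsifies condition (2), so the conjunction of the three conditions is contradictory. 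Also, the detour through $\minimalDependency$ and the canonical-order algorithm in your auxiliary lemma is unnecessary: the contradiction comes from the deleted read landing in $D\cup\{r\}$ and violating the $\lnot\swapped{}$ clause of $\genericProtocol$, not from the canonical-order computation.
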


\newpage
\begin{proof}
	\underline{$\implies$}
	
	Let's suppose that $h_< = \swap((h_p, <_{h_p}), r, t)$ for some $t$ transaction. As the last event in $h$ is $r$ and by definition of $\swap$ function no event reads from \nver{$\wro^{-1}(r)$} in $h$ besides $r$, to prove $\swapped{h}{r}$ holds we just need to show that $r <_{\ora} t$. By lemma \ref{lemma:reachable-or-respectful}, $\oraRespectfulCanon{h_p}$ holds. As $r <_{h_p} t$, $\genericProtocol((h_p, <_{h_p}), r, t)$ holds and $t$ is $(\so \cup \wro)^+$-maximal, we conclude that $r <_{\ora} t$.
	
	\underline{$\impliedby$}
	Let's suppose that $h = h_p \oplus r \oplus \wro(r, t)$ for some transaction $t$. Let's suppose that $r <_{\ora} t$. As $\oraRespectfulCanon{h_p}$, there exists some event $e''$ s.t. $\trans{h_p}{e''} \leq \trans{h}{r}$, $t \ [\so \cup \wro]^* \trans{h}{e''}$ and $e'' \leq r$ so $\lnot(\swapped{h}{r})$.
	
\end{proof}

%As $\leq^h \equiv \leq_h$ for any reachable history, we will extend $R^{\ora}(h)$ to any history changing $\leq_h$ to $\leq^h$ in \ref{def:oracle-respectful} whenever it is needed. This property is not something reachable histories satisfy but also, as next lemma shows, total histories with $\leq^h$ order do; which justify it as an useful tool for proving completeness.
\begin{lemma}
	\label{lemma:total-respectful}
	Any total history is $\ora$-respectful.
	\begin{proof}
		Let $h$ be a total history and $t, t'$ a pair of transactions s.t. $t \leq_{\ora} t'$. If $t \leq^h t'$, then the statement is satisfied; so let's assume the contrary: $t' \leq^h t$. If $(t', t) \in (\so \cup \wro)^*$, then for every $e \in t, e' \in t'$ $\exists c \in h$ s.t. $\trans{h}{c} \leq_{\ora} \trans{h}{e}$, ($\trans{h}{e'}, \trans{h}{c}) \in  (\so \cup \wro)^* $, $\swapped{h}{c}$ and $c \leq^h e$; so the property is satisfied. Otherwise, by definition of $\minimalDependency$, there exists $r' \in h$ s.t. $(t', \trans{h}{r'}) \in (\so \cup \wro)^* $ and $\trans{h}{r'} \leq_{\ora} t$. Moreover, by \textsc{canonicalOrder}'s definition, $\trans{h}{r} \leq^h t$. Finally $\swapped{h}{r'}$ holds as it is the minimum element according $\ora$. To sum up, $\oraRespectfulCanon{h}$ holds.
	\end{proof}
\end{lemma}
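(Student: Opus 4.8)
The plan is to verify the two clauses of Definition~\ref{def:oracle-respectful} separately, working throughout with the canonical order $\leq^h$ (legitimate by the convention $\oraRespectfulCanon{h}$ fixed after Proposition~\ref{proposition:orders-coincide}). A total history corresponds to a final configuration, so every transaction has committed or aborted and none is pending; hence the ``at most one pending transaction'' clause holds vacuously. For the second clause I would first reduce from events to transactions: since both $<_{\ora}$ and $\leq^h$ refine $\po$ inside each transaction, it suffices to show, for $t = \trans{h}{e}$ and $t' = \trans{h}{e'}$ with $t \leq_{\ora} t'$, that either $t \leq^h t'$ (which at once gives $e \leq^h e'$) or a swapped witness exists; the case $t=t'$ is immediate from both orders refining $\po$.

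Fixing $t \leq_{\ora} t'$ and assuming the nontrivial case $t' <^h t$, I would split on the two ways \textsc{canonicalOrder} can place $t'$ before $t$. If $(t',t)\in(\so\cup\wro)^*$, then since $<_{\ora}$ is consistent with $\so$ the causal path from $t'$ to $t$ must traverse a $\wro$ edge along which the oracle order decreases; the read sitting on such an edge reads from an oracle-successor and has $t'$ as a $(\so\cup\wro)^*$-predecessor of its reading transaction, and this is my witness $e''$. If instead $t$ and $t'$ are $(\so\cup\wro)^*$-incomparable, then $t' <^h t$ was decided by $\minimalDependency(h,t',t,\bot)$; unfolding $\dep$ yields a read $r'$ with $(t',\trans{h}{r'})\in(\so\cup\wro)^*$ that is $\ora$-minimal among the dependencies of $t'$ and satisfies $\trans{h}{r'}\leq_{\ora} t$, hence also $\trans{h}{r'}\leq^h t$ by the first disjunct of \textsc{canonicalOrder}. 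This $r'$ is the candidate $e''$.

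The heart of the argument, and the step I expect to be the main obstacle, is verifying that the extracted read truly satisfies $\swapped{h}{r'}$ in all three conjuncts: reading from a history-earlier but oracle-later transaction; no oracle-earlier, history-earlier transaction being a $(\so\cup\wro)^+$-successor of its write source; and being the first read of its transaction to read from that source. I would derive the middle conjunct directly from the $\ora$-minimality baked into $\min_{<_{\ora}}\dep$, since any interfering transaction would exhibit a strictly $\ora$-smaller dependency and contradict minimality, and obtain the other two conjuncts from the choice of $r'$ as the minimal dependency together with $\trans{h}{r'}\leq^h t$ and $t \leq_{\ora} t'$. It then remains to confirm the positional requirements of Definition~\ref{def:oracle-respectful} for $e''$, namely $\trans{h}{e''}\leq_{\ora}\trans{h}{e}$, $(\trans{h}{e'},\trans{h}{e''})\in(\so\cup\wro)^*$, and $e''\leq^h e$, which should follow from the causal and ordering facts already in hand. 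The delicate point throughout is matching the recursive $\dep$/$\minimalDependency$ tie-breaking to the syntactic $\swapped{\cdot}{\cdot}$ predicate.
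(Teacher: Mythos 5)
Your proposal follows essentially the same route as the paper's proof: the pending-transaction clause is vacuous for total histories, and the event condition is handled by the same case split on whether $(t',t)\in(\so\cup\wro)^*$ or the canonical order was decided by $\minimalDependency$, extracting a swapped read as the witness $e''$ in each case. If anything, you are more explicit than the paper where it is tersest — the paper simply asserts the existence of the witness $c$ in the causal case and justifies $\swapped{h}{r'}$ only by "it is the minimum element according to $\ora$," whereas you identify the oracle-decreasing $\wro$ edge on the causal path and flag the verification of the three conjuncts of $\textsc{swapped}$ as the genuinely delicate step.
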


\subsubsection{Previous of a history} 
$ $\\

As a third and final step in our proof, we define the function \textit{previous} that, for a every history $h$, if $\prev(h)$ is reachable, then $h$ is also reachable. Moreover, $\prev(h)$ will belong to the same computable path.

\begin{algorithm}[h]
		\label{algorithm:prev}
	\caption{\textsc{prev}}
	\begin{algorithmic}[1]
		
		\Statex
		\Procedure{\textsc{prev}}{$h$}
		\If{$h = \emptyset$}
		\State \Return $\emptyset$
		\EndIf
		\State $a \gets \last{h}$
		\If{$\lnot \swapped{h}{a}$}
		\State \Return $h \setminus a$
		\Else
		\Let $t$ s.t. $(t,r) \in \wro$.
		\State \Return $\maxCompletion(h\setminus a, \{e \ | \ e \not\in (h \setminus a) \land e <_{\ora} t \})$
		\EndIf
		\EndProcedure		
		\Procedure{\maxCompletion}{$h, D$}
		\If{$D \neq \emptyset$}
		\State $e \gets \min_{<_{\ora}} D$
		\If{$\mathit{type}(e) \neq \iread$}
		\State \Return $\maxCompletion(h \oplus e, D \setminus \{e\})$
		\Else
		\Let $t$ s.t. $\isMaximallyAdded{h \oplus e \oplus \wro(t, e)}{e}$ holds
		\State \Return $\maxCompletion(h \oplus e \oplus \wro(t, e), D \setminus \{e\})$
		\EndIf
		
		\Else
		\State \Return $h$
		\EndIf
		\EndProcedure

	\end{algorithmic}

	%\caption{Generic method for exploring every possible history $h$ of a program $\prog$ running under a database with $\mathcal{M}$ as isolation level.}
\end{algorithm}

%code for splitting algorithmic environment
%\algstore{myalg}
%\end{algorithmic}
%\end{algorithm}

%gap maybe needed to split algorithms in two parts. 

%\begin{algorithm}[H]                   
%\begin{algorithmic} [1]                   % enter the algorithmic environment
%\algrestore{myalg}	

First, we show that the invariant of our algorithm is preserved via $\prev$.

\begin{lemma}
	\label{lemma:prev-respectful}
	For every $\ora$-respectful history $h$, $\prev(h)$ is also $\ora$-respectful.
\end{lemma}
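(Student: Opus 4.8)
The plan is to prove Lemma~\ref{lemma:prev-respectful} by following the case analysis built into the definition of $\prev$ (Algorithm on $\prev$), working throughout with the canonical order of Lemma~\ref{lemma:canonincal-total-order} and invoking Proposition~\ref{proposition:orders-coincide} so that $\ora$-respectfulness can be read off the canonical order. It is convenient to first record one auxiliary fact, reused in every case: if $a$ is the $\leq^h$-maximal event of $h$, then the canonical order of $h\setminus a$ is exactly the restriction of $\leq^h$ to $h\setminus a$. This should follow because a $\leq^h$-maximal transaction enters no other transaction's dependency set $\dep(\cdot)$ in a way that changes the outcome of any $\minimalDependency$ computation among the remaining transactions, so deleting $a$ leaves all those comparisons untouched. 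The trivial case $h=\emptyset$ gives $\prev(h)=\emptyset$, which satisfies Definition~\ref{def:oracle-respectful} vacuously.

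\textbf{Case $a=\last{h}$ with $\lnot\swapped{h}{a}$.} Here $\prev(h)=h\setminus a$. For the ``at most one pending transaction'' clause I would split on the type of $a$: if $a$ is a $\ibegin$ then $\trans{h}{a}=\{a\}$ is the unique pending transaction of $h$ and $h\setminus a$ has none; if $a$ is a read or write then $\trans{h}{a}$ is pending in $h$ (as $a$ is $\po$-maximal in it) and remains the unique pending transaction after removal; the only delicate subcase is $a$ a $\icommit/\iabort$, where I would argue that in an $\ora$-respectful history the pending transaction, if any, is $\leq^h$-maximal, so a commit/abort being $\leq^h$-maximal forces $h$ to have had no pending transaction, whence $h\setminus a$ has exactly one. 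For the ``in-between swapped event'' clause I would take any $e\in\prog$, $e'\in h\setminus a$ with $e\le_{\ora}e'$ and transfer the witness supplied by $\oraRespectfulCanon{h}$; the point is that this witness $e''$ cannot equal $a$, since $a$ is \emph{not} swapped, so $e''\in h\setminus a$ and the condition survives. The one remaining pair is $e=a$ itself (now in $\prog\setminus(h\setminus a)$), for which $\ora$-respectfulness of $h$ together with $\lnot\swapped{h}{a}$ and Corollary~\ref{corollary:soundness-swapped} shows $a$ cannot sit $\le_{\ora}$-below any $e'\in h\setminus a$ without a pre-existing swapped witness in $h\setminus a$.

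\textbf{Case $a=r$ a swapped read, with $(t,r)\in\wro$.} This is the main obstacle. Here $\prev(h)=\maxCompletion(h\setminus r,D)$ with $D=\{e\mid e\notin h\setminus r\land e<_{\ora}t\}$: we delete $r$ and re-insert, in increasing oracle order, exactly the events that the swap producing $h$ had removed, each read attached to its causally-latest valid write via $\isMaximallyAdded{\cdot}{\cdot}{\cdot}$. The strategy is to observe that every step of $\maxCompletion$ is literally an algorithm \emph{extension} step in pure oracle order: it appends a $\nextEvent$-style event and, for reads, chooses a read-from permitted by $\genericValidWrites$ (the causally-latest one). Hence the very same step-by-step invariant-preservation argument proved in Lemma~\ref{lemma:reachable-or-respectful} applies verbatim to each $\maxCompletion$ step, so if the starting history $h\setminus r$ is $\ora$-respectful, then so is the output $\prev(h)$. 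I would therefore (i) show $h\setminus r$ is $\ora$-respectful, essentially as in the non-swapped case but now noting that losing the swapped witness $r$ is harmless because the pairs that relied on $r$ are precisely the ones $\maxCompletion$ will repair, and (ii) propagate the Lemma~\ref{lemma:reachable-or-respectful} invariant across all insertions drawn from $D$.

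The two points on which I would spend the most care are both in this last case. First, the \emph{at-most-one-pending} bookkeeping through $\maxCompletion$: since events are inserted strictly oracle-below $t$ while $t>_{\ora}r$, I must argue that the replay completes every transaction it opens except the last, so that exactly one (necessarily $(\so\cup\wro)^+$-maximal) pending transaction remains; this uses that the oracle order refines the session order and that $\genericNext$ prioritises completing pending transactions. Second, re-establishing the in-between clause for the event $r$ that $\prev(h)$ no longer contains: because $r$ was swapped we have $r<_{\ora}t$, and the $\genericProtocol$ conditions that licensed the swap of $h$ (no deleted read is itself swapped, and every deleted, hence every re-inserted, read reads from a causally-latest write) guarantee that the re-inserted reads need no swapped witness, while $r\in\prog\setminus\prev(h)$ is handled using that $t$ is present in $\prev(h)$ and lies $\le_{\ora}$-above $r$. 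Assembling (i) and (ii) with these two points yields $\oraRespectfulCanon{\prev(h)}$, completing the case and the lemma.
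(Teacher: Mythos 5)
Your decomposition (deletion case vs.\ swapped-read/replay case) is reasonable and the non-swapped case is essentially fine, but the heart of the lemma is the swapped case, and there your reduction to Lemma~\ref{lemma:reachable-or-respectful} does not go through as stated. Two concrete problems. First, the inductive step of Lemma~\ref{lemma:reachable-or-respectful} assumes the predecessor history is already $\ora$-respectful and that the new event is appended as the \emph{maximum} of the history order. Neither holds here: you yourself concede that $h\setminus r$ may fail $\ora$-respectfulness for exactly the pairs whose only swapped witness was $r$, so the induction has no valid base; and the events of $D$ re-inserted by $\maxCompletion$ are all oracle-below $t$, hence (via the $\dep$/$\minimalDependency$ computation) they land canonically \emph{below} $t$ and below much of $h\setminus r$ in $\leq^{h_k}$, not at the top. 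So a $\maxCompletion$ step is not an instance of the extension step analysed in that lemma, and ``applies verbatim'' is precisely the claim that needs proof. The assertion that ``the pairs that relied on $r$ are precisely the ones $\maxCompletion$ will repair'' is where all the content sits and is left unargued: a pair $(e,e')$ with $e\notin h$, $e'$ in the causal cone of $t$, and witness $e''=r$ in $h$ needs a fresh reason to satisfy Definition~\ref{def:oracle-respectful} in $\prev(h)$ once $r$ reads from its causally-latest write and is no longer swapped.

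The paper avoids the replay simulation entirely: it fixes an arbitrary pair $e\in\prog$, $e'\in\prev(h)$ with $e\leq_{\ora}e'$ and verifies the condition directly on the final history, splitting on whether $e'\in\prev(h)\setminus h$ or $e'\in h$. For re-inserted $e'$ it argues that $\min_{<_{\ora}}\dep(\prev(h),\trans{\prev(h)}{e'},\bot)$ is the $\ebegin$ of $\trans{\prev(h)}{e'}$ itself (nothing reads from its causal cone from oracle-below), so that transaction sits at its oracle position in the canonical order and $e\leq^{\prev(h)}e'$ holds outright; for surviving $e'$ it reuses the witness supplied by $\oraRespectfulCanon{h}$, which is swapped and hence never deleted by $\prev$. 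If you want to keep your replay-based structure, you must formulate and prove the weakened invariant that each intermediate $h_k$ actually satisfies (roughly: $\ora$-respectful up to pairs whose first component still lies in $D_k$), rather than citing Lemma~\ref{lemma:reachable-or-respectful}. Separately, your claim that in an $\ora$-respectful history any pending transaction is $\leq^h$-maximal is not a consequence of Definition~\ref{def:oracle-respectful} alone and would need its own justification (though, to be fair, the paper's proof silently skips the pending-transaction clause altogether).
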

\begin{proof}
	Let suppose $h \neq \emptyset$, $h_p = \prev(h)$, $a = \last{h}$, $e \in \prog$ and $ e' \in h_p$ s.t. $e \leq_{\ora} e'$. We explore different cases depending if $e, e'$ belong to $h$ or not. If $e' \in h_p \setminus h$, $\lnot(\swapped{h_p}{e})$ and $ \lnot(\swapped{h_p}{e'})$ holds. As $\min_{<_{\ora}} \dep(h, \trans{h}{e'}, \bot) = \ibegin(\trans{h}{e'})$, we obtain that $\min_{<_{\ora}}\dep(h,\trans{h}{e'}) \leq_{\ora} e' \leq_{\ora} \ibegin(\trans{h}{e'})$. Therefore, as $e' \in h_p \in h$, $\lnot(\trans{h}{e'} \ [\so \cup \wro]^+ \ \trans{h}{e})$, so $e \leq^h e'$. And if $e' \in h$, either $e \leq^h e'$ or $e' \leq^h e$. In the former case, both are in $h$ and therefore, in $h_p$. As it cannot happen that $e' \in \trans{h}{a}$ and $e \leq^{h_p} a$ because $\swapped{h}{a}$ and $e \leq_{\ora} e'$, we conclude that $e \leq^h e'$ ($\leq_{h_p}$ keeps the relative orders between transactions different from $\trans{h}{a}$ and by lemma \ref{lemma:reachable-or-respectful} they coincide). In the latter case,  by $\oraRespectfulCanon{h}$, there exists $e''$ that witness it. In particular, $\swapped{h}{e''}$ holds, so $e'' \in h_p$. $e''$ witness $\oraRespectfulCanon{h_p}$ holds. In the three cases we deduce that $\oraRespectfulCanon{h_p}$.
	%As $\oraRespectfulCanon{h}$ is satisfied, either $e \leq^h e'$ or $\exists e'' \in h, \trans{h}{e''} \leq_{\ora} \trans{h}{e}$, $e'' \leq^h e$, $\trans{h}{e'} \ [\so \cup \wro]^* \trans{h}{e''}$ and $\swapped{h}{e''}$. If $\lnot \swapped{h}{a}$, $h = h_p \oplus a$. Therefore, if $e \leq^h e'$, we have that $e \leq^{h_p} e'$ holds, but if not, $e'' \in h_p$ and witness the property. In both case $\oraRespectfulCanon{h_p}$ holds.
	
	%and every other event $\hat{e}'$ s.t. $\hat{e}' \in \dep(h, \trans{h}{\hat{e}}, \bot)$, we know that $\trans{h}{\hat{e}} \leq_{\ora} \trans{h}{\hat{e}'}$ as neither $\hat{e}$ nor any event after it in $h$ is swapped in $h$.

	 %as $e' \leq^{h_p} e$, there exists a $e'' \in h$ s.t. $e'' \leq^{h} e$ that witness $\oraRespectfulCanon{h}$ for $e,e'$. Thus, as $e \in h_p$, $e'' \in h_p$; and therefore, $\oraRespectful{h_p}{e'}$ holds.
	
	%Combining both results, if $e'$ belong to $h$, either $e \leq^{h_p} e'$, so $\oraRespectful{h_p}{e'}$ holds or exists a $e'' \in h$ s.t. $e'' \leq^{h_p} e$ and witness $\oraRespectfulCanon{h}$ for $e,e'$. Thus, as $e \in h_p$, $e'' \in h_p$; and therefore, $\oraRespectful{h_p}{e'}$ holds. But if not, $e' \in h_p \setminus h$. As $h_p$ has no pending transactions, $\lnot (\trans{h}{e'} \ [\so \cup \wro]^* \ \trans{h}{e})$, so regardless if $\trans{h}{e} \ [\so \cup \wro]^* \trans{h}{e'}$, $e \leq^{h_p} e'$. To sum up, $\oraRespectfulCanon{h_p}$ holds.
\end{proof}

Next, we have to prove that previous is a sound function, i.e. the composition between $\textsc{explore-ce}$ and $\prev$ give us the identity. For doing so, in the case a history is a swap, we deduce that both histories should contain the same elements and they read the same; so they have to coincide.

\begin{lemma}
	\label{lemma:soundness-prev}
	For every consistent history $\ora$-respectful $h$, if $\prev(h)$ is reachable, then $h$ is also reachable.
\end{lemma}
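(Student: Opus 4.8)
The plan is to prove the lemma by a case analysis on the last event $a = \last{h}$ of $h$ (in canonical order), mirroring the two branches of $\prev$ in Algorithm~\ref{algorithm:prev}. Throughout I would rely on Lemma~\ref{lemma:prev-respectful}, so that $\prev(h)$ is itself $\ora$-respectful, and — crucially — on the hypothesis that $\prev(h)$ is reachable together with Proposition~\ref{proposition:orders-coincide}, so that the history order the algorithm actually uses on $\prev(h)$ coincides with its canonical order. The goal in each case is to exhibit the concrete step of Algorithm~\ref{algorithm:algo-class} or Algorithm~\ref{algorithm:explore-swaps} that, executed on $\prev(h)$, issues a recursive call to \textsc{explore} whose argument is exactly $h$.

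First I would treat the case $\lnot\swapped{h}{a}$, where $\prev(h) = h\setminus a$. Here I would show that $\genericNext(\prev(h))$ returns exactly $a$ together with its session $j$: since $a$ is the $\leq^h$-maximal event and $\prev(h)$ has at most one pending transaction by $\ora$-respectfulness, the oracle-order-minimality rule defining $\genericNext$ forces this choice — no other started-or-pending event can be $<_{\ora}$-smaller without contradicting $\oraRespectfulCanon{h}$. If $a$ is not a read, then $h = \prev(h)\oplus_j a$ and the ordinary extension branch of Algorithm~\ref{algorithm:algo-class} produces $h$. If $a$ is a read reading from some $t_a$, I would show $t_a \in \genericValidWrites(\prev(h), a)$ using that $h$ is $I$-consistent, so the corresponding iteration of the \textbf{for all} loop produces $h = \prev(h)\oplus_j a \oplus \wro(t_a, a)$.

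The second, harder case is $\swapped{h}{a}$ (so $a$ is a read reading from some $t$), where $\prev(h) = \maxCompletion(h\setminus a, D)$ with $D = \{e : e\notin h\setminus a \land e <_{\ora} t\}$. Here I would show that the \textsc{exploreSwaps} step fires the pair $(a,t)$, i.e. $(a,t)\in\genericCompute(\prev(h))$ and $\genericProtocol(\prev(h), a, t)$ holds, and that $\genericSwap(\prev(h), a, t) = h$. For $\genericCompute$ I must verify that $\trans{h}{a}$ and $t$ are $(\so\cup\wro)^*$-incomparable and that $t$ carries the maximal event of $\prev(h)$ in canonical order; the latter I would argue follows because $\maxCompletion$ reinserts precisely the events the forward swap would later delete, leaving $t$'s completion maximal. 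For the swap itself I would identify the deletion set $D' = \{e : a < e \land (\trans{h}{e}, t)\notin(\so\cup\wro)^*\}$ of $\genericSwap$ with the events introduced by $\maxCompletion$, so that deleting them from $\prev(h)$ and re-pointing $a$'s read to $t$ reconstructs $h$.

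The main obstacle will be establishing $\genericProtocol(\prev(h), a, t)$, which demands that every read deleted by the swap, and $a$ itself, (i) is not swapped in $\prev(h)$ and (ii) reads from a causally-latest valid write, i.e. satisfies $\isMaximallyAdded{\prev(h)}{\cdot}{t}$. Condition (ii) is exactly the invariant that $\maxCompletion$ enforces when it reinserts each read through $\isMaximallyAdded$, while condition (i) would follow from Corollary~\ref{corollary:soundness-swapped} once the reinserted reads are shown not to be swapped. Turning this into a proof, however, requires carefully reconciling the oracle-order-based set $D$ used by $\prev$ against the canonical-order-based deletion set $D'$ used by $\genericSwap$, and confirming via Proposition~\ref{proposition:orders-coincide} that $\maxCompletion$'s reinsertions land in the canonical positions the algorithm would have produced. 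I expect this bookkeeping to be the technical crux of the argument.
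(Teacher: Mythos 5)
Your plan matches the paper's proof essentially step for step: the same case split on whether $\last{h}$ is swapped, with the non-swapped case handled by showing the algorithm's extension step on $\prev(h)$ reproduces $h$, and the swapped case handled by showing $\genericCompute(\prev(h))$ yields the pair $(a,t)$ and that $\genericSwap(\prev(h),a,t)$ equals $h$ because the events reinserted by $\maxCompletion$ are exactly those the swap deletes (the paper phrases this as $h_p\setminus h = h_p\setminus h_s$ and proves it by two inclusions using $\ora$-respectfulness). You correctly identify the technical crux — verifying $\genericProtocol(\prev(h),a,t)$ via the $\isMaximallyAdded$ and non-swapped properties that $\maxCompletion$ enforces — which the paper's own proof handles only implicitly by appealing to Corollary~\ref{corollary:soundness-swapped} and the construction of $\prev$.
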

\begin{proof}
	Let suppose $h \neq \emptyset$, $h_p = \prev(h)$ and $a = \last{h}$. If $\lnot \swapped{h}{a}$, let $h_n = h_p \oplus a$ if $a$ is not a read and $h_n = h_p \oplus a \oplus \wro(t, a)$, where $t$ is the transaction s.t. $(t, r) \in \wro$, otherwise. Either way, $h_n$ is always reachable and it coincides with $h$. On the contrary, if $\swapped{h}{a}$, $a$ is a $\iread$ event and it swapped; so let us call $t$ to the transaction s.t. $(t, a) \in \wro$. Firstly, as $\swapped{h}{a}$, $a <_{\ora} t$, and by lemma \ref{lemma:reachable-or-respectful}, $\oraRespectfulCanon{h_p}$ holds, so $a <_{h_p} t$ does; which let us conclude $\genericCompute(h_p)$ will always return $(a, t)$ as a possible swap pair. In addition, all transactions in $h_p$ are non-pending and $(t,a) \in \wro$, so in particular $\last{h_p}$ is an $\icommit$ event. If we call $h_s = \swap(h_p, a, t)$, and we prove that $h_p \setminus h = h_p \setminus h_s$ holds, then we would deduce $h = h_s$ as $\wro(t, a)$ in both $h_p, h_s$ and $h \subseteq h_p, h_s \subseteq h_p$; which would allow us to conclude $h$ is reachable from $h_p$.
	
	On one hand, if $e \in h_p \setminus h$, we deduce that $e \not\in h$ and $e <_{\ora} t$. In particular, $\lnot (\trans{h}{e} \ [\so \cup \wro]^* \ t)$. Moreover, if $ e \leq_{\ora} a$, by $\oraRespectfulCanon{h}$, either $e \leq^h a$ or $\exists e''\in h, e'' \leq_{\ora} e$ s.t. $\tr(a) \ [\so \cup \wro]^* \trans{h}{e''}$, $e'' \leq^h e$ and $\swapped{h}{e''}$; both impossible situations as $e \not\in h$ and $a = \last{h}$; so $a \leq_{\ora} e$. In other words, $e \in h_p \setminus h_s$.
	
	On the other hand, $e \in h_p \setminus h_s$ if and only if $\lnot (\trans{h}{e} \ [\so \cup \wro]^* \ \tr(w))$ and $a <_{\ora} e <_{\ora} w$. If $e$ would belong to $h$ then $e \leq^{h} a$. As $h$ is $\ora$-respectful and $a \leq_{\ora} e$, we deduce there exists a $e'' \in h$ s.t. $\trans{h}{e''} \leq_{\ora} \tr(a)$, $\trans{h}{e} \ [\so \cup \wro]^* \trans{h}{e''}$ and $\swapped{h}{e''}$. Moreover, as $e'' \in h$, $e'' \in h_p$. By corollary \ref{corollary:soundness-swapped} $\swapped{h_p}{e''}$ and $\genericProtocol(h_p, a, t)$ hold,  $e'' \in h_s$ and so $e$ does. This result leads to a contradiction, so $e \not\in h$; i.e. $e \in h_p \setminus h$.
\end{proof}

\begin{corollary}
	\label{corollary:prev-swap-identity}
	In a consistent $\ora$-respectful history $h$ whose previous history is reachable, if $a = \last{h}$, $\swapped{h}{a}$ and $t$ is a transaction such that $(t, a) \in \wro$, $h$ coincides with $\swap(\prev(h), a, t)$.
\end{corollary}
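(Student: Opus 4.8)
The plan is to make explicit an identity that is essentially obtained inside the proof of Lemma~\ref{lemma:soundness-prev}: in the branch where $\swapped{h}{a}$ holds, that argument constructs $h_s := \swap(\prev(h), a, t)$ and shows $h = h_s$; the corollary merely isolates that equation. Writing $h_p = \prev(h)$, I would first verify that $(a,t)$ is a legitimate swap pair for $h_p$, i.e.\ $(a,t)\in\genericCompute(h_p)$. Since $a = \last{h}$ and $\swapped{h}{a}$ holds, $a$ is a read event, and by the definition of $\swapped{h}{a}$ we have $a <_{\ora} t$. By Lemma~\ref{lemma:prev-respectful} the history $h_p$ is $\ora$-respectful, and since $a$ is re-added by $\maxCompletion$ (note $a <_{\ora} t$ and $a\notin h\setminus a$, so $a$ lies in the completion set), $a$ indeed occurs in $h_p$ with $\trans{h_p}{a}$ being $(\so\cup\wro)^+$-maximal; combined with $\ora$-respectfulness and Proposition~\ref{proposition:orders-coincide} this gives $a <_{h_p} t$. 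Because $(t,a)\in\wro$ forces $t$ to be committed and the single-pending-transaction invariant makes every transaction of $h_p$ complete, $t$'s commit is the last event of $<_{h_p}$. Hence $(a,t)\in\genericCompute(h_p)$ and $\swap(h_p,a,t)$ is well defined.

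The crux is the set equality between the events deleted by $\prev$ and those deleted by $\swap$, namely
\[
  h_p \setminus h \;=\; \{\, e : a <_{h_p} e \ \land\ (\trans{h_p}{e}, t) \notin (\so \cup \wro)^* \,\} \;=\; D,
\]
where $D$ is the deletion set from the definition of $\genericSwap$. Once this holds, both $h$ and $h_s$ are the same prefix $h_p\setminus D$ of $h_p$, carry the same write-read edge $\wro(t,a)$ on the surviving read $a$, and re-append $\trans{h_p}{a}$ in the same way, so $h = h_s$; equality of the remaining read-from edges is automatic, since $\swap$ leaves the sources of all surviving events untouched and those events already lie in $h\setminus a$. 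I would prove the equality by double inclusion. For ``$\subseteq$'': an $e\in h_p\setminus h$ is one of the events re-added by $\maxCompletion$ and absent from $h$, so $e\notin h$ and $e<_{\ora} t$; since $t\in h$ and a history is causally closed, $e\notin h$ forces $(\trans{h_p}{e},t)\notin(\so\cup\wro)^*$, and $\oraRespectfulCanon{h}$, exploiting that $\trans{h}{a}$ is the unique $(\so\cup\wro)^+$-maximal transaction with $a$ last, rules out $e<_{\ora} a$ (any respectfulness witness would be forced into $\trans{h}{a}$ by maximality, contradicting that $a$ is last), which places $e$ after $a$ in $<_{h_p}$, so $e\in D$. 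For ``$\supseteq$'': for $e\in D$, assuming $e\in h$ leads, through $\oraRespectfulCanon{h}$, Corollary~\ref{corollary:soundness-swapped}, and the preservation of swapped events guaranteed by $\genericProtocol(h_p,a,t)$, to a swapped witness $e''$ surviving into $h_s$ and forcing $e\in h_s = h_p\setminus D$, contradicting $e\in D$.

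I expect the main obstacle to be this second inclusion, and specifically the need to reconcile the two different orderings used by the two deletion criteria: $\prev$ deletes according to the oracle order $<_{\ora}$ (events $<_{\ora} t$ absent from $h$), whereas $\genericSwap$ deletes according to the history order $<_{h_p}$ (events after $a$ in $<_{h_p}$ that are not causally before $t$). Bridging ``$<_{\ora}$'' and ``$<_{h_p}$'' requires invoking the $\ora$-respectfulness invariant on both $h$ and $h_p$, Proposition~\ref{proposition:orders-coincide} to identify $<_{h_p}$ with the canonical order, and the swapped-read characterization of Corollary~\ref{corollary:soundness-swapped}, in order to guarantee that a would-be deleted event of $\swap$ cannot secretly belong to $h$. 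Once the deletion sets are shown to coincide, the identity $h = \swap(\prev(h),a,t)$ follows immediately from the shared prefix and the shared $\wro(t,a)$ edge.
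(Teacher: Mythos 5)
Your proposal is correct and takes essentially the same route as the paper: the paper's own proof of this corollary is a one-line pointer back to the proof of Lemma~\ref{lemma:soundness-prev}, whose $\swapped{h}{a}$ branch establishes exactly the identity $h=\swap(\prev(h),a,t)$ via the same two steps you give — checking that $(a,t)$ is a valid swap pair for $h_p=\prev(h)$ and then proving $h_p\setminus h = h_p\setminus\swap(h_p,a,t)$ by double inclusion, bridging $<_{\ora}$ and the history order through $\ora$-respectfulness, Proposition~\ref{proposition:orders-coincide}, Corollary~\ref{corollary:soundness-swapped}, and $\genericProtocol$. You have merely unfolded that argument explicitly, which is a faithful (and somewhat more detailed) rendering of what the paper leaves implicit.
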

\begin{proof}
	It comes straight away from the proof of lemma \ref{lemma:soundness-prev}.
\end{proof}

Once proven that $\prev$ is sound, let us prove that for every history we can compose $\prev$ a finite number of times obtaining the empty history. We are going to prove it by induction on the number of swapped events, so we prove first the recursive composition finishes in finite time and then we conclude our claim.

\begin{lemma}
	\label{lemma:prev-reduces-one}
	For every non-empty consistent $\ora$-respectful history $h$, $h_p = \prev(h)$ and $a = \last{h}$, if $\swapped{h}{a}$ then $\{e \in h_p \ | \ \swapped{h_p}{e}\} = \{e \in h \ | \ \swapped{h}{e}\} \setminus \{a\}$, otherwise $h_p = h \setminus a$.
	%Let $h$ a history, $h' = \prev(h)$ and at some state $s$, $h'$ appear at line \ref{algorithm:stmc:outer_loop}, from $s$ the algorithm will compute some state $s'$ such that $h$ will also appear at line \ref{algorithm:stmc:outer_loop}.`
	\begin{proof}
		Let $a = \last{h}$ and $h' = h \setminus a$. If $\lnot(\swapped{h}{a})$, then $h_p = h'$ and the lemma holds trivially. Otherwise, as $h_p =  \maxCompletion(h')$, we will show that every event not belonging to $h_p \setminus h'$ is not swapped by induction on every recursive call to $\maxCompletion$. Let us call $D = \{e \ | \ e \not\in h' \land e <_{\ora} \}$. This set, intuitively, contain all the events that would have been deleted from a reachable history $h$ to produce $h_p$. In this setting, let us call $h_{|D|} = h'$, $D_{|D|} = D$ and $D_k = D_{k+1} \setminus \{\min_{<_{\ora}} D_{k+1}\}, \; e_k = \min_{<_{\ora}}D_k$ for every $k, 0 \leq k < |D|$ (i.e. $D_k = D_{k+1} \setminus \{ e_{k+1}\}$). We will prove the lemma by induction on $n = |D| - k$, constructing a collection of \nver{$\ora$-respectful} histories $h_k$, $0 \leq k < |D|$, such that each one is an extension of its predecessor with a non-swapped event.
		
		The base case, $h_{|D|}$ is trivial as by its definition it corresponds with $h'$. Let's prove the inductive case: $\{e \ | \ \swapped{h_{k+1}}{e}\} = \{e \ | \ \swapped{h'}{e}\}$. If $e_{k+1}$ is not a $\iread$ event, $h_k = h_{k+1} \oplus e_{k+1}$, \nver{$\oraRespectfulCanon{h_k}$} and $\{e \ | \ \swapped{h_{k}}{e}\} = \{e \ | \ \swapped{h'}{e}\}$; as only $\iread$ events can be swapped. Otherwise, $e_{k+1}$ is a read event. By the isolation level's causal-extensibility there exists a transaction $f_{k+1}$ that writes the same variable as $e_{k+1}$, $(f_{k+1}, \trans{h}{e_{k+1}}) \in (\so \cup \wro)^* $ and $h_{k+1} \oplus e_{k+1} \oplus \wro( f_{k+1}, e_{k+1})$ is consistent. 
		\oldver{Moreover, by corollary \ref{corollary:soundness-swapped} $ \{e \ | \ \swapped{h_{k+1}}{e}\} $ $= \{e \ | \ \swapped{h_{k+1} \oplus e_{k+1} \oplus \wro(f_{k+1}, e_{k+1})}{e}\}$.}
		\nver{Moreover, if $e_{k+1}$ reads from any causal dependent element $f'$, $f'$ in $h_{k+1}$, it cannot be swapped: as $\oraRespectfulCanon{h_{k+1}}$ holds, if $e_{k+1} <_{\ora} f'$ there must be an event $c_{k+1}$ s.t. $ \trans{h}{c_{k+1}} \leq_{\ora} \trans{h}{e_{k+1}}$ and $(f', \trans{h}{c_{k+1}}) \in (\so \cup \wro)^*$. Hence, $ \{e \ | \ \swapped{h_{k+1}}{e}\} $ $= \{e \ | \ \swapped{h_{k+1} \oplus e_{k+1} \oplus \wro(f', e_{k+1})}{e}\}$.
		} 
		
		Let $E_{k+1} = \{t \ | \ h_{k+1} \oplus e_{k+1} \oplus \wro(t, e_{k+1}) \models I$ $\land\  \{e \ | \ \swapped{h_{k+1}}{e}\} = s \{e \ | \ \swapped{h_{k+1} \oplus e_{k+1} \oplus \wro(t, e_{k+1}) }{e}\}\}$ and let \nver{$t_{k+1} = \max_{\leq^{h_{k+1}}} \{ t \in E_{k+1} \ | \ (t, \trans{h_{k+1}}{e_{k+1}}) \in (\so \cup \wro)^* \}$.} This element is well defined as $f_{k+1}$ belongs to $E_{k+1}$. Therefore, $h_k = h_{k+1} \oplus e_{k+1} \oplus \wro(t_{k+1}, e_{k+1})$ is consistent and $\{e \ | \ \swapped{h_{k}}{e}\} =$ $ \{e \ | \ \swapped{h'}{e}\}$. Moreover, let's remark that as $t_{k+1}$ is the maximum transaction according to $\leq_{h_{k+1}}$ s.t. is consistent and $\{e \ | \ \swapped{h_{k}}{e}\} = \{e \ | \ \swapped{h'}{e}\}$. In addition, by construction, it also satisfies $\isMaximallyAdded{h_{k}}{ e_{k+1}, w_{k+1}}$. \nver{Finally, $h_{k}$ is also $\ora$-respectful as $e_{k+1}$ is not swapped and $\oraRespectfulCanon{h_{k+1}}$ holds.}
		
		Thus, after applying induction, we obtain $h_p = h_0$; which let us conclude $\{e \in h_p \ | \ \swapped{h_p}{e}\} = \{e \in h' \ | \ \swapped{h'}{e}\} = \{e \in h \ | \ \swapped{h}{e}\}\setminus \{a\}$.
	\end{proof}
\end{lemma}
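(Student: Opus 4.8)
The plan is to follow the two branches of $\prev$ (Algorithm~\ref{algorithm:prev}) separately. The branch $\lnot\swapped{h}{a}$ is immediate: there $\prev$ returns $h\setminus a$ by definition, which is exactly the ``otherwise'' clause of the statement, so nothing is left to argue. All the work is in the branch $\swapped{h}{a}$. Here $a$ is a read with $(t,a)\in\wro$ for some transaction $t$, and $h_p=\maxCompletion(h\setminus a,\,D)$ with $D=\{e:e\notin h\setminus a\ \land\ e<_{\ora}t\}$ the set of events a swap on $(a,t)$ would have deleted. I would make the recursion of $\maxCompletion$ explicit as a finite chain $h\setminus a=h_m,h_{m-1},\dots,h_0=h_p$ (with $m=|D|$), where $h_k$ is obtained from $h_{k+1}$ by adding the $<_{\ora}$-minimal still-missing event $e_{k+1}$ of $D$: appended directly if $e_{k+1}$ is not a read, and appended with a $\wro$ edge to a causally-latest valid write otherwise. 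The statement then reduces to the single invariant, proved by downward induction on $k$, that each $h_k$ is $I$-consistent and $\ora$-respectful and that $\{e:\swapped{h_k}{e}\}=\{e:\swapped{h\setminus a}{e}\}$.

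The base case $k=m$ is trivial, so consider the inductive step, splitting on the type of $e_{k+1}$. If $e_{k+1}$ is not a read it can never be swapped, and appending it as the oracle-minimal new event preserves $\ora$-respectfulness and leaves the swapped set unchanged, since the canonical order and the $(\so\cup\wro)$ relations among the earlier events are untouched. If $e_{k+1}$ is a read, causal extensibility of $I$ (Definition~\ref{def:causally-extensible}, Theorem~\ref{theorem:causalExtensibleModels-CC-RA-RC}) supplies at least one write transaction $f_{k+1}$ in the $(\so\cup\wro)^{*}$-past of the transaction of $e_{k+1}$ for which $h_{k+1}\oplus e_{k+1}\oplus\wro(f_{k+1},e_{k+1})$ is $I$-consistent. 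The crucial sub-claim is that reading from any such causal-past source cannot make $e_{k+1}$ swapped: because $h_{k+1}$ is $\ora$-respectful, whenever $e_{k+1}<_{\ora}f_{k+1}$ there is a (swapped) transaction sitting causally between $f_{k+1}$ and $e_{k+1}$ and ordered before $e_{k+1}$, which furnishes exactly the witness that falsifies condition~(2) in the definition of \textsc{swapped}. I would then collect into a set $E_{k+1}$ the write transactions whose $\wro$ edge both keeps $I$-consistency and preserves the swapped set, note $f_{k+1}\in E_{k+1}$ so that $E_{k+1}\neq\emptyset$, and observe that the source actually chosen by $\maxCompletion$ is the canonically-latest causal-past valid write (the $\mathsf{readLatest}$ choice), which lands in $E_{k+1}$. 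Hence the invariant is maintained.

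Finally, $h_0=h_p$ yields $\{e:\swapped{h_p}{e}\}=\{e:\swapped{h\setminus a}{e}\}$; since $a$ is $<$-maximal in $h$, deleting it does not perturb the swapped status of the surviving events, while $a$ itself is swapped by hypothesis, so the right-hand side equals $\{e\in h:\swapped{h}{e}\}\setminus\{a\}$, closing the argument. The main obstacle is precisely the read sub-claim and its converse: the \textsc{swapped} predicate quantifies over all transactions of the history and is phrased through the canonical order, so inserting a new read (and the edge $\wro(f_{k+1},e_{k+1})$) could a priori both create a fresh swap and cancel an existing one. Ruling out both directions at once is what forces the combination of the causally-latest read-from source with the repeated use of $\ora$-respectfulness, of Corollary~\ref{corollary:soundness-swapped} (tying the \textsc{swapped} predicate to the $\swap$ that would have produced the event), and of Proposition~\ref{proposition:orders-coincide} (so the canonical order may legitimately stand in for any history order the algorithm reaches).
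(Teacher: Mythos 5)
Your proposal is correct and follows essentially the same route as the paper's proof: the trivial non-swapped branch, an explicit unfolding of $\maxCompletion$ into a chain of histories indexed by the deleted set $D$, and a downward induction maintaining $I$-consistency, $\ora$-respectfulness, and preservation of the swapped set, with causal extensibility supplying a valid causal-past write for each reinserted read and $\ora$-respectfulness furnishing the witness that such a read cannot become swapped. The only addition is your explicit closing remark that removing the $<$-maximal event $a$ does not disturb the swapped status of the remaining events, which the paper leaves implicit.
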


\begin{lemma}
	\label{lemma:prev-leads-empty}
	For every consistent $\ora$-respectful history $h$ there exists some $k_h \in \mathbb{N}$ such that $\prev^{k_h}(h) = \emptyset$.
	\begin{proof}
		This lemma is immediate consequence of lemma \ref{lemma:prev-reduces-one}. Let us call $\xi(h) = $ \linebreak $|\{e \in h \ | \ \swapped{h}{e}\}|$, the number of swapped events in $h$, and let us prove the lemma by induction on $(\xi(h), |h|)$. The base case, $\xi(h) = |h| = 0$ is trivial as $h$ would be $\emptyset$; so let's assume that for every history $h$ such that $\xi(h) < n$ or $\xi(h) =h \land |h| < m$ there exists such $k_h$. Let $h$ then a history s.t. $\xi(h) = n$ and $|h| = m$. $h_p = \prev(h)$. On one hand, if $h_p = h \setminus a$ then $\xi(x_p) = \xi(h)$ and $|h_p| = |h|-1$. On the other hand, if $h_p \neq h \setminus a$, $\xi(h_p) = \xi(h) - 1$. In any case, by induction hypothesis on $h_p$, there exists an integer $k_{h_p}$ such that $\prev^{k_{h_p}}(h_p) = \emptyset$. Therefore, $k_h = k_{h_p}+ 1$ satisfies $\prev^{k_h}(h) = \emptyset$.
	\end{proof}
\end{lemma}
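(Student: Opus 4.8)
The plan is to prove the statement by well-founded induction, using Lemma~\ref{lemma:prev-reduces-one} as the engine that forces a strict decrease at each application of $\prev$. To each consistent $\ora$-respectful history $h$ I would attach the pair $(\xi(h), |h|)$, where $\xi(h) = |\{e \in h \mid \swapped{h}{e}\}|$ counts the swapped events of $h$ and $|h|$ is the total number of events, and I would order these pairs lexicographically with $\xi$ as the dominant component. Since both coordinates are non-negative integers, this order is well-founded, so it is enough to show that for non-empty $h$ the history $\prev(h)$ is again consistent and $\ora$-respectful and has strictly smaller measure; the existence of $k_h$ then follows by taking $k_h = k_{\prev(h)} + 1$.

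For the base case, $(\xi(h), |h|) = (0,0)$ forces $h = \emptyset$ and we set $k_h = 0$. For the inductive step, let $h$ be non-empty, $a = \last{h}$, and $h_p = \prev(h)$. First I would argue that $h_p$ stays in the class the induction ranges over: $\ora$-respectfulness of $h_p$ is exactly Lemma~\ref{lemma:prev-respectful}, and consistency follows because in the non-swapped branch $h_p = h \setminus a$ is a prefix of the consistent history $h$ (hence consistent by prefix-closure), whereas in the swapped branch $\maxCompletion$ extends only by read-from choices validated through $\isMaximallyAdded{\cdot}{\cdot}{\cdot}$, so every intermediate history satisfies $I$ by construction. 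Then Lemma~\ref{lemma:prev-reduces-one} gives precisely the two cases: if $\lnot\swapped{h}{a}$, then $h_p = h \setminus a$, and removing the $<$-maximal non-swapped event leaves the swapped set unchanged, so $\xi(h_p) = \xi(h)$ while $|h_p| = |h| - 1$, strictly decreasing the second coordinate; if $\swapped{h}{a}$, then the swapped events of $h_p$ are exactly those of $h$ with $a$ removed, so $\xi(h_p) = \xi(h) - 1$, strictly decreasing the first coordinate. In both cases $(\xi(h_p), |h_p|)$ is lexicographically smaller, the induction hypothesis applies, and we conclude.

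The step I expect to demand the most care is the swapped branch, where $\prev$ does not simply delete $a$ but re-runs $\maxCompletion$ and may re-insert several deleted events, so that $|h_p|$ can in fact be larger than $|h|$. This is exactly why the measure must be lexicographic with the swapped-event count on top: growth in the number of events is harmless provided $\xi$ strictly drops. The crux is therefore to lean on Lemma~\ref{lemma:prev-reduces-one} for the sharp identity $\{e \in h_p \mid \swapped{h_p}{e}\} = \{e \in h \mid \swapped{h}{e}\} \setminus \{a\}$, i.e.\ that $\maxCompletion$ re-inserts the previously deleted reads as \emph{non}-swapped reads from causally latest writes, and to confirm that this, together with $\ora$-respectfulness and consistency, is stable enough for the induction hypothesis to apply to $h_p$.
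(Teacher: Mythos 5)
Your proposal is correct and follows essentially the same route as the paper: a lexicographic induction on $(\xi(h),|h|)$ driven by Lemma~\ref{lemma:prev-reduces-one}, with $k_h = k_{\prev(h)}+1$. The only difference is that you make explicit two points the paper leaves implicit — that $\prev(h)$ remains consistent and $\ora$-respectful so the induction hypothesis applies, and that the swapped branch can increase $|h|$, which is precisely why $\xi$ must dominate the measure — both of which are welcome clarifications rather than a different argument.
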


\begin{proposition}
	\label{proposition:chain-histories-proof}
	For every consistent $\ora$-respectful history $h$ exists $k \in \mathbb{N}$ and some sequence of $\ora$-respectful histories $\{h_n\}_{n = 0}^k$, $h_0 = \emptyset$ and $h_k = h$ such that the algorithm will compute.
	\begin{proof}
		Let $h$ a history, $k$ the minimum integer such that $\prev^k(h) = \emptyset$, which exists thanks to lemma \ref{lemma:prev-leads-empty} and $C = \{\prev^{k-n}(h)\}_{n = 0}^k$ a set of indexed histories. By the collection's definition and lemma \ref{lemma:prev-respectful}, $h_0 = \prev^k(h) = \emptyset$, $h_k = \prev^0(h) = h$ and $\oraRespectfulCanon{h_n}$ for every $n \in \mathbb{N}$; so let us prove by induction on $n$ that every history in $C$ is reachable. The base case, $h_0$, is trivially achieved; as it is always reachable. In addition, by lemma \ref{lemma:soundness-prev}, we know that if $h_n$ is reachable, $h_{n+1}$ is it too; which proves the inductive step. %Moreover, so we will focus on the inductive case, assuming $h_n$ is reachable and deducing $h_{n+1}$ is it too. 
	\end{proof}
\end{proposition}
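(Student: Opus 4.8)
The plan is to build the required chain \emph{backwards} from $h$ by iterating the $\prev$ function, and then to establish reachability \emph{forwards} by induction, leaning entirely on the three lemmas already proved about $\prev$. First I would invoke Lemma~\ref{lemma:prev-leads-empty} to obtain the least $k \in \mathbb{N}$ with $\prev^k(h) = \emptyset$, and define the candidate sequence $h_n := \prev^{k-n}(h)$ for $0 \le n \le k$. By construction $h_0 = \prev^k(h) = \emptyset$ and $h_k = \prev^0(h) = h$, and crucially $h_n = \prev(h_{n+1})$ for every $n < k$, so the sequence is exactly a $\prev$-chain collapsing to the empty history.

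Second, I would confirm that the two invariants propagate along the chain. $\ora$-respectfulness of every $h_n$ follows by downward induction from $\oraRespectfulCanon{h}$ using Lemma~\ref{lemma:prev-respectful}, since each single $\prev$ step preserves it. Consistency of every $h_n$ must likewise be checked: because $\prev$ either deletes the last event (yielding a prefix of a consistent history, hence consistent by prefix-closure of $I$) or runs $\maxCompletion$ (which appends events together with $\wro$ edges chosen by $\isMaximallyAdded{\cdot}{\cdot}{\cdot}$ precisely so as to keep the history $I$-consistent), each application of $\prev$ preserves $I$-consistency, so every $h_n$ in the chain is consistent.

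Third, with both invariants in hand, I would prove reachability by forward induction on $n$. The base case $h_0 = \emptyset$ is the initial state and is trivially reachable. For the inductive step, assuming $h_n$ is reachable, I apply Lemma~\ref{lemma:soundness-prev} to $h_{n+1}$: it is consistent and $\ora$-respectful, and $\prev(h_{n+1}) = h_n$ is reachable, hence $h_{n+1}$ is reachable. Taking $n = k$ yields that $h = h_k$ is reachable along the sequence $\{h_n\}_{n=0}^{k}$, which is exactly the chain of $\ora$-respectful histories the proposition demands.

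The main obstacle in this assembly is not the induction itself but ensuring that the hypotheses of Lemma~\ref{lemma:soundness-prev} hold at \emph{every} node of the chain — in particular that \emph{consistency}, and not merely $\ora$-respectfulness, is preserved by each backward $\prev$ step; this is where prefix-closure of $I$ and the consistency-preserving construction inside $\maxCompletion$ are essential. All the genuinely difficult work has already been localized into the supporting lemmas: termination of the backward iteration (Lemma~\ref{lemma:prev-leads-empty}, via the well-founded measure that counts swapped events and breaks ties by total length) and the soundness of a single $\prev$ step (Lemma~\ref{lemma:soundness-prev}, which equates $h$ with $\swap(\prev(h),a,t)$ when the last event $a$ is swapped). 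This proposition is then simply their orchestration into a finite, forward-reachable path from $\emptyset$ to $h$.
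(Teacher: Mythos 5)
Your proposal is correct and takes essentially the same route as the paper's own proof: define the chain $h_n = \prev^{k-n}(h)$ using Lemma~\ref{lemma:prev-leads-empty} for termination, propagate $\ora$-respectfulness backwards via Lemma~\ref{lemma:prev-respectful}, and conclude reachability by forward induction via Lemma~\ref{lemma:soundness-prev}. The only difference is that you explicitly check that $I$-consistency is preserved at every node of the chain (via prefix-closure and the consistency-preserving choices inside $\maxCompletion$) — a hypothesis of Lemma~\ref{lemma:soundness-prev} that the paper's proof uses but leaves implicit, so your version is if anything slightly more careful.
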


\begin{theorem}
	\label{theorem:completeness}
	The algorithm \textup{\textsc{explore-ce}} is complete.
	\begin{proof}
		By lemma \ref{lemma:total-respectful}, any consistent total history is $\ora$-respectful. As a consequence of proposition \ref{proposition:chain-histories-proof}, there exist a sequence of reachable histories which $h$ belongs to; so in particular, $h$ is reachable.
	\end{proof}
\end{theorem}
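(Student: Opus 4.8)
The plan is to establish the four assertions of Theorem~\ref{th:corr} separately, spending almost all of the effort on $I$-completeness (which is exactly Theorem~\ref{theorem:completeness}) and optimality, since $I$-soundness, the polynomial-space bound, and the ``no fruitless exploration'' half of strong optimality fall out of the definitions. For $I$-soundness I would note that every history passed to a recursive call is produced either through $\genericValidWrites$ or through $\genericSwap$ guarded by $\genericProtocol$, both of which explicitly demand $I$-consistency, while the three structural conditions imposed on $\genericSwap$ in Section~\ref{sec:algs} force the result to be \emph{feasible}, i.e.\ realizable by the operational semantics of Section~\ref{ssec:semantics}. Polynomial space holds because the two \textbf{for all} loops range over (subsets of) the write, resp.\ read, events of the current history, whose count is bounded by the loop-free program size, and the recursion depth is similarly bounded. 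Causal extensibility supplies the non-blocking part: the scheduler keeps the single pending transaction $(\so\cup\wro)^+$-maximal, so by Definition~\ref{def:causally-extensible} the event returned by $\genericNext$ can always be appended consistently.

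The heart of the argument is $I$-completeness: every $h\in\histOf[I]{\prog}$ must be output. Because the algorithm manipulates \emph{ordered} histories, I would first fix a canonical total order $\leq^h$ on an arbitrary history and prove it coincides with whatever order the algorithm assigns, so that the target $h$ may be regarded as an ordered history without loss of generality. Next I would isolate the invariant \emph{$\ora$-respectfulness} — at most one pending transaction, and every oracle-earlier event also $<$-earlier unless a swapped event separates them — and check it is preserved along every computable path.

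The crux is then a deterministic ``inverse step'' $\prev$, for which I would prove: (i) $\prev$ preserves $\ora$-respectfulness; (ii) if $\last{h}$ is not swapped then $\prev(h)=h\setminus\last{h}$, whereas if it is swapped then $h$ coincides with $\genericSwap(\prev(h),r,t)$; and (iii) iterating $\prev$ strictly decreases the pair (number of swapped events, history size), so it reaches $\emptyset$ after finitely many steps. Step (ii) in the swapped case is the main obstacle: one must reconstruct, using causal extensibility together with the latest-valid-write condition ($\mathsf{readLatest}$) and the not-already-swapped condition (\textsc{swapped}) encoded in $\genericProtocol$, that the reads which $\prev$ re-attaches to their causally-latest writes are precisely those the algorithm deleted during the swap, so that $h$ is recovered exactly. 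Combining (i)--(iii), every $\ora$-respectful consistent $h$ sits atop a finite chain $\emptyset=h_0,\dots,h_n=h$ with $h_i=\prev(h_{i+1})$, and a forward induction using the soundness of $\prev$ (if $\prev(h)$ is reachable so is $h$) shows each $h_i$ is reachable. Since every complete consistent history is $\ora$-respectful, $h$ is reachable and hence output.

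Finally, for optimality I would argue that on any computable path the immediate predecessor of a reachable history is forced to equal $\prev(h)$, and that from $\prev(h)$ there is exactly one operation in \textsc{explore}/\textsc{exploreSwaps} yielding $h$ — an extension with a determined $\wro$ choice when $\last{h}$ is not swapped, and a unique $(r,t)$ swap when it is. Thus the backward chain to each history is unique, so no history, in particular no complete one, is ever produced twice, which gives both plain optimality and (together with the non-blocking property above) strong optimality. The delicate point underlying everything is again step (ii): the exact recovery of $h$ by the swap-case inversion, where the subtle $\genericProtocol$ bookkeeping and causal extensibility are indispensable.
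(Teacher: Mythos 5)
Your proposal follows essentially the same route as the paper: a canonical order shown to coincide with the algorithm's history order, the $\ora$-respectfulness invariant, a deterministic inverse $\prev$ that preserves the invariant, recovers $h$ exactly in the swap case, and lexicographically decreases (number of swapped events, history size) until reaching $\emptyset$, followed by a forward induction via the soundness of $\prev$. This matches the paper's chain of lemmas and its concluding step that every consistent total history is $\ora$-respectful, so no further comparison is needed.
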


\subsection{Optimality}

For proving optimality we are going to exploit two properties already studied for completeness: $\ora$-respectfulness and the canonical order. Then, as algorithm \textsc{explore-ce} is sound and complete, we will prove that any computable path leading to a consistent history is the one computed in the completeness' proof.

\begin{theorem}
	Algorithm \textup{\textsc{explore-ce}} is strongly optimal.
	\begin{proof}
		As the model is causal-extensible, any algorithm optimal is also strongly optimal. Let us prove that for every reachable history there is only a computable path that leads to it from $\emptyset$.
		Let's suppose there exists a history $h$ that is reached $p_1$, $p_2$ by two computable paths. By lemma \ref{proposition:orders-coincide}, we know that $\leq_h \equiv \leq^h$. However, $\leq^h$ is an order that does not depend on the computable path that leads to $h$; so neither does $\leq_h$.  Therefore, we can assume without loss of generality that $h$ is a history with minimal value of $\xi(h) =|\{e \in h \ | \swapped{h}{e}\}|$ and in case of tie, that is minimal with respect $|h|$; values independent of the computable path that leads to $h$. 
		
		We can also assume without loss of generality that the predecessor of $h$ in $p_1$ is $h_1 = \prev{h}$, and $h_2$ is the predecessor of $h$ in $p_2$. If we prove $h_1$ and $h_2$ are identical, $p_1$ and $p_2$ have to also be identical and therefore, the algorithm would be optimal. Firstly, if $\last{h}$ is not a swapped $\iread$ event, by the definition of $\nextEvent$ function $h_2 = h \setminus \last{h} = h_1$. On the contrary, let's suppose $r = \last{h}$ is a swapped event that reads from a transaction $t$. Because $\swapped{h}{r}$ holds, from $h_2$ to $h$ it has to have happened a swap between $r$ and $w$. But by corollary \ref{corollary:prev-swap-identity}, $h = \swap(h_1, r, w)$, so $h_1 \restriction_{h \setminus r} = h_2 \restriction_{h \setminus r}$. As $h_1, h_2$ are both $\ora$-respectful, $e \in h_1 \setminus h \iff e \in h_2 \setminus h$. Finally, as $\genericProtocol(h_i, r, w)$ holds for $i \in \{1,2\}$, for every $\iread$ event $e$ in $h_1 \cap h_2$ there exists a transaction $t_e$ s.t. $\wro(e, t_e)$ for both histories.
	\end{proof}
\end{theorem}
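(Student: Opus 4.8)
The plan is to establish the four required properties separately, dispatching the three comparatively direct ones first and then concentrating on the two hard ones, $I$-completeness and optimality.

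For $I$-soundness I would argue that every history reaching an \textbf{output} statement lies in $\histOf[I]{\prog}$, which decomposes into $I$-consistency and feasibility. $I$-consistency holds edge-by-edge: when \textsc{explore} extends a read it only follows a $\wro$ choice returned by $\genericValidWrites$, which by definition preserves $I$; when it extends with a non-read event the single pending transaction is $(\so\cup\wro)^+$-maximal, so causal extensibility yields a consistent extension; and every re-ordering is gated by $\genericProtocol$, which demands that the history returned by $\genericSwap$ satisfy $I$. Feasibility — that the history is producible by the operational semantics of Section~\ref{ssec:semantics} — follows from the three structural constraints on $\genericSwap$ in Section~\ref{sec:algs}: keeping a transaction-log prefix of $\hist$ together with all its $\wro$ dependencies (except possibly the last read in a transaction) forces every surviving event to retain all of its $(\po\cup\so\cup\wro)^*$ predecessors, so the history can be replayed step by step. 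Polynomial space is immediate from the shape of Algorithm~\ref{algorithm:algo-class}: recursion depth is bounded by the number of events (hence by program size), and each \textbf{for all} loop iterates linearly (over write, resp.\ read, events). Absence of fruitless exploration is where causal extensibility pays off directly: $\genericNext$ maintains the invariant that a reachable history has at most one pending transaction, necessarily $(\so\cup\wro)^+$-maximal, and causal extensibility then guarantees the next event can always be added consistently, so no recursive call lands on an $I$-inconsistent or dead-ended history.

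The core of the argument is $I$-completeness: given $h\in\histOf[I]{\prog}$ I must exhibit a computable exploration path from the empty history to $h$. I would proceed in three stages. First, since the algorithm manipulates \emph{ordered} histories, I would equip $h$ with a \emph{canonical} total order $\leq^h$ (refining $(\so\cup\wro)^+$, breaking ties between causally-incomparable transactions by the oracle-order of the minimal reads witnessing their dependencies) and prove it coincides on every reachable history with the history order actually built by the algorithm; this makes order-dependent notions such as $\swapped{\cdot}{\cdot}$ and $\genericProtocol$ well-defined on arbitrary histories. Second, I would isolate an invariant, $\ora$-respectfulness — one pending transaction, and whenever $e <_{\ora} e'$ either $e < e'$ or a swapped event sits between them — and show it holds of every reachable history and, crucially, of every total history. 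Third, I would define a deterministic backward step $\prev$ returning the history computed immediately before a reachable one: if the last event is not a swapped read it deletes that event, otherwise it reconstructs the pre-swap history by re-inserting the discarded suffix with each deleted read made to read from its \emph{causally latest} valid write (this is where $\isMaximallyAdded{\_}{\_}{\_}$ enters). I would prove $\prev$ preserves $\ora$-respectfulness, is sound (if $\prev(h)$ is reachable then so is $h$, because $\genericSwap$ applied to $\prev(h)$ reproduces $h$: both agree on their common events and read-from assignments), and terminates, via a measure on the pair (number of swapped events, total size) that strictly decreases — so iterating $\prev$ from $h$ reaches $\emptyset$ and chaining the resulting histories forwards yields the path. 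For optimality I would exploit that $\leq^h$ is \emph{path-independent}: taking a minimal reached history produced by two paths, its two immediate predecessors must both equal $\prev(h)$ by determinism of the backward step — forced by $\genericNext$ when the last event is non-swapped, and by the identity $h=\genericSwap(\prev(h),r,t)$ together with $\ora$-respectfulness pinning down the discarded events and surviving read-froms otherwise — contradicting minimality; uniqueness plus non-fruitlessness gives strong optimality.

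The hardest step, and the one I would budget the most effort for, is the soundness of $\prev$ together with finiteness of the backward chain — equivalently, showing the restrictive $\genericProtocol$ guard never loses a history. The subtlety is that when $\genericSwap$ is blocked on some $h_<$ because a deleted read is not reading from its causally-latest write (or is itself already swapped), completeness is recovered only by showing the algorithm independently explores a sibling $h'$ agreeing with $h$ except that those reads now read from causally-latest writes, on which the same swap \emph{is} enabled and produces the identical result. Capturing this is exactly what the canonical order and $\ora$-respectfulness are engineered to support, and arranging the measure to decrease — so that re-inserting an entire discarded suffix during $\prev$ still lowers the swapped-event count — is the delicate bookkeeping at the heart of the proof.
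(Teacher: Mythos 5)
Your proposal is correct and follows essentially the same route as the paper: the optimality argument rests on the path-independence of the canonical order, a minimal counterexample measured by (number of swapped events, history size), and the determinism of the backward step $\prev$ — with the case split on whether the last event is a swapped read, the identity $h=\genericSwap(\prev(h),r,t)$, and $\ora$-respectfulness pinning down the deleted events and surviving read-froms, exactly as in the paper's proof. The upgrade from optimality to strong optimality via causal extensibility (no fruitless explorations) is also the paper's opening move.
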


	\newpage
	%!TEX root = main.tex
\section{Experimental data}
\label{sec:appendix:experiments}

\subsection{Application Scalability}
\label{ssec:appendix:application-scalability}

% Please add the following required packages to your document preamble:
% \usepackage{booktabs}
% \usepackage[table,xcdraw]{xcolor}
% If you use beamer only pass "xcolor=table" option, i.e. \documentclass[xcolor=table]{beamer}

\begin{table}[H]
    \resizebox{\textwidth}{!}{

    \begin{tabular}{ccccccccccccccc}
        \toprule
        & \multicolumn{4}{c}{CC}                     &  & \multicolumn{4}{c}{CC + SI}                     &  & \multicolumn{4}{c}{CC + SER}                    \\ \cline{2-5} \cline{7-10} \cline{12-15} 
        & Histories & End states & Time   & Mem.     &  & Histories & End states & Time   & Mem.     &  & Histories & End states & Time   & Mem.     \\ \hline
        courseware-1 & 216 & 216 & 00:00:22 & 370 & ~ & 81 & 216 & 00:00:25 & 370 & ~ & 72 & 216 & 00:00:23 & 370 \\ 
        courseware-2 & 46 & 46 & 00:00:06 & 316 & ~ & 34 & 46 & 00:00:06 & 308 & ~ & 34 & 46 & 00:00:06 & 314 \\ 
        courseware-3 & 12790 & 12790 & 00:12:45 & 533 & ~ & 6197 & 12790 & 00:13:59 & 533 & ~ & 960 & 12790 & 00:12:37 & 557 \\ 
        courseware-4 & 69 & 69 & 00:00:07 & 314 & ~ & 39 & 69 & 00:00:08 & 324 & ~ & 17 & 69 & 00:00:07 & 370 \\ 
        courseware-5 & 388 & 388 & 00:00:25 & 308 & ~ & 136 & 388 & 00:00:27 & 370 & ~ & 71 & 388 & 00:00:24 & 370 \\ 
        shoppingCart-1 & 444 & 444 & 00:00:19 & 308 & ~ & 108 & 444 & 00:00:22 & 308 & ~ & 81 & 444 & 00:00:19 & 308 \\ 
        shoppingCart-2 & 2934 & 2934 & 00:00:55 & 308 & ~ & 811 & 2934 & 00:01:13 & 444 & ~ & 480 & 2934 & 00:00:58 & 308 \\ 
        shoppingCart-3 & 1594 & 1594 & 00:00:55 & 308 & ~ & 1077 & 1594 & 00:01:05 & 308 & ~ & 338 & 1594 & 00:01:00 & 308 \\ 
        shoppingCart-4 & 58677 & 58677 & \red{TL}  & 444 & ~ & 12440 & 49589 & \red{TL}  & 444 & ~ & 779 & 60194 & \red{TL}  & 383 \\ 
        shoppingCart-5 & 4686 & 4686 & 00:02:56 & 444 & ~ & 1986 & 4686 & 00:03:07 & 444 & ~ & 780 & 4686 & 00:02:41 & 308 \\ 
        tpcc-1 & 165 & 165 & 00:00:43 & 794 & ~ & 47 & 165 & 00:00:47 & 808 & ~ & 47 & 165 & 00:00:45 & 796 \\ 
        tpcc-2 & 353 & 353 & 00:01:25 & 699 & ~ & 35 & 353 & 00:01:29 & 879 & ~ & 31 & 353 & 00:01:25 & 704 \\ 
        tpcc-3 & 1593 & 1593 & 00:10:12 & 966 & ~ & 232 & 1593 & 00:10:29 & 1054 & ~ & 116 & 1593 & 00:10:05 & 803 \\ 
        tpcc-4 & 105 & 105 & 00:00:15 & 450 & ~ & 22 & 105 & 00:00:16 & 485 & ~ & 1 & 105 & 00:00:15 & 396 \\ 
        tpcc-5 & 7836 & 7836 & \red{TL}  & 1732 & ~ & 695 & 6973 & \red{TL}  & 1647 & ~ & 271 & 7617 & \red{TL}  & 1640 \\ 
        twitter-1 & 36 & 36 & 00:00:05 & 256 & ~ & 29 & 36 & 00:00:06 & 308 & ~ & 18 & 36 & 00:00:05 & 256 \\ 
        twitter-2 & 876 & 876 & 00:00:48 & 459 & ~ & 263 & 876 & 00:01:03 & 1066 & ~ & 122 & 876 & 00:00:56 & 513 \\ 
        twitter-3 & 1072 & 1072 & 00:01:24 & 444 & ~ & 576 & 1072 & 00:01:33 & 569 & ~ & 216 & 1072 & 00:01:21 & 444 \\ 
        twitter-4 & 12915 & 12915 & 00:08:36 & 444 & ~ & 1680 & 12915 & 00:10:59 & 640 & ~ & 1680 & 12915 & 00:09:12 & 533 \\ 
        twitter-5 & 12915 & 12915 & 00:07:53 & 444 & ~ & 1680 & 12915 & 00:11:18 & 533 & ~ & 1680 & 12915 & 00:07:43 & 444 \\ 
        wikipedia-1 & 649 & 649 & 00:03:59 & 820 & ~ & 95 & 649 & 00:02:34 & 699 & ~ & 95 & 649 & 00:02:32 & 695 \\ 
        wikipedia-2 & 3610 & 3610 & 00:13:51 & 792 & ~ & 328 & 3610 & 00:14:13 & 696 & ~ & 292 & 3610 & 00:13:50 & 798 \\ 
        wikipedia-3 & 2339 & 2339 & 00:05:44 & 640 & ~ & 175 & 2339 & 00:06:04 & 640 & ~ & 175 & 2339 & 00:05:41 & 640 \\ 
        wikipedia-4 & 691 & 691 & 00:01:54 & 774 & ~ & 246 & 691 & 00:02:01 & 768 & ~ & 108 & 691 & 00:01:55 & 774 \\ 
        wikipedia-5 & 21317 & 21317 & \red{TL}  & 620 & ~ & 292 & 19840 & \red{TL}  & 533 & ~ & 220 & 22307 & \red{TL}  & 444 \\
        \bottomrule
    \end{tabular}
    }
\end{table}

\begin{table}[H]
    \resizebox{\textwidth}{!}{
    \begin{tabular}{ccccccccccccccccccc}
        \toprule
        & \multicolumn{4}{c}{RA + CC}                     &  & \multicolumn{4}{c}{RC + CC}                     &  & \multicolumn{4}{c}{\texttt{true} + CC}           &  & \multicolumn{3}{c}{DFS(CC)}      \\ \cline{2-5} \cline{7-10} \cline{12-15} \cline{17-19} 
        & Histories & End states & Mem.   & Time     &  & Histories & End states & Mem.   & Time     &  & Histories & End states & Mem.  & Time &  & End states & Mem.   & Time     \\ \hline
        courseware-1 & 216 & 893 & 00:00:52 & 370 & ~ & 216 & 11751 & 00:07:02 & 370 & ~ & 216 & 124399 & \red{TL} & 444 & ~ & 58072 & 00:13:52 & 308 \\ 
        courseware-2 & 46 & 106 & 00:00:06 & 308 & ~ & 46 & 588 & 00:00:11 & 308 & ~ & 46 & 1074 & 00:00:15 & 308 & ~ & 18010 & 00:02:12 & 308 \\ 
        courseware-3 & 10585 & 47570 & \red{TL} & 444 & ~ & 40 & 65075 & \red{TL} & 308 & ~ & 40 & 119388 & \red{TL} & 370 & ~ & 186758 & \red{TL} & 308 \\ 
        courseware-4 & 69 & 88 & 00:00:07 & 308 & ~ & 69 & 2392 & 00:00:29 & 308 & ~ & 69 & 3779 & 00:00:39 & 315 & ~ & 37956 & 00:07:19 & 308 \\ 
        courseware-5 & 388 & 765 & 00:00:39 & 308 & ~ & 388 & 66557 & 00:27:12 & 370 & ~ & 320 & 96681 & \red{TL} & 660 & ~ & 68074 & \red{TL} & 308 \\ 
        shoppingCart-1 & 444 & 1620 & 00:00:34 & 370 & ~ & 444 & 202066 & \red{TL} & 370 & ~ & 370 & 173904 & \red{TL} & 308 & ~ & 69396 & \red{TL} & 308 \\ 
        shoppingCart-2 & 2934 & 32976 & 00:19:58 & 450 & ~ & 366 & 224700 & \red{TL} & 370 & ~ & 366 & 508967 & \red{TL} & 370 & ~ & 93549 & \red{TL} & 308 \\ 
        shoppingCart-3 & 1594 & 6291 & 00:01:41 & 308 & ~ & 1594 & 131226 & 00:18:23 & 308 & ~ & 1594 & 223740 & \red{TL} & 308 & ~ & 99522 & \red{TL} & 533 \\ 
        shoppingCart-4 & 19945 & 53687 & \red{TL} & 533 & ~ & 151 & 256686 & \red{TL} & 308 & ~ & 11 & 267433 & \red{TL} & 308 & ~ & 270996 & \red{TL} & 370 \\ 
        shoppingCart-5 & 4686 & 16323 & 00:06:43 & 370 & ~ & 2469 & 265924 & \red{TL} & 469 & ~ & 371 & 420084 & \red{TL} & 444 & ~ & 191813 & \red{TL} & 404 \\ 
        tpcc-1 & 165 & 958 & 00:02:13 & 839 & ~ & 7 & 50588 & \red{TL} & 1045 & ~ & 3 & 84272 & \red{TL} & 545 & ~ & 18489 & \red{TL} & 1383 \\ 
        tpcc-2 & 353 & 3958 & 00:11:24 & 809 & ~ & 3 & 7029 & \red{TL} & 1153 & ~ & 1 & 23097 & \red{TL} & 670 & ~ & 25253 & \red{TL} & 1029 \\ 
        tpcc-3 & 1475 & 10969 & \red{TL} & 1029 & ~ & 20 & 22934 & \red{TL} & 688 & ~ & 2 & 119267 & \red{TL} & 459 & ~ & 18124 & \red{TL} & 1251 \\ 
        tpcc-4 & 105 & 114 & 00:00:15 & 474 & ~ & 17 & 50203 & \red{TL} & 640 & ~ & 3 & 112330 & \red{TL} & 670 & ~ & 22645 & \red{TL} & 948 \\ 
        tpcc-5 & 271 & 9178 & \red{TL} & 1629 & ~ & 3 & 4059 & \red{TL} & 662 & ~ & 3 & 48306 & \red{TL} & 768 & ~ & 36060 & \red{TL} & 1284 \\ 
        twitter-1 & 36 & 44 & 00:00:05 & 256 & ~ & 36 & 4104 & 00:01:07 & 370 & ~ & 36 & 12384 & 00:02:56 & 473 & ~ & 35056 & 00:20:25 & 533 \\ 
        twitter-2 & 876 & 2917 & 00:02:01 & 592 & ~ & 876 & 18219 & 00:09:30 & 548 & ~ & 876 & 37943 & 00:17:35 & 544 & ~ & 145070 & \red{TL} & 533 \\ 
        twitter-3 & 1072 & 2272 & 00:02:05 & 576 & ~ & 1072 & 9514 & 00:08:10 & 533 & ~ & 1072 & 20164 & 00:16:14 & 588 & ~ & 108792 & 00:24:34 & 452 \\ 
        twitter-4 & 12915 & 48363 & 00:29:09 & 476 & ~ & 10 & 114588 & \red{TL} & 370 & ~ & 1 & 147462 & \red{TL} & 533 & ~ & 50404 & \red{TL} & 533 \\ 
        twitter-5 & 12915 & 48363 & 00:27:07 & 444 & ~ & 84 & 70376 & \red{TL} & 444 & ~ & 84 & 136241 & \red{TL} & 370 & ~ & 57654 & \red{TL} & 444 \\ 
        wikipedia-1 & 649 & 2296 & 00:04:16 & 672 & ~ & 64 & 37382 & \red{TL} & 832 & ~ & 4 & 66814 & \red{TL} & 699 & ~ & 54510 & \red{TL} & 660 \\ 
        wikipedia-2 & 2049 & 8451 & \red{TL} & 979 & ~ & 17 & 19697 & \red{TL} & 795 & ~ & 2 & 85523 & \red{TL} & 930 & ~ & 43629 & \red{TL} & 650 \\ 
        wikipedia-3 & 2339 & 6170 & 00:10:06 & 640 & ~ & 100 & 28952 & \red{TL} & 522 & ~ & 50 & 23974 & \red{TL} & 581 & ~ & 43962 & \red{TL} & 682 \\ 
        wikipedia-4 & 691 & 1781 & 00:02:57 & 925 & ~ & 5 & 44937 & \red{TL} & 567 & ~ & 3 & 61334 & \red{TL} & 543 & ~ & 24873 & \red{TL} & 1188 \\ 
        wikipedia-5 & 13159 & 26384 & \red{TL} & 533 & ~ & 29 & 72930 & \red{TL} & 444 & ~ & 23 & 78413 & \red{TL} & 695 & ~ & 97881 & \red{TL} & 444 \\ 
        \bottomrule
    \end{tabular}
    }
\end{table}

\subsection{Session Scalability}
\label{ssec:appendix:session-scalability}

\begin{table}[H]
    \resizebox{\textwidth}{!}{
\begin{tabular}{cccccccccccccccccccc}
    \toprule
    &\multicolumn{3}{c}{One session}&  & \multicolumn{3}{c}{Two sessions} &  & \multicolumn{3}{c}{Three sessions} &  & \multicolumn{3}{c}{Four sessions}  &  & \multicolumn{3}{c}{Five sessions}  \\ \cline{2-4} \cline{6-8} \cline{10-12} \cline{14-16} \cline{18-20} 
    & Histories & Time   & Mem.     &  & Histories & Time   & Mem.     &  & Histories & Time   & Mem.     &  & Histories & Time   & Mem.     &  & Histories & Time   & Mem.     \\ \hline
    tpcc-1 & 1 & 00:00:02 & 256 &  & 6 & 00:00:03 & 256 &  & 1540 & 00:05:42 & 804 &  & 3081 & \red{TL} & 4096 &  & 14525 & \red{TL} & 4096 \\ 
    tpcc-2 & 1 & 00:00:03 & 256 &  & 66 & 00:00:17 & 587 &  & 9630 & \red{TL} & 2900 &  & 17637 & \red{TL} & 4076 &  & 2442 & \red{TL} & 4096 \\ 
    tpcc-3 & 1 & 00:00:03 & 256 &  & 12 & 00:00:09 & 384 &  & 4824 & 00:25:43 & 1503 &  & 3463 & \red{TL} & 4096 &  & 2940 & \red{TL} & 4096 \\ 
    tpcc-4 & 1 & 00:00:03 & 256 &  & 90 & 00:00:41 & 674 &  & 6355 & \red{TL} & 1728 &  & 1722 & \red{TL} & 4096 &  & 2634 & \red{TL} & 4096 \\ 
    tpcc-5 & 1 & 00:00:03 & 256 &  & 96 & 00:00:41 & 692 &  & 3659 & \red{TL} & 1765 &  & 1343 & \red{TL} & 4092 &  & 1481 & \red{TL} & 4096 \\ 
    wikipedia-1 & 1 & 00:00:02 & 256 &  & 199 & 00:00:19 & 370 &  & 19654 & \red{TL} & 640 &  & 16377 & \red{TL} & 533 &  & 12419 & \red{TL} & 4096 \\ 
    wikipedia-2 & 1 & 00:00:02 & 256 &  & 38 & 00:00:14 & 536 &  & 7055 & 00:22:07 & 768 &  & 21000 & \red{TL} & 3520 &  & 16985 & \red{TL} & 4096 \\ 
    wikipedia-3 & 1 & 00:00:02 & 256 &  & 67 & 00:00:14 & 444 &  & 9346 & \red{TL} & 768 &  & 9451 & \red{TL} & 4096 &  & 3264 & \red{TL} & 4096 \\ 
    wikipedia-4 & 1 & 00:00:02 & 256 &  & 7 & 00:00:07 & 374 &  & 73 & 00:00:14 & 602 &  & 3940 & 00:20:40 & 4096 &  & 1325 & \red{TL} & 4096 \\ 
    wikipedia-5 & 1 & 00:00:02 & 256 &  & 28 & 00:00:08 & 308 &  & 336 & 00:05:41 & 662 &  & 10914 & \red{TL} & 4096 &  & 563 & \red{TL} & 3936 \\ 
    \bottomrule
\end{tabular}
}
\end{table}

\subsection{Transaction Scalability}
\label{ssec:appendix:transaction-scalability}

% Please add the following required packages to your document preamble:
% \usepackage{booktabs}
% \usepackage[table,xcdraw]{xcolor}
% If you use beamer only pass "xcolor=table" option, i.e. \documentclass[xcolor=table]{beamer}

\begin{table}[H]
    \resizebox{\textwidth}{!}{
        \begin{tabular}{cccccccccccccccccccc}
        \toprule
        &\multicolumn{3}{c}{One transaction}&  & \multicolumn{3}{c}{Two transactions}   &  & \multicolumn{3}{c}{Three transactions} &  & \multicolumn{3}{c}{Four transactions}  &  & \multicolumn{3}{c}{Five transactions}  \\ \cline{2-4} \cline{6-8} \cline{10-12} \cline{14-16} \cline{18-20} 
        & Histories & Mem.   & Time     &  & Histories & Mem.   & Time     &  & Histories & Mem.   & Time     &  & Histories & Mem.   & Time     &  & Histories & Mem.   & Time     \\ \hline
        tpcc-1 & 4 & 00:00:03 & 256 & ~ & 107 & 00:00:46 & 674 & ~ & 303 & 00:01:32 & 812 & ~ & 13780 & 00:27:59 & 3904 & ~ & 13431 & \red{TL} & 4096 \\ 
        tpcc-2 & 18 & 00:00:11 & 444 & ~ & 4030 & 00:16:10 & 1063 & ~ & 5162 & \red{TL} & 2012 & ~ & 5351 & \red{TL} & 4096 & ~ & 3243 & \red{TL} & 4096 \\ 
        tpcc-3 & 3 & 00:00:04 & 256 & ~ & 219 & 00:01:38 & 881 & ~ & 6679 & \red{TL} & 1327 & ~ & 6533 & \red{TL} & 4093 & ~ & 2036 & \red{TL} & 4096 \\ 
        tpcc-4 & 20 & 00:00:13 & 444 & ~ & 5187 & 00:20:04 & 1046 & ~ & 3262 & \red{TL} & 2066 & ~ & 1548 & \red{TL} & 4096 & ~ & 2045 & \red{TL} & 4096 \\ 
        tpcc-5 & 1 & 00:00:03 & 256 & ~ & 23 & 00:00:15 & 596 & ~ & 171 & 00:01:31 & 901 & ~ & 1812 & \red{TL} & 3933 & ~ & 4091 & \red{TL} & 4096 \\ 
        wikipedia-1 & 16 & 00:00:04 & 308 & ~ & 2428 & 00:02:30 & 444 & ~ & 22289 & \red{TL} & 533 & ~ & 17113 & \red{TL} & 640 & ~ & 14648 & \red{TL} & 4096 \\ 
        wikipedia-2 & 9 & 00:00:06 & 256 & ~ & 56 & 00:00:20 & 533 & ~ & 739 & 00:02:02 & 690 & ~ & 5364 & \red{TL} & 4068 & ~ & 3568 & \red{TL} & 3977 \\ 
        wikipedia-3 & 18 & 00:00:07 & 308 & ~ & 1109 & 00:01:24 & 640 & ~ & 26110 & \red{TL} & 768 & ~ & 15339 & \red{TL} & 3621 & ~ & 15138 & \red{TL} & 3822 \\ 
        wikipedia-4 & 4 & 00:00:05 & 256 & ~ & 43 & 00:00:12 & 444 & ~ & 3919 & 00:29:57 & 1649 & ~ & 3501 & \red{TL} & 4096 & ~ & 2506 & \red{TL} & 4096 \\ 
        wikipedia-5 & 2 & 00:00:03 & 256 & ~ & 20 & 00:00:10 & 444 & ~ & 46 & 00:00:20 & 370 & ~ & 754 & 00:05:42 & 2521 & ~ & 2573 & \red{TL} & 4096 \\ 
        \bottomrule
    \end{tabular}
    }
\end{table}

\end{document}